\documentclass[12pt]{article}

\usepackage[margin=0.9in]{geometry}

\usepackage{setspace}

\usepackage{amsmath, amssymb, amsthm}

\usepackage[cal=boondoxo,scr=rsfs,bb=ams]{mathalpha}

\usepackage{graphicx}
\usepackage{epsfig}
\usepackage{float}
\usepackage{subfigure}      %
\usepackage{epstopdf}

\usepackage{booktabs,tabularx,array}
\usepackage{threeparttable}
\usepackage{multirow}
\usepackage{adjustbox}

\usepackage{lscape}
\usepackage{ragged2e}
\usepackage{lipsum}   
\usepackage{verbatim} 
\usepackage{xcolor}
\usepackage{url}

\usepackage{xr}
\usepackage[authoryear,round]{natbib}
\usepackage{chapterbib}

\usepackage{enumitem}
\allowdisplaybreaks

\numberwithin{table}{section}
\numberwithin{figure}{section}

\theoremstyle{plain}
\newtheorem{theorem}{Theorem}[section]
\newtheorem{lemma}{Lemma}[section]
\newtheorem{corollary}{Corollary}[section]

\theoremstyle{definition}
\newtheorem{example}{Example}[section]
\theoremstyle{plain}
\newtheorem{assumption}{Assumption}[section]

\theoremstyle{remark}
\newtheorem{remark}{Remark}[section]

\newcommand{\mcU}{{\mathcal U}}

\newcommand{\cH}{{\dcal H}}
\newcommand{\cG}{{\mathcal G}}

\usepackage{bm}

\newcommand{\beps}{\bm{\varepsilon}} 

\newcommand{\bga}{\bm{\gamma}}

\newcommand{\bxi}{\bm{\xi}}

\newcommand{\cR}{{\mathcal R}}

\newcommand{\la}{\langle}
\newcommand{\ra}{\rangle}

\newcommand\cK{{K}}

\newcommand\mcD{{\mathcal D}}

\newcommand{\bz}{{\bf z}}
\newcommand{\bV}{{\bf V}}

\newcommand{\bB}{{\bf B}}

\newcommand{\by}{{\bf y}}
\newcommand{\bu}{{\bf u}}

\newcommand{\cP}{{\mathcal P}}

\newcommand{\cD}{{\mathcal D}}

\newcommand{\cC}{{\mathcal C}}
\newcommand{\cS}{{\mathcal S}}

\newcommand\bs{{\bf s}}
\newcommand{\bP}{{\bf P}}
\newcommand{\bC}{{\bf C}}

\newcommand\bX{{\bf X}}

\newcommand\bD{{\bf D}}
\newcommand\bE{{\bf E}}
\newcommand\bA{{\bf A}}
\newcommand\bI{{\bf I}}

\newcommand\bZ{{\bf Z}}

\newcommand\bg{{\bf g}}
\newcommand\bS{{\bf S}}

\newcommand\bU{{\bf U}}

\newcommand\bx{{\bf x}}
\newcommand\bv{{\bf v}}
\newcommand\bY{{\bf Y}}

\newcommand\be{{\bf e}}

\def\beq{\begin{equation}}
\def\eeq{\end{equation}}
\def\bals{\begin{align*}}
\def\eals{\end{align*}}
\def\bal{\begin{align}}
\def\eal{\end{align}}

\renewcommand{\P}{{\mathsf P}}
\newcommand{\E}{{\mathsf E}\hspace{0.1mm}}
\newcommand{\R}{\mathbb{R}}
\DeclareMathAlphabet{\dcal}{U}{dutchcal}{m}{n}
\DeclareMathOperator{\Cov}{Cov}
\DeclareMathOperator{\Var}{Var}
\DeclareMathOperator{\tr}{tr}
\DeclareMathOperator*{\argmin}{arg\,min}

\numberwithin{equation}{section}
\numberwithin{theorem}{section}

\newcommand{\inlinetag}[1]{%
  \refstepcounter{equation}%
  \label{#1}%
  \nolinebreak\hfill\textup{(\theequation)}%
}

\DeclareMathOperator{\diag}{diag}
\title{Online detection of distributional changes for  time series in metric spaces}

\author{
B.\,Cooper Boniece\thanks{Department of Mathematics, Drexel University. 
Email: cooper.boniece@drexel.edu. Research supported in part by NSF DMS-2413558.}
\and
Lajos Horv\'ath\thanks{Department of Mathematics, University of Utah. 
Email: horvath@math.utah.edu.}
\and
Lorenzo Trapani\thanks{Department of Economics, University of Pavia. 
Email: lorenzo.trapani@unipv.it.}
}

\begin{document}
\maketitle

\begin{abstract}
We propose an online testing framework for detecting distributional changes in serially dependent data with values in a separable metric space. Based on two-sample $U$-statistics, the framework encompasses sequential analogs of energy distance and maximum mean discrepancy (MMD) procedures while accommodating temporal dependence. We establish asymptotic theory for finite and open-ended monitoring horizons that characterizes the full asymptotic run-length distribution under $H_0$ and yields asymptotic false-alarm control. We further establish new spectral approximation results for kernel matrices formed from serially dependent observations, and use them to construct a feasible Monte Carlo calibration procedure. Our flexible window construction encompasses classical, Page-type, and full-scan historical-baseline monitoring and can achieve short detection delays for both early and  late changepoints, without requiring sub-Gaussianity or high-order moments of the raw observations. Simulations show reliable false-alarm control across linear, nonlinear, high-dimensional, and functional time-series models and further demonstrate that, over a broad range of alternatives and changepoint locations, the proposed method can achieve substantially shorter delays than recent procedures
designed specifically for rapid detection. Applications to foreign exchange rates,  electricity-market curves, and daily air transportation networks illustrate the methodology across scalar, functional, and network-valued time series.
\end{abstract}
\doublespacing
\section{Introduction}
Structural stability is a central premise in much of time series analysis.  Early detection of departures from  stability is therefore crucial for maintaining the
reliability of inferential procedures \citep{chu:stinchcombe:white:1996,aue:kirch:2024}.  Although mean shifts are often of primary interest, such a shift can be a manifestation of a broader structural break in the data-generating mechanism that may affect higher marginal moments or other structural aspects of the data (e.g. \citealp{casini:perron:2019}). For example,  changes in the level, variability, and cyclical dynamics of industrial-production growth have been documented jointly across major economies \citep{giordani:kohn:vandijk:2007}; and complex structural changes in energy demand and consumption have been documented following COVID-19 \citep{jiang:yeevan:klemes:2021}. Thus, even when changes in level are the principal concern, a distributionally sensitive procedure capable of capturing several aspects of a structural change can potentially offer more power than one targeting mean shifts. A distributional procedure may also be preferable to mean- or covariance-based methods when the type of a break is unknown in advance or is only of secondary interest.  Accordingly, there is a growing body of work on distributional change-point detection in online and offline settings, including empirical CDF-based methods \citep{inoue:2001,kojadinovic:verdier:2021,fu:hong:wang:2023,holmes:kojadinovic:verhoijsen:2024}, characteristic-function-based methods \citep{huskova:meintanis:2006,horvath:rice:vanderdoes:2026}, energy-distance approaches \citep{matteson:james:2014,biau:bleakley:mason:2016,boniece:horvath:trapani:2025}, and kernel methods \citep{arlot:celisse:harchaoui:2019,li:xie:dai:song:2019,wei:xie:2026}, among others. 

The need for distributional changepoint detection may be especially
pronounced for complex data types. Modern time series may be high-dimensional or functional, or may consist of objects such as networks or probability distributions, for which a 
vector-space structure may not adequately reflect the geometry relevant to the application or may be entirely absent \citep{zhang:zhu:shao:2026}. In such domains, structural breaks may not be adequately described by a change in any single marginal feature. For example, changes in network structure associated with major external events have been documented in evolving transportation and communication systems
\citep{athreya:lubberts:park:priebe:2025,
wang:li:madridpadilla:yu:rinaldo:2026}. In functional settings, the U.S. yield-curve illustration of
\citet{bardsley:horvath:kokoszka:young:2017} around the 2008 collapse of Lehman Brothers exhibits changes in both curve shape and stochastic variability, while \citet{stoehr:aston:kirch:2021} find changes in the covariance structure of resting-state functional MRI sequences for which no mean change is detected. These considerations motivate general monitoring procedures that accommodate broad data types while remaining sensitive to complex changes in the underlying distribution.

In this work, we propose a general framework for monitoring time series that take values in a metric space. Our framework is based on degenerate two-sample $U$-statistics that encompass sequential MMD and energy-distance procedures as special cases while also permitting genuinely indefinite kernels. Whereas existing theory for sequential MMD and energy-distance  changepoint methods has focused largely on independent observations,   to the best of our knowledge, we provide
the first general theory for this broader class of sequential distributional
monitors for degenerate kernels under temporal dependence.\footnote{Sequential monitoring based on nondegenerate two-sample $U$-statistics
under dependence was studied
by \citet{kirch:stoehr:2022,kirch:stoehr:2022a}; their theory does not cover the degenerate statistics considered here.}

We establish a unified null theory over a broad class of window schemes, including classical,  Page-type, and full-scan historical-baseline procedures, over both finite and open-ended monitoring horizons. The resulting limits  characterize the full asymptotic run-length distribution of the procedure, yielding asymptotic false alarm control as well as approximations of run length summaries on any finite monitoring horizon. We further establish validity of our procedures  when kernel bandwidths or finite-dimensional preprocessing transformations are estimated from the historical sample for a subclass of bounded kernels.

To establish theoretical validity of our feasible calibration procedure, we provide new spectral approximation results for kernel matrices formed from serially dependent observations, building on the random kernel-matrix approximation framework of \citet{koltchinskii:gine:2000}. These results permit consistent estimation of the long-run covariance operator of an auxiliary kernel-induced Hilbert-space-valued process that governs the asymptotic behavior of the monitoring statistic under $H_0$, which may be of independent interest.

Under alternatives with vanishing break size, we provide theoretical consistency against all possible distributional break types under appropriate choice of kernel. We also show a full-type monitoring scheme yields detection delays whose dependence on the changepoint location can be made arbitrarily mild for appropriate choice of kernels, yielding near-logarithmic delays for fixed alternatives.  A key feature of the proposed framework is that such detection-delay bounds require relatively weak tail assumptions on the raw observations, in contrast with recent short-delay methods developed for weighted CUSUM-type procedures that assume sub-Gaussianity or existence of high-order moments on the raw data
\citep{kutta:dornemann:2025,bastian:kutta:2025,
yu:madridpadilla:wang:rinaldo:2023}. In our framework, the corresponding moment condition is instead imposed on a kernel-induced process and is automatically met for bounded kernels; the remaining data-level requirements reflect a novel tradeoff between kernel regularity and temporal dependence, and are readily met by many standard linear and nonlinear time-series models.

Simulations demonstrate reliable calibration across linear, nonlinear, high-dimensional, and functional time series models, together with favorable delay performance across structural, mixed, and pure mean breaks. Notably, despite its omnibus nature, the implemented scheme often outperforms recent mean-targeted procedures designed for fast detection even under pure mean shifts, with especially pronounced gains under more complex structural breaks. Applications to foreign exchange rates,  electricity-market curves, and daily air transportation networks illustrate the practical utility of the framework for monitoring complex time-series data.

The remainder of the paper is organized as follows. Section~\ref{s:setting} introduces the
monitoring framework and assumptions; Section~\ref{s:asymp} develops the asymptotic theory;
Section~\ref{s:implementation} presents spectral estimation and feasible calibration; and
Sections~\ref{s:simulations} and~\ref{s:data} report the simulations and data illustrations, respectively.  All proofs are given in the Appendix.

\textbf{Notation}.  
For positive sequences $a_m,b_m$, write $a_m\ll b_m$ if $a_m=o(b_m)$ and
$a_m\gg b_m$ if $b_m=o(a_m)$.  We write $a_m\lesssim b_m$ if
$a_m\leq Cb_m$ for all sufficiently large $m$ and some constant $C<\infty$ independent
of $m$; $a_m\asymp b_m$ means $a_m\lesssim b_m$ and $b_m\lesssim a_m$. We write $\mathbb O(r)$ for the group of $r\times r$ orthogonal matrices.

\section{Setting and monitoring statistics}\label{s:setting}
Let $\bX=\{\bX_j,j\in\mathbb Z\}$ be a time series taking values in a separable metric space $(\mathcal X,\rho)$, observed at times $j=1,2,\ldots$. We aim to detect changes in the marginal distribution of $\bX_j$. Working in the monitoring framework initiated by \cite{chu:stinchcombe:white:1996}, we suppose that an initial historical baseline (or ``training") sample $\bX_1,\ldots,\bX_m$ is available prior to the onset of monitoring.  Throughout, we assume the following:

\begin{assumption}\label{a:baseline}
    $\bX_j\sim F$, for $j=1,\ldots,m.$
\end{assumption}

Assumption~\ref{a:baseline} asserts that the historical observations are marginally homogeneous, with common distribution $F$. Conditions on their temporal dependence, which in particular imply stationarity under the null hypothesis, are introduced in Section~\ref{s:assumptions}. After the historical period, we monitor the sequence
$\bX_{m+1},\bX_{m+2},\ldots$ for distributional changes. We consider the sequential testing problem
\begin{equation}\label{e:H_0}
H_0:\quad \bX_j\sim F,\qquad j=m+1,m+2,\ldots,
\end{equation}
against
\begin{equation}\label{e:H_A}
H_A:\quad \text{there exists }~k_*\geq1~\text{ such that}~~
\bX_j\sim
\begin{cases}
F, & j\leq m+k_*,\\
F_*, & j>m+k_*,
\end{cases}
\end{equation}
where $F_*\neq F$. Thus, under $H_0$ the marginal distribution remains unchanged throughout the monitoring period, whereas under $H_A$ it changes at the unknown time $m+k_*$. The dependence conditions used to establish the asymptotic theory are stated separately in Section~\ref{s:assumptions}.

Our approach is based on monitoring schemes built from two-sample 
$U$-statistics. For a chosen kernel $h:\mathcal X \times \mathcal X\to \R$ (not necessarily positive semidefinite), and every set of integers  $0\leq a<b\leq c<d$, we set
\begin{equation}\label{e:U-master}
\begin{aligned}
\mcU\big((a,b]; (c,d]\big)
  &= \frac{2}{(b-a)(d-c)}
       \sum_{i=a+1}^b \sum_{j=c+1}^d h(\bX_i,\bX_j)   \\
  &\quad - \binom{b-a}{2}^{-1}
       \sum_{a<i<j\leq b} h(\bX_i,\bX_j)
        - \binom{d-c}{2}^{-1}
       \sum_{c<i<j\leq d} h(\bX_i,\bX_j).
\end{aligned}
\end{equation}

Thus, $\mcU((a,b]; (c,d])$ compares observations in the time window $(a,b]$ 
against those in $(c,d]$ through the kernel $h$. Intuitively, under $H_0$, the quantity $\mcU((a,b],(c,d])$  typically remains  ``small'' in magnitude over any chosen interval pairs  $(a,b]$ and $(c,d]$;  conversely, if for some pair $(a,b]$, $(c,d]$ the quantity $|\mcU((a,b],(c,d])|$ is ``large," a changepoint may be present somewhere in $(a,d]$.  We provide details on the class of kernels $h(\bx,\by)$ below in Assumption \ref{a:h}.

Once a kernel $h$ is fixed, a variety of monitoring procedures based on
$\mcU$ can be constructed by specifying which pairs of time windows are
compared at each time $m+k$.  To accommodate a broad class of
practically relevant procedures, we consider schemes in which the comparison
window always ends at the current observation.  Formally, a monitoring scheme
is a collection of index sets $\{\mathcal S_k:k\ge1\}$ satisfying
\begin{equation}\label{e:schemeclass}
\mathcal S_k \subseteq \mathcal C_k
:=
\Big\{(\ell_1,\ell_2)\in \mathbb N^2: 0\leq \ell_1 \leq \ell_2 \leq k-2\Big\},
\end{equation}
where each $(\ell_1,\ell_2)\in \mathcal S_k$ corresponds to comparing data occurring during
$
(0,m+\ell_1]$  against  $(m+\ell_2,m+k].
$ Thus, the first window always contains the historical sample $(0,m]$, while
$\ell_1$ controls whether the reference window remains fixed or expands into
the monitoring period, and $\ell_2$ determines how far back from the current time
the comparison window begins.

Given a scheme $\{\mathcal S_k\}$ and a weight function $g_m(k,\ell_1,\ell_2)>0$, we define the detector
\begin{equation}\label{e:detector}
\mcD_m(k)
=
m^{-1}
\max_{(\ell_1,\ell_2)\in \mathcal S_k}
\frac{(k-\ell_2)^2}{g_m(k,\ell_1,\ell_2)}
\big|
\mcU\big((0,m+\ell_1],(m+\ell_2,m+k]\big)
\big|.
\end{equation}
We consider weights of the form
$$
    g_m(k,\ell_1,\ell_2)
    =
    g(t,L,r),
    \qquad
    t=\frac{k}{m},\quad
    L=\frac{m+\ell_1}{m},\quad
    r=\frac{k-\ell_2}{m},
$$
where, for some \(0<\beta<1\),
\begin{equation}\label{a:g}
    g(t,L,r)
    =
    (1+t)^{1-\beta}r^\beta
    \left(1+\frac{r}{L}\right)
    \log^{2-\beta}(e+t),
    \qquad t\geq0,~~ r\geq 0,~~L\geq 1.
\end{equation}
Given a critical value $c_\alpha>0$, define the stopping time
\begin{equation}\label{e:stopping}
\tau_m(c_\alpha) =\inf\left\{k\geq2:\mathcal D_m(k)>c_\alpha\right\},
\end{equation}
where $\inf\varnothing=\infty$. In the open-ended setting, the
procedure signals a change if $\tau_m(c_\alpha)<\infty$. If monitoring is terminated
after $M_m$ observations, a change is detected during the monitoring period
if and only if
$
\tau_m(c_\alpha)\leq M_m,
$ or, equivalently, whenever
$
\max_{2\leq k\leq M_m}\mathcal D_m(k)>c_\alpha.$

The class \eqref{e:schemeclass} includes many common schemes that have been employed in CUSUM-type settings.  For example it includes the classical scheme \citep{chu:stinchcombe:white:1996}
$
\mathcal S_k^{\mathrm{cl}} =\{(0,0)\},$
 which compares the fixed
historical sample $(0,m]$ with all observations $(m,m+k]$ available from the onset of monitoring.  Owing to its simplicity, $\mathcal S_k^{\rm cl}$ offers among the simplest limit behavior under $H_0$ and serves as a useful reference scheme. However, particularly for late changes, $\mathcal S_k^{\rm cl}$ may lead to long detection delays, since for $k>k_*$, the monitoring window $(m,m+k]$ can potentially contain a large number of pre-change observations; hence, for faster detection, other schemes may be desirable.  Central examples covered by \eqref{e:schemeclass} include the Page- and full-type schemes:
\begin{equation}\label{e:D_full}
\mathcal S_k^{\mathrm{Page}}
    =\{(0,\ell):0\leq\ell\leq k-2\},\qquad 
\mathcal S_k^{\mathrm{full}}
    =\{(\ell,\ell):0\leq\ell\leq k-2\}.
\end{equation}
The Page-type scheme $\mathcal S_k^{\mathrm{Page}}$ (e.g., \citealp{page1955test,fremdt2015page}) compares this historical window $(0,m]$ while varying the
start of the comparison window $(m+\ell,m+k]$. The ``full"  scheme (c.f.~\citealp{aue:kirch:2024,gosmann:kley:dette:2021}) compares the contiguous windows $(0,m+\ell]$ and $(m+\ell,m+k]$. Each scheme choice entails a different tradeoff among power, detection delay, among other practical considerations; the full scope of admissible schemes is discussed in Section \ref{s:asymp}.

\begin{remark} The boundary function $g$ in \eqref{a:g} is inspired by the weighting approach
of \cite{kutta:dornemann:2025}, which was designed to reduce detection delays
over long monitoring horizons. Our choice differs in that it also depends on the
effective left-window length $m+\ell_1$. This refinement changes the scaling of
the detector when the left window grows, helping to further improve delay times.
\end{remark}

\subsection{Assumptions}\label{s:assumptions}
Throughout, we  let $\dcal H$ denote the Hilbert space
\begin{equation}\label{e:def_H}
\dcal H = {\mathcal L}^2(F)
   = \Bigl\{g:\mathcal X\to\R \text{ measurable} : \int g^2(\bx)\,F(d\bx)<\infty\Bigr\},
\end{equation}
 equipped with the inner product $\langle f,g \rangle_{\dcal H}= \int f g dF$.  We also define  the \textit{degenerate} part of $h$:
 \begin{equation}\label{e:degenH}
 \overline h(\bx,\by) = h(\bx,\by) - \E h(\bx,\bY) -  \E h(\bX,\by) + \E h(\bX,\bY) , \qquad \bX,\bY\stackrel{iid} \sim F.
  \end{equation}
  \begin{assumption}\label{a:h}
In \eqref{e:U-master}, we assume $h:\mathcal X\times \mathcal X\to \R$ is a measurable function satisfying $h(\bx,\by)=h(\by,\bx)$.  In addition:
\begin{enumerate}[label=(\roman*)]

\item 
$
\iint h^2(\bx,\by) F(d\bx)F(d\by)<\infty$ and $\int h^2(\bx,\bx) F(d\bx)<\infty.
$

\item Let $A:\cH\to \cH$ denote the operator
\begin{equation}\label{e:def_A}
Ag(\bx)=  \int \overline h(\bx,\by)g(\by)F(d\by),
\end{equation}
and let $\{(\phi_\ell,\lambda_\ell)\}_{\ell\ge1}$ be the eigenpairs of $A$,  with 
$\|\phi_\ell\|_{\cH}=1$, and 
ordered so that $|\lambda_1|\ge|\lambda_2|\ge\cdots$
We assume
\begin{equation}\label{a:traceclass}
\sum_{\ell=1}^\infty|\lambda_\ell|<\infty.
\end{equation}

\item For some $\alpha,C>0$,
$
\left(\displaystyle\sum_{\ell\ge1} |\lambda_\ell|(\phi_\ell(\bx)-\phi_\ell(\bx'))^2\right)^{1/2}
\leq C \rho(\bx,\bx')^\alpha.
$ \inlinetag{a:alphaholder}

\end{enumerate}
\end{assumption}
We provide some examples of kernels satisfying Assumption \ref{a:h} below.  Assumption~\ref{a:h}(i) guarantees the operator $A$ defined in condition (ii) is well-defined on $\cH$; it also contains a diagonal integrability requirement, used to control terms involving $h(\bX_i,\bX_i)$ when diagonal contributions are added or removed in proofs. Assumption~\ref{a:h}(ii) asserts that the operator $A$ defined by the degenerate kernel $\overline h$ is trace-class. In particular, under this condition
\begin{equation}\label{e:h_expansion}\overline h(\bx,\by)=\sum_{\ell=1}^\infty \lambda_\ell \phi_\ell(\bx)\phi_\ell(\by),\qquad F\times F\text{-a.e.}
\end{equation}
(see, for example \citealp{eagleson:1979a}).  Lastly, Assumption~\ref{a:h}(iii) requires a weighted H\"older continuity of the eigenfunctions, where the weights are determined by the eigenvalues of $A$. Though this assumption may not be easily verified for an arbitrary kernel, it is satisfied by a broad array of kernels $h$ commonly used in nonparametric testing, including distance-type kernels and many positive semidefinite kernels under mild conditions on the marginal distribution $F$, as our examples below show.

We also require the following high-level moment-type assumption regarding $\bX_1$ and $h$. 
Let
\begin{equation}\label{e:def_Y(x)}
Y(\bx) =\sum_{\ell=1}^\infty\sqrt{|\lambda_\ell|}\,\phi_\ell(\bx)\phi_\ell  \in \cH.
\end{equation}
\begin{assumption}\label{a:Y_moment}
We assume 
$
\E \|Y(\bX_1)\|_{\cH}^p < \infty$, for some $ p > \displaystyle\frac{2}{1-\beta}\vee 4.$
\inlinetag{a:p}
\end{assumption}

Note the map $\bx \mapsto Y(\bx)$, which depends on both $F$ and the chosen kernel $h$, ``embeds" the $\mathcal X$-valued data into $\cH$, producing an auxiliary $\cH$-valued time series $\{Y(\bX_j),j\in \mathbb Z\}$. Hence, Assumption~\ref{a:Y_moment} is a moment condition on the
kernel-induced object $Y(\bX_1)$, rather than directly on $\bX_1$.  In
spectral form, it is equivalent to
$
\E\left(\sum_{\ell}|\lambda_\ell|\,\phi_\ell(\bX_1)^2\right)^{p/2}< \infty.
$ 
This condition is used to apply the H\"olderian invariance principles that underpin the limit behavior of $\cD_m(k)$ under $H_0$; (cf.~\citealp{rackauskas:suquet:2005,kutta:dornemann:2025}).  The next examples
show how Assumptions~\ref{a:h} and \ref{a:Y_moment} translate into verifiable kernel- and data-level conditions
for common kernel classes.

\begin{example}\label{ex:cnd_kernels}
Suppose $h(\bx,\by)$ is a semimetric of negative type (see \citealp{sejdinovic:etal:2013}).
Then Assumptions~\ref{a:h} and \ref{a:Y_moment} are satisfied provided
$$
\E h(\bX_1,\bx_0)^{p/2}<\infty,
\quad
|h(\bx,\by)| \leq C\rho(\bx,\by)^{2\alpha},
$$
for some $\bx_0$, where $\alpha$ and $p$ are in \eqref{a:alphaholder} and \eqref{a:p}. In particular, for $\bx,\by\in \R^d$, they hold for the energy-distance  (\cite{szekely:rizzo:2022}) $h(\bx,\by)=\|\bx-\by\|^{2\alpha}$ when $\bX_1$ has $2\alpha$ moments.
\end{example}

\begin{example}\label{ex:psd_kernels}
Suppose $h(\bx,\by)$ is a positive definite kernel, and let $\alpha$ and $p$ be as in \eqref{a:alphaholder} and \eqref{a:p}.  Then, Assumptions~\ref{a:h} and \ref{a:Y_moment} are satisfied whenever
$$
\E h(\bX_1,\bX_1)^{p/2}<\infty,
\quad
h(\bx,\bx)+h(\by,\by)-2h(\bx,\by) \leq C\rho(\bx,\by)^{2\alpha}.
$$
\end{example}
The distinction between requiring
moments of $Y(\bX_1)$ instead of $\bX_1$ is worth noting: if $h$ is a bounded kernel belonging to one of the classes in Examples \ref{ex:cnd_kernels} or \ref{ex:psd_kernels}, Assumption~\ref{a:Y_moment} is automatically
satisfied for every finite $p$.

Concerning the dependence structure of the data itself, we impose an $L^p$-$m$
approximability condition, a standard weak dependence framework \citep{hormann:kokoszka:2010}. 
\begin{assumption}\label{a:Lpm} 
\begin{itemize}
\item[(i)]It holds that $
\bX_n = f(\varepsilon_n,\varepsilon_{n-1},\ldots)$ 
where $\varepsilon_n$ are i.i.d. elements taking values in a measurable space $S$, and $f:S^{\infty}\to \mathcal X$ is a deterministic measurable function.  
\item[(ii)] For each $n\geq1$, let $\{\varepsilon_t^{(n)}\}_{t\in\mathbb Z}$ be an i.i.d.~copy of $\{\varepsilon_t\}$, and define, for each $r\geq 1$,
\begin{equation}\label{e:X_m}
\bX_n^{(r)}= f(\varepsilon_n,\ldots,\varepsilon_{n-r+1},
\varepsilon_{n-r}^{(n)},\varepsilon_{n-r-1}^{(n)},\ldots),
\quad n \in \mathbb Z.
\end{equation}
Set 
$\vartheta _{q}(r):= \big(\E\rho(\bX_1,\bX_1^{(r)})^{q}\big)^{1/q}.$ We assume, for some $\bx_0\in \mathcal  X$, that 
$
 \E\rho(\bx_0,\bX_1)^{\alpha p}<\infty$ and that, for some $0<\eta<1$, $ \sum_{r\geq 1}\left(\vartheta_{p\alpha}(r)^{\alpha}\right)^{\eta} <\infty$%
 ~where $p$ is given in Assumption \ref{a:Y_moment} and $\alpha$ is given in Assumption \ref{a:h}.
 \end{itemize}
\end{assumption}

\begin{remark}\label{r:dependence_moments}
For many bounded kernels (e.g.~when the kernels in
Examples \ref{ex:cnd_kernels} and \ref{ex:psd_kernels} are bounded),
Assumption~\ref{a:Y_moment} imposes no moment condition on the marginal
distribution of $\bX_1$. The remaining data-level moment requirement comes
from Assumption~\ref{a:Lpm}: for $\beta$ close to one, the condition
$p>2/(1-\beta)$ gives, with $q=\alpha p$,
$$
    \E\rho(\bx_0,\bX_1)^q<\infty,
    \quad
    \text{for some } q>\frac{2\alpha}{1-\beta}.
$$
Thus, if $\alpha$ can be chosen small relative to $1-\beta$, the required
raw moment order on $\bX_1$ can remain small even when $\beta$ is close to one. The ability to take $\alpha$ small depends on the strength of the
approximability condition.  Assumption~\ref{a:Lpm} requires summability of the $L^{\alpha p}$-$r$ approximation coefficients, which is mild for many
short-memory models.  For example, if
$\vartheta_{\alpha p}(r)\leq z^r$ for some $z\in(0,1)$, then the summability over $r$ is automatic once the corresponding moment exists. Thus, for many bounded kernels, direct high-order moment assumptions on $\bX_1$ can effectively be replaced by approximability conditions on the
dependence structure.
\end{remark}

\begin{example}\label{ex:Lpm_standard}
Assumption~\ref{a:Lpm} is satisfied by many standard causal time series models,
including linear processes and
GARCH-type models under standard conditions.  In these
settings the approximation coefficients $\vartheta_q(r)$ often decay geometrically, see \cite{horvath:rice:2024}.
\end{example}

Finally, we impose a technical absolute continuity condition on the data:

\begin{assumption}\label{a:ac}
For each pair $(i,j)$, with $i\neq j$, the distribution of $(\bX_i,\bX_j)$ is
absolutely continuous with respect to $F\times F$.
\end{assumption}
Assumption~\ref{a:ac} is imposed only for technical convenience and may be replaced by any condition ensuring the expansion \eqref{e:h_expansion} holds almost surely at $(\bX_i,\bX_j)$ for all $i\neq j$. Assumption~\ref{a:ac} for example, for standard linear and nonlinear causal models with innovations admitting a density (see Lemma \ref{l:ac_examples} in the Supplement.)

\section{Asymptotic theory}\label{s:asymp}

For the asymptotic results below, we restrict attention to monitoring schemes that admit a continuous-time embedding.

\begin{assumption}\label{a:admissiblescheme} 
The following conditions hold on the scheme $\{\mathcal S_k\}$ in \eqref{e:schemeclass}:
\begin{enumerate}[label=(\roman*)]
\item 
There exists
$\{\mathcal S(t):t\geq 0\}\subseteq \{(u,v): 0 \leq u \leq v  \leq t\}$ such that for all $m,k\geq 1$,
\begin{equation}\label{e:S(t)_cond}
\mathcal S_k
=
\Big\{(\lfloor mu\rfloor,\lfloor mv\rfloor):(u,v)\in \mathcal S(k/m)\Big\}\cap \mathcal C_k.
\end{equation}
\item  
For each $t\geq 0$, the region $\mathcal S(t)\subseteq \R^2$ is compact. Moreover, for some compact set $K$, there exists a 
continuous function
$
h_{\cS}:[0,\infty)\times K \to \R^2
$
such that
\begin{equation}\label{e:cont_param_cond}
\mathcal S(t)=\{h_{\cS}(t,r): r\in K\}, \qquad t\geq 0
\end{equation}
and
\begin{equation}\label{e:cont_param_cond2}
\lim_{\delta\downarrow 0}\sup_{\substack{s,t\geq 0\\ |s-t|\leq \delta}}\sup_{r\in K}
\big|h_{\cS}(t,r)-h_{\cS}(s,r)\big| =0.
\end{equation}
\end{enumerate}
\end{assumption}
Assumption \ref{e:S(t)_cond} captures a wide array of practical schemes belonging to the class \eqref{e:schemeclass}. Assumption \ref{a:admissiblescheme}(i) requires that the scheme $\mathcal S_k$ admits a deterministic continuous-time embedding described by compact regions $\mathcal S(t)\subseteq \R^2$; Assumption \ref{a:admissiblescheme}(ii) requires the family of regions $\{\mathcal S(t),t\geq 0\}$ to be continuous, in an appropriate sense.

\begin{example}\label{ex:schemes}
The classical scheme as well as the schemes \eqref{e:D_full} satisfy
Assumption~\ref{a:admissiblescheme}.  For example, the classical scheme is simply $\mathcal S^{\mathrm{cl}}(t)= \{(0,0)\}$, and $h_{\cS}(t,0)=(0,0).$ The continuous-time embeddings of the Page-type and full schemes are:
$$
\begin{aligned}
\mathcal S^{\mathrm{Page}}(t)
&= \{(0,s):0\leq s\leq t\},
&\quad K&=[0,1],
&\quad h_{\cS}(t,r)&=(0,rt), \\[0.5em]
\mathcal S^{\mathrm{full}}(t)
&= \{(s,s):0\leq s\leq t\},
&\quad K&=[0,1],
&\quad h_{\cS}(t,r)&=(rt,rt).
\end{aligned}
$$
\end{example}
The general formulation of Assumption~\ref{a:admissiblescheme} permits many practically motivated restrictions. For example, comparison windows may be required to omit an initial period of order $m$, as may be useful when observations immediately following the training sample are unavailable or unreliable; it also permits variants that restrict windows  to a sparse grid to reduce computational cost. In the sequel,  we use the shorthand
$$
    g_t(u,v):=g(t,1+u,t-v),
$$
where $1+u$ and $t-v$ correspond to the limiting left- and right-window lengths,
respectively.

\begin{theorem}\label{thone} Let $M_m\in\mathbb N\cup\{\infty\}$ denote the monitoring horizon, where
$M_m\equiv \infty$ corresponds to open-ended monitoring. Assume
$M_m/m \to a\in(0,\infty].$
Suppose $H_0$ and Assumptions~\ref{a:baseline}--\ref{a:ac} and
\ref{a:admissiblescheme} hold.  Then,
\begin{align*}
\lim_{m\to \infty}\P\left\{\sup_{2\leq k\leq M_m}\mathcal D_m(k)>c\right\}=\P\left\{ \sup_{0<t<a}\Gamma(t)\geq c \right\},
\end{align*}
at each continuity point $c$ of $\sup_{0<t<a}\Gamma(t)$, where
\begin{align*}
\Gamma(t)
&=\sup_{(u,v)\in\mathcal S(t)}\frac{1}{g_t(u,v)}\left|
\sum_{\ell=1}^\infty \lambda_\ell \left(\Big(\frac{(t-v)^2}{1+u}+ (t-v)\Big)-\left(
 \frac{t-v}{1+u}\widetilde W_\ell(u) - \big(\widetilde W_\ell(t)-\widetilde W_\ell(v)\big)\right) ^2
\right)\right|.%
\end{align*}
Above, $\widetilde W_\ell(t)=W_\ell(t+1)$, where the processes $\{W_\ell(t), t\geq 0, \ell\geq 1   \}$  are jointly Gaussian with $EW_\ell(t)=  0$  and $EW_\ell(t)W_{\ell'}(s)=\mathcal g_{\ell,\ell'}(s\wedge t)$,  and 
with
\begin{equation}\label{e:def_gell}
\mathcal g_{\ell,\ell'}=\sum_{r\in \mathbb Z}\Cov\left(\phi_{\ell}(\bX_0), \phi_{\ell'}(\bX_r) \right).%
\end{equation}
Consequently,  if $M_m/m\to a\in(0,\infty]$, then 
$$\lim_{m\to\infty}\P\{\tau_m(c)\leq M_m\}= \P\left\{\sup_{0<t<a}\Gamma(t)\geq c\right\}.
$$

\end{theorem}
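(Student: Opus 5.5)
The plan is to reduce the $U$-statistic detector to a functional of partial sums of the $\cH$-valued process $Y(\bX_j)$ and then apply a weighted (H\"olderian) invariance principle.

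\emph{Step 1: spectral reduction.} By Assumption~\ref{a:ac}, the expansion \eqref{e:h_expansion} holds almost surely at every pair $(\bX_i,\bX_j)$ with $i\neq j$. Replacing $h$ by $\overline h$ in \eqref{e:U-master} changes nothing, because the two-sample combination annihilates the additive terms $\E h(\bx,\bY)+\E h(\bX,\by)-\E h(\bX,\bY)$. Set $S_\ell(n)=\sum_{i\le n}\phi_\ell(\bX_i)$; note $\E\phi_\ell(\bX_1)=0$ because $\overline h$ is degenerate. With $n_1=m+\ell_1$ and $n_2=k-\ell_2$, completing squares gives
\[
(k-\ell_2)^2\,\mcU=\sum_\ell\lambda_\ell\Big(\tfrac{n_2^2}{n_1(n_1-1)}\textstyle\sum_{i\le n_1}\phi_\ell^2(\bX_i)+\tfrac{n_2}{n_2-1}\textstyle\sum_{i\in(m+\ell_2,m+k]}\phi_\ell^2(\bX_i)\Big)-\sum_\ell\lambda_\ell\Big(\tfrac{n_2}{n_1}S_\ell(n_1)-(S_\ell(m+k)-S_\ell(m+\ell_2))\Big)^2+R,
\]
where $R$ collects the $O(1/n)$ corrections to the normalisations. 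The diagonal sums are handled by an ergodic/LLN argument for $\sum_\ell|\lambda_\ell|\phi_\ell^2(\bX_i)=\|Y(\bX_i)\|^2$, using $\E\|Y\|^2<\infty$ and the trace-class condition \eqref{a:traceclass}. Since $\E\phi_\ell^2=1$, they are replaced by $n_1$ and $n_2$ respectively, which produces the deterministic centring term $\frac{(t-v)^2}{1+u}+(t-v)$ after division by $m$. The error has to be uniform over $k$ with the weight $g$. For this I would use a maximal inequality on $\sum_{i\le n}(\|Y(\bX_i)\|^2-\E)$ based on the $L^p$-$m$ approximability of Assumption~\ref{a:Lpm}, transferred through \eqref{a:alphaholder} (note $\|Y(\bx)-Y(\bx')\|\le C\rho^\alpha$). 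The factors $(1+t)^{1-\beta}\log^{2-\beta}(e+t)$ in $g$ make the normalised LLN error vanish uniformly, including for $k\gg m$.

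\emph{Step 2: invariance principle.} The quadratic part equals $\|\,\frac{n_2}{n_1}\mathbf S(n_1)-(\mathbf S(m+k)-\mathbf S(m+\ell_2))\|^2_{\Lambda}$, where $\mathbf S(n)=\sum_{i\le n}Y(\bX_i)$ and the signed weights come from $\lambda_\ell$. I would invoke a weighted/H\"olderian invariance principle for the $\cH$-valued Bernoulli shift $Y(\bX_j)$ (as in \citealp{rackauskas:suquet:2005,kutta:dornemann:2025}). It is justified by Assumption~\ref{a:Y_moment} with $p>2/(1-\beta)$ and by the approximability of $Y(\bX_j)$, which follows from Assumption~\ref{a:Lpm} together with \eqref{a:alphaholder}. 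This yields Gaussian processes $\mathbf W_m$ with coordinate covariance $\mathcal g_{\ell,\ell'}$ as in \eqref{e:def_gell} such that $\sup_n n^{-\beta/2}(\log n)^{-\gamma}\|\mathbf S(n)-\mathbf W_m(n)\|$ is negligible at the scale $m^{1/2}$. In the Gaussian replacement, scaling gives $m^{-1/2}\mathbf W_m(m\,\cdot)\overset{d}{=}W(\cdot)$; writing $\widetilde W(s)=W(1+s)$ then produces the bracket $\frac{t-v}{1+u}\widetilde W_\ell(u)-(\widetilde W_\ell(t)-\widetilde W_\ell(v))$.

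\emph{Step 3: passage to the supremum.} On compact $t$-ranges $[\delta,T]$, Assumption~\ref{a:admissiblescheme} (continuity of $h_{\cS}$ and compactness of $K$) lets me replace the discrete maxima over $\mathcal S_k$ and $k$ by the continuous supremum over $\mathcal S(t)$ and $t$, via continuous mapping. Three boundary regions remain: $t\to0$, $r=t-v\to0$, and $t\to\infty$. For $t\to0$ and $r\to0$, the factor $r^\beta$ in $g$ is matched against the Gaussian increments, which are squared and hence of order $r$; Hölder-type moduli then give order $r^{1-\beta}\to0$, and the $p>2/(1-\beta)$ moment condition makes this uniform at the discrete level. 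For $t\to\infty$, the law of the iterated logarithm combined with the weight $(1+t)^{1-\beta}\log^{2-\beta}$ makes the contribution from $t>T$ small in probability, uniformly in $m$. This tail estimate is where I expect the main obstacle: it needs a maximal inequality for the approximation error and the quadratic forms over $k\in[Tm,\infty)$, with a blocking argument over dyadic ranges of $k$. The bounds must hold jointly for all infinitely many $\ell$ through the trace-class summability, and must also cover indefinite $\lambda_\ell$, so I would bound with $|\lambda_\ell|$ throughout.

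Finally, convergence in distribution of the supremum at continuity points gives the first display. The statement about $\tau_m$ is then immediate, since $\{\tau_m(c)\le M_m\}=\{\max_{k\le M_m}\mathcal D_m(k)>c\}$.
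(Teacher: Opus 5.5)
Your plan is essentially the paper's proof: the product-kernel expansion reduces $\mcU$ to $\tr(A)\big(\tfrac{1}{b-a}+\tfrac{1}{d-c}\big)-Q_J(\overline Y_{a,b}-\overline Y_{c,d})$ plus diagonal remainders, which are handled by the ergodic theorem and dyadic maximal inequalities; then the H\"olderian invariance principle $\|Z_m-G_m\|_{\gamma,1+T}\to0$ with $\beta/2<\gamma<1/2-1/p$ (this H\"older-norm form, not the weighted bound on $\mathbf S(n)-\mathbf W_m(n)$ you wrote, is what the paper uses for the short windows $r\to0$, and it is exactly where $p>2/(1-\beta)$ enters) is combined with continuity of the limiting functional on compacts, and $t\geq T$ is controlled by dyadic blocking. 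Two points to tighten: the diagonal LLN and maximal inequality must be for the signed variables $\sum_\ell\lambda_\ell\phi_\ell^2(\bX_i)$ centred at $\tr(A)$, not for $\|Y(\bX_i)\|^2$; and the tail and remainder bounds rest on the factor $1+r/L$ in $g(t,L,r)$, via $r^2/g(t,L,r)\leq L$ and $r^2/(L\,g(t,L,r))\leq\log^{-(2-\beta)}(e+t)$, rather than on the LIL or on $(1+t)^{1-\beta}\log^{2-\beta}(e+t)$ alone.
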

In words, Theorem~\ref{thone} shows that the monitoring statistic admits a tractable asymptotic limit under $H_0$, which serves as the basis for calibration of the procedure.  In Section \ref{s:implementation}, we discuss implementation based on Theorem \ref{thone}.   We illustrate the limit process for several monitoring schemes in the next examples.
\begin{example}
Consider the classical scheme $\mathcal S_k^{\mathrm{cl}} =\{(0,0)\}$ and the schemes in \eqref{e:D_full}.  Under $H_0$, the
limits take the following forms.

\begin{enumerate}[label=(\roman*)]

\item For the classical scheme,
$$
\Gamma^{\mathrm{cl}}(t)
=
\frac{1}{g_t(0,0)}
\left|
\sum_{\ell=1}^{\infty} \lambda_\ell
\left[
t(1+t)
-
\left((1+t)W_\ell(1)-W_\ell(1+t)\right)^2
\right]
\right|.
$$

\item For the Page scheme \eqref{e:D_full},
$$
\begin{aligned}
\Gamma^{\mathrm{Page}}(t)
&=
\sup_{0\leq s\leq t}
\frac{1}{g_t(0,s)}
\Bigg|\sum_{\ell=1}^{\infty}\lambda_\ell
\big[
(t-s)(1+t-s)\\[-0.4ex]
&\hspace{5.5cm} 
-\left((t-s)W_\ell(1)+W_\ell(1+s)-W_\ell(1+t)\right)^2
\big]\Bigg|.
\end{aligned}
$$

\item For the full-type scheme \eqref{e:D_full},
$$
\Gamma^{\mathrm{full}}(t)
=\sup_{0\leq s\leq t}
\frac{1}{g_t(s,s)} \left| \sum_{\ell=1}^{\infty}\lambda_\ell
\left[\frac{(t-s)^2}{1+s}+(t-s)
-\left(\frac{1+t}{1+s}W_\ell(1+s)-W_\ell(1+t)\right)^2
\right]\right|.
$$
\end{enumerate}
\end{example}

\subsection{Consistency and delay bounds}

Given $F_*$ as in \eqref{e:H_A}, define the following \textit{discrepancy measure}:
\begin{equation}
\mathfrak D_h(F,F_*) =2\E h(\bX,\bX_*) - \E h(\bX,\bX')-\E h(\bX_*,\bX_*'),
\end{equation}
where $(\bX,\bX')\sim F\times F$ and $(\bX_*,\bX_*')\sim F_*\times F_*$ are independent.   When $\mathcal X=\R^d$ and $h(\bx,\by)=\|\bx-\by\|$, $\mathfrak D_h$ coincides with the energy distance \citep{szekely:rizzo:2022}, and more generally, when $h$ is a semimetric of strong negative type  \citep{sejdinovic:etal:2013}, the discrepancy $\mathfrak D_h$ has the property that
\begin{equation}\label{e:omnibus}
\mathfrak D_h(F,F_*)\neq 0 \iff F\neq F_*,
\end{equation}
i.e. $\mathfrak D_h(F,F_*)$ can distinguish between arbitrary distributions on $\mathcal X$.

We next state our main assumption under $H_A$.
\begin{assumption}\label{a:Ha_dependence,moments}
The following conditions hold under $H_A$.

\begin{enumerate}[label=(\roman*)]
\item  The pre-change segment $\bX_1,\ldots,\bX_{m+k_*}$ is taken from a stationary process
satisfying Assumptions~\ref{a:Lpm} and~\ref{a:ac}. 
\item 

$
\iint |h(\bx,\by)|\,F(d\bx)F_*(d\by)<\infty,$ $\iint |h(\bx,\by)|\,F_*(d\bx)F_*(d\by)<\infty, 
$ and the expansion \eqref{e:h_expansion} holds
at every pair $(\bX_i,\bX_j)$, $i\neq j$.
\item Let $\bX_*\sim F_*$. With $Y(\bx)$ as in \eqref{e:def_Y(x)}, it holds that
$\E\|Y(\bX_*)\|_{\cH}<\infty.$
Moreover, with
$
\Delta=\E Y(\bX_*)\in\cH$,  and $\Delta_\ell=\E\phi_\ell(\bX_*)$, with $ \ell\geq1,$
there exists a constant $\kappa_0<\infty$, independent of $m$, such that
$
\|\Delta\|_{\cH}^{2}=
\sum_{\ell=1}^{\infty}|\lambda_\ell|\Delta_\ell^2 \leq \kappa_0\big|\mathfrak D_h(F,F_*)\big|.
$
\item 
For every integer sequence $\{n_m\}$ satisfying $n_m\to\infty$ as $m\to\infty$, it holds that
\begin{equation}
\left\|\sum_{j=1}^{n_m} \big(Y(\bX_{m+k_*+j})-\Delta\big)\right\|_{\cH}=O_{\P}(n_m^{1/2}),
\qquad 
\frac1{n_m}\sum_{j=1}^{n_m}\left\|Y(\bX_{m+k_*+j})\right\|_{\cH}^{2}
=O_{\P}(1).
\label{e:Ha_postchange_bound}
\end{equation}
\end{enumerate}
\end{assumption}

Part (i) of  Assumption~\ref{a:Ha_dependence,moments} requires the pre-change sequence to satisfy the  conditions previously assumed under $H_0$. Part~(ii) ensures that the discrepancy
$\mathfrak D_h(F,F_*)$ is well defined. 

Part (iii) permits the distributional change in the original sequence $\bX_j$ to be interpreted as a mean shift in the $\cH$-valued auxiliary process $\{Y(\bX_j),j\in \mathbb Z\}$.  The final inequality in this condition relates the
magnitude of the (signed) discrepancy to the squared norm of the shift
$\Delta$ and thereby rules out cancellation among positive and negative eigenvalues of $A$. Under the stated  conditions, 
$$
\mathfrak D_h(F,F_*)
=
-\sum_{\ell=1}^{\infty}\lambda_\ell\Delta_\ell^2.$$
Consequently, the final inequality holds with $\kappa_0=1$ whenever
$\overline h$ is either positive semidefinite or negative semidefinite. In
particular, this applies when $h$ is positive definite or conditionally
negative definite, respectively.

Part (iv) imposes only the bounds needed to control the post-change sample means and the diagonal remainder terms. It is satisfied, for example, under the idealized alternative model \eqref{e:H_A} in which the process switches immediately at time $m+k_*$ from one stationary regime to another. More precisely, the assumption follows if the pre-change process satisfies Assumption~\ref{a:Lpm} and the centered
post-change $\dcal H$-valued process
$\{
Y(\bX_{m+k_*+j})-\Delta:j\geq1
\}$
is stationary and satisfies an analogous $L^p$-$m$ approximability
condition. The
high-level formulation also permits a short transition to a post-change stationary regime, provided the bounds above continue to hold.

We next give a simple consistency statement, together with its implied
delay bound, for the full monitoring scheme. 
\begin{theorem}\label{t:delay_full}
Let $\tau_m^{\rm full}$ denote the stopping time \eqref{e:stopping} based on the full scheme \eqref{e:D_full}.  Suppose $H_A$ and Assumptions~\ref{a:h}, \ref{a:Y_moment}, and
\ref{a:Ha_dependence,moments} hold.  Let $\{k_m,m\geq 1\}$ be any sequence of monitoring times with $k_m>k_*$
and set
$$
d_m=k_m-k_*,  \qquad  q_m=\frac{d_m}{m+k_*},  \qquad  t_m=\frac{k_m}{m}.
$$
If
$$
(m+k_*)|\mathfrak D_h(F,F_*)|  \left(\frac{q_m}{1+q_m}\right)^{2-\beta}  \log^{-(2-\beta)}(e+t_m)
\to\infty,
$$
then $\P\left(\tau_m^{\mathrm{full}}\leq k_m\right)\to1.$
\end{theorem}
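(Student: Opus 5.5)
Since $\tau_m^{\rm full}\le k_m$ whenever $\mathcal D_m(k_m)>c_\alpha$, it suffices to bound the detector at the single time $k=k_m$ and the single pair $(\ell_1,\ell_2)=(k_*,k_*)\in\mathcal S^{\rm full}_{k_m}$. This pair compares the entire pre-change block $(0,m+k_*]$ with the post-change block $(m+k_*,m+k_m]$ of length $d_m$. We then need to show
\[
m^{-1}\frac{d_m^2}{g_m(k_m,k_*,k_*)}\big|\mcU((0,m+k_*],(m+k_*,m+k_m])\big|\to\infty
\]
in probability. With $L=(m+k_*)/m$ and $r=d_m/m$ we have $r/L=q_m$. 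Also $(1+t_m)\le 2(m+k_*+d_m)/m\cdot$const, so
\[
g\lesssim (1+t_m)^{1-\beta}r^\beta(1+q_m)\log^{2-\beta}(e+t_m).
\]
A short calculation, using $(1+t_m)m\asymp (m+k_*)(1+q_m)$, turns the prefactor into
\[
\frac{d_m^2}{m\,g}\gtrsim (m+k_*)\Big(\frac{q_m}{1+q_m}\Big)^{2-\beta}\log^{-(2-\beta)}(e+t_m)\,q_m^{\beta}\cdot(\ldots).
\]
I will check the exponents carefully. The goal is that the prefactor is at least of order $(m+k_*)(q_m/(1+q_m))^{2-\beta}\log^{-(2-\beta)}(e+t_m)$ divided by $(m+k_*)q_m/(1+q_m)\cdot$ (a possibly smaller power). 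The point is that the hypothesis is exactly the prefactor times $|\mathfrak D_h|$.

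Next, decompose $\mcU$ via the expansion \eqref{e:h_expansion}, which holds at all pairs by Assumption~\ref{a:Ha_dependence,moments}(ii). The non-degenerate parts of $h$ produce the term $\mathfrak D_h(F,F_*)$ up to first-order projection terms. Writing $S_1=\sum_{i\le m+k_*}\phi_\ell(\bX_i)$, $n_1=m+k_*$, $n_2=d_m$, and $S_2$ for the post-change sums, the degenerate part equals
\[
-\sum_\ell\lambda_\ell\Big(\frac{S_{1\ell}}{n_1}-\frac{S_{2\ell}}{n_2}\Big)^2
\]
plus diagonal corrections of order $n_1^{-1}\sum_{i\le n_1}\|Y(\bX_i)\|^2+n_2^{-1}\sum_{j}\|Y\|^2$. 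Here $\E\phi_\ell(\bX_i)=0$ pre-change and $\Delta_\ell$ post-change. Centering gives
\[
\mcU=\mathfrak D_h+O_\P\Big(\|\Delta\|_{\cH}\big(n_1^{-1/2}+n_2^{-1/2}\big)+n_1^{-1}+n_2^{-1}\Big).
\]
The linear terms in the one-sided projections $\E h(\bx,\bX)$ are handled similarly. I use the $O_\P(n^{1/2})$ partial-sum bound: pre-change this comes from the invariance principle used for Theorem~\ref{thone}, and post-change from \eqref{e:Ha_postchange_bound}. The cross term $\sum_\ell\lambda_\ell\Delta_\ell(\cdot)$ is bounded by Cauchy–Schwarz in the $|\lambda_\ell|$-weighted norm by $\|\Delta\|_{\cH}\,O_\P(n^{-1/2})$. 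Then Assumption~\ref{a:Ha_dependence,moments}(iii), $\|\Delta\|^2\le\kappa_0|\mathfrak D_h|$, converts this to $|\mathfrak D_h|^{1/2}O_\P(n_2^{-1/2})$ with $n_2\le n_1$ wlog the smaller. Hence
\[
|\mcU|\ge|\mathfrak D_h|-O_\P\big(|\mathfrak D_h|^{1/2}n_{\min}^{-1/2}+n_{\min}^{-1}\big).
\]

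Finally, multiply by the prefactor $P_m$. We need $P_m|\mathfrak D_h|\to\infty$, which is the hypothesis, and we need the error terms times $P_m$ to be negligible relative to $P_m|\mathfrak D_h|$. This requires $n_{\min}|\mathfrak D_h|\to\infty$. Since $n_{\min}\asymp(m+k_*)\,q_m/(1+q_m)$ and $(q_m/(1+q_m))^{2-\beta}\log^{-(2-\beta)}\le q_m/(1+q_m)$, the hypothesis implies $n_{\min}|\mathfrak D_h|\to\infty$. So the errors are $o_\P(|\mathfrak D_h|)$ and $\mathcal D_m(k_m)\to\infty$ in probability, exceeding any fixed $c_\alpha$.

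The main obstacle is the bookkeeping of the weight $g$: verifying that $d_m^2/(m g)$ is bounded below by a constant multiple of $(m+k_*)(q_m/(1+q_m))^{2-\beta}\log^{-(2-\beta)}(e+t_m)$ uniformly in the regimes $q_m\to0$ and $q_m\to\infty$. This needs the identity
\[
r^{2-\beta}/\big((1+t)^{1-\beta}(1+r/L)\big)\asymp L\,(r/L)^{2-\beta}(1+r/L)^{-(2-\beta)}\cdot\big(L(1+q)/(1+t)\big)^{1-\beta},
\]
where the last factor is $\asymp1$. A secondary difficulty is controlling the pre-change partial sums of the $\cH$-valued process when $k_*\gg m$, for which I would invoke the stationarity and approximability in Assumption~\ref{a:Ha_dependence,moments}(i) to get $O_\P(n_1^{1/2})$ directly from a second-moment bound on sums of $L^p$-$m$ approximable $\cH$-valued variables.
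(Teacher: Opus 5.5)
Your proposal is correct and follows the paper's proof essentially step for step. You evaluate $\mathcal D_m(k_m)$ at the single pair $(k_*,k_*)$, obtain the prefactor $(m+k_*)\big(q_m/(1+q_m)\big)^{2-\beta}\log^{-(2-\beta)}(e+t_m)$ (an exact equality since $1+t_m=L_m(1+q_m)$, so your tentative intermediate display can be discarded), expand $\mcU$ into $\mathfrak D_h(F,F_*)$ plus errors of order $\|\Delta\|_{\cH}\big((m+k_*)^{-1/2}+d_m^{-1/2}\big)+(m+k_*)^{-1}+d_m^{-1}$, and conclude via $\|\Delta\|_{\cH}^2\le\kappa_0|\mathfrak D_h(F,F_*)|$ and the fact that the hypothesis forces $\min(m+k_*,d_m)\,|\mathfrak D_h(F,F_*)|\to\infty$; the only corrections are expository: the additive parts of $h$ cancel identically in $\mcU$, so there are no one-sided projection terms to control and $\mathfrak D_h(F,F_*)=-\sum_\ell\lambda_\ell\Delta_\ell^2$ comes entirely from centering the post-change means in the degenerate quadratic form, and the diagonal corrections are of order $n^{-2}\sum_i\|Y(\bX_i)\|_{\cH}^2$ rather than $n^{-1}\sum_i\|Y(\bX_i)\|_{\cH}^2$.
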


\begin{corollary}\label{c:delay_full_rate}
Suppose the conditions of Theorem \ref{t:delay_full} hold. Assume that
$k_*$ is bounded by a polynomial in $m$, and that
$ (m+k_*)|\mathfrak D_h(F,F_*)| (\log m)^{-(2-\beta)} \to\infty.$
Then
$$
    \tau_m^{\mathrm{full}}-k_*  =
    O_\P\!\left(
        (m+k_*)^{\frac{1-\beta}{2-\beta}}
        |\mathfrak D_h(F,F_*)|^{-\frac{1}{2-\beta}}
        \log m
    \right).
$$
\end{corollary}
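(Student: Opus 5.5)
The corollary follows from Theorem~\ref{t:delay_full} by choosing the monitoring time $k_m=k_*+d_m$ with
$$
d_m= C_m\,(m+k_*)^{\frac{1-\beta}{2-\beta}}\,|\mathfrak D_h(F,F_*)|^{-\frac{1}{2-\beta}}\log m ,
$$
where $C_m\to\infty$ arbitrarily slowly. Once we show the divergence condition holds for this $k_m$, Theorem~\ref{t:delay_full} gives $\P(\tau_m^{\mathrm{full}}\le k_*+d_m)\to1$. Since $C_m\to\infty$ is arbitrary, this yields the stated $O_\P$ bound: if the bound failed, we could extract a subsequence and a slowly growing $C_m$ contradicting the theorem. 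The lower bound $\tau^{\rm full}_m-k_*\ge -k_*$ needs no separate argument, because the statement is an upper stochastic order.

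To verify the condition, set $D=|\mathfrak D_h(F,F_*)|$ and $N=m+k_*$, so $q_m=d_m/N$. We consider two cases.

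\emph{Case $q_m\le 1$.} Then $q_m/(1+q_m)\ge q_m/2$, and the quantity in Theorem~\ref{t:delay_full} is at least
$$
2^{-(2-\beta)} N D\, q_m^{2-\beta}\log^{-(2-\beta)}(e+t_m)
=2^{-(2-\beta)} D\,N^{-(1-\beta)} d_m^{2-\beta}\log^{-(2-\beta)}(e+t_m).
$$
Substituting $d_m$, the powers of $N$ and $D$ cancel exactly. The expression reduces to $2^{-(2-\beta)}C_m^{2-\beta}\bigl(\log m/\log(e+t_m)\bigr)^{2-\beta}$.

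\emph{Case $q_m>1$.} Then $q_m/(1+q_m)\ge 1/2$, and the quantity is at least $2^{-(2-\beta)}ND\log^{-(2-\beta)}(e+t_m)$. This diverges by the hypothesis $ND(\log m)^{-(2-\beta)}\to\infty$, provided $\log(e+t_m)\lesssim\log m$.

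The main step is therefore the bound $\log(e+t_m)\lesssim \log m$, where $t_m=(k_*+d_m)/m$. Because $k_*\le m^{c}$, it suffices to show that $d_m$ is polynomially bounded in $m$. The hypothesis gives $D\gtrsim N^{-1}(\log m)^{2-\beta}$, so $D^{-1/(2-\beta)}\lesssim N^{1/(2-\beta)}$. Hence $d_m\lesssim C_m N^{\frac{1-\beta}{2-\beta}+\frac1{2-\beta}}\log m\lesssim C_m N\log m$, which is polynomial in $m$ if $C_m$ grows slowly (say $C_m\le\log m$). Thus $\log(e+t_m)\le c'\log m$. In the first case the expression is then $\gtrsim C_m^{2-\beta}\to\infty$, and in the second it diverges as noted. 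Both cases satisfy the condition of Theorem~\ref{t:delay_full}, which completes the argument.
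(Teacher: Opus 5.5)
Your proposal is correct and follows the intended route: the paper states Corollary~\ref{c:delay_full_rate} without a separate proof, and the implicit argument is exactly your substitution of $k_m=k_*+\lceil d_m\rceil$, with $d_m= C_m(m+k_*)^{\frac{1-\beta}{2-\beta}}|\mathfrak D_h(F,F_*)|^{-\frac{1}{2-\beta}}\log m$, into Theorem~\ref{t:delay_full}. There, the polynomial bound on $k_*$ and the hypothesis on $(m+k_*)|\mathfrak D_h(F,F_*)|$ serve only to keep $\log(e+t_m)\lesssim\log m$ and to handle the regime $q_m>1$; rounding $d_m$ up to an integer, at least $2$ so that the window pair $(k_*,k_*)$ is admissible in the full scheme, does not affect your case analysis.
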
 Note the magnitude of the change is allowed to depend on $m$.  
For fixed alternatives (i.e., $\|\Delta\|_{\cH}^{2}\geq C>0$) and $k_*\asymp m^\theta$
Corollary~\ref{c:delay_full_rate} gives
$$
   \tau_m^{\mathrm{full}}-k_*  = O_\P\!\left(m^{s (\theta\vee 1)}\log m\right),\qquad s=\frac{1-\beta}{2-\beta}.
$$
This improves the late-change dependence (when $\theta>1)$ of the weighted CUSUM method of
\cite{kutta:dornemann:2025}, whose delay bound carries a factor of order
$m^{\theta-1}$, up to logarithmic terms.  Moreover, observe that the exponent factor $s=(1-\beta)/(2-\beta)$ decreases as 
$\beta$ tends to one, so the delay can be nearly logarithmic in $m$ when $\beta\approx 1$.  Thus, although the bound for $\tau_m^{\textrm{full}}-k_*$ remains
polynomial in $m$, its polynomial dependence can be made arbitrarily mild with many bounded kernels under mild data-level conditions (see Remark \ref{r:dependence_moments}).

This behavior is close in spirit to the logarithmic delays obtained by recent mean-targeted monitoring procedures
(\citealp{yu:madridpadilla:wang:rinaldo:2023,bastian:kutta:2025}). However, those sharp mean-monitoring guarantees rely on strong moment conditions on the raw data, which are not strictly required by our framework with appropriate choice of kernels.

\section{Implementation and spectral estimation}\label{s:implementation}

Our approaches to obtaining critical values for the test are based on spectral approximations of the operator $A$ in \eqref{e:def_A}.   Let
\begin{equation}\label{e:gram}
H_m=m^{-1}(H_{ij})_{1\leq i,j\leq m},\quad H_{i,j}=h(\bX_i,\bX_j),
\end{equation}
and define its centered counterpart:
\begin{equation}\label{e:centered_gram}
\overline H_{m} = C_m H_m C_m,\quad C_m= I_m - \frac{1}{m}\mathbf 1 \mathbf 1^\top, \qquad \mathbf 1 = (1,\ldots,1)^\top,
\end{equation}
so that, for any $1\leq i,j\leq m $, it holds that $(\overline H)_{ij}=m^{-1}\overline h_m(\bX_i,\bX_j),$ with
$$
\overline h_m(\bx,\by) = h(\bx,\by)
- \frac{1}{m}\sum_{j=1}^m h(\bx,\bX_j)
- \frac{1}{m}\sum_{i=1}^m h(\bX_i,\by)
+ \frac{1}{m^2}\sum_{i=1}^m\sum_{j=1}^m h(\bX_i,\bX_j).
$$
Throughout this section, we let
\begin{equation}\label{e:eigpairs_Hm}
(\widehat\lambda_{\ell,m},\mathbf v_{\ell,m}),\quad \ell=1,\ldots, m %
\end{equation}
denote the eigenvalue--eigenvector pairs of the matrix $\overline H_m$,
where the eigenvectors $\{\mathbf v_{i,m}\}$ are chosen orthonormal
and the eigenvalues are ordered so that
$|\widehat\lambda_{1,m}| \geq \ldots \geq |\widehat\lambda_{m,m}|.$

The function $\overline h_m(\bx,\by)$ is an empirical version of the true degenerate kernel $\overline h$, and accordingly, the matrix \eqref{e:centered_gram} serves as an approximation of $A$.  For kernels under Assumption \ref{a:h}, our next theorem extends Theorem 3.1 of \cite{koltchinskii:gine:2000} to serially dependent observations, and shows the eigenvalues $\widehat \lambda_{\ell,m}$ can be used to approximate $\lambda_\ell$. 

\begin{theorem}\label{t:eigen_consistency_theo}  Suppose Assumptions \ref{a:baseline}--\ref{a:ac} hold. Let $\widehat \lambda_{\ell,m}$ as in \eqref{e:eigpairs_Hm}.  Then,
\begin{equation}\label{e:def_vareps}
\min_{\pi\in\Pi_m}\left(\sum_{\ell=1}^{m} (\widehat \lambda_{\pi(\ell),m}-\lambda_{\ell})^2\right)^{1/2} =O_P(\varepsilon_m),
\end{equation}
where $\Pi_m$ is the set of permutations of $\{1,\ldots,m\}$, and
$\varepsilon_m = m^{-1/2}+\sum_{\ell=m+1}^\infty |\lambda_\ell|.$

\end{theorem}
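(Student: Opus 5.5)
We follow the Koltchinskii--Giné strategy: truncate the spectral expansion, compare the truncated empirical matrix with its population spectrum via a Gram-matrix argument, and control each error with the Hoffman--Wielandt inequality. The dependence enters only through the covariance bounds used in the comparison step. Fix $R\le m$ and split \eqref{e:h_expansion} as $\overline h=\overline h^{(R)}+\overline h^{(>R)}$, where $\overline h^{(R)}(\bx,\by)=\sum_{\ell\le R}\lambda_\ell\phi_\ell(\bx)\phi_\ell(\by)$. By Assumption~\ref{a:ac} the expansion holds a.s.\ at off-diagonal pairs, and diagonal pairs are handled separately using $\int h^2(\bx,\bx)\,dF<\infty$.

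Let $\widetilde H_m=m^{-1}(\overline h(\bX_i,\bX_j))_{i,j\le m}$ be the matrix built with the true degenerate kernel. First we bound $\|\overline H_m-C_m\widetilde H_mC_m\|_F$. The difference of the two centerings involves only the empirical means $m^{-1}\sum_j \overline h(\cdot,\bX_j)$, and $C_m$ annihilates the rank-one constant parts. What remains is controlled by $\|m^{-1}\sum_j Y(\bX_j)\|_{\cH}$ times a bounded factor, which is $O_P(m^{-1/2})$. This follows from a Hilbert-space CLT or second-moment bound for the $L^p$-$m$ approximable sequence $Y(\bX_j)$ (Assumption~\ref{a:Lpm} combined with \eqref{a:alphaholder}), giving summable covariances. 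By the Hoffman--Wielandt inequality (Wielandt--Hoffman for symmetric matrices gives the permutation matching), it therefore suffices to treat $C_m\widetilde H_mC_m$, and then $\widetilde H_m$ itself, since the centering is again a rank-two perturbation of order $m^{-1/2}$ in Frobenius norm.

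Next, the truncated part. Write $m^{-1}(\overline h^{(R)}(\bX_i,\bX_j))=\Phi\Lambda\Phi^\top$ with $\Phi_{i\ell}=m^{-1/2}\phi_\ell(\bX_i)$. Its nonzero eigenvalues coincide with those of $\Lambda^{1/2}(\Phi^\top\Phi)\Lambda^{1/2}$, using signed square roots (that is, $|\Lambda|^{1/2}\Phi^\top\Phi|\Lambda|^{1/2}\,\mathrm{sgn}\Lambda$, which is similar to a symmetric matrix after conjugation). Hoffman--Wielandt then bounds the $\ell^2$ matching error against $(\lambda_\ell)_{\ell\le R}$ by
\[
\Big(\sum_{\ell,\ell'\le R}|\lambda_\ell||\lambda_{\ell'}|\Big(\frac1m\sum_i\phi_\ell(\bX_i)\phi_{\ell'}(\bX_i)-\delta_{\ell\ell'}\Big)^2\Big)^{1/2}.
\]
Its expectation is $O(m^{-1/2})$ uniformly in $R$, provided the covariances of $\phi_\ell\phi_{\ell'}(\bX_i)$ are summable in lag with weights $|\lambda_\ell\lambda_{\ell'}|$. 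We obtain this by coupling $\bX_i$ with $\bX_i^{(r)}$ and using \eqref{a:alphaholder} to bound $\sum_\ell|\lambda_\ell|(\phi_\ell(\bX_i)-\phi_\ell(\bX_i^{(r)}))^2\le C\rho^{2\alpha}$. Hölder's inequality with Assumption~\ref{a:Y_moment} ($p>4$) and the summability of $\vartheta_{p\alpha}(r)^{\alpha\eta}$ then give $\sum_r|\Cov|\lesssim\sum_r \vartheta_{p\alpha}(r)^{\alpha\eta}<\infty$. This step is the main obstacle. It is the fourth-moment, dependent analogue of Koltchinskii--Giné's Lemma, and we need the bound on $\E\|\,m^{-1}\sum_i (Y(\bX_i)\otimes Y(\bX_i)-\E)\|_{HS}^2$ in Hilbert--Schmidt norm on $\cH$. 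We would prove it by the standard $m$-dependent approximation, replacing $Y(\bX_i)$ by $Y(\bX_i^{(r)})$, together with the Lipschitz bound $\|Y(\bx)\otimes Y(\bx)-Y(\bx')\otimes Y(\bx')\|_{HS}\le(\|Y(\bx)\|+\|Y(\bx')\|)\|Y(\bx)-Y(\bx')\|$.

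Finally, the tail part $m^{-1}(\overline h^{(>R)}(\bX_i,\bX_j))$. Its Frobenius norm is bounded by a trace-type quantity with expectation $\lesssim\sum_{\ell>R}|\lambda_\ell|$, via $\E m^{-2}\sum_{i,j}(\sum_{\ell>R}\lambda_\ell\phi_\ell(\bX_i)\phi_\ell(\bX_j))^2$. Its diagonal part is $\le m^{-1}\sum_i\sum_{\ell>R}|\lambda_\ell|\phi_\ell^2(\bX_i)$, whose mean is $\sum_{\ell>R}|\lambda_\ell|$; the off-diagonal part is treated with the same HS-norm bound as above, giving $\sum_{\ell>R}|\lambda_\ell|+O(m^{-1/2})$. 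Taking $R=m$, padding the limiting vector with the $\lambda_\ell$, $\ell>R$, only adds $(\sum_{\ell>m}\lambda_\ell^2)^{1/2}\le\sum_{\ell>m}|\lambda_\ell|$. Combining the three Hoffman--Wielandt bounds by the triangle inequality yields \eqref{e:def_vareps} with $\varepsilon_m=m^{-1/2}+\sum_{\ell>m}|\lambda_\ell|$.
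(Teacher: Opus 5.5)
Your overall architecture matches the paper's. You reduce $\overline H_m$ to $\widetilde H_m$ at cost $O_P(m^{-1/2})$, replace the diagonal by the spectral expansion, and truncate at $R=m$. You then compare the truncated part with $\Lambda_m$ through the weighted Frobenius quantity $\|m^{-1}\sum_i(Y_i\otimes Y_i-\E Y_i\otimes Y_i)\|_{\mathrm{HS}}$, controlled by coupling, and bound the tail by $m^{-1}\sum_i\|Y_i-Y_{m,i}\|_{\cH}^2$. One small remark: $\overline H_m=C_m\widetilde H_mC_m$ holds exactly, because $h-\overline h$ has the form $f(\bx)+f(\by)+c$ and $C_m\mathbf 1=0$. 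So your first comparison is trivially zero; the paper instead bounds $\overline H_m-\widetilde H_m$ directly.

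The genuine gap is in the comparison step. When $A$ has eigenvalues of both signs, $M=|\Lambda|^{1/2}\Phi^\top\Phi|\Lambda|^{1/2}\operatorname{sgn}\Lambda$ is neither symmetric nor normal. This is the indefinite-kernel case the paper explicitly covers. Hoffman--Wielandt requires both matrices to be normal, so it does not apply to the pair $(M,\Lambda)$.

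Appealing to similarity with a symmetric matrix does not repair this. Hoffman--Wielandt bounds the matching error by the Frobenius distance between the two matrices actually compared, and a similarity transformation changes that distance. Take the natural symmetric matrices with the same spectrum, such as $(\Phi^\top\Phi)^{1/2}\Lambda(\Phi^\top\Phi)^{1/2}$ or $G^{1/2}\operatorname{sgn}\Lambda\,G^{1/2}$ with $G=|\Lambda|^{1/2}\Phi^\top\Phi|\Lambda|^{1/2}$. Their distance to $\Lambda$ is controlled only through the unweighted $\Phi^\top\Phi-I$ (not small when $R=m$) or through a matrix square root of $G$. Neither is bounded by your weighted Frobenius quantity uniformly in $R$.

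In fact $G\operatorname{sgn}\Lambda$ need not be similar to a symmetric matrix at all. For example, $G=\begin{pmatrix}1&1\\1&1\end{pmatrix}$ and $\operatorname{sgn}\Lambda=\operatorname{diag}(1,-1)$ give a nonzero nilpotent matrix.

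The missing ingredient is a spectral perturbation bound for an arbitrary, non-normal perturbation of a Hermitian matrix. The paper uses Kahan's inequality for nearly Hermitian matrices. Write $D_m=M_m-\Lambda_m$ and split it into its symmetric part $S_m$ and skew part $K_m$. Then the optimal matching error between the spectra of $M_m$ and $\Lambda_m$ is at most $\|S_m\|_{\rm F}+\|K_m\|_{\rm F}\le 2\|D_m\|_{\rm F}$. This is exactly your weighted Frobenius quantity up to a constant. With that substitution the rest of your argument goes through. When all nonzero $\lambda_\ell$ share a sign, $\operatorname{sgn}\Lambda=\pm I$, $M$ is symmetric, and your Hoffman--Wielandt argument is valid as written.
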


Our estimation approach makes use of the leading $L$ principal eigenvalues and eigenvectors of $\overline H_m$. To that end, we require the following separation condition regarding the choice of $L$ in relation to the separation of the top $L+1$  eigenvalues of $A$.  To fix notation, ahead, for a $b=b(L)\leq L$, we write
\begin{equation}\label{e:def_muj}
\mu_1,\ldots,\mu_b,\qquad |\mu_1|> \ldots > |\mu_b|
\end{equation}
for the distinct values among $\lambda_1,\ldots\lambda_L$.
\begin{assumption}\label{a:eigsep}
$L=L_m$ is a sequence of integers $1\leq L \leq m$ along which, 
(i) $|\lambda_L| - |\lambda_{L+1}| \geq \delta_m,$ and (ii) 
$
\displaystyle{\min_{1 \leq k < k' \leq b}}
|\mu_{k}-\mu_{k'}| \geq \delta_m,$ 
where $\delta_m\geq 0$ is a sequence satisfying 
\begin{equation}\label{e:cond_deltam}
   \delta_m\gg \varepsilon_m,\qquad \sqrt{m}\,\delta_m^{3/2}\to\infty,
\end{equation}
where $\varepsilon_m$ is given by \eqref{e:def_vareps}. 
\end{assumption}
Note that second condition in \eqref{e:cond_deltam}  implies that $\delta_m\gg m^{-1/3}$.  Hence, under Assumption \eqref{a:eigsep},
$|\lambda_L|\geq \delta_m$, and thus $$
L_m\delta_m \leq \sum_{\ell=1}^{L_m}|\lambda_\ell|\leq \sum_{\ell=1}^\infty |\lambda_\ell| \implies L_m\lesssim  \delta_m^{-1} \ll m^{1/3}.
$$
The next theorem shows the $j$--th entries of the $L$ principal eigenvectors of $\overline H_m$ $(v_{1,m}(j),\ldots,v_{L,m}(j))^\top$ $\overline H_m$  can serve as suitable approximations of the eigenfunctions $(\phi_1(\bX_j),\ldots,\phi_{L}(\bX_j))^\top$, up to a rotation within each eigenspace.

\begin{theorem}\label{t:spectral_consistency} 
Suppose Assumptions \ref{a:baseline}--\ref{a:ac}  and \ref{a:eigsep} hold. Let 
$$
V_{m}^L=(\mathbf v_{1,m}: \ldots : \mathbf v_{L,m}),\qquad \Phi_m^L = (\boldsymbol \Phi_{1,m}:\cdots:\boldsymbol \Phi_{L,m}),
$$
  where ${\boldsymbol \Phi}_{\ell,m}=m^{-1/2}\left(\phi_\ell(\bX_1),\ldots,\phi_\ell(\bX_m)\right)^\top$.  With $\mu_1,\ldots,\mu_b$ as in \eqref{e:def_muj}, define 
$\mathcal B_k=\{\ell:\lambda_\ell=\mu_k\}$. Then there exists a sequence of block-orthogonal matrices
$$
O_{L,m}=\mathrm{diag}(O_{1,L,m},\ldots,O_{b,L,m}),
\qquad 
O_{k,L,m}\in\mathbb O(|\mathcal B_k|),
$$
such that
$$
\|V_m^L - \Phi_m^L O_{L,m}\|_{F} 
=O_P\left(\frac1{\sqrt{m}\,\delta_m^{3/2}}\right).
$$
\end{theorem}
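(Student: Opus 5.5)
We treat $\overline H_m$ as a perturbation of a finite-rank "population" matrix whose column space is spanned by $\Phi_m^L$, and then apply a Davis--Kahan type argument one eigenspace block at a time.

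\textbf{Step 1: decomposition.} Using the expansion \eqref{e:h_expansion}, which is valid at $(\bX_i,\bX_j)$, $i\neq j$, by Assumption~\ref{a:ac}, write
\[
\overline H_m = \Phi_m^L\Lambda_L(\Phi_m^L)^\top + R_m ,
\qquad \Lambda_L=\diag(\lambda_1,\ldots,\lambda_L).
\]
The remainder $R_m$ collects four contributions:
\begin{itemize}
\item the tail $\sum_{\ell>L}\lambda_\ell\boldsymbol\Phi_{\ell,m}\boldsymbol\Phi_{\ell,m}^\top$;
\item the diagonal correction, which is $O_P(m^{-1})$ entrywise by the diagonal moment condition in Assumption~\ref{a:h}(i);
\item the centering errors, where $C_m$ acts on $\boldsymbol\Phi_{\ell,m}$ and replaces $\phi_\ell(\bX_j)$ by $\phi_\ell(\bX_j)-\bar\phi_\ell$;
\item the difference between $\overline h_m$ and $\overline h$.
\end{itemize}
The $L^p$-$m$ approximability (Assumption~\ref{a:Lpm}), combined with the H\"older condition \eqref{a:alphaholder}, gives $\|\sum_\ell\sqrt{|\lambda_\ell|}\,\bar\phi_\ell\phi_\ell\|_{\cH}=O_P(m^{-1/2})$ through a Hilbert-space CLT-type moment bound for $Y(\bX_j)$. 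From this, the centering contribution has operator norm $O_P(m^{-1/2})$.

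\textbf{Step 2: near-orthonormality.} Show that $G_m:=(\Phi_m^L)^\top\Phi_m^L$ satisfies
\[
\Bigl\|\Lambda_L^{1/2}\bigl(G_m-I_L\bigr)\Lambda_L^{1/2}\Bigr\|_F=O_P(m^{-1/2}).
\]
The entries of $G_m-I_L$ are $m^{-1}\sum_j(\phi_\ell\phi_{\ell'}(\bX_j)-\delta_{\ell\ell'})$. Their $|\lambda_\ell\lambda_{\ell'}|$-weighted squared sum is controlled by the second moment of
\[
m^{-1}\sum_j\Bigl(Y(\bX_j)\otimes Y(\bX_j)-\E\, Y(\bX_j)\otimes Y(\bX_j)\Bigr)
\]
in Hilbert--Schmidt norm. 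Under Assumption~\ref{a:Y_moment} with $p\ge4$, together with the approximability, this is $O(m^{-1})$, by the same coupling argument used in proving Theorem~\ref{t:eigen_consistency_theo}.

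The unweighted bound is what requires care. Since $|\lambda_\ell|\ge\delta_m$ for $\ell\le L$, we get $\|G_m-I_L\|_F=O_P(m^{-1/2}\delta_m^{-1})$. Because $L\ll m^{1/3}$, this tends to zero, so we may replace $\Phi_m^L$ by its orthonormalization $Q_m=\Phi_m^LG_m^{-1/2}$, at a Frobenius cost of $O_P(m^{-1/2}\delta_m^{-1})$.

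\textbf{Step 3: block Davis--Kahan.} With $Q_m$ orthonormal, the leading part is $Q_m\widetilde\Lambda Q_m^\top$, where $\widetilde\Lambda=G_m^{1/2}\Lambda_LG_m^{1/2}$ lies within $O_P(m^{-1/2})$ of $\Lambda_L$ in Frobenius norm (by Step 2). Eigenvalues of $\overline H_m$ are localized via Theorem~\ref{t:eigen_consistency_theo}: since $\varepsilon_m\ll\delta_m$, the gap conditions in Assumption~\ref{a:eigsep} ensure that, with probability tending to one, the empirical eigenvalues cluster near each $\mu_k$, separated by at least $\delta_m/2$ both from the other clusters and from the remaining spectrum.

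For each block $\mathcal B_k$, apply the $\sin\Theta$ theorem (Frobenius version) to the spectral projector. This gives
\[
\|\mathbf V_k\mathbf V_k^\top-Q_{m,k}Q_{m,k}^\top\|_F\lesssim\delta_m^{-1}\bigl(\|R_m Q_m\|_F+\|\widetilde\Lambda-\Lambda_L\|_F\bigr).
\]
Choosing $O_{k,L,m}$ by the orthogonal Procrustes solution then converts each projector bound into a bound on $\|\mathbf V_k-Q_{m,k}O_{k,L,m}\|_F$.

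\textbf{Step 4: the main obstacle.} The bound $\|R_mQ_m\|_F=O_P(m^{-1/2}\delta_m^{-1/2})$ is the crux, and the tail term is its hardest part. Write
\[
R_m^{\rm tail}Q_m=\sum_{\ell>L}\lambda_\ell\boldsymbol\Phi_{\ell,m}\bigl(\boldsymbol\Phi_{\ell,m}^\top Q_m\bigr).
\]
The inner products $\boldsymbol\Phi_{\ell,m}^\top\boldsymbol\Phi_{\ell',m}$ for $\ell>L\ge\ell'$ are empirical covariances of orthogonal functions, so they are $O_P(m^{-1/2})$ with weights $|\lambda_\ell\lambda_{\ell'}|$ summable as in Step 2. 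Converting from $\Phi_m^L$ to $Q_m$ costs one factor of $\delta_m^{-1/2}$ per column, which yields $O_P(m^{-1/2}\delta_m^{-1/2})$.

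Combining with Step 3:
\begin{itemize}
\item the Davis--Kahan step contributes $\delta_m^{-1}\cdot m^{-1/2}\delta_m^{-1/2}=m^{-1/2}\delta_m^{-3/2}$;
\item the orthonormalization error from Step 2 is $m^{-1/2}\delta_m^{-1}$, which is of smaller order.
\end{itemize}
Together these give the claimed rate $1/(\sqrt m\,\delta_m^{3/2})$. The bound is uniform over the $b$ blocks because the Frobenius norms add in squares across blocks.
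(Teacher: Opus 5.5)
Your argument is sound and reaches the stated rate. Its skeleton matches the paper's: the weighted Gram bound $\||\Lambda_L|^{1/2}(G_m-I_L)|\Lambda_L|^{1/2}\|_{\rm F}=O_P(m^{-1/2})$ via the operators $B_j$, orthonormalization by $G_m^{-1/2}$, a block $\sin\Theta$ bound, Procrustes alignment, and undoing the orthonormalization. The key perturbation step, however, is genuinely different.

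\textbf{Comparison with the paper.} The paper applies the two-sided Lemma~\ref{l:DK_lemma}, which charges the full $\|M_1-M_2\|_{\rm F}$, along the chain $\overline H_m\to\breve H_m\to\widetilde\Phi_m^L\Lambda_L(\widetilde\Phi_m^L)^\top$. It therefore also needs eigenvalue localization for $\breve H_m$. You instead use a single residual-form $\sin\Theta$ bound against $Q_m\Lambda_LQ_m^\top$, which is the same matrix as $\widetilde\Phi_m^L\Lambda_L(\widetilde\Phi_m^L)^\top$. In your version the tail $\sum_{\ell>L}\lambda_\ell\boldsymbol\Phi_{\ell,m}\boldsymbol\Phi_{\ell,m}^\top$, the diagonal correction and the centering enter only through $\|R_mQ_m\|_{\rm F}$.

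This is more than cosmetic. The tail has Frobenius norm of order $(\sum_{\ell>L}\lambda_\ell^2)^{1/2}$, which need not be $o(\delta_m)$; for $\lambda_\ell=\ell^{-\theta}$ it exceeds the gap by a factor of order $L^{3/2}$. So no bound in terms of the full Frobenius difference can absorb it. The paper's display \eqref{e:breveHplus_to_ortho} avoids this only by identifying $\breve H_m=\sum_{\ell\le m}\lambda_\ell\boldsymbol\Phi_{\ell,m}\boldsymbol\Phi_{\ell,m}^\top$ with $\Phi_m^L\Lambda_L(\Phi_m^L)^\top$. In addition, the $\ell>m$ part of $\overline H_m-\breve H_m$ is in general only $O_P(\varepsilon_m)$, not $O_P(m^{-1/2})$.

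Your route places every $\ell>L$ into the cross terms $c_{\ell\ell'}=\boldsymbol\Phi_{\ell,m}^\top\boldsymbol\Phi_{\ell',m}$, $\ell>L\ge\ell'$. These are controlled by the same estimate $\|m^{-1}\sum_jB_j\|_{\rm HS}=O_P(m^{-1/2})$, so you get $1/(\sqrt m\,\delta_m^{3/2})$ with no tail-over-gap or $\varepsilon_m/\delta_m$ term. The price is that the one-sided residual must be bounded by hand rather than by quoting a clean Frobenius perturbation lemma.

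\textbf{Two details to correct.} Neither changes the rate.

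First, Step~2 does not give $\|\widetilde\Lambda-\Lambda_L\|_{\rm F}=O_P(m^{-1/2})$. With $\mathcal E=G_m-I_L$, the first-order term $\tfrac12(\mathcal E\Lambda_L+\Lambda_L\mathcal E)$ only satisfies
$\|\mathcal E\Lambda_L\|_{\rm F}\le\||\Lambda_L|^{-1/2}\|_{\rm op}\,\||\Lambda_L|^{1/2}\mathcal E|\Lambda_L|^{1/2}\|_{\rm F}\,\||\Lambda_L|^{1/2}\|_{\rm op}=O_P(m^{-1/2}\delta_m^{-1/2})$.
This matches the paper's bound on $\|\Lambda_L-G_L^{-1/2}\Lambda_LG_L^{-1/2}\|_{\rm F}$. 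After division by the gap it is again $O_P(m^{-1/2}\delta_m^{-3/2})$, the same order as the tail term rather than smaller.

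Second, in Step~4 the factor $\delta_m^{-1/2}$ comes from discarding the weight $|\lambda_{\ell'}|\ge\delta_m$ on the $\ell'\le L$ side. It does not come from passing to $Q_m$, which costs only $\|G_m^{-1/2}\|_{\rm op}=O_P(1)$. Also, the vectors $\boldsymbol\Phi_{\ell,m}$, $\ell>L$, are infinitely many and far from orthonormal, so the sum over $\ell$ should be handled by Cauchy--Schwarz:
\[
\Bigl\|\sum_{\ell>L}\lambda_\ell c_{\ell\ell'}\boldsymbol\Phi_{\ell,m}\Bigr\|^2\le\Bigl(\sum_{\ell>L}|\lambda_\ell|c_{\ell\ell'}^2\Bigr)\frac1m\sum_{i=1}^m\|Y(\bX_i)\|_{\cH}^2 .
\]
Summing over $\ell'\le L$ then gives $\|R_m^{\rm tail}\Phi_m^L\|_{\rm F}^2\le\delta_m^{-1}O_P(1)\sum_{\ell\neq\ell'}|\lambda_\ell\lambda_{\ell'}|c_{\ell\ell'}^2=O_P(m^{-1}\delta_m^{-1})$.
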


\begin{remark} Note that  for any fixed $L$ under which $|\lambda_L|-|\lambda_{L+1}|>0,$ one may take $\delta_m\equiv \delta>0$ sufficiently small and still satisfy Assumption \ref{a:eigsep}, in which case the conclusion of Theorem \ref{t:spectral_consistency} holds at a $1/\sqrt m$ convergence rate.  Hence, the separation factor $1/\delta_m^{3/2}$ can be viewed as a cost in the alignment of $V_m^L$ and $\Phi_m^L$ incurred when allowing $L\to\infty$.
\end{remark}

We are now in a position to describe implementations of the test.  Ours is based on approximations of the long run covariance of the vector $\bxi_j=(\phi_1(\bX_j),\phi_2(\bX_j),\ldots)^\top$:
$$
\mathcal G =\{  \mathcal g_{\ell,\ell'},~1\leq \ell,\ell'<\infty\},
$$
where $\mathcal g_{\ell,\ell'}$ are as in \eqref{e:def_gell}. Recalling $\mathbf v_{i,m}=(v_{i,m}(1),\ldots,v_{i,m}(m))^\top$  define, for each $1\leq j \leq m$,
\begin{equation}\label{e:def_Vj}
\widehat{\bxi}_{L,j}=\sqrt m\big(v_{1,m}(j),\ldots,v_{L,m}(j)\big)^\top\in\R^L .
\end{equation}
Next, define
$$
\widehat{\bga}_{m}(r)=\frac1m\sum_{i=1}^{m-r}   \widehat{\bxi }_{L,i} \big( \widehat{\bxi }_{L,i+r})^\top,\quad r\geq 0,
$$
and set $\widehat{\bga}_m(-r)=\widehat{\bga}_m(r)^\top.$ To estimate $\mathcal G_L=\{  \mathcal g_{\ell,\ell'},~1\leq \ell,\ell'\leq L\},$ we use 
\begin{align}\label{e:def_LRC_estimator}
\widehat{\cG}_{L,m}=\sum_{r=-(m-1)}^{m-1}K \left(\frac{r}{\mathfrak b_m}\right)\widehat{\bga}_{m}(r),
\end{align}
where $K:\R\to\R$ is the lag-window kernel and $\mathfrak b_m>0$ its bandwidth\footnote{Note the lag-window kernel $K$ and its bandwidth $\mathfrak b_m$ play a separate role from the data kernel $h$ and, when applicable, its bandwidth, which we later denote by $\sigma>0.$}. The following assumption is standard in the literature:
\begin{assumption}\label{ker1} The function $K:\mathbb R\to\mathbb R$ is bounded, even, supported on $[-c,c]$
for some $c>0$, and continuous at zero with $K(0)=1$.
\end{assumption}
Our next assumption requires the bandwidth be chosen appropriately relative to $L$.
\begin{assumption}\label{ker2}\; As $m\to\infty$,%
 $\mathfrak b_m\to \infty$ and $\mathfrak b_m\left(\frac{1}{\sqrt{m}\delta_m^{3/2}} +  \sqrt L \varepsilon_m\right) \to 0,$
 where $\delta_m$ and $\varepsilon_m$ are as in Assumption \ref{a:eigsep}.
\end{assumption}

Since a finite-sample lag-window estimator need not be positive
semidefinite, we apply a trace-preserving positive-semidefinite
correction. For a symmetric matrix $B$, let $B^+$ denote the
Frobenius-nearest positive-semidefinite matrix to $B$ with $\tr(B^+)=\max\{\tr(B),0\}$.  Set
\begin{equation}\label{e:cov_plus}
\widehat\Omega_{L,m}
=\left(
|\widehat\Lambda_{L,m}|^{1/2}
\widehat{\mathcal G}_{L,m}
|\widehat\Lambda_{L,m}|^{1/2}\right)^+, \qquad \widehat\Lambda_{L,m}
=
\operatorname{diag}(\widehat\lambda_{1,m},\ldots,
\widehat\lambda_{L,m}).
\end{equation}
As shown in the Appendix, Theorem \ref{t:spectral_consistency}, together with \eqref{e:cov_plus}, furnishes a consistent estimator of the long-run covariance operator of the $\cH$-valued sequence $\{Y(\bX_j),j\in \mathbb Z\}$ defined by \eqref{e:def_Y(x)}.  Based on this result, we obtain the following theorem.

\begin{theorem}\label{t:LRC_sim_method} 
Suppose $H_0$ and Assumptions~\ref{a:baseline}--\ref{a:ac} and 
\ref{a:eigsep}--\ref{ker2} hold, and that
$$
L_m\to \ell_{\max}=\sup\{\ell\ge1:\lambda_\ell\neq0\} \in \mathbb N \cup\{\infty\}.$$
Conditionally on $\mathcal F_\infty=\sigma(\bX_1,\bX_2,\ldots)$, let
$\widehat{\mathbf Z}_{L,m}(t)=(\widehat Z_{1,L,m}(t),\ldots,\widehat Z_{L,L,m}(t))^\top$ be a centered $L$-dimensional Brownian motion
with covariance matrix $\widehat\Omega_{L,m}$ as in \eqref{e:cov_plus}. For $(u,v)\in\mathcal S(t)$, write
$$
\widehat Q_{\ell,L,m}(t;u,v)
=
\frac{t-v}{1+u}\widehat Z_{\ell,L,m}(1+u)
-\widehat Z_{\ell,L,m}(1+t)
+\widehat Z_{\ell,L,m}(1+v).
$$
With $\widehat \lambda_{\ell,m}$ as in \eqref{e:eigpairs_Hm}, set
$$
\widehat \Gamma_m(t)
=
\sup_{(u,v)\in\mathcal S(t)}
\frac{1}{g_t(u,v)}
\Bigg|\bigg(\sum_{\ell=1}^{L_m}\widehat\lambda_{\ell,m}\bigg)
\left(\frac{(t-v)^2}{1+u}+t-v\right)
-\sum_{\ell=1}^{L_m}
\operatorname{sgn}(\widehat\lambda_{\ell,m})
\widehat Q_{\ell,L,m}^2(t;u,v)
\Bigg|.
$$
Then, for any $a\in (0,\infty]$, and each continuity point $c>0$ of the distribution of
$\sup_{0<t<a}\Gamma(t)$, writing
$\P_\infty(\cdot)=\P(\cdot\mid\mathcal F_\infty)$, as $m\to\infty$,
$$\P_\infty\left\{ \sup_{0<t<a}\widehat \Gamma_m(t)\geq c \right\}\to\P\left\{ \sup_{0<t<a}\Gamma(t)\geq c \right\},\quad \textnormal{in probability.}
$$
\end{theorem}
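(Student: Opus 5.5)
The approach is to rewrite the limit $\Gamma$ in terms of the $\cH$-valued Gaussian process behind $Y(\bX_j)$, show that $\widehat\Gamma_m$ has exactly the same functional form with estimated ingredients, and then prove convergence of the finite-dimensional law plus a tail truncation. Write $\lambda_\ell=\operatorname{sgn}(\lambda_\ell)|\lambda_\ell|$. Set $Z_\ell(t)=|\lambda_\ell|^{1/2}W_\ell(t)$; its covariance is $\Omega_{\ell\ell'}=|\lambda_\ell|^{1/2}\mathcal g_{\ell,\ell'}|\lambda_{\ell'}|^{1/2}$, the long-run covariance of $Y(\bX_j)$ in the basis $\{\phi_\ell\}$. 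Then
\[
\Gamma(t)=\sup_{(u,v)\in\mathcal S(t)}\frac1{g_t(u,v)}\Big|\Big(\sum_\ell\lambda_\ell\Big)\Big(\tfrac{(t-v)^2}{1+u}+t-v\Big)-\sum_\ell\operatorname{sgn}(\lambda_\ell)Q_\ell^2(t;u,v)\Big|,
\]
where $Q_\ell(t;u,v)=\tfrac{t-v}{1+u}Z_\ell(1+u)-Z_\ell(1+t)+Z_\ell(1+v)$.

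\textbf{Step 1: truncated limit.} Define $\Gamma^{(L)}$ by restricting both sums to $\ell\le L$. Both $\Gamma$ and $\Gamma^{(L)}$ are continuous functionals of the Gaussian paths, because $g_t$ is bounded below away from $r=0$ through $r^\beta$ and $\mathcal S(t)$ varies continuously by Assumption~\ref{a:admissiblescheme}. Using the Hölder-weighted bounds already employed for Theorem~\ref{thone}, together with $E\sum_{\ell>L}Q_\ell^2\lesssim \sum_{\ell>L}|\lambda_\ell|\mathcal g_{\ell\ell}\to0$ and trace-class summability, I will show
\[
\sup_{0<t<a}|\Gamma(t)-\Gamma^{(L)}(t)|\to0
\]
in probability as $L\to\ell_{\max}$. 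Near $r=t-v\approx0$ this uses the modulus $|Z(1+t)-Z(1+v)|^2\lesssim r^{1-\epsilon}\gg r^\beta$ failure-free. Near $t\to\infty$ it uses the $\log^{2-\beta}$ factor together with the law of the iterated logarithm; this step I can essentially import from the proof of Theorem~\ref{thone}.

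\textbf{Step 2: consistency of the estimated ingredients.} First, $\sum_{\ell\le L}\widehat\lambda_{\ell,m}\to\sum_\ell\lambda_\ell$ and $\operatorname{sgn}(\widehat\lambda_{\ell,m})=\operatorname{sgn}(\lambda_\ell)$ for all $\ell\le L$ with probability tending to one. This follows from Theorem~\ref{t:eigen_consistency_theo}, since $|\lambda_\ell|\ge\delta_m\gg\varepsilon_m$ for $\ell\le L$, combined with the separation in Assumption~\ref{a:eigsep}, which forces the optimal matching to be the identity up to within-block permutations. Second, the main step is to show
\[
\|\widehat\Omega_{L,m}-O_{L,m}^\top\Omega_LO_{L,m}\|_{\mathrm{tr}}\to0
\]
in probability, in trace norm. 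The plan is to replace $\widehat\bxi_{L,j}$ by $O_{L,m}^\top\bxi_{L,j}$; Theorem~\ref{t:spectral_consistency} bounds the error in $\ell^2$-average by $(\sqrt m\delta_m^{3/2})^{-1}$. The induced error in each $\widehat\bga_m(r)$ then sums over at most $c\mathfrak b_m$ lags, giving a term of order $\mathfrak b_m(\sqrt m\delta_m^{3/2})^{-1}\to0$ by Assumption~\ref{ker2}. The lag-window estimator built from the true $\bxi_j$ is consistent for $\mathcal G_L$ in the $|\Lambda|^{1/2}$-weighted trace norm, by the standard Bartlett-type argument under $L^p$-$m$ approximability (Assumption~\ref{a:Lpm}, using the Hölder condition \eqref{a:alphaholder} to transfer $\vartheta$ to $Y$). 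Replacing $\widehat\Lambda$ by $\Lambda$ costs $\sqrt L\varepsilon_m\mathfrak b_m\to0$. The PSD correction $(\cdot)^+$ is nonexpansive relative to the PSD target, so it preserves consistency.

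\textbf{Step 3: coupling.} The rotation $O_{L,m}$ acts within eigenblocks whose $\lambda$ is constant, so $\sum_{\ell\in\mathcal B_k}\operatorname{sgn}(\lambda_\ell)Q_\ell^2$ is invariant under it. Hence $\widehat\Gamma_m$ has the law of $\Gamma^{(L)}$ with $\Omega_L$ replaced by $\widehat\Omega_{L,m}$ and $\lambda$ replaced by $\widehat\lambda$. Conditionally on $\mathcal F_\infty$, realize both Brownian motions as $\Sigma^{1/2}B$ from a common $\ell^2$-cylindrical Brownian motion $B$. By Powers–Størmer, $\|\widehat\Omega^{1/2}-\Omega^{1/2}\|_{HS}^2\le\|\widehat\Omega-\Omega\|_{\mathrm{tr}}$, so $E\sum_\ell\sup_{s\le T}|\widehat Z_\ell-Z_\ell|^2\to0$. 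The weighted-sup analysis of Step 1, applied uniformly to the family of Gaussian processes with trace-bounded covariance, then yields
\[
\sup_{0<t<a}|\widehat\Gamma_m(t)-\Gamma^{(L_m)}(t)|\to0
\]
in conditional probability, in probability. Combining with Step 1 and using the continuity point $c$ gives the claim.

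I expect the main obstacle to be the trace-norm consistency of $\widehat\Omega_{L,m}$ with growing $L$, in particular controlling the long-run covariance estimator uniformly in $L$ when the eigenvalues are only $\delta_m$-separated. The second delicate point is uniformity of the tail control for $a=\infty$ across random covariances. For that I would first try to show the bounds depend only on $\tr\Omega$ and a Hölder moment of the fourth order, which are both stable under the coupling.
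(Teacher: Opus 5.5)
Your plan is sound. Much of it matches the paper: sign and eigenvalue consistency on the top $L$ block, invariance of the signed quadratic form under the block rotation $O_{L,m}$, the Lipschitz bound of Lemma~\ref{l:Psi_continuous}, and tail bounds that depend only on the trace of the covariance, as in Lemma~\ref{l:Gamma_tail}. What differs is how $\widehat\Gamma_m$ is linked to $\Gamma$.

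The paper embeds $\widehat{\mathbf Z}_{L,m}$ into $\cH$ via $\mathcal T_L$ and applies the unitary $U_{L,m}=\mathcal T_LO_{L,m}\mathcal T_L^*+(I-P_L)$, which commutes with $J$. It proves only Frobenius consistency of $\widehat\Omega_{L,m}$, together with the scalar limit $\tr[\widehat\Sigma]_{L,m}\to\tr\Sigma$ (Lemma~\ref{l:trace_hatSigma}). Then, along almost surely convergent subsequences, it obtains:
\begin{itemize}
\item finite-dimensional convergence from HS convergence of the embedded covariance to $\Sigma$;
\item tightness in $C^\gamma_0$ from the uniform trace bound, via the Ra\v{c}kauskas--Suquet criterion;
\item a Skorokhod representation.
\end{itemize}
Truncation enters only through $\|\Sigma_L-\Sigma\|_{\rm HS}\to0$, so no separate $\Gamma^{(L)}$ appears.

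Your common-Brownian-motion coupling with the Powers--St{\o}rmer inequality avoids subsequences and Prohorov and is more quantitative. Once the embedded covariance converges to $\Sigma$ itself in trace norm, you can also couple directly to $G$ and drop Step~1. The price is that it needs trace-norm rather than HS consistency of $\widehat\Omega_{L,m}$.

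Trace-norm consistency is attainable under the stated assumptions, but two of your justifications do not deliver it as written.

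First, the lag-window argument under $L^p$-$m$ approximability gives only Frobenius (HS) rates, because the available moment inequalities are Hilbert-space ones. The pieces can still be controlled in trace norm:
\begin{itemize}
\item the bias terms, via $\sum_r\|\Cov(Y_0,Y_r)\|_{\rm tr}<\infty$;
\item the sampling fluctuation, whose weighted Frobenius norm is $O_\P(\mathfrak b_m/\sqrt m)$, at a cost of $\sqrt L$, since $\mathfrak b_m\sqrt{L/m}\le\mathfrak b_m\sqrt L\,\varepsilon_m\to0$;
\item the eigenvector-replacement error, via $\bigl\||\Lambda_L|^{1/2}B|\Lambda_L|^{1/2}\bigr\|_{\rm tr}\le\tr|\Lambda_L|\,\|B\|_{\rm op}$.
\end{itemize}

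Second, with $B=[\widehat\Sigma]_{L,m}$, the map $(\cdot)^+$ in \eqref{e:cov_plus} is the Frobenius projection onto PSD matrices with trace $\max\{\tr B,0\}$. It is nonexpansive only in Frobenius norm, and $O_{L,m}^\top[\Sigma]_LO_{L,m}$ is generally not in that set. As the paper does, compare instead with the feasible target $\bigl(\max\{\tr B,0\}/\tr[\Sigma]_L\bigr)O_{L,m}^\top[\Sigma]_LO_{L,m}$; this step uses the trace convergence. Since everything is PSD, HS convergence of the embedded operators to $\Sigma$ together with $\tr\widehat\Omega_{L,m}\to\tr\Sigma$ then upgrades to trace-norm convergence, which is what your coupling needs.

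Also, in Step~1 the comparison you need is $r^{1-\epsilon}\ll r^\beta$, not $\gg$.
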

Theorem \ref{t:LRC_sim_method} establishes validity of Monte Carlo approximation of the quantiles of $\sup_{0<t<a}\Gamma(t)$, and hence, feasibility of the test. We now provide an example to illustrate the requirements of Assumptions \ref{a:eigsep}--\ref{ker2}. The supporting calculations are given in the Supplement.

\begin{example}\label{ex:lambda}Suppose 
$
\lambda_\ell = \ell^{-\theta}$
for some $\theta>3/2$, and let $L_m\to\infty$ with $L_m \ll m^{(1/3-\eta)(\theta+1)}$ for some $0<\eta<1/3$.   With this choice, Assumption~\ref{ker2}
reduces to 
$
\mathfrak b_m\left(m^{-3\eta/2}+\sqrt{L_m/m^{}}\right)\to0,$ which is always satisfied when  $1\ll \mathfrak b_m \ll m^{1/2}$ and $L_m\asymp(\log m)^q$ for some $q.$
\end{example}%

In some applications, it may be desirable to select a kernel bandwidth or to
transform the original data using information from the training sample. The following
corollary justifies such procedures for Lipschitz-type transformations.

\begin{corollary}\label{c:transformation}
For $\theta_0\in\R^\nu$, let
$\widehat\theta_m=\widehat\theta_m(\bX_1,\ldots,\bX_m)$ satisfy
$\sqrt m\,\|\widehat\theta_m-\theta_0\|=O_\P(1)$.
Let $(\theta,\bx)\mapsto t_\theta(\bx)$ be a measurable map taking values
in a separable metric space $(\mathcal Y,d)$, and set
$$
h_\theta(\bx,\by)
=
k(t_\theta(\bx),t_\theta(\by)),
$$
where $k:\mathcal Y \times \mathcal Y\to \R$ is bounded and satisfies the conditions of either
Example~\ref{ex:cnd_kernels} or Example~\ref{ex:psd_kernels}.
Suppose the hypotheses of Theorems~\ref{thone} and
\ref{t:LRC_sim_method} hold with $h=h_{\theta_0}$.
Suppose further that
$d(t_\theta(\bx),t_\theta(\by))\leq C\rho(\bx,\by)$,  and there is an $\bx_0 \in \mathcal X$ and a neighborhood $N$ of $\theta_0$ such that
$d(t_\theta(\bx),t_\tau(\bx))
\leq C\{1+\rho(\bx,\bx_0)\}\|\theta-\tau\|$, and
$|h_\theta(\bx,\by)-h_\tau(\bx,\by)|
\leq C\|\theta-\tau\|\{q(\bx)+q(\by)\}$
for all $\theta,\tau\in N$ and $\bx,\by\in\mathcal X$, where 
$ q(\bx)\geq 1$ with $\E q(\bX_1)^2<\infty$.
If the quantities $p,\alpha,\eta$ in Assumption~\ref{a:Lpm} satisfy
$p\alpha(1-\eta)>\nu$, then the conclusions of Theorems~\ref{thone} and \ref{t:LRC_sim_method} remain valid when
$\cD_m(k)$ and $\widehat\Gamma_m$ are constructed using
$h_{\widehat\theta_m}$, and with $A$ and 
$\Gamma$ defined using $h_{\theta_0}$.
\end{corollary}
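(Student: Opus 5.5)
The plan is to show that replacing $h_{\theta_0}$ by $h_{\widehat\theta_m}$ changes $\sup_{k\le M_m}\mathcal D_m(k)$ and $\sup_t\widehat\Gamma_m(t)$ by $o_\P(1)$. The conclusions of Theorems~\ref{thone} and \ref{t:LRC_sim_method} then transfer by Slutsky-type arguments at continuity points. Since $\widehat\theta_m$ is random and depends on the training sample, I will not substitute it directly. Instead I will bound the difference uniformly over the shrinking ball $N_m=\{\theta:\|\theta-\theta_0\|\le Cm^{-1/2}\}$, which contains $\widehat\theta_m$ with probability arbitrarily close to one for large $C$.

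\textbf{Step 1: decompose the increment kernel.} Write $\Delta_\theta=h_\theta-h_{\theta_0}$ and decompose it as $\Delta_\theta=\overline\Delta_\theta+(\text{first-order terms})+\text{const}$ with respect to $F$. The constant cancels in $\mcU$, because $\mcU$ is a contrast with weights summing to zero. The first-order part $\E\Delta_\theta(\bx,\bX)$ gives, inside $\mcU((0,m+\ell_1],(m+\ell_2,m+k])$, a difference of two window means of $\psi_\theta(\bX_i)$ minus its mean. By the Lipschitz bound, $\|\psi_\theta\|\lesssim m^{-1/2}q$. A Hölderian invariance principle for the partial sums of $q(\bX_i)$ (or just Chebyshev with dependence, using $\E q^2<\infty$) bounds this contribution after weighting by $(k-\ell_2)^2/(m g_m)$. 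The weight $g$ grows like $r^{1+\beta}$ for $r\gg L$, so the linear term is $O_\P(m^{-1/2})\cdot O_\P(1)$ uniformly in $k$. The key point is that this term is linear, so an extra factor of $m^{-1/2}$ kills it. The degenerate remainder $\overline\Delta_\theta$ is a degenerate kernel of Hilbert–Schmidt size $O(m^{-1/2})$. I will bound its weighted $U$-statistic contribution using the spectral/invariance machinery behind Theorem~\ref{thone}, applied to $\overline\Delta_\theta$ and giving order $m^{-1/2}$ times an $O_\P(1)$ maximal functional.

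\textbf{Step 2: uniformity over $\theta\in N_m$ (the main obstacle).} The bounds in Step 1 must hold uniformly over the ball. I plan a chaining/grid argument: cover $N_m$ by $O(\epsilon^{-\nu})$ points. Between grid points, control the oscillation via the map $\theta\mapsto t_\theta(\bX_i)$. By the bounds $d(t_\theta(\bx),t_\tau(\bx))\le C(1+\rho(\bx,\bx_0))\|\theta-\tau\|$ and the regularity in Examples~\ref{ex:cnd_kernels}/\ref{ex:psd_kernels}, the embedded processes $Y_\theta(\bX_j)$ are Hölder in $\theta$ with exponent $\alpha$ and random constant $(1+\rho(\bX_j,\bx_0))^\alpha$. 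Their $L^{p}$ norms are controlled by $\E\rho(\bx_0,\bX_1)^{\alpha p}<\infty$. A Kolmogorov-type continuity criterion in $\nu$ parameters then requires moment order exceeding dimension. This is exactly the condition $p\alpha(1-\eta)>\nu$, with the $(1-\eta)$ arising from the approximability coefficients $\vartheta^{\alpha\eta}$ used to handle dependence in the moment bounds. I expect this step to be the main technical difficulty.

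\textbf{Step 3: calibration side.} For Theorem~\ref{t:LRC_sim_method}, the centered Gram matrix built from $h_{\widehat\theta_m}$ differs from that built from $h_{\theta_0}$ by a matrix whose Frobenius norm is at most $m^{-1}(\sum_{i,j}\Delta^2)^{1/2}\lesssim m^{-1/2}\cdot m^{-1}\sum_i q(\bX_i)$, which is $O_\P(m^{-1/2})$. Weyl and Hoffman–Wielandt then preserve the rate $\varepsilon_m$ of Theorem~\ref{t:eigen_consistency_theo}. Davis–Kahan with the gap $\delta_m$ adds $O_\P(m^{-1/2}\delta_m^{-1})$ to the eigenvector error, which is dominated by the rate $m^{-1/2}\delta_m^{-3/2}$ of Theorem~\ref{t:spectral_consistency}. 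Hence Assumption~\ref{ker2} still yields consistency of $\widehat\Omega_{L,m}$, and the conclusion follows as in the original proof.
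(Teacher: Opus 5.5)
\textbf{Verdict: there is a genuine gap, in Step 1 (the detector side).} Step 3 and the chaining mechanism of Step 2 match the paper. That includes the Gram-matrix perturbation of order $m^{-1/2}$, Hoffman--Wielandt for the eigenvalues, the extra $O_\P(m^{-1/2}\delta_m^{-1})$ from Davis--Kahan, and how $p\alpha(1-\eta)>\nu$ arises.

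\textbf{What is wrong in Step 1.} First, the first-order Hoeffding part contributes exactly nothing: $\mcU(f;(a,b];(c,d])=0$ whenever $f(\bx,\by)=f_0(\bx)+f_0(\by)$. So only $\mcU(\overline h_\theta)-\mcU(\overline h_{\theta_0})$ matters.

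For that term, Hilbert--Schmidt smallness of $\overline h_\theta-\overline h_{\theta_0}$ is the wrong quantity. By \eqref{e:Udecomp}, after the weighting $(k-\ell_2)^2/(mg_m)$ a degenerate $U$-statistic is a $\tr(A)$ term plus $Q_J$ of window contrasts of partial sums of its embedding $Y$. Both live on the scale of the trace norm $\sum_\ell|\lambda_\ell|=\E\|Y(\bX_1)\|_{\cH}^2$, not the HS norm. To run Theorem~\ref{thone}'s machinery on the indefinite kernel $\overline h_\theta-\overline h_{\theta_0}$, you would need small trace norm, plus Assumptions~\ref{a:h}(ii)--(iii), \ref{a:Y_moment}, \ref{a:Lpm} for its embedding with constants uniform in $\theta$. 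None of this follows from $\|\overline h_\theta-\overline h_{\theta_0}\|_{L^2(F\times F)}=O(m^{-1/2})$. Nor does anything make the spectral embedding \eqref{e:def_Y(x)} of a $\theta$-dependent kernel regular in $\theta$, so your Step 2 chaining has no object to act on.

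The Lipschitz bound on $h_\theta$ does not rescue this either. It gives $|\mcU(h_\theta)-\mcU(h_{\theta_0})|=O_\P(m^{-1/2})$, which becomes $O_\P(m^{1/2})$ once multiplied by a weight of order $m$.

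\textbf{The missing idea: a $\theta$-free feature space.}
\begin{itemize}
\item Take $\varphi_0:\mathcal Y\to\mathcal K_0$ into a fixed Hilbert space: the RKHS map in the positive definite case, Schoenberg's embedding in the negative-type case.
\item Set $Y_\theta(\bx)=\varphi_0(t_\theta(\bx))-\E\varphi_0(t_\theta(\bX_1))$. Then $\overline h_\theta(\bx,\by)=\pm\langle Y_\theta(\bx),Y_\theta(\by)\rangle_{\mathcal K_0}$, so the sign operator is $\pm I$ for every $\theta$.
\item Up to the remainders of Lemma~\ref{lemma_remainder}, the detector is then $\Psi_\theta Z_{m,\theta}$, where $\Psi_\theta$ depends on $\theta$ only through $\tr A_\theta$.
\item This $Y_\theta$ is exactly the process that is $\alpha$-H\"older in $\theta$ with constant of order $1+\rho(\bx,\bx_0)^\alpha$, as in your Step 2. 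It also gives $|\tr A_\theta-\tr A_{\theta_0}|\lesssim\|\theta-\theta_0\|^\alpha$.
\end{itemize}

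\textbf{How the paper finishes.}
\begin{itemize}
\item It proves $\E\sup_{\theta\in N_\delta}\|Z_{m,\theta}-Z_{m,\theta_0}\|_{\gamma,1+T}^s\lesssim\delta^{\alpha(1-\zeta)s}$ by chaining.
\item It passes this through the H\"older-norm Lipschitz bound for $\Psi$ in Lemma~\ref{l:Psi_continuous}.
\item It proves $\theta$-uniform versions of the remainder bound and of the $t\ge T$ tail bound of Lemma~\ref{l:Gamma_tail}. Open-ended monitoring needs these, and your sketch does not treat them separately.
\end{itemize}

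The achievable size is $\|\theta-\theta_0\|^\alpha$, not $m^{-1/2}$, but this is harmless. The detector side needs only $\widehat\theta_m\stackrel{\P}{\to}\theta_0$. The $\sqrt m$ rate and the Lipschitz condition on $h_\theta$ are used only for the Gram-matrix bound in your Step 3.
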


\begin{remark}\label{r:transformation_examples}
Corollary~\ref{c:transformation} covers several practical data-driven
choices.  For $\mathcal X=\R^d$, bandwidth selection corresponds to
$t_\sigma(\bx)=\bx/\sigma$, where $\widehat\sigma_m$ is any
$\sqrt m$-consistent estimator of $\sigma_0>0$ and $\nu=1$.
PCA projection onto an estimated principal subspace is also permitted:
if $\widehat P_{r,m}$ is the orthogonal projector onto the leading $r$
sample principal components and
$\sqrt m\|\widehat P_{r,m}-P_r\|_F=O_\P(1)$, as follows under suitable
covariance-consistency and eigengap conditions, one may take
$t_P(\bx)=P\bx$ and $\nu=r(d-r)$.  Finally, the corollary encompasses the many parameter-stability problems, such as those considered in \cite{chu:stinchcombe:white:1996} and related works: for
$Y_i=g_{\theta_0}(\bX_i)+\varepsilon_i$, taking
$t_\theta(Y_i,\bX_i)=Y_i-g_\theta(\bX_i)$ yields historical-fit residuals, including one-step-ahead prediction errors, and hence a
distributional monitor for identifiable changes in $\theta_0$.
\end{remark}

\section{Simulation study}\label{s:simulations}

We study the finite-sample behavior of the proposed monitoring procedure under
the null hypothesis and under several alternatives.  The simulation setups are designed
to assess three aspects of the method: calibration of the test under various types of data,
power against distributional changes that are not necessarily mean shifts, and
detection delay for early and late changepoints.  For simplicity throughout this section, we take $\mathcal X =\R^d$ and $\rho(\bx,\by)=\|\bx-\by\|.$

  Throughout the simulation study we use two embeddings of the data:
\begin{equation}\label{e:raw,lag1}
    \bZ_t^{\mathrm{raw}}=\bX_t,\qquad    \bZ_t^{\mathrm{lag}}=(\bX_t^\top,\bX_{t-1}^\top)^\top .
\end{equation}
The raw version applies
the kernel directly to the observations, whereas  $\bZ_t^{\mathrm{lag}}$ is included to make the detector sensitive to changes in
short-range serial dependence  while retaining the same distributional form of the
statistic.  In both cases we use the bounded kernel
\begin{equation}\label{e:kernelchoice}
    h_{}(\mathbf x ,\by)
    =
    \left(1-\exp\left\{-\left(\frac{\|\bx-\by\|}{\sigma}\right)^{2\alpha}\right\}\right)^{1/2},
\end{equation}
 with $\alpha=1/4$  (which satisfies \eqref{e:omnibus}; see \citealp[Thm.~4.3]{boniece:horvath:trapani:2026}) and $\sigma$ chosen by the median heuristic from the historical sample.   We refer to these as the raw and lag-1 embeddings.

For computational efficiency, we use a geometric discretization of the Page and
full monitoring schemes.  In the notation of Assumption \eqref{e:S(t)_cond}, this corresponds to
restricting the scheme parameter to
$
    K_\rho=\{1-\rho^j:j=0,1,2,\ldots\}\cup\{1\}\subseteq[0,1]$ where $0<\rho<1.
$
Thus
\begin{equation}\label{e:geo_schemes}
\begin{gathered}
\mathcal S^{\mathrm{Page}}_\rho(t)
=
\left\{\bigl(0,(1-\rho^j)t\bigr):j=0,1,2,\ldots\right\}\cup\{(0,t)\},\\
\mathcal S^{\mathrm{full}}_\rho(t)
=
\left\{\bigl((1-\rho^j)t,(1-\rho^j)t\bigr):j=0,1,2,\ldots\right\}\cup\{(t,t)\}.
\end{gathered}
\end{equation}
In both cases, the right window in the continuous-time scheme has length
$\rho^j t$.  Hence, in finite samples, in the notation of
\eqref{e:detector}, we restrict the right-window length
$k-\ell$ to the geometric grid
$$
    k-\ell \in\left\{
\lfloor \rho^j k\rfloor \vee 2
:
0 \leq j \leq J_k
\right\},
\qquad
J_k
=
\left\lceil\log_\rho(2/k) \right\rceil .
$$
This reduces the number of comparisons at time $k$ from $O(k)$  to $O(\log k)$, while retaining a multi-scale grid concentrated at recently available windows; in experiments we set $\rho=0.9$, which offered similar power as the full scheme but at substantially lower cost: for example, the number of comparisons is about $65$ at $k=2000$ and about $80$ at $k=10000$.
\subsection{Null calibration}
\label{s:simulations:null}

We first examine the finite-sample size of the proposed calibration procedure under four data-generating mechanisms: two linear processes and two with nonlinear dynamics. These are: an AR(1) process with autoregressive coefficient $\phi=0.50$ and standardized Laplace innovations; the same AR(1) process with standardized $t_3$ innovations; a threshold autoregressive (TAR) process,
$X_t=\kappa_tX_{t-1}
+(1-\kappa_t^2)^{1/2}\varepsilon_t$, with $\kappa_t=0.20\mathbf 1_{\{X_{t-1}\leq 0\}}+0.60\mathbf 1_{\{X_{t-1,}>0\}}$, and standardized Laplace innovations; and a GARCH$(1,1)$ process with parameters
$(\omega,a,b)=(0.45,0.05,0.50)$ and standardized Laplace
innovations. 

The nominal level is $\alpha=0.05$. In
Table~\ref{tab:null-size-cL6-cbw5-newweight}, we report empirical rejection
probabilities for historical sample sizes $m\in\{100,300,500\}$ and
monitoring horizons $M=m$ and $M=5m$. For each null model, we consider both
the raw and lag-1 embeddings in \eqref{e:raw,lag1}. We use the bounded
energy kernel, set the weight parameter to $\beta=0.95$, and take
$\rho=0.9$. We report only results for the scheme
$S_\rho^{\mathrm{full}}$ in \eqref{e:geo_schemes}, since the corresponding
results for $S_\rho^{\mathrm{Page}}$ were similar.

\begin{table}[h!tb]
\centering
\caption{Empirical rejection probabilities under $H_0$ at nominal level
$\alpha=0.05$ for the calibration rule
$L_m=\lceil 2.5(\log m)^{4/3}\rceil$ and
$\mathfrak b_m=\lceil 5m^{1/3}\rceil$.
Results are based on 2000 Monte Carlo replications and 1000 simulations
to obtain each critical value.}
\label{tab:null-size-cL6-cbw5-newweight}

\small
\setlength{\tabcolsep}{4pt}
\begin{tabular}{llcccccccc}
\toprule
& & \multicolumn{4}{c}{$M=m$}
  & \multicolumn{4}{c}{$M=5m$} \\
\cmidrule(lr){3-6}\cmidrule(lr){7-10}
Embedding & $m$
& AR-Lap. & AR-$t_3$ & TAR & GARCH
& AR-Lap. & AR-$t_3$ & TAR & GARCH \\
\midrule
Raw
& 100 & .051 & .044 & .050 & .032
      & .047 & .051 & .062 & .021 \\
& 300 & .044 & .050 & .053 & .042
      & .056 & .055 & .071 & .045 \\
& 500 & .048 & .047 & .056 & .053
      & .054 & .062 & .065 & .050 \\
\midrule
Lag-1
& 100 & .037 & .035 & .045 & .019
      & .050 & .051 & .066 & .018 \\
& 300 & .047 & .054 & .046 & .041
      & .059 & .057 & .064 & .041 \\
& 500 & .049 & .040 & .047 & .040
      & .056 & .052 & .052 & .042 \\
\bottomrule
\end{tabular}
\end{table}

Critical values are computed using the Gaussian spectral approximation
described in Section~\ref{s:implementation}. The long-run covariance
operator is estimated using a Bartlett kernel, with bandwidth
$\mathfrak b_m$, and the spectral approximation $L_m$ chosen as
\begin{equation}\label{e:Lm_bm_default}
L_m=\big\lceil 2.5(\log m)^{4/3}\big\rceil,
\qquad
\mathfrak b_m=\big\lceil 5m^{1/3}\big\rceil.
\end{equation}

This rate rule was selected from a small calibration grid to provide
stable size across the representative null models. Although no
finite-sample tuning rule can be universal, this is our recommended
default when using the kernel in \eqref{e:kernelchoice} and those with infinite spectrum but with low effective dimension (see Section \ref{s:hidimcalib}). Each entry in
Table~\ref{tab:null-size-cL6-cbw5-newweight} is based on 2000 Monte Carlo
replications using each 1000 simulations of the limit to 
compute the critical value.

\subsubsection{Calibration in higher dimensions}\label{s:hidimcalib}

Calibration in higher-dimensional settings is more sensitive to the spectral structure of the
kernel.  In particular, the data dimension alone is not the main issue; the empirical spectrum of the centered kernel matrix may be highly diffuse even for low-dimensional data.  

To illustrate this point, in Table \ref{tab:spectral-rank-explainer-ar-var}, we report the empirical effective dimension (\citealp{lopes:jacob:wainwright:2011})
$$
\widehat r_{\rm eff} = \Big(\sum_{\ell=1}^m |\widehat\lambda_\ell|\Big)^2/\Big(\sum_{\ell=1}^m \widehat\lambda_\ell^2\Big),
$$
which measures the diffusiveness of the eigenvalues of $H_m$, and hence, can be interpreted loosely as an approximation of the effective dimension or effective rank of $A$.  Provided at least one $\widehat\lambda_\ell\neq0$, it holds that $1\leq \widehat r_{\rm eff}\leq m$, and  $\widehat r_{\rm eff}=1$ for a completely concentrated spectrum, while
$\widehat r_{\rm eff}=m$ when all $m$ empirical eigenvalues have equal absolute value. Moreover, for a kernel of true rank $R$, the corresponding effective dimension is at most $R$.

\begin{table}[H]\footnotesize
\centering
\caption{Median empirical effective rank $\widehat r_{\rm eff}$ for selected kernels.}
\vspace{-4ex}
\label{tab:spectral-rank-explainer-ar-var}\
{\singlespacing
\begin{tabular}{lcccccc}
\toprule
Kernel & \multicolumn{2}{c}{$d=1$ AR(1)} & \multicolumn{2}{c}{$d=2$ VAR(1)} & \multicolumn{2}{c}{$d=50$ VAR(1)}\\
\cmidrule(lr){2-3}\cmidrule(lr){4-5}\cmidrule(lr){6-7}
  & $m=500$ & $m=1000$ & $m=500$ & $m=1000$ & $m=500$ & $m=1000$\\
\midrule
$\|\bx-\by\|_2^{1/2}$ & 9.5 & 9.6 & 20.8 & 21.1 & 305 & 440\\
$\|\bx-\by\|_2$ & 3.5 & 3.6 & 6.8 & 6.8 & 142 & 166\\
\eqref{e:kernelchoice},~$\alpha=1/4$ & 46.7 & 49.3 & 121 & 137 & 473 & 900\\
\eqref{e:kernelchoice},~$\alpha=1$ & 6.8 & 6.8 & 13.9 & 14.1 & 266 & 362\\
\midrule
$h_R$ $R=5$ & \textemdash & \textemdash & 1.8 & 1.7 & 4.6 & 4.6\\
\bottomrule
\end{tabular}
}
\end{table}
\normalsize

We consider several distance-based kernels, and one finite-rank kernel in Table~\ref{tab:spectral-rank-explainer-ar-var}, given by:
\begin{equation}\label{e:finite_rank_kernel}
h_R(\bx,\by) = \sum_{r=1}^R(\psi(\bv_r^\top \bx)-\psi(\bv_r^\top \by))^2, \qquad \psi(z)=\frac{z}{\sqrt{1+z^2}},
\end{equation}
where $\bv_1,\ldots,\bv_R$ are fixed randomly selected unit directions. It is easily seen that the operator $A$ in \eqref{e:def_A} based on $h_R$ has at most $R$ nonzero eigenvalues; i.e., it is of rank at most $R$; hence $\widehat r_{\rm eff}\leq R$ for such kernels.  We report $\widehat r_{\rm eff}$ for  VAR(1)  models in dimensions $d=1,2,50$, given by $\bX_ t= 
0.50Q \bX_t + \sqrt{1-0.50^2}  \boldsymbol \varepsilon_t$, where $Q$ is a randomly selected orthogonal matrix, and where $\be_t\sim N(0,I_d)$ are i.i.d.

Table~\ref{tab:spectral-rank-explainer-ar-var} illustrates three phenomena:  first, effective rank can be non-negligible even for scalar data, and can increase dramatically from $d=1$ to $d=2$.  Second, rougher bounded kernels can have substantially more diffuse spectra.  Third, the effective rank can be extremely large for the distance-based kernels, whereas $h_R$ is rank-limited by construction.

Because calibration relies on estimating a long-run covariance, diffuse kernel spectra can be a source of finite-sample instability. We therefore examine calibration in high dimensions using deliberately low-rank bounded kernels. We take $R=L_m=5$, $m=M=500$, and use
the same $\mathfrak b_m$ as in \eqref{e:Lm_bm_default}. We consider four high-dimensional null models: linear and threshold VAR(1) processes in $d=50$, and discretized functional principal components (FPC) processes
in $d=500$ with either linear or nonlinear score dynamics. Full model specifications are
given in Section \ref{supp:sim} of the Supplement.

We use three versions of the rank-bounded kernel \eqref{e:finite_rank_kernel}, all with $R=5$, corresponding to
$$
\psi(z)=\tanh z,\qquad \psi(z)=\frac{z}{\sqrt{1+z^2}} \qquad
\psi(z)=\sin(z/2).
$$
The resulting kernels are denoted $h_R^{\tanh}$, $h_R^{\rm sqrt}$, and $h_R^{\sin}$.
\begin{table}[h!tb]
\centering
\caption{Empirical rejection probabilities under $H_0$ at nominal level $\alpha=0.05$ for rank-controlled product kernels in high-dimensional settings. Here $m=500$, $M=m$, $R=L=5$, and $b_m=\lceil 5m^{1/3}\rceil$. Results are based on 2000 Monte Carlo replications and 1000 simulations to obtain each critical value.}
\vspace{2ex}
\label{tab:highdim-rank5-calibration}\footnotesize
\begin{tabular}{lcccc}
\toprule
\multicolumn{5}{c}{Empirical rejection probabilities, $m=500$}\\
\midrule
Kernel & VAR(1) & Nonlinear ~VAR(1) & Linear FPC & Nonlinear FPC \\
 & $d=50$ & $d=50$ & $d=500$ & $d=500$ \\
\midrule
$h_R^{\tanh}$ & .038 & .040 & .060 & .035 \\
$h_R^{\rm sqrt}$ & .043 & .029 & .051 & .034 \\
$h_R^{\sin}$ & .058 & .052 & .069 & .033 \\
\bottomrule
\end{tabular}
\end{table}
Across the four high-dimensional null designs, the rank-controlled product kernels yield rejection frequencies reasonably close to the nominal level. These results suggest that stable calibration is feasible in high dimensions when the effective kernel rank is kept small. However, finite rank alone may not guarantee accurate finite-sample calibration, and in practice some finite-rank kernels may require larger $m$.

\subsection{Power and detection delay at a fixed break location}
\label{s:simulations:fixed-break}

We next compare the finite-sample power and detection delay of the proposed procedures under a range of alternatives. Throughout this subsection, the historical sample size is fixed at $m=300$, monitoring continues for $M=800$ periods, and the changepoint occurs at monitoring time $k_*=300$.

Across three different classes of DGPs (AR(1), VAR(1), and GARCH(1,1)), we consider three families of alternatives, which we describe below. In addition to the schemes in \eqref{e:geo_schemes}, we include the classical scheme
for reference and set $\beta=0.95$ throughout. As benchmarks, we use the weighted CUSUM
procedure of \citet{kutta:dornemann:2025}, denoted WC, and the mean-targeted TWIN and
nonparametric ECDF-based NP-TWIN procedures of \citet{bastian:kutta:2025}. NP-TWIN is
omitted from the multivariate VAR experiments because the empirical distribution function is not easily computed in higher dimensions. All benchmark methods are implemented
with their recommended default parameters.

 For each method we report the conditional distribution of the detection delay
$\tau_m-k_*$ given  $\{k_* < \tau_m\leq M\}$.  The quantities in parentheses are estimates of the post-change detection probability $\P\{\tau_m \leq M|\tau_m>k_*\},$ which excludes pre-change false alarms. All critical values are size-adjusted using Monte Carlo based on 2000 paths simulated under $H_0$ for each corresponding model, i.e., so that all procedures attain their intended size for the setting considered.

The first alternative family, corresponding to Figure~\ref{f:alt_fixedbreak_iso}, contains alternatives in which the mean is unchanged and the break occurs through serial dependence or volatility structure. Specifically, we consider a persistence change in an AR(1) model, a low-rank factor-structure change in a VAR(1) model, and a change in volatility dynamics in a GARCH(1,1) model.

\begin{figure}[h!]
\begin{center}
\includegraphics[width=0.9\linewidth]{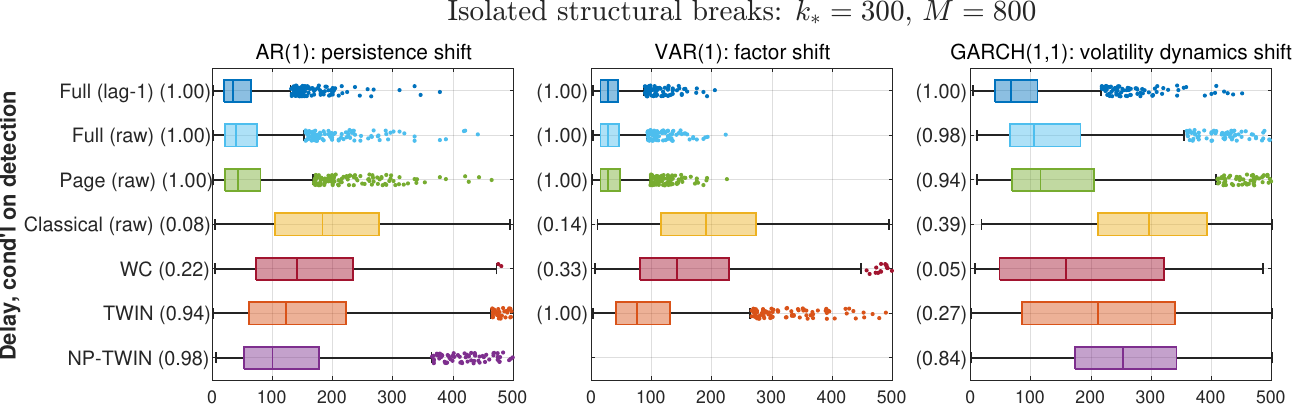}
\end{center}
\caption{Empirical delay distributions for isolated structural breaks.%
\label{f:alt_fixedbreak_iso}
}
\end{figure}

The second family, shown in Figure~\ref{f:alt_fixedbreak_mean}, considers a scalar AR(1) model with $\phi=0.50$ before and after the break and a change only in the mean, from $\mu=0$ to $\mu\in\{0.5,1\}$.

\begin{figure}[h!]
\begin{center}
\includegraphics[width=0.7\linewidth]{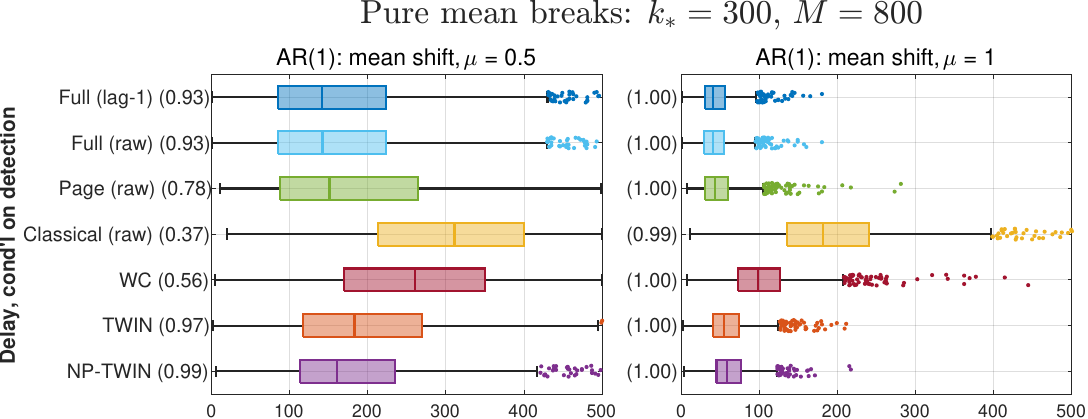}
\end{center}
\caption{
Conditional delay distributions for pure mean breaks.%
\label{f:alt_fixedbreak_mean}}
\end{figure}

The third family, shown in Figure~\ref{f:alt_fixedbreak_mixed}, combines location changes with structural changes.  Full model specifications for each of the three families of alternative scenarios are given in Section \ref{supp:sim} of the Supplement.

\begin{figure}[h!]\
\begin{center}
\includegraphics[width=0.9\linewidth]{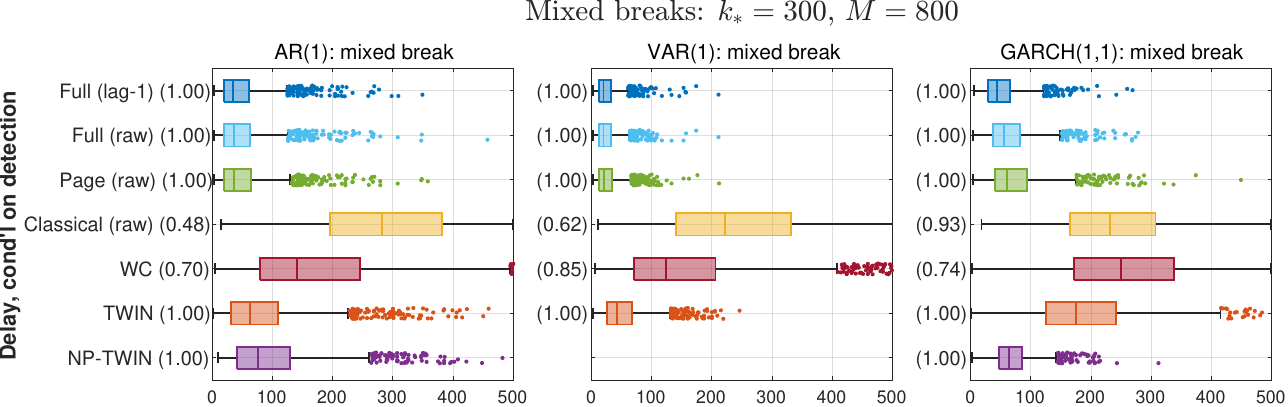}
\caption{
Delay distributions for mixed-type breaks.%
\label{f:alt_fixedbreak_mixed}}
\end{center}
\end{figure}

The fixed break location experiments broadly show favorable delay behavior for the proposed procedures across structural, pure mean, and mixed alternatives. Across the reported plots, the full kernel scheme is typically fastest conditional on post-change detection, while Page is slightly slower. Page may therefore best viewed as a reference scanning scheme, although it may be useful as a conservative scan if a gradual change is expected instead of a sharp change at $k_*$.  The classical scheme is slower and serves mainly as a benchmark. The lag-1 embedding is often slightly faster than the raw embedding, especially for GARCH.

The structural-break experiments give the clearest advantage to our proposed method. When the mean is fixed and the change occurs through persistence, factor structure, or volatility dynamics, mean-targeted benchmarks are not expected to perform especially well; their weaker performance reflects this mismatch rather than a general deficiency. More strikingly, under pure mean breaks, where those benchmarks are directly targeted, the proposed procedure does not appear to pay a meaningful delay penalty for being omnibus: in the reported plots it remains typically fastest while maintaining high post-change detection probability.

These findings show that the proposed  kernel procedure can have short delays and is broadly powerful at a moderate break location, but do not address how comparisons change as the break moves through the monitoring period; this is examined in the next section. 
\subsection{Delay across break locations}

We next examine detection delay as the changepoint moves across the monitoring period.
All experiments use a univariate AR(1) model with Laplace innovations. We take $m=100$ and
consider changepoints
$k^\star\in\{m^{1/2},m,m^{3/2},m^2\}$, with $M=2500$ for the first three cases and
$M=12000$ for $k^\star=m^2$. We consider a strong pure mean break and a mixed break
combining changes in location, persistence, and variance; full model specifications are given
in the Supplement.

We report the proposed lag-one procedure with $\beta=0.95$ and $\beta=0.99$, together
with WC, TWIN, and NP-TWIN. Raw embeddings and the Page and classical schemes are omitted
for succinctness.

\begin{figure}[h!]
    \begin{center}
        \includegraphics[width=0.95\linewidth]{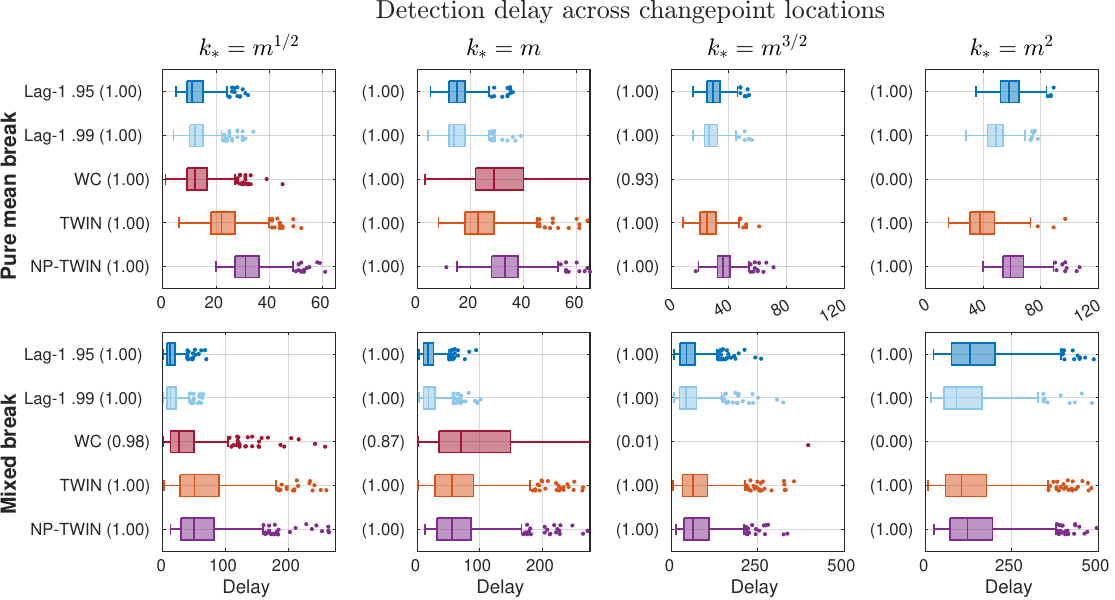}
    \end{center}
    \caption{
Empirical detection delays across changepoint locations for pure mean and mixed breaks in an AR(1) model. Parentheses give estimates of $\P\{\tau_m \leq M|\tau_m>k_*\}$.   Delays for WC in the $k_*=m^{3/2}$ pure mean case lie beyond the plot limits.
    }
    \label{fig:delay_puremean}
\end{figure}

Figure~\ref{fig:delay_puremean} shows that the favorable delay behavior of the proposed procedure persists as the changepoint moves to the later regimes $k^\star=m^{3/2}$ and $k^\star=m^2$. For pure mean changes, TWIN is strongest at the latest locations, as expected
from its mean-targeted design, whereas for mixed breaks the proposed procedure is competitive with, and often faster than, the benchmarks. Performance is similar for $\beta=0.95$ and $\beta=0.99$; together with the null results in Section~\ref{s:simulations:null}, this supports $\beta=0.95$ as a useful default for bounded kernels.

\begin{remark}
The proposed method is computationally more demanding than CUSUM-based detectors
such as WC and TWIN. Despite the geometric grid, our implementation requires
$O((m+M)^2)$ accumulated pairwise-kernel evaluations over a horizon $M$. Developing computationally efficient variants based on kernel approximations (e.g., \citealp{williams:seeger:2000,rahimi:recht:2007}) under dependence is left for future work.

\end{remark}

\section{Data illustrations}\label{s:data}

\paragraph{Exchange-rate returns.}
We first apply our procedure to daily USD/JPY log returns, using the geometric full scheme \eqref{e:geo_schemes} with the kernel in \eqref{e:kernelchoice} and the same tuning choices as in Section~\ref{s:simulations:fixed-break}. The historical sample comprises $m=125$ returns from December 31, 2024 to June 30, 2025, and monitoring begins on July 1, 2025; we use WC and TWIN as comparisons. Our procedure signals distributional instability in early June 2026, following an official yen-supporting foreign-exchange intervention by Japanese authorities in May\footnote{See https://www.mof.go.jp/english/policy/international\_policy/reference/feio/monthly/index.html}; in contrast, neither WC nor TWIN crosses its critical boundary.

\begin{figure}[h!]
    \begin{center}
        \includegraphics[width=0.49\linewidth]{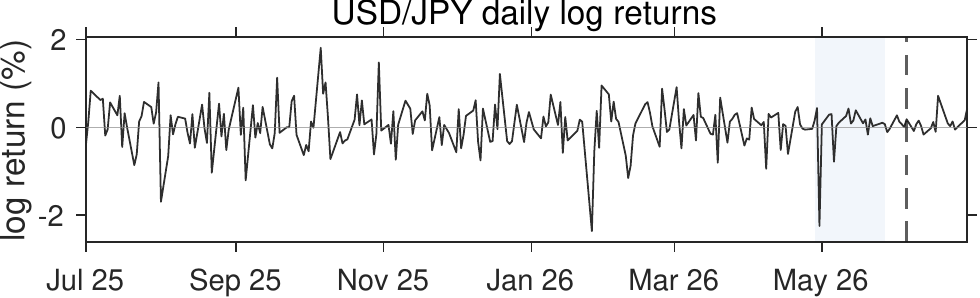}
        ~\includegraphics[width=0.49\linewidth]{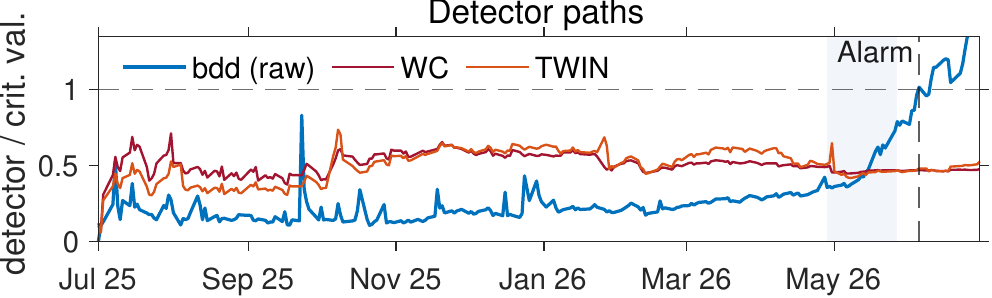}
    \end{center}
\caption{Daily USD/JPY log returns (left) and detector-to-critical-value ratios (right), with the horizontal dashed line marking the rejection threshold. The shaded region denotes the Japan Ministry of Finance reporting period, April 28--May 27, 2026, during which an official yen-supporting intervention took place.\label{fig:usdjpy}}
\end{figure}

\paragraph{Electricity prices and demand forecast errors.} We consider 24-hour day-ahead load forecasts and electricity-price curves from the Electric Reliability Council of Texas (ERCOT), which manages the power grid serving most of Texas.\footnote{Forecasts, realized demand, and day-ahead market and real-time market settlement-point prices were obtained from ERCOT's public archives. Real-time prices were averaged to the hourly frequency and matched to the corresponding day-ahead prices. We use system-wide load for Winter Storm Uri, Coast-zone load for Hurricane Nicholas, and prices at the settlement point \texttt{HB\_HUBAVG}.} ERCOT provides an hourly day-ahead forecast of system load, i.e., electricity demand on the grid. For operating day $t$, we combine the load forecast $\widehat D_{t|t-1}(u)$ at hour $u$ issued by ERCOT on day $t-1$ with the observed load $D_t(u)$ at hour $u$  to form
the 24-hour forecast-error curve
$
\bE_t =\{D_{t}(u)-\widehat D_{t\mid t-1}(u),~u=1,\ldots,24\}.
$
We pair this with the 24-hour price-spread curve
$
\bS_t =\left\{ P_t^{\mathrm{RT}}(u)-P_t^{\mathrm{DA}}(u),~u=1,\ldots,24\right\},
$
where $P_t^{\mathrm{DA}}(u)$ is the day-ahead market price for hour $u$ and
$P_t^{\mathrm{RT}}(u)$ is the corresponding average of the real-time market prices in hour $u$; we monitor the bivariate functional series $(\bE_t,\bS_t)^\top$. We focus on two separate monitoring periods, surrounding Winter Storm Uri and Hurricane Nicholas, using $m=183$ historical observations and horizons of $M=45$ operating days. We  use the geometric full scheme \eqref{e:geo_schemes} and a bounded rank-four kernel formed from pairwise products of functions of the first two historical principal-component scores $\bE_t$ and $\bS_t$, thereby targeting changes in their interaction; full construction and tuning details are given in Section~\ref{s:data_supplement}. As a benchmark, we apply WC to the full standardized bivariate mean curve. For Nicholas, our procedure signals on 13 September 2021, one day before the hurricane landfall, whereas WC does not signal, indicating that a change occurs in the joint behavior of the forecast errors and day-ahead price spreads, rather than in their marginal behavior or mean curves.  For Winter Storm Uri, our procedure signals on 10 February 2021, five days before the annotated event date, and WC signals on 9 February 2021. Complete results are reported in section \ref{s:data_supplement} of the Supplement.

\begin{figure}[h!]
    \begin{center}
        \includegraphics[width=0.48\linewidth]
        {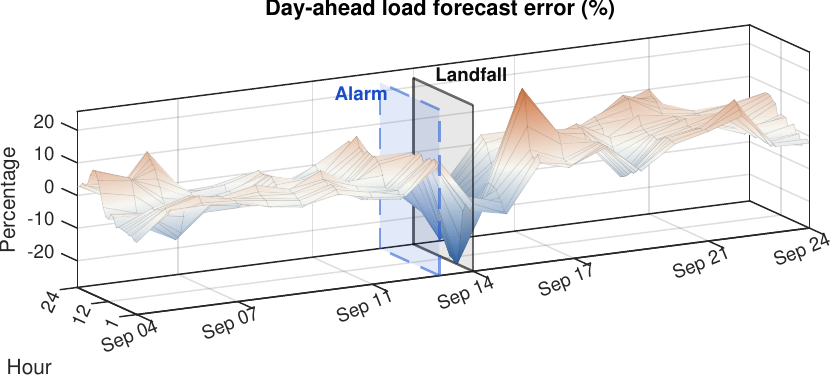}
        \quad
        \includegraphics[width=0.48\linewidth]
        {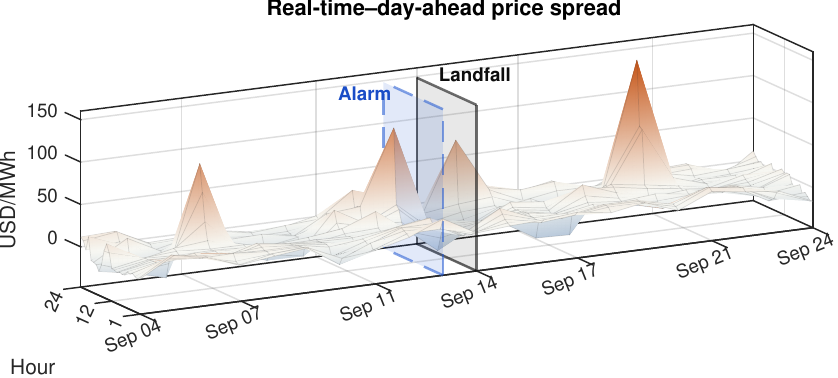}
    \end{center}
    \caption{Hourly ERCOT data around Hurricane Nicholas. The left panel shows the signed Coast-zone day-ahead load-forecast error, in percent, and the right panel shows the real-time--day-ahead price spread at the ERCOT 345-kV hub average, in USD/MWh. The blue and black planes mark the kernel alarm on 13 September 2021 and landfall on 14 September 2021, respectively.}
    \label{fig:ercot-nicholas-data}
\end{figure}

\paragraph{Air transportation networks.}
To further illustrate the flexibility of our method, we analyze daily U.S.\ domestic
flight networks constructed from the BTS Marketing Carrier On-Time Performance database
for American, Delta, United, and Southwest, restricting to the 50 busiest airports selected
from the associated historical sample. We consider three representations targeting different
aspects of the network. Let $\bA_t$ be the symmetric weighted adjacency matrix of operated flights on day $t$, with entry $A_{ij,t}$ equal to the number of flights between airports $i$ and $j$ in either direction.  We first
normalize by the total number of flights $c_t$ that day, and monitor the resulting route-share
network $\bP_t=c_t^{-1}\bA_t$ using total-variation distance. Second, to capture large-scale network structure,
we form the normalized graph Laplacian
$\mathbf L_t =\bI-\bD_t^{-1/2}\bA_t\bD_t^{-1/2}$ where $\bD_t=\diag(\bA_t\boldsymbol 1)$ and monitor its first five positive eigenvalues
using a bounded rank-five kernel. Finally, we monitor the four carrier-specific cancellation
rates using a bounded rank-four kernel. Full construction and kernel details are given in
Section~\ref{s:data_supplement} of the Supplement.

Similar to \citet{wang:li:madridpadilla:yu:rinaldo:2026}, we focus on the major airline
disruption surrounding COVID-19. We use $m=365$ historical days, from January 1--December
31, 2019, and monitor from March 1--June 30, 2020. Letting $\bY_t$ denote generically the
chosen representation, we remove a historical calendar mean $\widehat{\boldsymbol\mu}_t$
and fit
$(1-\phi\mathsf B)(1-\psi\mathsf B^7)
(\bY_t-\widehat{\boldsymbol\mu}_t)=\boldsymbol\varepsilon_t$,
with all quantities estimated from the historical sample and subsequently frozen. We apply
our procedure to the resulting residual sequence, as justified by
Corollary~\ref{c:transformation}. This allows us to retain daily resolution while accounting
for calendar and serial dependence; full preprocessing details are given in the Supplement.
The route-share, Laplacian, and cancellation-rate detectors signal on March 24, March 26,
and March 18, 2020, respectively. By contrast,
\citet{wang:li:madridpadilla:yu:rinaldo:2026} work at monthly frequency,
assume independence, and report a change in April 2020.

\begin{figure}[t]
\centering
\includegraphics[width=.48\textwidth]{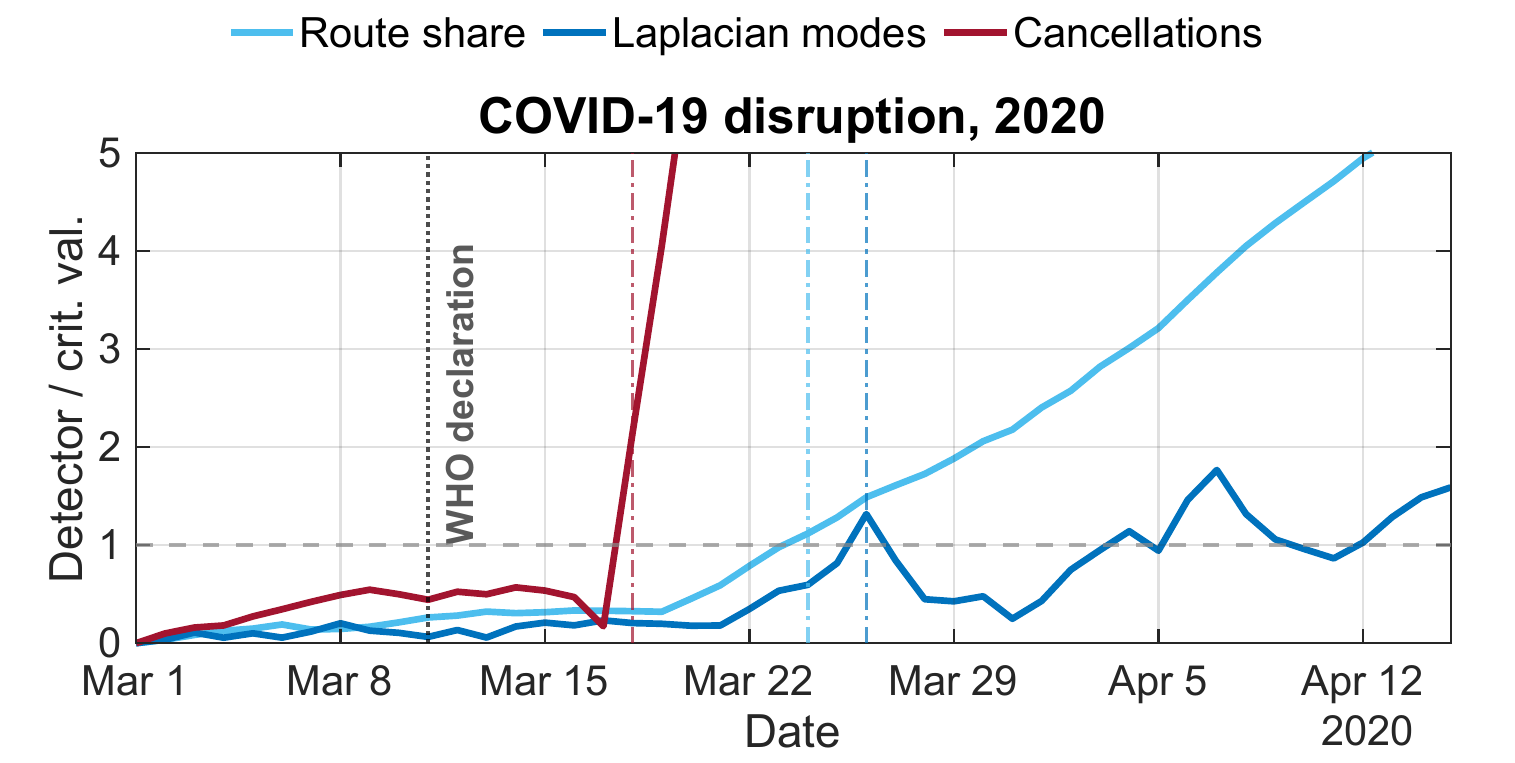}
\hfill
\includegraphics[width=.48\textwidth]{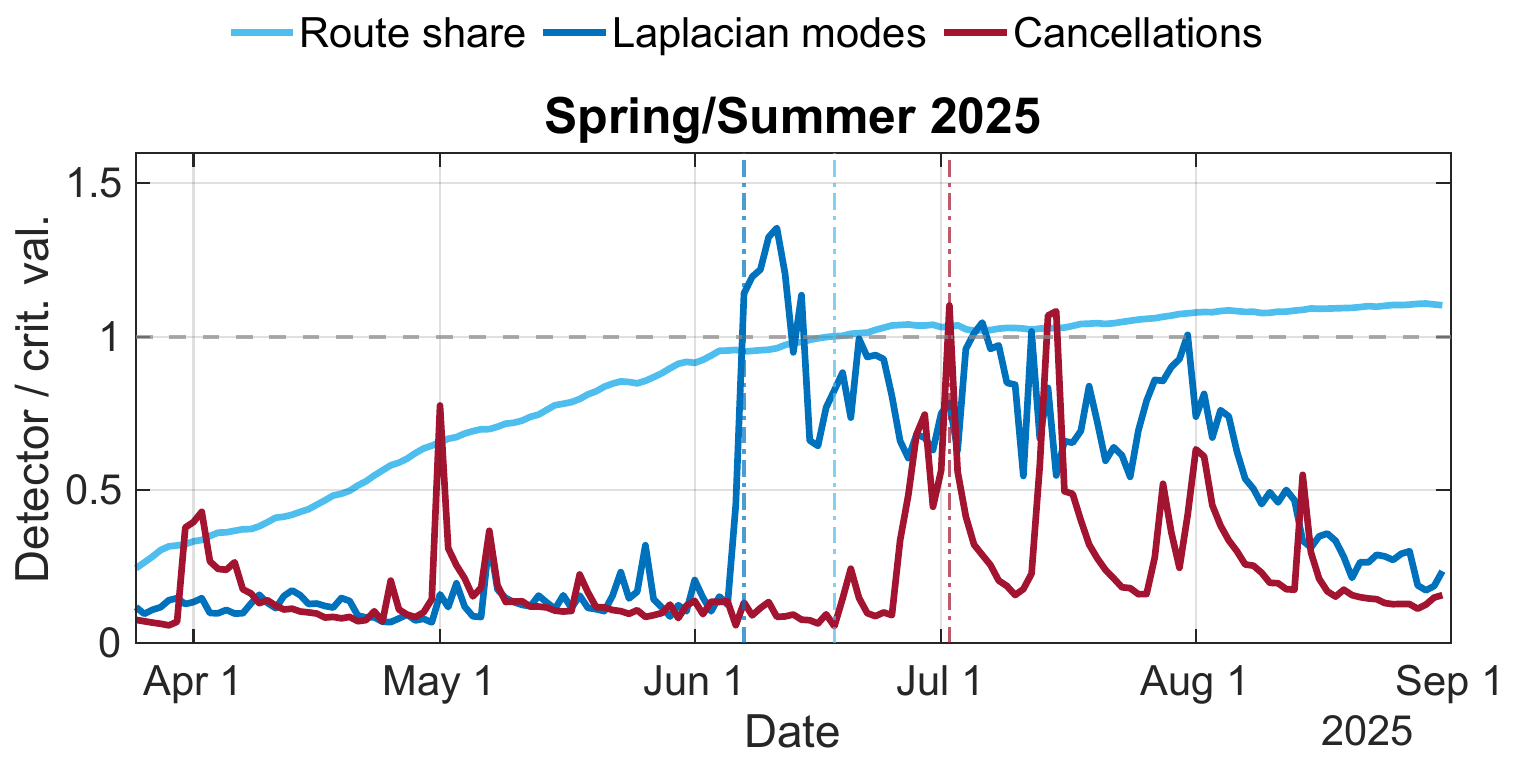}
\caption{Airline-network monitoring results for the COVID-19 disruption
(left) and the early-summer 2025 episode (right). Curves give the detector
divided by its critical value; the horizontal dashed line marks the
rejection threshold and the colored vertical lines mark the corresponding
first alarms. The dotted line in the left panel marks the WHO pandemic
declaration.}
\label{fig:airline}
\end{figure}

As a robustness exercise, we also consider Spring/Summer 2025, using March 1,
2024--February 28, 2025 as the historical sample and monitoring from March 1--August 31,
2025. The Laplacian and route-share detectors signal on June 7 and June 12, respectively,
while the cancellation-rate detector does not signal until July 2, illustrating that changes
in network topology and route allocation may be detected separately from cancellation-driven events.

\section{Conclusion}

We have developed a general framework for online detection of distributional changes in dependent time series taking values in metric spaces. The results show that, with suitable weighting, distributionally sensitive procedures can attain short detection delays over a broad range of alternatives while retaining comparatively weak moment requirements on the underlying observations.

Numerical results also indicate that finite-sample calibration becomes more difficult for models of higher complexity, and care should be taken in kernel choice. In particular, reliable calibration may require a larger historical sample even for finite-rank kernels when the effective dimension is not small. This suggests that, in high-dimensional settings, the choice of kernel should account not only for sensitivity to alternatives, but also for the complexity of the induced representation relative to the available historical sample. A more complete understanding of this finite-sample tradeoff, as well as data-driven selection of kernels and their complexity, would be useful topics for further study.

\bibliographystyle{abbrvnat}
\bibliography{biblio}

\clearpage
\thispagestyle{empty}

\begin{center}
    {\Large\bfseries Supplementary material}\\
\end{center}
\appendix

\paragraph{Notation.}
Unless stated otherwise, we retain the notation of the main paper. For a bounded
linear operator $T$ between Hilbert spaces, $\|T\|_{\mathrm{op}}$ denotes its
operator norm and $T^*$ its adjoint. For a compact operator $T$, we write
$\|T\|_{\mathrm{HS}}$ and $\|T\|_{\mathrm{tr}}$ for its Hilbert--Schmidt and
trace norms, respectively, whenever these are finite, and $\tr(T)$ for its trace.
For a finite-dimensional matrix $M$, $\|M\|_{\rm F}$ denotes the Frobenius norm.
For $x,y$ in a Hilbert space $\dcal K$ with inner product $\langle\cdot,\cdot\rangle_{\dcal K}$, $x\otimes y$ denotes the rank-one operator
$z\mapsto\langle y,z\rangle_{\dcal K} x$.   We For $a_m,b_m>0$, we write $a_m\lesssim b_m$ if
$a_m\leq Cb_m$ for all sufficiently large $m$ and some constant $C<\infty$ independent
of $m$; $a_m\asymp b_m$ means $a_m\lesssim b_m$ and $b_m\lesssim a_m$; $a_m\ll b_m$ means $a_m/b_m\to 0$, and $a_m\gg b_m$ means $b_m/a_m\to 0$. When $\mu,\nu$ are measures, $\mu\ll \nu$ means $\mu$ is absolutely continuous with respect to $\nu$.  Other relevant notation is introduced throughout.

\section{Miscellaneous statements}
\begin{lemma}\label{l:kernel_examples}
Assumptions~\ref{a:h} and \ref{a:Y_moment} are satisfied in either of the following settings.

\begin{enumerate}[label=(\roman*)]
\item $h(\bx,\by)$ is positive definite, and
$$
\E h(\bX_1,\bX_1)^{p/2}<\infty,\qquad
h(\bx,\bx)+h(\by,\by)-2h(\bx,\by) \leq C\rho(\bx,\by)^{2\alpha}.
$$

\item $h(\bx,\by)$ is a semimetric of negative type, and
$$
\E h(\bX_1,\bx_0)^{p/2}<\infty,\qquad
|h(\bx,\by)| \leq C\rho(\bx,\by)^{2\alpha},
$$
for some $\bx_0\in\mathcal X$.
\end{enumerate}
\end{lemma}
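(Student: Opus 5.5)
The plan is to treat both cases simultaneously. In each case, $\overline h$ is, up to sign, a positive semidefinite kernel on $L^2(F)$ whose feature map is controlled by $\rho^{\alpha}$. Once we have such a feature map $\Psi:\mathcal X\to\dcal K$ into some Hilbert space, all three parts of Assumption~\ref{a:h} and Assumption~\ref{a:Y_moment} reduce to moment bounds on $\|\Psi(\bX_1)-\E\Psi(\bX_1)\|$.

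\emph{Case (i).} Let $\dcal K$ be the RKHS of $h$ with feature map $\Psi(\bx)=h(\bx,\cdot)$. Set $\mu=\E\Psi(\bX)$; this is well defined since $\E\|\Psi(\bX)\|=\E h(\bX,\bX)^{1/2}<\infty$. Then $\overline h(\bx,\by)=\langle \Psi(\bx)-\mu,\Psi(\by)-\mu\rangle_{\dcal K}$, so $A$ is positive semidefinite and all $\lambda_\ell\ge0$.

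\emph{Case (ii).} By the negative-type characterization (cf.~\citealp{sejdinovic:etal:2013}), with base point $\bx_0$ there is $\Psi$ with $h(\bx,\by)=\|\Psi(\bx)-\Psi(\by)\|^2$. The corresponding distance-induced kernel $k(\bx,\by)=\tfrac12\{h(\bx,\bx_0)+h(\by,\bx_0)-h(\bx,\by)\}$ is positive definite, with $k(\bx,\bx)=h(\bx,\bx_0)$. Direct expansion gives $\overline h=-2\overline k$, so $-A$ is positive semidefinite and case (ii) reduces to case (i) with $k$ in place of $h$. The hypotheses transfer as follows:
\[
k(\bx,\bx)+k(\by,\by)-2k(\bx,\by)=h(\bx,\by)\le C\rho^{2\alpha},
\qquad
\E k(\bX,\bX)^{p/2}=\E h(\bX,\bx_0)^{p/2}<\infty.
\]
Part (i) of Assumption~\ref{a:h} also needs $h^2$ integrable, both off and on the diagonal. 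For $h$ this follows from $h(\bx,\by)\le 2h(\bx,\bx_0)+2h(\by,\bx_0)$ (the square-root triangle inequality) together with $p\ge4$.

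\emph{Trace class and spectral representation.} Work in case (i) with $\overline\Psi=\Psi-\mu$. Define $T:\dcal K\to\cH$ by $Tf=\langle f,\overline\Psi(\cdot)\rangle$. Then $A=TT^*$, and
\[
\tr A=\E\|\overline\Psi(\bX)\|^2<\infty,
\]
which uses $p\ge 2$. This gives \eqref{a:traceclass}.

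Next I use the Mercer-type identity, valid $F$-a.e. and extendable by continuity:
\[
\sum_\ell|\lambda_\ell|\phi_\ell(\bx)\phi_\ell(\by)=\langle\overline\Psi(\bx),\overline\Psi(\by)\rangle.
\]
One concrete route: take the singular value decomposition of $T$ and write $\sqrt{\lambda_\ell}\,\phi_\ell(\bx)=\langle \overline\Psi(\bx),e_\ell\rangle$ for an orthonormal system $(e_\ell)$. This identity can be used to \emph{define} $\phi_\ell$ pointwise, giving a version of each eigenfunction. Consequently
\[
\|Y(\bx)\|_\cH^2=\sum_\ell\lambda_\ell\phi_\ell(\bx)^2\le\|\overline\Psi(\bx)\|^2,
\]
and
\[
\sum_\ell\lambda_\ell(\phi_\ell(\bx)-\phi_\ell(\bx'))^2\le\|\Psi(\bx)-\Psi(\bx')\|^2=h(\bx,\bx)+h(\bx',\bx')-2h(\bx,\bx')\le C\rho(\bx,\bx')^{2\alpha}.
\]
The second bound gives \eqref{a:alphaholder}.

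\emph{Moment bound.} Finally,
\[
\E\|Y(\bX_1)\|^p\le 2^{p-1}\bigl(\E h(\bX_1,\bX_1)^{p/2}+\|\mu\|^p\bigr)<\infty,
\]
which is Assumption~\ref{a:Y_moment}.

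\emph{Main obstacle.} The delicate step is the pointwise (not merely a.e.) meaning of the eigenfunctions in the H\"older condition. The eigenfunctions $\phi_\ell$ are a priori only $L^2(F)$ classes, whereas \eqref{a:alphaholder} is a statement for all $\bx,\bx'$. I would resolve this by fixing the representatives $\phi_\ell(\bx):=\lambda_\ell^{-1/2}\langle\overline\Psi(\bx),e_\ell\rangle$ for $\lambda_\ell\neq 0$. With this choice the H\"older bound holds everywhere, and the Bessel inequality yields the displayed inequalities. I would also check that the expansion \eqref{e:h_expansion} holds with these versions, which follows because $\overline\Psi(\bX)$ lies a.s. in the closed span of $(e_\ell)$.
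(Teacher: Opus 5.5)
Your proof is correct and uses the same mechanism as the paper. A Hilbert-space feature map (the RKHS map in case (i), the Schoenberg embedding in case (ii)) turns $\pm\overline h$ into a centered inner product. Then $\pm A$ is positive with trace $\E\|\overline\Psi(\bX_1)\|^2$, the H\"older condition comes from $\|\Psi(\bx)-\Psi(\bx')\|^2$, and the moment bound comes from $\|Y(\bx)\|_{\cH}^2\le\|\overline\Psi(\bx)\|^2$.

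The differences are in organization and rigor.
\begin{itemize}
\item \emph{Case (ii).} The paper works with case (ii) directly. It centers the Schoenberg map at $\E\varphi(\bX_1)$ to get $\overline h=-2\langle\varphi(\bx)-\E\varphi(\bX_1),\varphi(\by)-\E\varphi(\bX_1)\rangle$. Your reduction uses the $\bx_0$-anchored kernel $k(\bx,\by)=\langle\varphi(\bx)-\varphi(\bx_0),\varphi(\by)-\varphi(\bx_0)\rangle$. This is the same computation, and it only costs the bookkeeping $|\lambda_\ell|=2\lambda_\ell^{(k)}$ in the constants.
\item \emph{Pointwise versions.} The paper writes $\overline h(\bx,\by)=\langle\Phi(\bx),\Phi(\by)\rangle_{\ell^2(\mathbb N)}$ and then uses it on the diagonal and at arbitrary pairs, for $\|Y(\bx)\|_{\cH}^2=\overline h(\bx,\bx)$ and for the H\"older identity. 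But \eqref{e:h_expansion} is only an $F\times F$-a.e.\ statement, and the diagonal is $F\times F$-null when $F$ is nonatomic. Your representatives $\phi_\ell:=\lambda_\ell^{-1/2}\langle\overline\Psi(\cdot),e_\ell\rangle$, with $e_\ell=\lambda_\ell^{-1/2}T^*\phi_\ell$, together with Bessel's inequality, give both bounds at every point. Your closed-span argument then shows \eqref{e:h_expansion} still holds for these versions. So your route makes rigorous a step the paper glosses over.
\item The phrase ``extendable by continuity'' would only reach the support of $F$. It is the choice of versions, not continuity, that does the work.
\end{itemize}

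Two small things to add.
\begin{itemize}
\item You never verify Assumption~\ref{a:h}(i) in case (i). As in the paper, $|h(\bx,\by)|\le\sqrt{h(\bx,\bx)h(\by,\by)}$ gives $\iint h^2\,dF\,dF\le(\E h(\bX_1,\bX_1))^2$. Also $\int h(\bx,\bx)^2F(d\bx)<\infty$ because $p/2\ge 2$.
\item The Bochner integral $\mu$ and the almost-sure closed-span argument need $\dcal K$, or at least the closed span of $\Psi(\mathcal X)$, to be separable. This follows from the H\"older continuity of $\Psi$ and the separability of $\mathcal X$, and is worth one clause.
\end{itemize}
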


\begin{proof}
First suppose $h$ ifs positive definite. Since
$
|h(\bx,\by)|\leq \sqrt{h(\bx,\bx)h(\by,\by)},
$
and $p>2$ it follows from $\E h(\bX_1,\bX_1)<\infty$ that Assumption~\ref{a:h}(i) holds. Further, by
\cite[Theorems 4.16 and 4.21]{steinwart:christmann:2008}, writing
$$
\varphi(\bx)=h(\cdot,\bx),
$$
there is a Hilbert space of functions $\dcal K$ with
$
h(\bx,\by)=\langle \varphi(\bx),\varphi(\by)\rangle_{\dcal K},
$
and $\langle f,\varphi(\bx)\rangle_{\dcal K}=f(\bx)$ for each  $f\in\dcal K. $
Since
$$
\E \|\varphi(\bX_1)\|_{\dcal K}
=
\E \sqrt{h(\bX_1,\bX_1)}
<\infty,
$$
we have $h_1:=\E h(\cdot,\bX_1)\in \dcal K$. Thus, with
$
\overline \varphi(\bx)=h(\cdot,\bx)-\E h(\cdot,\bX_1),
$
we obtain
\begin{align*}
\langle \overline \varphi(\bx),\overline \varphi(\by)\rangle_{\dcal K}
&=
\langle \varphi(\bx),\varphi(\by)\rangle_{\dcal K}
-\langle \varphi(\bx),h_1\rangle_{\dcal K}
-\langle \varphi(\by),h_1\rangle_{\dcal K}
+\langle h_1,h_1\rangle_{\dcal K}\\
&=
\overline h(\bx,\by),
\end{align*}
showing that $\overline h$ is positive definite, and hence $A$ is positive. Therefore,
\begin{align*}
\tr(|A|)
=
\tr(A)
=
\E \overline h(\bX_1,\bX_1)
\leq
\E h(\bX_1,\bX_1)
<\infty,
\end{align*}
so Assumption~\ref{a:h}(ii) holds. Now, expanding
$$
\overline h(\bx,\by)=\sum_{\ell=1}^\infty\lambda_\ell\phi_\ell(\bx)\phi_\ell(\by),
$$
we may write
$$
\overline h(\bx,\by)=\langle \Phi(\bx),\Phi(\by)\rangle_{\ell^2(\mathbb N)},
\qquad
\Phi(\bx)=\big(\sqrt{\lambda_1}\phi_1(\bx),\sqrt{\lambda_2}\phi_2(\bx),\ldots\big)^\top.
$$
Hence
\begin{align}
\sum_{\ell=1}^\infty |\lambda_\ell|(\phi_\ell(\bx)-\phi_\ell(\by))^2
&=
\|\Phi(\bx)-\Phi(\by)\|^2_{\ell^2(\mathbb N)}\notag\\
&=
\overline h(\bx,\bx)+\overline h(\by,\by)-2\overline h(\bx,\by)\label{e:canonical_h_identity_new}\\
&=
h(\bx,\bx)+h(\by,\by)-2h(\bx,\by)\notag\\
&\leq C\rho(\bx,\by)^{2\alpha},\notag
\end{align}
showing Assumption~\ref{a:h}(iii) holds. It remains to verify Assumption~\ref{a:Y_moment}. Since $\overline h$ is positive definite, it holds that $\|Y(\bx)\|_{\cH}^2=\overline h(\bx,\bx).$
Moreover,
\begin{align*}
\overline h(\bx,\bx)
&=h(\bx,\bx)-2\E h(\bx,\bX_1)+\E h(\bX_1,\bX_1')\\
&\leq h(\bx,\bx)+2\sqrt{h(\bx,\bx)}\big(\E h(\bX_1,\bX_1)\big)^{1/2}
+\E h(\bX_1,\bX_1'),
\end{align*}
and therefore
$$
\overline h(\bx,\bx)\leq C(1+h(\bx,\bx)).
$$
It follows that
$$
\E \|Y(\bX_1)\|_{\cH}^p
=
\E \overline h(\bX_1,\bX_1)^{p/2}
\leq
C\E\big(1+h(\bX_1,\bX_1)^{p/2}\big)
<\infty,
$$
so Assumption~\ref{a:Y_moment} holds, completing case (i).

For case (ii), suppose $h$ is a semimetric of negative type. Then, by Schoenberg's
theorem (see, e.g., \cite[Proposition 3]{sejdinovic:etal:2013}), there is a Hilbert space $\dcal K$
and an injective map $\varphi:\mathcal X\to \dcal K$ such that
\begin{equation}\label{e:schoenberg_new}
h(\bx,\by)=\|\varphi(\bx)-\varphi(\by)\|_{\dcal K}^2
=
\|\varphi(\bx)\|_{\dcal K}^2+\|\varphi(\by)\|_{\dcal K}^2
-2\langle \varphi(\bx),\varphi(\by)\rangle_{\dcal K}.
\end{equation}
Using
$
h(\bx,\by)\leq 2h(\bx,\bx_0)+2h(\by,\bx_0),$
we obtain
$$
\E h(\bX_1,\bX_1')^2
\leq
C\E h(\bX_1,\bx_0)^2
<\infty,
$$
so Assumption~\ref{a:h}(i) holds. Further,
\begin{align}
\E \|\varphi(\bX_1)\|_{\dcal K}^2
&\leq
2\|\varphi(\bx_0)\|_{\dcal K}^2
+2\E \|\varphi(\bX_1)-\varphi(\bx_0)\|_{\dcal K}^2\notag\\
&=
2\|\varphi(\bx_0)\|_{\dcal K}^2
+2\E h(\bX_1,\bx_0)
<\infty,\label{e:varphi_finite_var_new}
\end{align}
so $\E \varphi(\bX_1)$ is a well-defined element of $\dcal K$. Now
\begin{align*}
\E h(\bx,\bX_1)
&=
\|\varphi(\bx)\|_{\dcal K}^2+\E \|\varphi(\bX_1)\|_{\dcal K}^2
-2\langle \varphi(\bx),\E \varphi(\bX_1)\rangle_{\dcal K},\\
\E h(\bX_1,\bX_1')
&=
2\E \|\varphi(\bX_1)\|_{\dcal K}^2
-2\|\E \varphi(\bX_1)\|_{\dcal K}^2,
\end{align*}
from which it follows that
\begin{align}
\overline h(\bx,\by)
&=
h(\bx,\by)-\E h(\bx,\bX_1)-\E h(\bX_1,\by)+\E h(\bX_1,\bX_1')\notag\\
&=
-2\left\langle
\varphi(\bx)-\E\varphi(\bX_1),
\varphi(\by)-\E\varphi(\bX_1)
\right\rangle_{\dcal K}.\label{e:h_as_inner_new}
\end{align}
Hence $-\overline h$ is positive semidefinite. Therefore,
\begin{align*}
\tr(|A|)
&=
-\E \overline h(\bX_1,\bX_1)\\
&=
2\E \|\varphi(\bX_1)-\E\varphi(\bX_1)\|_{\dcal K}^2\\
&\leq
C\E \|\varphi(\bX_1)\|_{\dcal K}^2
<\infty
\end{align*}
by \eqref{e:varphi_finite_var_new}, showing Assumption~\ref{a:h}(ii). Finally, since $-\overline h$ is positive semidefinite, the same argument as above gives
\begin{align*}
\sum_{\ell=1}^\infty |\lambda_\ell|(\phi_\ell(\bx)-\phi_\ell(\by))^2
&=
-\overline h(\bx,\bx)-\overline h(\by,\by)+2\overline h(\bx,\by)\\
&=
2\|\varphi(\bx)-\varphi(\by)\|_{\dcal K}^2\\
&=
2h(\bx,\by)\\
&\leq C\rho(\bx,\by)^{2\alpha},
\end{align*}
so Assumption~\ref{a:h}(iii) holds. To verify Assumption~\ref{a:Y_moment}, note that
$$
\|Y(\bx)\|_{\cH}^2=-\overline h(\bx,\bx).
$$
By \eqref{e:h_as_inner_new},
$$
-\overline h(\bx,\bx)
=
2\|\varphi(\bx)-\E\varphi(\bX_1)\|_{\dcal K}^2
\leq
C\big(\|\varphi(\bx)-\varphi(\bx_0)\|_{\dcal K}^2+1\big)
=
C\big(h(\bx,\bx_0)+1\big).
$$
Consequently,
$$
\E \|Y(\bX_1)\|_{\cH}^p
\leq
C\E\big(h(\bX_1,\bx_0)^{p/2}+1\big)
<\infty,
$$
and Assumption~\ref{a:Y_moment} follows. 
\end{proof}

  The following lemma gives a convenient sufficient condition under which Assumption \ref{a:ac} holds.
  
\begin{lemma} \label{l:ac} If for some $\sigma$-finite measure $\nu$, it holds that $F\ll \nu$ and also for each  $k\geq 1$, ${\P(\bX_k\in \cdot~|\bX_0=\bx)}\ll \nu $ for $F$-almost every $\bx\in \mathcal X$, then Assumption \ref{a:ac} holds under $H_0$.
\end{lemma}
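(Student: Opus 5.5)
The plan is to work entirely with densities relative to $\nu$ and to use the stationarity of $\bX$ under $H_0$. Assumption~\ref{a:Lpm}(i) gives a Bernoulli shift, so under $H_0$ the process is strictly stationary. For $i<j$, the law of $(\bX_i,\bX_j)$ therefore equals that of $(\bX_0,\bX_k)$ with $k=j-i\geq1$. The case $i>j$ is the coordinate swap of this law, and absolute continuity with respect to the symmetric measure $F\times F$ is preserved under the swap. So it suffices to fix $k\geq1$ and show $\P_{(\bX_0,\bX_k)}\ll F\times F$.

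Next I will choose densities. Let $f=dF/d\nu$, and let $K_k(\bx,\cdot)=\P(\bX_k\in\cdot\mid\bX_0=\bx)$ be a regular conditional distribution. It exists because $\mathcal X$ is a separable metric space; I will take it to be standard Borel, or otherwise argue via a countably generated $\sigma$-field. By hypothesis $K_k(\bx,\cdot)\ll\nu$ for $F$-a.e.\ $\bx$.

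The first technical step is to produce a \emph{jointly measurable} density $p_k(\bx,\by)$ with $K_k(\bx,d\by)=p_k(\bx,\by)\,\nu(d\by)$ for $F$-a.e.\ $\bx$. I would invoke Doob's measurable Radon--Nikodym theorem for kernels. This applies because the Borel $\sigma$-field of a separable metric space is countably generated and $\nu$ is $\sigma$-finite; after reducing $\nu$ to an equivalent finite measure, one takes a limit of ratios over a refining sequence of finite partitions. I expect this measurability point to be the main (though standard) obstacle. With it in hand, the joint law of $(\bX_0,\bX_k)$ has density
\[
q(\bx,\by)=f(\bx)\,p_k(\bx,\by)
\]
with respect to $\nu\times\nu$.

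The key step is to use the marginal constraint $\bX_k\sim F$. Integrating out $\bx$ gives
\[
f(\by)=\int f(\bx)p_k(\bx,\by)\,\nu(d\by')\Big|_{\by'=\bx}\!\!=\int f(\bx)p_k(\bx,\by)\,\nu(d\bx)\qquad \text{for $\nu$-a.e.\ }\by.
\]
Hence, for $\nu$-a.e.\ $\by$ in $\{f=0\}$, the nonnegative function $\bx\mapsto q(\bx,\by)$ integrates to zero. By Tonelli, this means $q=0$ $(\nu\times\nu)$-a.e.\ on $\mathcal X\times\{f=0\}$. Trivially, $q=0$ on $\{f=0\}\times\mathcal X$ as well. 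Therefore $q$ vanishes $(\nu\times\nu)$-a.e.\ outside the set $E=\{f(\bx)>0,\ f(\by)>0\}$.

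To conclude, let $B$ satisfy $(F\times F)(B)=0$, that is, $\int_B f(\bx)f(\by)\,(\nu\times\nu)(d\bx,d\by)=0$. Since $f(\bx)f(\by)>0$ on $E$, it follows that $(\nu\times\nu)(B\cap E)=0$. Then
\[
\P\{(\bX_0,\bX_k)\in B\}=\int_{B\cap E}q\,d(\nu\times\nu)+\int_{B\setminus E}q\,d(\nu\times\nu)=0+0=0,
\]
which gives Assumption~\ref{a:ac} under $H_0$.
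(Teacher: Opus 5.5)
Your proof is correct. Its two key facts are the same ones the paper uses. First, an $F\times F$-null set has $F$-null sections, and these are $\nu$-null on $S=\{f>0\}$. Second, $\bX_k\sim F$ forces the conditional law of $\bX_k$ given $\bX_0=\bx$ to put no mass on $S^c$, for $F$-a.e.\ $\bx$.

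The difference is in packaging. The paper works directly with a null set $N$ and its sections $N_\bx$. It shows $\P(\bX_k\in N_\bx\mid\bX_0=\bx)=0$ for $F$-a.e.\ $\bx$ by splitting $N_\bx$ into $S\cap N_\bx$ and $S^c\cap N_\bx$, and then integrates against $F$. This never requires a density on the product space, so it avoids the step you flagged as the main obstacle: Doob's jointly measurable Radon--Nikodym derivative for kernels.

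Your density-level route works, but even within it that step can be dropped. The section argument shows directly that the law of $(\bX_0,\bX_k)$ is absolutely continuous with respect to $\nu\times\nu$. Indeed, if $(\nu\times\nu)(B)=0$, then $\nu(B_\bx)=0$ for $\nu$-a.e., hence $F$-a.e., $\bx$, so $K_k(\bx,B_\bx)=0$. The ordinary Radon--Nikodym theorem on the $\sigma$-finite product then supplies $q$. The two marginal constraints $\bX_0\sim F$ and $\bX_k\sim F$ kill $q$ off $E$ symmetrically, with no need to identify $q=f\,p_k$.

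In the other direction, your write-up is more explicit than the paper's about reducing a general pair $(i,j)$ to $(\bX_0,\bX_k)$ via stationarity and the coordinate swap; the paper leaves this implicit. One small point: your first displayed expression, involving $\nu(d\by')$ evaluated at $\by'=\bx$, is garbled. Only the final equality $f(\by)=\int f(\bx)p_k(\bx,\by)\,\nu(d\bx)$ is meaningful, and it is correct.
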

\begin{proof}[Proof of Lemma \ref{l:ac}]  Let $N\subseteq  \mathcal X \times \mathcal X $ be any Borel set with  $(F\times F)(N)=0$.  Consider the sets $N_\bx=\{\by \in \mathcal X: (\bx,\by)\in N\}$. Fubini's theorem gives
$$
0=(F\times F)(N) = \int F(N_\bx) F(d\bx)
$$
implying $F(N_\bx)=0$ for $F$-a.e. $\bx$.  Let $f=dF/d\nu$, and $S=\{\bx:f(\bx)>0\}$. Then $F(N_{\bx})=\int_{N_\bx} f(\by)\nu(d\by)=0$ implies $\nu(S\cap N_\bx)=0$ for $F$-a.e. $\bx$.  However, since $\bX_k\sim F$
$$
0=\int_{S^c} f(\by)\nu(d\by)=F(S^c)=\P(\bX_k\in S^c)
=\int \P(\bX_k\in S^c\mid\bX_0=\bx)F(d\bx),
$$
showing $\P(\bX_k\in S^c\mid\bX_0=\bx)=0$ for $F$-a.e. $\bx$.  On the other hand, since 
$\P(\bX_k\in\cdot\mid\bX_0=\bx)\ll\nu$ for $F$-a.e.~$\bx$, using $\nu(S\cap N_\bx)=0$ we have
$
\P(\bX_k\in S\cap N_\bx\mid\bX_0=\bx)=0
$
for $F$-a.e.~$\bx$. Hence, splitting
$N_\bx=(S\cap N_\bx)\cup(S^c\cap N_\bx)$, we obtain $
\P(\bX_k\in N_\bx\mid\bX_0=\bx)=0$
for $F$-a.e.~$\bx$. Therefore,
$$
\P((\bX_0,\bX_k)\in N)
=
\int \P(\bX_k\in N_\bx\mid\bX_0=\bx)F(d\bx)
=0,
$$
which yields $\P( (\bX_k,\bX_0)\in \cdot) \ll F\times F$.
\end{proof}

\begin{lemma}\label{l:ac_examples}
Assumption~\ref{a:ac} is satisfied in each of the following settings.
\begin{enumerate}[label=(\roman*)]
\item Suppose $\bX_t\in\R^d$ is a linear process of the form
$$
\bX_t=\sum_{j=0}^\infty A_j\beps_{t-j},
$$
where $A_j\in\R^{d\times d}$, $\det A_0\neq0$, and
$\{\beps_j\}$ are i.i.d.\ $\R^d$-valued random vectors such that
$\beps_1$ has a Lebesgue density.

\item Suppose $\bX_t$ is a causal and stationary solution to
$$
\bX_t
=
H(\bX_{t-1},\dots,\bX_{t-p})
+
\Sigma(\bX_{t-1},\dots,\bX_{t-p})\beps_t,
$$
where $H:(\R^d)^p\to\R^d$ and
$\Sigma:(\R^d)^p\to\R^{d\times d}$ are measurable functions satisfying
$\det\Sigma(\cdot)\neq0$, and $\{\beps_j\}$ are i.i.d.\
$\R^d$-valued random vectors such that $\beps_1$ has a Lebesgue density.
\end{enumerate}
\end{lemma}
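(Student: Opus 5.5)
The plan is to verify the hypotheses of Lemma~\ref{l:ac} with $\nu$ equal to Lebesgue measure on $\R^d$. The lemma needs two facts. First, $F\ll\nu$. Second, for each $k\geq1$, the conditional law $\P(\bX_k\in\cdot\mid\bX_0=\bx)$ is absolutely continuous with respect to $\nu$ for $F$-a.e.\ $\bx$.

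For case (i), write $\bX_t=A_0\beps_t+\bR_t$ with $\bR_t=\sum_{j\ge1}A_j\beps_{t-j}$. The term $\bR_t$ is a measurable function of $(\beps_{t-1},\beps_{t-2},\ldots)$ and is therefore independent of $\beps_t$. Since $\det A_0\neq0$, $A_0\beps_t$ has a Lebesgue density. The law of $\bX_t$ is the convolution of an absolutely continuous law with an arbitrary law, so it is absolutely continuous, which gives $F\ll\nu$.

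For the conditional statement, condition on the whole past $\mathcal G_{k-1}=\sigma(\beps_s:s\le k-1)$, which contains $\sigma(\bX_0)$. Given $\mathcal G_{k-1}$, $\bX_k=A_0\beps_k+\bR_k$, where $\bR_k$ is $\mathcal G_{k-1}$-measurable and $\beps_k$ is independent of $\mathcal G_{k-1}$. Hence $\P(\bX_k\in B\mid\mathcal G_{k-1})=\P(A_0\beps_k\in B-\bR_k)$, and this vanishes whenever $\nu(B)=0$, because Lebesgue null sets are translation invariant.

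By the tower property, $\P(\bX_k\in B\mid\bX_0)=0$ a.s. for each fixed null set $B$. The lemma's hypothesis, however, is a statement about a regular conditional distribution, with the null set holding simultaneously over all null sets $B$. I would handle this by working with a version of the conditional law: the kernel $\bx\mapsto\E[\mu_{\bR_k}(\cdot)\mid\bX_0=\bx]$, where $\mu_r$ is the law of $A_0\beps_k+r$, is a mixture of absolutely continuous measures and is therefore itself absolutely continuous for every $\bx$. Choosing this version makes the a.e.\ statement uniform in $B$. This measurability and version issue is the only delicate point; everything else is direct.

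For case (ii), the same argument applies with $\mathcal G_{k-1}$ generated by the past innovations, using causality so that $(\bX_{k-1},\ldots,\bX_{k-p})$ is $\mathcal G_{k-1}$-measurable. Given $\mathcal G_{k-1}$, $\bX_k=H+\Sigma\beps_k$, where $H$ and $\Sigma$ are evaluated at the past values and are fixed. Since $\Sigma$ is invertible and $\beps_k$ has a Lebesgue density, this conditional law has the density $|\det\Sigma|^{-1}f_\varepsilon(\Sigma^{-1}(\cdot-H))$. Mixing over the past gives both $F\ll\nu$ (taking the unconditional law) and the absolute continuity of the conditional law given $\bX_0$, via the same mixture-of-kernels version. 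Lemma~\ref{l:ac} then yields Assumption~\ref{a:ac} in both cases.
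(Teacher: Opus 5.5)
Your proposal is correct and follows essentially the paper's route: you verify the hypotheses of Lemma~\ref{l:ac} with Lebesgue measure, using the independence of the fresh innovation from the past together with a mixing (tower-property) argument. The differences are minor. In case (i), the paper splits $\bX_k=\bU_k+\bV_k$, with $\bU_k=\sum_{j<k}A_j\beps_{k-j}$ independent of $(\bX_0,\bV_k)$, instead of conditioning on the full innovation past as you do. Your explicit construction of a regular conditional kernel also settles, rigorously, a version-of-conditional-law issue that the paper passes over silently.
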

\begin{proof} We proceed by verifying the hypotheses of Lemma \ref{l:ac} are satisfied.  First recall that if $\bX$ and $\bY$ are two independent variables in $\R^d$, and $\bX$ has a density, then so does $\bX+\bY$. 

Consider the setting (i). Since $\det A_0\neq 0$, $A_0\beps_0$ is absolutely continuous. Hence, since
$$
\bX_0 = A_0\beps_0 + \sum_{j\geq 1}A_j\beps_{-j},
$$
and the two terms are independent, it follows that $F(\cdot)=\P(\bX_0\in \cdot)$ is absolutely continuous. Similarly, for $k\geq 1$, write
$$
\bX_k = \sum_{j=0}^{k-1} A_j \beps_{k-j} +
\sum_{m=0}^{\infty} A_{m+k} \beps_{-m}
=
\bU_k + \bV_k.
$$
Since $A_0\beps_k$ appears as a summand in $\bU_k$, and the $\beps_j$ are independent, it follows that $\bU_k$ is absolutely continuous. Moreover, $\bU_k$ is independent of $\sigma(\beps_0,\beps_{-1},\ldots)$, and hence independent of both $\bX_0$ and $\bV_k$. Therefore, for $F$-a.e.\ $\bx$,
$$
\P(\bX_k\in \cdot \mid \bX_0=\bx)
=
\P(\bU_k + \bV_k \in \cdot \mid \bX_0=\bx),
$$
which is absolutely continuous. This verifies Lemma \ref{l:ac} in the case (i).

Now consider the setting (ii).   Since $\det \Sigma(\bx_1,\ldots,\bx_p)\neq 0$, $\Sigma(\bx_1,\ldots,\bx_p)\beps_0$ is absolutely continuous for any choice of $\bx_1,\ldots,\bx_p$. Since
$$
\bX_0 =
H(\bX_{-1},\ldots,\bX_{-p}) +
\Sigma(\bX_{-1},\ldots,\bX_{-p})\beps_0,$$
conditioning on $(\bX_{-1},\ldots,\bX_{-p})$ shows that $F$ is absolutely continuous.  Next, $\mathcal F_k$ be the $\sigma$-algebra generated by $\{\bX_k,\bX_{k-1},\ldots\}$. Since $\bX_k$ is causal, $\beps_k$ is independent of $\mathcal F_{k-1}$. Therefore,  $\P(\bX_k\in \cdot \mid \mathcal F_{k-1})$ is absolutely continuous with probability $1$. 
Finally, since
$$
\P(\bX_k \in \cdot \mid \bX_0)
=
\E\big[ \P(\bX_k \in \cdot \mid \mathcal F_{k-1}) \,\big|\, \bX_0 \big],
$$
it follows that $\P(\bX_k \in \cdot \mid \bX_0)$ is absolutely continuous. This verifies the hypotheses of Lemma \ref{l:ac} are satsified in setting (ii).
\end{proof}

\section {Preliminary lemmas}\label{secone}

We first collect some facts used throughout the proofs.  For a generic function $f:\mathcal X\times  \mathcal X \to \R$, write
\begin{align*}
\mcU\big(f;(a,b]; (c,d]\big)
  &= \frac{2}{(b-a)(d-c)}
       \sum_{i=a+1}^b \sum_{j=c+1}^d f(\bX_i,\bX_j)   \\
  &\quad - \frac{1}{\displaystyle \binom{b-a}{2}}
       \sum_{a<i<j\leq b} f(\bX_i,\bX_j)
        - \frac{1}{\displaystyle\binom{d-c}{2}}
       \sum_{c<i<j\leq d} f(\bX_i,\bX_j),
\end{align*}
so that $\mcU((a,b],(c,d])=\mcU(h;(a,b],(c,d])$.
It is easily verified that whenever $f(\bx,\by)=f_0(\bx)+f_0(\by)$ for some function $f_0:\mathcal X\to \R$, it holds that $\mcU(f;(a,b]; (c,d])=0$. In particular, since $h(\bx,\by)- \overline h(\bx,\by)= \E h(\bx,\bX)+\E h(\bX,\by)- \E h(\bX,\bY)$, we have  $\mcU(h;[a,b]; [c,d])=\mcU(\overline h;[a,b]; [c,d])$, and we may assume without loss of generality that $\overline h=h$ (i.e., that $h$ is $F$-degenerate) throughout the Appendix.

Futher, we note that by degeneracy of $h$, with $\phi(\bx)\equiv 1$, we have $A \phi(\bx)= \E h(\bx,\bX_1)=0$ {$F(d\bx)$-a.e.}, i.e., $(\phi,\lambda)=(1,0)$ is an eigenpair for $A$, and
by orthogonality of the eigenfunctions, we  have that
\begin{equation}\label{e:meanzero}
\lambda_\ell \E \phi_\ell(\bX_1)=0,\qquad \ell\geq 1.
\end{equation}
\begin{lemma}\label{l:prod_kern}
Let $f(\bx,\by)=\phi(\bx)\phi(\by)$ for some $\phi:\mathcal X\to\R$. Define
$$
S_{a,b}=\sum_{i=a+1}^b \phi(\bX_i), \qquad Q_{a,b}=\sum_{i=a+1}^b \phi(\bX_i)^2,
$$
and write $n_1=b-a$,  $n_2=d-c$.
Then
\begin{align}
\mathcal U(f;(a,b];(c,d])
&= \frac{2}{n_1n_2} S_{a,b} S_{c,d}
   - \frac{S_{a,b}^2 - Q_{a,b}}{n_1(n_1-1)}
   - \frac{S_{c,d}^2 - Q_{c,d}}{n_2(n_2-1)}.
\notag %
\end{align}
Equivalently, writing
$\bar\phi_{a,b}=n_1^{-1}S_{a,b}$ and $\bar\phi_{c,d}=n_2^{-1}S_{c,d}$,
\begin{align}
\mathcal U(f;(a,b];(c,d])
&= -\big(\bar\phi_{a,b}-\bar\phi_{c,d}\big)^2 \notag + \frac{Q_{a,b}}{n_1(n_1-1)} -\frac{\bar\phi_{a,b}^2}{n_1-1}\\
& \qquad + \frac{Q_{c,d}}{n_2(n_2-1)}   -\frac{\bar\phi_{c,d}^2}{n_2-1}.
\label{e:U-product-means}
\end{align}
\end{lemma}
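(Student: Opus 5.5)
The identity is pure algebra: express each of the three sums in $\mathcal U(f;(a,b];(c,d])$ through $S$ and $Q$, then rearrange.

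\textbf{Step 1: the cross term.} Because $f(\bx,\by)=\phi(\bx)\phi(\by)$ factorizes,
$$
\sum_{i=a+1}^b\sum_{j=c+1}^d f(\bX_i,\bX_j)=S_{a,b}S_{c,d}.
$$
Multiplying by $2/(n_1n_2)$ gives the first term of the claimed formula.

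\textbf{Step 2: the within-window sums.} For $(a,b]$, expand the square of a sum:
$$
S_{a,b}^2=Q_{a,b}+2\sum_{a<i<j\le b}\phi(\bX_i)\phi(\bX_j),
$$
so the off-diagonal sum equals $(S_{a,b}^2-Q_{a,b})/2$. Since $\binom{n_1}{2}^{-1}=2/(n_1(n_1-1))$, the second term of $\mathcal U$ becomes
$$
\frac{S_{a,b}^2-Q_{a,b}}{n_1(n_1-1)}.
$$
The window $(c,d]$ is handled identically. Combining Steps 1 and 2 gives the first display.

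\textbf{Step 3: the mean form.} Substitute $S_{a,b}=n_1\bar\phi_{a,b}$ and $S_{c,d}=n_2\bar\phi_{c,d}$. The cross term becomes $2\bar\phi_{a,b}\bar\phi_{c,d}$. For the window terms, use the decomposition
$$
\frac{n_1}{n_1-1}=1+\frac{1}{n_1-1},
$$
which gives
$$
-\frac{n_1^2\bar\phi_{a,b}^2}{n_1(n_1-1)}=-\bar\phi_{a,b}^2-\frac{\bar\phi_{a,b}^2}{n_1-1},
$$
and likewise for the second window. The pieces $2\bar\phi_{a,b}\bar\phi_{c,d}-\bar\phi_{a,b}^2-\bar\phi_{c,d}^2$ combine to $-(\bar\phi_{a,b}-\bar\phi_{c,d})^2$. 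The remaining terms are exactly those in \eqref{e:U-product-means}.

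\textbf{Obstacles.} None of substance. The only points needing care are the bookkeeping of the factor $2$ against $\binom{n}{2}^{-1}$, and the implicit requirement $n_1,n_2\ge2$ so the binomials are nonzero. That requirement holds for the windows used, since $k-\ell_2\ge2$ and $m\ge2$.
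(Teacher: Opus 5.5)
Your proof is correct and follows the same route as the paper: factor the cross sum as $S_{a,b}S_{c,d}$, rewrite each within-window sum via $2\sum_{i<j}u_iu_j=(\sum u_i)^2-\sum u_i^2$, then substitute the means. Your Step 3 algebra (splitting $n_1/(n_1-1)=1+1/(n_1-1)$ and completing the square) checks out, and the side condition $n_1,n_2\ge 2$ that you flag is correct and is left implicit in the paper.
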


\begin{proof}
The result follows from expanding each term in $\mathcal U(f,(a,b],(c,d])$, and the elementary relation valid for all real numbers $u_i$:
$$
2 \sum_{a\leq i<j \leq b} u_iu_j = \left(\sum_{i=a}^{b}u_i\right)^2 -  \sum_{i=a}^{b}u_i^2.
$$
\end{proof}
Throughout the appendix, for a normed space $E$, we let
$$
C([0,T],E)
$$
denote the space of continuous $E$-valued functions (e.g., $E=\mathbb R$ or $E=\cH$). 

In the spirit of \cite{kutta:dornemann:2025}, our arguments under $H_0$ rely on an invariance principle in a suitable H\"older space of $\cH$-valued functions, which we now introduce.  For $x:[0,T]\to \cH$, define
\begin{equation}\label{e:holdersemi}
\|x\|_{\infty,T} = \sup_{0\leq t\leq T}\|x(t)\|_\cH, 
\qquad  
[x]_{\gamma,T}=\sup_{0\leq s<t\leq T}\frac{\|x(t)-x(s)\|_\cH}{|t-s|^\gamma},
\end{equation}
and set
\begin{align*}
\|x\|_{\gamma,T}
:=
\|x\|_{\infty,T}
+[x]_{\gamma,T}.
\end{align*}
Then $C^\gamma_0([0,T],\cH)$ denotes the space of continuous functions $x:[0,T]\to\cH$ such that
\begin{equation*}
\|x\|_{\gamma,T}<\infty
\qquad\text{and}\qquad
\sup_{\substack{0\leq s<t\leq T\\ t-s\leq \delta}}
\frac{\|x(t)-x(s)\|_\cH}{|t-s|^\gamma}
\;\longrightarrow\; 0,
\quad\text{as }\delta\downarrow 0.
\end{equation*}
Note that $C^\gamma_0([0,T],\cH)$, equipped with the norm $\|\cdot\|_{\gamma,T}$, is a closed separable subspace of
$$
C_\gamma([0,T],\cH)
=
\{x:[0,T]\to \cH~~\text{continuous}:\|x\|_{\gamma,T}<\infty\},
$$
see, e.g., \cite{rackauskas:suquet:1998,rackauskas:suquet:2005}.

\begin{lemma}\label{l:main_invariance_lemma} Suppose $V_j$ is a stationary Bernoulli shift sequence in a separable Hilbert space $\widetilde \cH$:
$$
V_j = \widetilde f(\varepsilon_j, \varepsilon_{j-1},\ldots,),
$$
and let $V_j^{(r)}$ be the usual $r$-dependent coupled version of $V_j$.  Suppose that $\E \|V_j\|_{\widetilde \cH}^q<\infty$,  for some $q>2$,  $\E V_j=0$, and for some $0<\eta<1$ %
$$
\sum_{r=1}^\infty\left(\E\|V_1-V^{(r)}_1\|^q_{\widetilde{\cH}}\right)^{\eta /q}<\infty. %
$$
Then, 
\begin{enumerate}[label=(\roman*)]
\item The series 
$$
\Sigma_V=\sum_{r \in \mathbb Z}\Cov( V_0,V_r)
$$
converges absolutely in the trace norm, and defines a covariance operator on $\widetilde \cH$. 
\item For every $\delta>0$ with $2+\delta<q$, and all $N\geq 1$,
$$
\E\max_{1\leq k \leq N} \Big\|\sum_{j=1}^k V_j \Big\|_{\widetilde \cH}^{2+\delta} \leq C N^{(2+\delta)/2}.
$$
\item With
$$
Z_m^V(t)=\frac{1}{\sqrt m}\sum_{j=1}^{\lfloor mt \rfloor} V_j + \frac1{\sqrt m}\left(mt -  \lfloor mt\rfloor\right) V_{\lceil mt \rceil},
$$
for any $T>0$, and  $0<\gamma<\frac12-\frac1q$, for each $m$ we can define a Gaussian process $G^V_m(t)$ such that
$$
\|Z_m^V- G^V_m\|_{\gamma,T}\to 0\quad \text{a.s.},
$$
where $G^V_m\stackrel{d}=G^V$, where $G^V$ is an $\widetilde{\cH}$-valued Brownian motion with $\Var(G^V(1))=  \Sigma_V$.

\end{enumerate}
\end{lemma}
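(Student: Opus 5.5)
Part (i) I would derive from a standard coupling bound on autocovariances. For $r\ge1$, independence of $V_0$ and the coupled variable $V_r^{(r)}$ (which depends only on $\varepsilon_r,\dots,\varepsilon_1$ and an independent copy) gives $\Cov(V_0,V_r^{(r)})=0$, so
\[
\|\Cov(V_0,V_r)\|_{\mathrm{tr}}=\|\E[V_0\otimes(V_r-V_r^{(r)})]\|_{\mathrm{tr}}\le \E\big(\|V_0\|\,\|V_r-V_r^{(r)}\|\big)\le (\E\|V_0\|^2)^{1/2}\big(\E\|V_1-V_1^{(r)}\|^2\big)^{1/2},
\]
using that rank-one operators $x\otimes y$ have trace norm $\|x\|\|y\|$ and stationarity. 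Write $\nu_q(r)=(\E\|V_1-V_1^{(r)}\|^q)^{1/q}$. Since $\nu_2(r)\le\nu_q(r)$ and $\nu_q(r)\le 2(\E\|V_1\|^q)^{1/q}$ is bounded, we have $\nu_q(r)\le C\nu_q(r)^\eta$. The hypothesis therefore gives $\sum_r\nu_q(r)<\infty$, and absolute trace-norm convergence follows. The sum $\Sigma_V$ is self-adjoint. It is nonnegative because $\langle \Sigma_V x,x\rangle=\lim_N N^{-1}\Var(\sum_{j\le N}\langle V_j,x\rangle)\ge0$, a Cesàro argument justified by absolute summability. Hence $\Sigma_V$ is a trace-class covariance operator.

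For (ii) I would use the martingale-difference (projection) decomposition. Set $P_i V_j=\E[V_j\mid\mathcal F_i]-\E[V_j\mid\mathcal F_{i-1}]$, so that $S_k=\sum_{i}\sum_{j\le k}P_iV_j$. Burkholder's inequality in a Hilbert space (a $2$-smooth space) and Doob's maximal inequality give, for $s=2+\delta<q$,
\[
\big\|\max_{k\le N}\|S_k\|\big\|_s\le C\sum_{l\ge0}\Big(\sum_{j\le N}\|P_{j-l}V_j\|_s^2\Big)^{1/2}\le C\sqrt N\sum_{l\ge0}\|P_0V_l\|_s .
\]
The coupling bound $\|P_0V_l\|_s\le 2\nu_s(l)\le 2\nu_q(l)$ then makes the final sum finite by the argument in (i). An alternative is to truncate to $r$-dependent blocks $V_j^{(r)}$ and apply Rosenthal's inequality with a dyadic telescoping in $r$, which is where the $\eta$-power summability would enter naturally. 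I would try the projection route first.

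Part (iii) is the main obstacle. The strategy is to reduce to known Hölderian invariance principles \citep{rackauskas:suquet:2005} and then strengthen distributional convergence to an almost sure coupling. First, tightness in $C^\gamma_0([0,T],\widetilde\cH)$ follows from (ii) through the Račkauskas--Suquet criterion based on dyadic increments. It requires $\sum_j 2^{j}\,P(\|S_{2^{-j}}\text{-increment}\|>\epsilon 2^{-j\gamma}\sqrt m)\to0$ uniformly in $m$, and the moment bound of order $s$ with $\gamma<1/2-1/s$ gives this; letting $s\uparrow q$ covers every $\gamma<1/2-1/q$. One subtlety is that small-scale increments below $1/m$ need the polygonal interpolation together with $\max_j\|V_j\|=o_P(m^{1/q})$, which also follows from the $q$-th moment. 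Second, the finite-dimensional distributions converge to those of $G^V$ by a Hilbert-space CLT for $L^q$-approximable sequences, which I would get through $r$-dependent approximation and the Cramér--Wold device on finite-dimensional projections, combined with the trace-class tail control from (i). Together these give $Z_m^V\Rightarrow G^V$ in the separable Banach space $C^\gamma_0$. Finally, Skorokhod--Dudley--Wichura representation, or more precisely the Berkes--Philipp coupling lemma, lets us define on a common probability space (enlarging it if needed) copies $G_m^V\stackrel d=G^V$ with $\|Z_m^V-G_m^V\|_{\gamma,T}\to0$ almost surely while keeping the original $Z_m^V$. The delicate point I expect is preserving the original sequence $Z^V_m$ rather than a copy, and I would handle it via Strassen--Dudley coupling applied to the Prokhorov distance, followed by a Borel--Cantelli subsequence argument.
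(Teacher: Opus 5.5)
Your part (i) is the paper's argument: $\Cov(V_0,V_r)=\Cov(V_0,V_r-V_r^{(r)})$ by independence, then the trace-norm bound $\E\|V_0\|\,\|V_r-V_r^{(r)}\|$, then adjoints for $r<0$. Your observation that $\nu_q(r)\le C\nu_q(r)^\eta$ (because $\nu_q$ is bounded and $\eta<1$) makes explicit the summability step the paper leaves implicit.

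For (ii) the routes differ. The paper maps $\widetilde\cH$ unitarily onto $L^2[0,1]$. It quotes Theorem~3.3 of Berkes, Horv\'ath and Rice (2013) for the non-maximal bound $\E\|\sum_{j=1}^N V_j\|^{2+\delta}\le CN^{1+\delta/2}$, extends it to arbitrary blocks by stationarity, and obtains the maximum from the M\'oricz--Serfling--Stout maximal inequality. Your projection decomposition, with Burkholder's inequality in a Hilbert space and Doob's inequality, gives the maximal bound in one step and is self-contained. The key estimate $\|P_0V_l\|_s\le 2\nu_s(l)$ is correct because $V_l^{(l)}$ is independent of $\mathcal F_0$, and your route needs only $\sum_l\nu_{2+\delta}(l)<\infty$. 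The paper's route avoids martingale machinery at the price of two black-box results.

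For (iii) the paper simply cites Corollary~3.11 of Kutta and D\"ornemann (2025), which turns (ii) into the H\"olderian invariance principle, and then applies Skorokhod--Dudley--Wichura. Your ingredients are the Ra\v{c}kauskas--Suquet dyadic tightness estimate (including the $o_\P(m^{1/q})$ control of $\max_j\|V_j\|$ at scales below $1/m$), the finite-dimensional CLT, and the flat-concentration bound from (i). Together they correctly unpack what that corollary supplies, with the same range $\gamma<1/2-1/q$ obtained by letting $2+\delta\uparrow q$.

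The step that does not work as you describe it is the final coupling. Strassen--Dudley (or the Berkes--Philipp lemma) keeps the original $Z_m^V$, but it only gives $\P(\|Z_m^V-G_m^V\|_{\gamma,T}>\pi_m)\le\pi_m$, where $\pi_m\to0$ is the Prokhorov distance. Borel--Cantelli then yields almost sure convergence only along a subsequence with $\sum_k\pi_{m_k}<\infty$. Weak convergence provides no rate on $\pi_m$ that would cover the whole sequence.

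Skorokhod--Dudley--Wichura does give almost sure convergence along the full sequence, but only for copies $\widetilde Z_m^V\stackrel d=Z_m^V$, not the original processes. That is also all the paper's own appeal to the theorem delivers. So you should state (iii) in one of two honest forms:
\begin{itemize}
\item the Skorokhod--Dudley--Wichura representation form, with copies of $Z_m^V$; or
\item $\|Z_m^V-G_m^V\|_{\gamma,T}=o_\P(1)$ with the original $Z_m^V$.
\end{itemize}
The second form is what is actually used later; for instance, Lemma~\ref{l:Dm_to_D} only concludes $o_\P(1)$. A genuinely almost sure coupling of the original partial-sum process for all $m$ would require a strong invariance principle with rates, which neither your argument nor the paper's supplies.
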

\begin{proof}  For (i), simply note for $r>0$,
\begin{align*}
\| \Cov ( V_0,V_r) \|_{\text{tr}} &= \| \Cov ( V_0,V_r-V_r^{(r)}) \|_{\text{tr}}\\
& \leq \E \|V_0\|_{\widetilde \cH}\| V_r-V_r^{(r)}\|_{\widetilde \cH}\\
& \leq \left(\E \|V_0\|_{\widetilde\cH}^2 \E \|V_r-V_r^{(r)}\|^2_{\widetilde \cH}\right)^{1/2}.
\end{align*}
Similarly, recalling $^*$ denotes the adjoint of an operator, for $r<0$ we have 
$$\| \Cov ( V_{r},V_0) \|_{\text{tr}}= \| \Cov ( V_0,V_{-r}) )^*\|_{\text{tr}}=\| \Cov ( V_0,V_{|r|} )^*\|_{\text{tr}}.$$%
 Hence,
$$
\sum_{r\in \mathbb Z} \| \Cov ( V_0,V_r) \|_{\text{tr}}\leq C \sum_{r\in\mathbb Z} \left(\E \|V_r-V_r^{(r)}\|^2_{\widetilde \cH}\right)^{1/2} <\infty.
$$
Thus, $\Sigma_V$ is  well-defined, $\|\Sigma_V\|_{\text{tr}}<\infty$, and is easily seen to be positive semidefinite i.e., $\Sigma_V$ is a well-defined covariance operator on $\widetilde \cH$.  

 For (ii), since $\widetilde{\cH}$ is a separable Hilbert space, we may find a unitary  bijection $U:\widetilde{\cH}\to L^2[0,1]$. Then, if we set $\breve V_j := U V_j$ and $\breve V_j^{(r)} := U V_j^{(r)}$, 
Then $\|\breve V_j-\breve V_j^{(r)}\|_{L^2[0,1]}=\|V_j-V_j^{(r)}\|_{\widetilde{\cH}}$ and
$\|\breve V_j\|_{L^2[0,1]}=\| V_j\|_{\widetilde{\cH}}$.  Thus, $\breve V_j$ satisfies the hypotheses of \cite[Theorem.~3.3]{berkes:horvath:rice:2013}, from which we deduce
$$
\E \Bigg\|\sum_{j=1}^N \breve V_j\Bigg\|_{L^2[0,1]}^{2+\delta}=\E \Bigg\|\sum_{j=1}^N V_j\Bigg\|_{\cH}^{2+\delta} \leq C N^{1+\delta/2}.
$$
Then, for any integers $1\leq n_1\leq n_2$,
$$
\E \Bigg\|\sum_{j=n_1}^{n_2} V_j\Bigg\|^{2+\delta} \leq C(n_2-n_1+1)^{1+\delta/2}= C\mathfrak g^{2+\delta}(n_1,n_2),
$$
with $\mathfrak g(n_1,n_2)=C(n_2-n_1+1)^{1/2}$. Using the bound $\sqrt{n_1}+\sqrt{n_2}\leq \sqrt{2(n_1+n_2)}$, the function $\mathfrak g$ satisfies, for $i\leq j \leq k$, $\mathfrak g(i,j) +\mathfrak g(j+1,k) \leq \sqrt2\mathfrak g(i,k)$.  Hence we may apply \cite[Theorem 3.1]{moricz:serfling:stout:1982} to arrive at (ii).  The statement (iii) then follows from (ii) by Corollary 3.11 of \cite{kutta:dornemann:2025}, and an application of the Skorokhod-Dudley-Wichura theorem  (e.g., \cite[p.48]{shorack:wellner:1986}.)

\end{proof}
\section {Proofs for Section \ref{s:asymp}}
\subsection{Proofs under the null hypothesis}\label{s:H0_proofs}

Throughout, to simplify arguments and notation, we assume
\begin{equation*}
\lambda_\ell \neq 0 ,\quad \ell \geq 1,
\end{equation*}
as the case with finitely many $\lambda_\ell\neq 0$ is entirely analgous and will follow from minor adjustments.
We also define
\begin{equation}\label{e:def_Yj}
Y_j(\cdot)=\sum_{\ell =1}^\infty\sqrt{|\lambda_\ell|} \phi_\ell(\bX_j)\phi_\ell(\cdot),\quad j \in \mathbb Z
\end{equation}
and, for each $L\geq 1$,
\begin{equation}\label{e:def_YjL}
Y_{L,j}(\cdot)=\sum_{\ell =1}^L \sqrt{|\lambda_\ell|} \phi_\ell(\bX_j)\phi_\ell(\cdot).%
\end{equation}
We recall that under Assumption \ref{a:h}(iii) %
\begin{equation}\label{a:h3}
\left(\sum_{\ell=1}^\infty|\lambda_\ell|(\phi_\ell(\by)-\phi_\ell(\by'))^2 \right)^{1/2} \leq C\rho(\by,\by')^\alpha.
\end{equation}
The next lemma shows the $\cH$-valued processes $Y_j$, $Y_{L,j}$ inherit $L^p$ $m$--approximability from $\bX$.
\begin{lemma}\label{l:dep_transfer}
Under $H_0$, 
\begin{enumerate}[label=(\roman*)]
\item For any $p>0$, with $Y^{(r)}_j =\sum_{\ell=1}^\infty\sqrt{|\lambda_\ell|} \phi_\ell(\bX_j^{(r)})\phi_\ell(\cdot)$, we have
$$
\left(\E \|Y_1-Y^{(r)}_1\|^p_{\cH}\right)^{1/p}\leq C \vartheta_{p\alpha}(r)^\alpha,$$
and
$$
\left(\E \big|\sqrt{|\lambda_\ell|}\phi_\ell(\bX_1)-\sqrt{|\lambda_\ell|}\phi_\ell(\bX_1^{(r)})\big|^p\right)^{1/p}  \leq  C \vartheta_{p\alpha}(r)^\alpha.
$$
\item   
The long run covariances, given by 
\begin{equation}\label{e:def_Sigma}
\Sigma = \sum_{r\in \mathbb Z} \Cov(Y_0,Y_r),\qquad \Sigma_L = \sum_{r \in \mathbb Z } \Cov( Y_{L,0},Y_{L,r}),
\end{equation}
are well-defined and satisfy
$$
\|\Sigma-\Sigma_L\|_{\text{tr}} \to 0,\qquad L\to\infty.
$$
Moroever,
\begin{align}\label{e:Sigma_rep}
\langle \phi_\ell,\Sigma\phi_{\ell'}\rangle_{\cH}= \sqrt{|\lambda_\ell\lambda_{\ell'}|}\sum_{r\in \mathbb Z}\Cov(\phi_{\ell'}(\bX_0),\phi_\ell(\bX_r)).
\end{align}
\end{enumerate}
\end{lemma}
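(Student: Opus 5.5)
For part (i), the plan is to reduce everything to Assumption~\ref{a:h}(iii). Because $\{\phi_\ell\}$ is orthonormal in $\cH$,
\[
\|Y_1-Y_1^{(r)}\|_{\cH}^2=\sum_{\ell}|\lambda_\ell|\big(\phi_\ell(\bX_1)-\phi_\ell(\bX_1^{(r)})\big)^2\le C\rho(\bX_1,\bX_1^{(r)})^{2\alpha}
\]
by \eqref{a:h3}. This pointwise bound holds almost surely provided \eqref{a:h3} holds on the joint support of $(\bX_1,\bX_1^{(r)})$. I will treat \eqref{a:h3} as a pointwise bound on everywhere-defined versions of $\phi_\ell$; alternatively, I would note that $\bX_1^{(r)}\stackrel{d}{=}\bX_1\sim F$, so that the $F$-a.e.\ ambiguity is harmless. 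Raising the bound to the power $p/2$ and taking expectations gives
\[
\E\|Y_1-Y_1^{(r)}\|^p\le C\,\E\rho(\bX_1,\bX_1^{(r)})^{\alpha p}=C\,\vartheta_{p\alpha}(r)^{\alpha p},
\]
which is the first claim. The scalar claim then follows because each single term satisfies $\sqrt{|\lambda_\ell|}\,|\phi_\ell(\bX_1)-\phi_\ell(\bX_1^{(r)})|\le \|Y_1-Y_1^{(r)}\|_{\cH}$. The same argument applies to $Y_{L,1}$, since truncating the sum only decreases it.

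For part (ii), I will apply Lemma~\ref{l:main_invariance_lemma}(i) with $V_j=Y_j$ and $V_j=Y_{L,j}$. Their hypotheses are met as follows:
\begin{itemize}
\item Stationarity and the Bernoulli-shift structure come from Assumption~\ref{a:Lpm}, since $Y_j$ is a measurable function of $\bX_j$.
\item The moments $\E\|Y_1\|^p<\infty$ with $p>2$ come from Assumption~\ref{a:Y_moment}.
\item The means vanish: $\E Y_j=\sum_\ell\sqrt{|\lambda_\ell|}\,\E\phi_\ell(\bX_1)\,\phi_\ell=0$ by \eqref{e:meanzero}.
\item Summability of $(\E\|V_1-V_1^{(r)}\|^p)^{\eta/p}\le C(\vartheta_{p\alpha}(r)^\alpha)^\eta$ follows from part (i) and Assumption~\ref{a:Lpm}.
\end{itemize}
This makes $\Sigma$ and $\Sigma_L$ well defined as trace-class covariance operators.

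The identity \eqref{e:Sigma_rep} should follow by computing, for each $r$,
\[
\langle\phi_\ell,\Cov(Y_0,Y_r)\phi_{\ell'}\rangle=\Cov\big(\langle Y_0,\phi_\ell\rangle,\langle Y_r,\phi_{\ell'}\rangle\big),
\]
with the ordering fixed by the adjoint convention $x\otimes y$. Since $\langle Y_j,\phi_\ell\rangle=\sqrt{|\lambda_\ell|}\,\phi_\ell(\bX_j)$, summing over $r$ gives the representation. Absolute convergence of the series in trace norm justifies exchanging the sum with the inner product, and stationarity allows a reindexing $r\mapsto -r$ to match the stated form.

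The main obstacle is $\|\Sigma-\Sigma_L\|_{\rm tr}\to0$, which needs a bound on the summed lag covariances that is uniform in $L$. Writing $P_L$ for the projection onto $\mathrm{span}\{\phi_1,\dots,\phi_L\}$, we have $Y_{L,j}=P_LY_j$ and hence $\Sigma_L=P_L\Sigma P_L$. It is then enough to show that $\|\Sigma-P_L\Sigma P_L\|_{\rm tr}\to0$ for a fixed trace-class positive operator $\Sigma$. The plan is to decompose
\[
\Sigma-P_L\Sigma P_L=(I-P_L)\Sigma+P_L\Sigma(I-P_L).
\]
Writing $\Sigma=\Sigma^{1/2}\Sigma^{1/2}$ and using Hölder's inequality for Schatten norms gives
\[
\|(I-P_L)\Sigma\|_{\rm tr}\le\|(I-P_L)\Sigma^{1/2}\|_{\rm HS}\,\|\Sigma^{1/2}\|_{\rm HS}.
\]
The first factor tends to zero because $\|(I-P_L)\Sigma^{1/2}\|_{\rm HS}^2=\sum_{\ell>L}\langle\phi_\ell,\Sigma\phi_\ell\rangle$, provided $\Sigma$ is supported on $\overline{\mathrm{span}}\{\phi_\ell\}$. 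That support property holds because $Y_j$ takes values in this closed span. The second term is handled the same way.
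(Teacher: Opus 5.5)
Your treatment of part (i), the well-definedness of $\Sigma$ and $\Sigma_L$ via Lemma~\ref{l:main_invariance_lemma}(i), and \eqref{e:Sigma_rep} is essentially the paper's argument. The genuine difference is in how you prove $\|\Sigma-\Sigma_L\|_{\rm tr}\to0$.

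The paper works lag by lag and reuses the coupling trick. It bounds $\|\Cov(Y_0,Y_r)-\Cov(Y_{L,0},Y_{L,r})\|_{\rm tr}$ by $(\E\|Y_0-Y_{L,0}\|_{\cH}^2)^{1/2}\vartheta_{2\alpha}(r)^\alpha$ plus a coupled tail term. Each term tends to zero for fixed $r$ and is dominated by a summable sequence, using $\vartheta_{2\alpha}(r)^\alpha\le(\vartheta_{p\alpha}(r)^\alpha)^\eta$ for large $r$. Dominated convergence over $r$ then finishes the proof.

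You instead observe that $Y_{L,j}=P_LY_j$, so $\Sigma_L=P_L\Sigma P_L$ by trace-norm convergence of the series. This reduces the claim to a purely operator-theoretic fact about one positive trace-class operator whose range lies in $\overline{\operatorname{span}}\{\phi_\ell\}$. That step is sound: positivity and self-adjointness of $\Sigma$ come from Lemma~\ref{l:main_invariance_lemma}(i), and $\Sigma e=0$ for $e$ orthogonal to that span because $\langle Y_j,e\rangle_{\cH}=0$.

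What your route buys:
\begin{itemize}
\item It is shorter and needs no further dependence estimates.
\item It gives the explicit bound $\|\Sigma-\Sigma_L\|_{\rm tr}\le 2(\tr\Sigma)^{1/2}\bigl(\sum_{\ell>L}\langle\phi_\ell,\Sigma\phi_\ell\rangle_{\cH}\bigr)^{1/2}$.
\item It proves the identity $\Sigma_L=P_L\Sigma P_L$, which the paper later uses without comment in the proof of Lemma~\ref{l:trace_hatSigma}.
\end{itemize}

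What the paper's route buys: it avoids operator square roots and any appeal to positivity of $\Sigma$. It also ties the approximation error directly to the eigenvalue tail through $\E\|Y_0-Y_{L,0}\|_{\cH}^2=\sum_{\ell>L}|\lambda_\ell|$.
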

\begin{proof} For (i), note
\begin{align*}
  \left(\E \| Y_{1}-Y_{1}^{(r)}\|_{\mathcal H}^p \right)^{1/p}&=\left(\E\left(\sum_{\ell=1}^\infty |\lambda_\ell| (\phi_\ell(\bX_1)-\phi_\ell(\bX_1^{(r)}))^2 \right)^{p/2}\right)^{1/p}\\
  & \leq C \left(\E \rho(\bX_1,\bX_1^{(r)})^{p\alpha}\right)^{1/p}\\
  &\leq C\vartheta_{p\alpha}(r)^\alpha.
\end{align*}
Since $\big|\sqrt{|\lambda_\ell|}\phi_\ell(\bX_1)-\sqrt{|\lambda_\ell|}\phi_\ell(\bX_1^{(r)})\big|=| \langle \phi_\ell, Y_1-Y_{1}^{(r)}\rangle|$, we have the second statement in (i).

For (ii), under Assumption \ref{a:Lpm}, for some $p>4$, $0<\eta<1$, we have
$$
\sum_{r=1}^\infty\left(\E\|Y_j-Y^{(r)}_j\|^p_{\widetilde{\cH}}\right)^{\eta /p}~\leq ~\sum_{r=1}^\infty\left(\vartheta_{p\alpha}(r)^{\alpha}\right)^{\eta} <\infty,
$$
hence 
Lemma \ref{l:main_invariance_lemma}(i) implies $\Sigma$ and $\Sigma_L$ are well-defined.
Furthermore, note
\begin{align*}
&\| \Cov ( Y_0,Y_r)-  \Cov ( Y_{L,0},Y_{L,r}) \|_{\text{tr}}\\
& \leq \| \Cov ( Y_0-Y_{L,0},Y_r)\|_{\text{tr}}+  \| \Cov ( Y_{L,0},Y_r-Y_{L,r}) \|_{\text{tr}}\\
& = \| \Cov ( Y_0-Y_{L,0},Y_r-Y_{r}^{(r)})\|_{\text{tr}}+  \| \Cov \big( Y_{L,0},Y_r-Y_{L,r}-(Y_r^{(r)}-Y_{L,r}^{(r)})\big)\|_{\text{tr}}\\
& \leq\big(\E \|Y_0-Y_{L,0}\|_{\cH}^2\big)^{1/2} \vartheta_{2\alpha}(r)^\alpha +  \left(\E \|Y_0 \|_{\cH}^2 \E\|Y_r-Y_{L,r}-(Y_r^{(r)}-Y_{L,r}^{(r)})\|_{\cH}^2\right)^{1/2}.
\end{align*}
Moreover,%
$$
\E\| Y_r-Y_{L,r}-(Y_r^{(r)}-Y_{L,r}^{(r)})\|_{\cH}^2=   \sum_{\ell=L+1}^\infty |\lambda_\ell|  \E (\phi_\ell(\bX_1)-\phi_\ell(\bX_1^{(r)}))^2 \to 0,\qquad L\to\infty,\\
$$
and also
$$
\sum_{\ell=L+1}^\infty |\lambda_\ell|  \E (\phi_\ell(\bX_1)-\phi_\ell(\bX_1^{(r)}))^2 \leq \sum_{\ell=1}^\infty |\lambda_\ell|  \E (\phi_\ell(\bX_1)-\phi_\ell(\bX_1^{(r)}))^2 \leq \vartheta_{2\alpha}(r)^\alpha.
$$
Similarly we have $\E \|Y_0-Y_{L,0}\|_{\cH}^2\to 0$ as $L\to\infty$. Since $p>4$ and $0<\eta<1$,  for all large $r$,  $\vartheta_{2\alpha}(r)^\alpha\leq \left(\vartheta_{p\alpha}(r)^{\alpha}\right)^{\eta}$, which is summable.  Hence by dominated convergence, as $L\to\infty$,
\begin{align*}
\|\Sigma-\Sigma_L\| & \leq \sum_{r\in \mathbb Z}  \| \Cov ( Y_0,Y_r)-  \Cov ( Y_{L,0},Y_{L,r}) \|_{\text{tr}}  \to 0.
\end{align*}
Since $\langle Y_j,\phi_\ell\rangle_{\cH}=\sqrt{|\lambda_\ell|}\phi_\ell(\bX_j)$, expression \eqref{e:Sigma_rep} is immediate.
\end{proof}

For the next few lemmas, we define, for any $0\leq k_1<k_2$, 
\begin{equation}\label{e:barphi_ell}
S_{\ell,k_1,k_2}=\sum_{i=k_1+1}^{k_2} \phi_\ell(\bX_i),\quad \bar\phi_{\ell,k_1,k_2}=(k_2-k_1)^{-1}S_{\ell,k_1,k_2},
\end{equation}
so that from Lemma \ref{l:prod_kern} and Assumption \ref{a:h}, we have:
\begin{align}
\mathcal U(h;(a,b];(c,d])= \tr(A)\left(\frac{1}{b-a} + \frac{1}{d-c}\right) -\sum_{\ell=1}^\infty\lambda_\ell \big(\bar\phi_{\ell,a,b}-\bar\phi_{\ell,c,d}\big)^2 \notag\\
+ \cR(h,(a,b]) +\cR(h,(c,d]),\label{e:Udecomp}%
\end{align}
with
\begin{align}\notag
\cR(h,(k_1,k_2]) &= \frac{1}{(k_2-k_1)(k_2-k_1-1)}\sum_{i=k_1+1}^{k_2}
\sum_{\ell=1}^\infty\lambda_\ell\left(\phi_\ell(\bX_i)^2-1 \right)\\
 &\qquad- \frac{1}{k_2-k_1-1}\sum_{\ell=1}^\infty \bar\phi_{\ell,k_1,k_2}^2+ \frac{\tr(A)}{(k_2-k_1)(k_2-k_1-1)}.\label{e:def_cR}
\end{align}

Also let $J:\cH\to\cH$ be the operator defined by the relations
\begin{equation}\label{e:def_J}
J\phi_\ell = \text{sgn}(\lambda_\ell)\phi_\ell, \qquad \ell\geq 1.
\end{equation}
Or, in other words, for each $f \in \cH$,
$$
J f = \sum_{\ell=1}^\infty \text{sgn}(\lambda_\ell) \langle f,\phi_\ell\rangle_{\cH}\phi_\ell
$$
It is easily seen that $\|Jf\|_{\cH}\leq \|f\|_{\cH}$.    We set
\begin{equation}\label{e:def_QJ}
Q_J(f) = \langle f, Jf\rangle_{\cH}, \qquad f\in \cH,
\end{equation}
and note from Cauchy-Schwarz, we have $|Q_J(f)|\leq \|f\|_{\cH}\|Jf\|_{\cH}\leq \|f\|^2_{\cH}.$

The next lemma shows the remainder terms  $\cR(h,(a,b]) +\cR(h,(c,d])$ appearing in \eqref{e:Udecomp} are asymptotically negligible.
\begin{lemma}\label{lemma_remainder}For every $x>0$, as $m\to\infty$,
\begin{equation}\label{e:remainder_negligb}
\P\left\{m^{-1}\max_{k\geq 2} \max_{(\ell_1,\ell_2) \in \cS_k }\frac{(k-\ell_2)^2}{g_m(k,\ell_1,\ell_2)}|\cR(h,(0,m+\ell_1]) +\cR(h,(m+\ell_2,m+k])| >x\right\} \to 0.
\end{equation}
\end{lemma}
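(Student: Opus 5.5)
The plan is to treat the two remainders separately and to absorb each one's scale into whichever factor of $g$ in \eqref{a:g} is most convenient. First rewrite $\cR$ in $\cH$-valued form. By \eqref{e:meanzero} and \eqref{e:def_Yj}, $\sum_\ell \lambda_\ell\phi_\ell(\bX_i)^2 = Q_J(Y_i)$ with $Q_J$ as in \eqref{e:def_QJ}, and $\E Q_J(Y_i)=\tr(A)$. Reading the middle term of \eqref{e:def_cR} with the eigenvalue weights, it equals $Q_J(\bar Y_{k_1,k_2})$, where $\bar Y_{k_1,k_2}=(k_2-k_1)^{-1}\sum_{i=k_1+1}^{k_2} Y_i$. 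Hence, with $n=k_2-k_1$ and $V_i=Q_J(Y_i)-\tr(A)$,
\[
|\cR(h,(k_1,k_2])|\le \frac{|\sum_{i=k_1+1}^{k_2}V_i|}{n(n-1)}+\frac{\|\sum_{i=k_1+1}^{k_2}Y_i\|_{\cH}^2}{n^2(n-1)}+\frac{|\tr(A)|}{n(n-1)} .
\]
The scalar sequence $V_i$ inherits the approximability of Lemma \ref{l:dep_transfer}(i). Indeed, $|Q_J(Y)-Q_J(Y^{(r)})|\le \|Y-Y^{(r)}\|_{\cH}(\|Y\|_{\cH}+\|Y^{(r)}\|_{\cH})$, so H\"older's inequality gives $L^{q}$ coefficients with some $q>2$ (because $p>4$) that are still summable after the power $\eta$. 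Lemma \ref{l:main_invariance_lemma}(ii) and (iii) therefore apply both to $V_i$ and to $Y_i$.

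\emph{Left window} $(0,m+\ell_1]$, with $n_L=m+\ell_1=mL$. Write $n_R=k-\ell_2$ and $r=n_R/m\le t$. Use $(1+t)^{1-\beta}\ge r^{1-\beta}$ and $1+r/L\ge r/L$ to get $g\ge r^2/L$. Then $m^{-1}n_R^2/g\le n_L$, and it suffices to prove
\[
\sup_{n\ge m}\Big(\frac{|\sum_{i\le n}V_i|}{n}+\frac{\|\sum_{i\le n}Y_i\|_{\cH}^2}{n^2}+\frac1n\Big)=o_\P(1).
\]
This will follow from a dyadic blocking over $n\in(2^jm,2^{j+1}m]$: apply the maximal moment bound of Lemma \ref{l:main_invariance_lemma}(ii) on each block together with Markov's inequality, and sum the resulting geometric series in $j$.

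\emph{Right window} $(m+\ell_2,m+k]$. The constant term costs at most $1/(mg)\lesssim m^{-1}r^{-\beta}\le m^{\beta-1}$, since $r\ge 2/m$. For the first two terms, split according to the window length.

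For \emph{short windows} $n_R\le m^{\delta}$, drop the centering and use $g\ge r^\beta$. The terms are then bounded by $n_R^{1-\beta}m^{\beta-1}\max_{i\le m+k}\|Y_i\|_{\cH}^2$. On $k\le Tm$ the maximum is $o_\P(m^{2/p})$, and $2/p<1-\beta$ by \eqref{a:p}, so the bound vanishes once $\delta$ is chosen small. For $k>Tm$, use the extra factor $(1+t)^{1-\beta}\log^{2-\beta}(e+t)$ blockwise in $t\in(2^j,2^{j+1}]$. On such a block $\max\|Y_i\|^2=O_\P((2^jm)^{2/p})$ with summable tail probabilities, again because $p(1-\beta)>2$.

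For \emph{long windows} $n_R>m^\delta$, use the strong Hölder approximation of Lemma \ref{l:main_invariance_lemma}(iii) on $[0,T+1]$ with exponent $\gamma$ close to $1/2-1/q$. This gives $|\sum_{\rm window}V_i|\le m^{1/2}r^{\gamma}O_\P(1)$ and $\|\sum_{\rm window}Y_i\|^2\le m r^{2\gamma}O_\P(1)$. Inserted into $m^{-1}n_R^2/g$, with $g\ge \max(r^\beta,r)$, these contribute $m^{-1/2}r^{\gamma-\beta}n_R^{-1}\cdot m$-type factors, which are negative powers of $m$ once $n_R\ge m^\delta$. The region $t>T$ is handled by the same dyadic-in-$t$ argument with the growth of $g$ in $t$.

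The main obstacle is the interplay between $\beta$ close to one, which makes $r^\beta$ small in the denominator, and the short right windows. There the Hölder rate $\gamma<1/2$ is too weak, so the argument must fall back on the crude maximum bound. This is exactly where the condition $p>2/(1-\beta)$ has to be used sharply, and the threshold $m^\delta$ has to be tuned so that the short-window and long-window regimes overlap.
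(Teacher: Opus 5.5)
\paragraph{Gap: long right windows on the infinite horizon $k>Tm$.}
This step fails as written. Lemma~\ref{l:main_invariance_lemma}(iii) is a statement on a fixed compact interval and only yields $[Z_m]_{\gamma,T+1}=O_\P(1)$. It supplies no tail bound that can be summed over the infinitely many blocks $t\in(2^j,2^{j+1}]$, so ``the same dyadic-in-$t$ argument'' has nothing to sum for long windows.

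The crude bound from your short-window step does not transfer either. For $n_R\asymp k\asymp 2^jm$ it gives $n_R^{1-\beta}(2^jm)^{\beta-1}\max_{i\le m+k}\|Y_i\|_{\cH}^2\asymp\max_{i\le m+k}\|Y_i\|_{\cH}^2$. That is only $O_\P((2^jm)^{2/p})$, not small.

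What is missing is a quantitative maximal inequality for window sums, combined with a union bound over the right endpoint. This is exactly the paper's device. Take $k\in[2^v,2^{v+1})$ and window lengths up to $2^{j}$. Lemma~\ref{l:increment_maximal} and a union over the $2^v$ endpoints give $\E\max_k\max_{r\le 2^j}\bigl|\sum_{i=m+k-r+1}^{m+k}V_i\bigr|\lesssim 2^{v/q}2^{j/2}$ with $q=p/2$. Dividing by $k^{1-\beta}r^{\beta}$ and summing over dyadic $r$ gives a block contribution of order $2^{-v(1-\beta-1/q)}\sum_{j\le v}2^{j(1/2-\beta)}$, and of order $2^{-v(1-\beta-1/q)}$ for the $\|\sum Y_i\|_{\cH}^2$ term.

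These bounds are summable over $v\ge\log_2 m$ and vanish as $m\to\infty$ because $q>2$ and $q>1/(1-\beta)$. This handles every window length at once for all $k\ge m$. The trade-off between the union-bound cost $2^{v/q}$ and the weight $2^{v(1-\beta)}$ is where the paper spends $p>2/(1-\beta)$.

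\paragraph{The compact range.}
For $k\le Tm$ your route is a legitimate alternative to the paper's: a strong H\"older approximation in place of dyadic moment bounds, and a maximal inequality in place of the ergodic theorem for the left window. However, the short/long split is unnecessary there, and the obstacle you single out is misdiagnosed.

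Since $r\in[2/m,T+1]$, the H\"older bounds give contributions $m^{-1/2}r^{\gamma-\beta}\lesssim m^{-1/2}+m^{\beta-\gamma-1/2}$ for the $V$-term and $m^{-1}r^{2\gamma-\beta-1}\lesssim m^{\beta-2\gamma}$ for the $Y$-term. These hold uniformly down to $n_R=2$. Admissible exponents exist exactly when $p(1-\beta)>2$: take $\gamma\in(\beta-\tfrac12,\tfrac12-\tfrac2p)$ for $V$, which has only $p/2$ moments, and $\gamma\in(\tfrac\beta2,\tfrac12-\tfrac1p)$ for $Y$.

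Do take separate exponents for $V$ and $Y$. A single $\gamma<\tfrac12-\tfrac2p$ applied to the $Y$-term can force a threshold $m^\delta$ incompatible with your short-window constraint $\delta<1-2/(p(1-\beta))$ when $p(1-\beta)$ is close to $2$.
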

\begin{proof}
We deal with each of the terms
$\cR(h,(0,m+\ell_1])$ and $\cR(h,(m+\ell_2,m+k])$ separately. First we set up some notation. Throughout, write
$$
    t=\frac{k}{m},\qquad L=1+\frac{\ell_1}{m},\qquad R=\frac{k-\ell_2}{m}.
$$
Since $\ell_1\leq \ell_2$, we have $L+R\leq 1+t$.  Also, note that with $J$ and $Q_J$ as in \eqref{e:def_QJ},
$$
J Y_i(\cdot) =  \sum_{\ell=1}^\infty  \text{sgn}(\lambda_\ell)\sqrt{|\lambda_\ell|} \phi_\ell(\bX_j)\phi_\ell(\cdot),\quad Q_J(Y_i) = \sum_{\ell=1}^\infty   \lambda_\ell\phi^2_\ell(\bX_j).
$$
Set
\begin{align}\label{e:T_i}
T_i = \sum_{\ell=1}^\infty \lambda_\ell\big(\phi_\ell(\bX_i)^2-1\big)= Q_J(Y_i)-\E Q_J(Y_i).%
\end{align}
Then, with $n=k_2-k_1$,  for $n\geq2$,
\begin{equation}\label{e:Rbound_basic}
|\cR(h,(k_1,k_2])|
\leq \frac{1}{n-1}\left|\frac1n\sum_{i=k_1+1}^{k_2}T_i\right|
+ \frac{1}{n-1}\sum_{\ell=1}^\infty |\lambda_\ell|\,\bar\phi_{\ell,k_1,k_2}^2
+ \frac{\tr(A)}{n(n-1)}.
\end{equation}
So, consider $\cR(h,(0,m+\ell_1])$.  Put $b=m+\ell_1=mL$.
By \eqref{a:g},
\begin{equation}\label{e:g_basic_bound}
    \frac{R^2/L}{g(t,L,R)}
    \leq
    \frac{R+R^2/L}{g(t,L,R)}
    =
    \frac{R^{1-\beta}}
    {(1+t)^{1-\beta}\log^{2-\beta}(e+t)}
    \leq 1 ,
\end{equation}
i.e. $R^2/g(t,L,R)\leq L$. 
Then,
\begin{align}
&m^{-1}\max_{(\ell_1,\ell_2)\in\cC_k}
\frac{(k-\ell_2)^2}{g_m(k,\ell_1,\ell_2)}
|\cR(h,(0,b])|
\notag\\
&\qquad
=
m\max_{(\ell_1,\ell_2)\in\cC_k}
\frac{R^2}{g(t,L,R)}
|\cR(h,(0,b])|\notag \\
 & \qquad\leq  \sup_{b\geq m }b|\cR(h,(0,b])|\notag\\
 &\qquad \leq \left(
\sup_{b\geq m}\left|\frac{1}{b}\sum_{i=1}^{b}T_i\right|
+\sup_{b\geq m}\sum_{\ell=1}^\infty |\lambda_\ell|\,\bar\phi_{\ell,0,b}^2
+\frac{\tr(A)}{m}
\right).\label{e:cR(h,(0,b])_bound1}
\end{align}
Using \eqref{e:Rbound_basic} with $k_1=0$, $k_2=b=m+\ell_1$, and $n=b$, and using
\eqref{e:Rbound_basic2_term2}, we obtain
\begin{align}\label{e:Rbound_basic2}
&m^{-1}\max_{k\geq2}\max_{(\ell_1,\ell_2)\in\cC_k}
\frac{(k-\ell_2)^2}{g_m(k,\ell_1,\ell_2)}
|\cR(h,(0,m+\ell_1])|
\notag\\
&\qquad\leq
C\sup_{b\geq m}
\left(
\left|\frac1b\sum_{i=1}^b T_i\right|
+ \left\|\frac1b\sum_{i=1}^bY_i\right\|_{\cH}^2 +\frac{|\tr(A)|}{b}
\right).
\end{align}

To complete the argument for $\cR(h,(0,b])$, it suffices to show each term in \eqref{e:Rbound_basic2} tends to zero in probability.   For the first term, since $\{Y_i\}$ is stationary and ergodic, so is $\{T_i\}$, and immediately from the ergodic theorem we obtain
 \begin{equation}\label{e:Ti_ergodic}
\sup_{b\geq m}\left|\frac{1}{b}\sum_{i=1}^{b}T_i\right| \stackrel{\text{a.s.}}\to 0\quad m\to \infty.
\end{equation}
For the second term in \eqref{e:Rbound_basic2}, using Parseval,
\begin{align}
\notag\sum_{\ell=1}^\infty |\lambda_\ell|\,\bar\phi_{\ell,k_1,k_2}^2 & = \frac1{(k_2-k_1)^2} \sum_{\ell=1}^\infty\left(\sum_{j=k_1+1}^{k_2} \langle Y_j,\phi_\ell\rangle_{\cH} \right)^2\\
\notag& = \frac1{(k_2-k_1)^2} \sum_{\ell=1}^\infty\ \sum_{i,j=k_1+1}^{k_2} \langle Y_i,\phi_\ell\rangle_{\cH}\langle Y_j,\phi_\ell\rangle_{\cH}\\
\notag& = \frac1{(k_2-k_1)^2}  \sum_{i,j=k_1+1}^{k_2} \langle Y_i,Y_j\rangle_{\cH}\\
& =  \Bigg\|  \frac1{(k_2-k_1)}\sum_{j=k_1+1}^{k_2} Y_j\Bigg\|_{\cH}^2.\label{e:Rbound_basic2_term2}
\end{align}
Again by the ergodic theorem we have

 \begin{equation}\label{e:Yi_ergodic}
\sup_{b\geq m}\sum_{\ell=1}^\infty |\lambda_\ell|\,\bar\phi_{\ell,0,b}^2 =\Bigg\|  \frac1{b}\sum_{j=1}^b Y_j\Bigg\|_{\cH}^2 \stackrel{\text{a.s.}}\to 0, \qquad m\to \infty. %
\end{equation}
Since $|\tr(A)|<\infty$, the third term in \eqref{e:Rbound_basic2} tends to zero, giving
$$
m^{-1} \sup_{k\geq 2}\max_{(\ell_1,\ell_2)\in\cC_k}\frac{(k-\ell_2)^2}{g_m(k,\ell_1,\ell_2)}|\cR(h,(0,m+\ell_1])| =o_\P(1).
$$
We now show
\begin{equation}\label{e:2nd_remainder_neglig}
m^{-1} \sup_{k\geq 2}
\max_{(\ell_1,\ell_2)\in\cC_k}
\frac{(k-\ell_2)^2}{g_m(k,\ell_1,\ell_2)}
|\cR(h,(m+\ell_2,m+k])|
=o_\P(1),
\end{equation}
by considering $\sup_{2\leq k<m}(\ldots)$ and $\sup_{k\geq m}(\ldots)$ separately.   When  $k\geq m$,
$$
m g_m(k,\ell_1,\ell_2)
\geq
C\,k^{1-\beta}(k-\ell_2)^\beta.
$$
Hence,
\begin{align}
&m^{-1} \sup_{k\geq m}
\max_{(\ell_1,\ell_2)\in\cC_k}
\frac{(k-\ell_2)^2}{g_m(k,\ell_1,\ell_2)}
|\cR(h,(m+\ell_2,m+k])|
\notag\\
&\quad\leq
C\sup_{k\geq m}\left\{
\frac{1}{k^{1-\beta}}
\max_{0\leq \ell_2<k}
\frac{1}{(k-\ell_2)^\beta}
\left|\sum_{i=m+\ell_2+1}^{m+k}T_i\right|\right\}
\notag\\
&
\quad\qquad +\sup_{k\geq m}\left\{
\frac{1}{k^{1-\beta}}
\max_{0\leq \ell_2<k}
\frac{1}{(k-\ell_2)^{1+\beta}}
\left\|\sum_{j=m+\ell_2+1}^{m+k}Y_j\right\|_{\cH}^2
\right\}
+
\frac{C|\tr(A)|}{m^{1-\beta}}\notag\\
& \quad=:\sup_{k\geq m} M_{1,m}(k) + \sup_{k\geq m} M_{2,m}(k)  +\frac{\tr (A)}{m^{1-\beta}}.\label{e:Rcd_bound}
\end{align}
We now bound the two terms in \eqref{e:Rcd_bound}. Under Assumption \ref{a:Y_moment},
$$
\beta<1-\frac1q,\qquad \text{where} \qquad q=p/2.
$$
For the first term, write
$$
B_{1,v}=\max_{2^v\leq k<2^{v+1}} M_{1,m}(k),
\qquad v\in \mathbb N.
$$
Then
\begin{equation}\label{e:M1bound}
\max_{k\geq m} M_{1,m}(k)\leq \sum_{v=\lfloor \log_2 m\rfloor}^\infty B_{1,v}.
\end{equation}
Now, for $2^v\leq k<2^{v+1}$, setting $r=k-\ell_2$, we have
$$
M_{1,m}(k)
=
\frac{1}{k^{1-\beta}}
\max_{1\leq r\leq k}
\frac{1}{r^\beta}
\left|\sum_{i=m+k-r+1}^{m+k}T_i\right|.
$$
However, 
\begin{align}\notag
\max_{1\leq r\leq k}\frac{1}{r^\beta}
\Bigg|\sum_{i=m+k-r+1}^{m+k}T_i\Bigg|
&\leq
\max_{0\leq j<\lceil \log_2 k\rceil}\max_{2^j\leq r <2^{j+1}}\frac{1}{r^\beta}
\Bigg|\sum_{i=m+k-r+1}^{m+k}T_i\Bigg|\\
&\leq 
\sum_{j=0}^{v} 2^{-j\beta}
\max_{1\leq r\leq 2^j}
\Bigg|\sum_{i=m+k-r+1}^{m+k}T_i\Bigg|.\label{e:bound_M1mk}
\end{align}
Hence, applying the bound \eqref{e:bound_M1mk} to $B_{1,v}$, we obtain
\begin{align*}
B_{1,v}
&\leq  \max_{2^v\leq k < 2^{v+1}} \frac{1}{(2^v)^{1-\beta}}
\max_{1\leq r\leq k}\frac{1}{r^\beta}
\Bigg|\sum_{i=m+k-r+1}^{m+k}T_i\Bigg|\\
&\leq  2^{-v(1-\beta)} \max_{2^v\leq k < 2^{v+1}} 
\max_{1\leq r\leq k}\frac{1}{r^\beta}
\Bigg|\sum_{i=m+k-r+1}^{m+k}T_i\Bigg|\\
& \leq2^{-v(1-\beta)} \max_{2^v\leq k < 2^{v+1}} \sum_{j=0}^{v} 2^{-j\beta}
\max_{1\leq r\leq 2^j}
\Bigg|\sum_{i=m+k-r+1}^{m+k}T_i\Bigg|.\\
&\leq C 2^{-v(1-\beta)}
\sum_{j=0}^v 2^{-j\beta} b_{1}(v,j),
\end{align*}
where
\begin{equation}\label{e:D1(v,j)}
b_{1}(v,j)=\max_{2^v\leq k<2^{v+1}} \max_{1\leq r\leq 2^j}\Bigg|\sum_{i=m+k-r+1}^{m+k}T_i\Bigg|.
\end{equation}
Using the elementary bound $\max_{1\leq i \leq N}|x_i| \leq \Big(\sum_{i=1}^N |x_i|^q\Big)^{1/q}$
together with Lemma \ref{l:increment_maximal}, we obtain
$$
\E b_1(v,j)
\leq \Big(\E b_1(v,j)^q\Big)^{1/q}
\leq
\Bigg(
\sum_{k=2^v}^{2^{v+1}-1}
\E \max_{1\leq r\leq 2^j}
\Bigg|\sum_{i=m+k-r+1}^{m+k}T_i\Bigg|^q
\Bigg)^{1/q}
\leq C\,2^{v/q}2^{j/2}.
$$
Therefore
$$
\E B_{1,v} \leq
C\,2^{-v(1-\beta-1/q)}
\sum_{j=0}^v 2^{j(1/2-\beta)} \leq  C 2^{-v(1-\beta-1/q)}\times \begin{cases}
2^{v(1/2-\beta)}, & \beta<1/2\\
v+1 & \beta=1/2\\ C & \beta>1/2.
\end{cases}
$$
Hence, from \eqref{e:M1bound}, we obtain, as $m\to\infty$,
$$
\E \max_{k\geq m} M_{1,m}(k)\leq \sum_{v=\lfloor \log_2 m\rfloor}^\infty \E B_{1,v}\to 0,
$$
and therefore
$$
\max_{k\geq m} M_{1,m}(k)=o_\P(1).
$$
We now turn to the second term in \eqref{e:Rcd_bound}. Set
$$
b_{2,v}=\max_{2^v\leq k<2^{v+1}} M_{2,m}(k),
\qquad v\in \mathbb N.
$$
Then
$$
\max_{k\geq m} M_{2,m}(k)\leq \sum_{v=\lfloor \log_2 m\rfloor}^\infty B_{2,v}.
$$
Again writing $r=k-\ell_2$ we have
$$
M_{2,m}(k)
=
\frac{1}{k^{1-\beta}}
\max_{1\leq r\leq k}\frac{1}{r^{1+\beta}}
\Bigg\|\sum_{j=m+k-r+1}^{m+k}Y_j\Bigg\|_{\cH}^2.
$$
Arguging analogously to the case of $B_{1,v}$, we find
$$
B_{2,v}
\leq
C\,2^{-v(1-\beta)}
\sum_{j=0}^v 2^{-j(1+\beta)} b_2(v,j),
$$
where
$$
b_2(v,j)
=
\max_{2^v\leq k<2^{v+1}}
\max_{1\leq r\leq 2^j}
\Bigg\|\sum_{i=m+k-r+1}^{m+k}Y_i\Bigg\|_{\cH}^2.
$$
Using the same argument as above, together with Lemma \ref{l:increment_maximal},
$$
\E b_2(v,j)
\leq \Big(\E b_2(v,j)^q\Big)^{1/q}
\leq
\Bigg(
\sum_{k=2^v}^{2^{v+1}-1}
\E \max_{1\leq r\leq 2^j}
\Bigg\|\sum_{i=m+k-r+1}^{m+k}Y_i\Bigg\|_{\cH}^{2q}
\Bigg)^{1/q}
\leq C\,2^{v/q}2^{j}.
$$
Hence
$$
\E B_{2,v}
\leq
C\,2^{-v(1-\beta-1/q)}
\sum_{j=0}^v 2^{-j\beta}
\leq
C\,2^{-v(1-\beta-1/q)},
$$
since $\beta>0$. Again using $\beta<1-1/q$, we obtain
$$
\sum_{v=\lfloor \log_2 m\rfloor}^\infty \E B_{2,v}\to 0,
$$
giving 
$$
\max_{k\geq m} M_{2,m}(k)=o_\P(1).
$$
Finally, since $\beta<1$, clearly
$ m^{\beta-1}\tr(A)\to 0, $
and the right-hand side of \eqref{e:Rcd_bound} tends to zero in probability, establishing the
claim for the range $k\geq m$.

Now we turn to the range $2\leq k \leq m$. Since $g_m(k,\ell_1,\ell_2)\gtrsim ((k-\ell_2)/m)^\beta$ for $2\leq k \leq m$, from the bound \eqref{e:Rbound_basic}, we have
\begin{align}
&m^{-1}  \max_{2\leq k\leq m}\max_{(\ell_1,\ell_2)\in\cC_k}\frac{(k-\ell_2)^2}{g_m(k,\ell_1,\ell_2)}|\cR(h,(m+\ell_2,m+k])\notag\\
\notag& \leq C m^{\beta-1} \max_{2\leq k \leq m} \max_{0\leq \ell_2 < k} \frac{1}{(k-\ell_2)^\beta}\left( \Bigg|\sum_{i=m+\ell_2+1}^{m+k} T_i\Bigg|+   \frac{1}{(k-\ell_2)}  \Bigg\|  \sum_{j=m+\ell_2+1}^{m+k} Y_j\Bigg\|_{\cH}^2 +|\tr (A)|\right)\\
& =:m^{\beta-1}\left(\max_{2\leq k \leq m}\left(\widetilde M_{1,m}(k) +\widetilde  M_{2,m}(k)\right)   +|\tr (A) |\right)\label{e:Rcdbound_2}
\end{align}
Write
$$
\widetilde B_{1,v}=\max_{2^v\leq k<2^{v+1}} \widetilde M_{1,m}(k),
\qquad v\in \mathbb N.
$$
From the bound \eqref{e:bound_M1mk}, when $2^v\leq k < 2^{v+1}$, 
\begin{align*}
 \widetilde M_{1,m}(k)&\leq  \sum_{j=0}^{v} 2^{-j\beta}
 \max_{1\leq r\leq 2^j}
\Bigg|\sum_{i=m+k-r+1}^{m+k}T_i\Bigg|,
\end{align*}
Hence, 
\begin{align*}
\widetilde B_{1,v}\leq  \max_{2^v\leq k<2^{v+1}}\sum_{j=0}^{v} 2^{-j\beta} 
 \max_{1\leq r\leq 2^j}
\Bigg|\sum_{i=m+k-r+1}^{m+k}T_i\Bigg| \leq \sum_{j=0}^{v}2^{-j\beta} b_1(v,j),
\end{align*}
where $b_1(v,j)$ is as in \eqref{e:D1(v,j)}.  Thus, if $\beta \neq 1/2$,
\begin{align*}
\E  \max_{2\leq k \leq m}\widetilde M_{1,m}(k)&\leq \sum_{v=0}^{\lceil \log_2 m\rceil}  \E \widetilde B_{1,v}\\
 & \leq \sum_{v=0}^{\lceil \log_2 m\rceil } \sum_{j=0}^{v}2^{-j\beta} \E b_1(v,j)\\
  & \leq C \sum_{v=0}^{\lceil \log_2 m\rceil} \sum_{j=0}^{v} 2^{j(\frac12-\beta)}   2^{v/q}\\
  & \leq  \sum_{v=0}^{\lceil \log_2 m\rceil} 2^{v(\frac12-\beta+1/q)}\leq C\times \begin{cases} 1 &  (\frac12-\beta+1/q)<0\\
  \log_2 m &  (\frac12-\beta+1/q)=0\\
  m^{ \frac12-\beta+1/q} &  (\frac12-\beta+1/q)>0,
  \end{cases}\\
& = o(m^{1-\beta}),
\end{align*}
since $q>2$, and hence $1/q<1/2$.  Similarly, when $\beta=1/2$,
\begin{align*}
\E  \max_{2\leq k \leq m}\widetilde M_{1,m}(k)%
 & \leq \sum_{v=0}^{\lceil \log_2 m\rceil } \sum_{j=0}^{v}2^{-j\beta} \E b_1(v,j)\\
  & \leq C \sum_{v=0}^{\lceil \log_2 m\rceil} v   2^{v/q}\\
  & \leq C (\log_2 m )m^{1/q}\\
& = o(m^{1-\beta}),
\end{align*}
since $1-\beta>1/q$. 
Analogous reasoning shows  

$$
m^{\beta-1}\max_{2\leq k \leq m}\widetilde M_{2,m}(k)=o_\P(1).
$$
Hence,  \eqref{e:Rcdbound_2} tends to zero in probability, and  thus \eqref{e:2nd_remainder_neglig} holds.
\end{proof}

\begin{lemma}\label{l:increment_maximal}
Let $q=p/2$. Then, uniformly over integers $a\geq 0$ and $N\geq 1$,
$$
\E \max_{1\leq r\leq N}\Bigg|\sum_{i=a+1}^{a+r}T_i\Bigg|^q \leq C N^{q/2},
$$
and
$$
\E \max_{1\leq r\leq N}\Big\|\sum_{i=a+1}^{a+r}Y_i\Big\|_{\cH}^{2q} \leq C N^{q}.
$$
\end{lemma}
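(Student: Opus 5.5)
The plan is to reduce both bounds to the moment and maximal inequalities behind Lemma~\ref{l:main_invariance_lemma}(ii): first the Berkes--Horv\'ath--Rice moment bound for Bernoulli shifts, then the Móricz--Serfling--Stout maximal inequality. Under $H_0$ the sequences $\{Y_i\}$ and $\{T_i\}$ are stationary Bernoulli shifts in $\{\varepsilon_i\}$. So $\sum_{i=a+1}^{a+r}T_i$ has the same law as $\sum_{i=1}^{r}T_i$, and likewise for $Y_i$ jointly in $r$. It therefore suffices to treat $a=0$, and uniformity in $a$ is automatic.

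\textbf{The $Y_i$ bound.} Lemma~\ref{l:dep_transfer}(i) gives $(\E\|Y_1-Y_1^{(r)}\|_{\cH}^p)^{1/p}\le C\vartheta_{p\alpha}(r)^\alpha$. Assumption~\ref{a:Lpm} then makes $\sum_r\big((\E\|Y_1-Y_1^{(r)}\|^p_{\cH})^{1/p}\big)^{\eta}$ finite. Assumption~\ref{a:Y_moment} gives $\E\|Y_1\|_{\cH}^p<\infty$, and $\E Y_1=0$ by \eqref{e:meanzero}. I would transport $Y_i$ to $L^2[0,1]$ by a unitary map, exactly as in the proof of Lemma~\ref{l:main_invariance_lemma}(ii). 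Then \cite[Theorem~3.3]{berkes:horvath:rice:2013} at moment order $p=2q$ gives, for all $1\le n_1\le n_2$,
\[
\E\Big\|\sum_{i=n_1}^{n_2}Y_i\Big\|_{\cH}^{2q}\le C(n_2-n_1+1)^{q}.
\]
With $\mathfrak g(n_1,n_2)=C(n_2-n_1+1)^{1/2}$, which is superadditive up to the factor $\sqrt2$ as noted in that proof, \cite[Theorem~3.1]{moricz:serfling:stout:1982} upgrades this to $\E\max_{r\le N}\|\sum_{i\le r}Y_i\|_{\cH}^{2q}\le CN^q$.

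\textbf{The $T_i$ bound.} Recall $T_i=Q_J(Y_i)-\E Q_J(Y_i)$ with $|Q_J(f)|\le\|f\|_{\cH}^2$. Hence $\E|T_1|^q\le C\E\|Y_1\|_{\cH}^{p}<\infty$, and $q=p/2>2$ by \eqref{a:p}. For the coupling coefficients, $J$ is self-adjoint, which gives
\[
Q_J(Y)-Q_J(Y')=\langle Y-Y',J(Y+Y')\rangle_{\cH}.
\]
Therefore $|T_1-T_1^{(r)}|\le\|Y_1-Y_1^{(r)}\|_{\cH}\,(\|Y_1\|_{\cH}+\|Y_1^{(r)}\|_{\cH})$. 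Hölder with $1/q=1/p+1/p$, together with $Y_1^{(r)}\stackrel d=Y_1$, yields $(\E|T_1-T_1^{(r)}|^q)^{1/q}\le C\vartheta_{p\alpha}(r)^\alpha$. These coefficients are again summable after raising to the power $\eta$. Applying the same two steps in the scalar case gives $\E\max_{r\le N}|\sum_{i\le r}T_i|^q\le CN^{q/2}$.

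\textbf{Main obstacle.} The delicate point is the moment order. Lemma~\ref{l:main_invariance_lemma}(ii) as stated requires the exponent $2+\delta$ to be strictly below the available moment, whereas here we want the exponent to equal it: $2q=p$ for $Y$ and $q$ for $T$. I would first check that \cite[Theorem~3.3]{berkes:horvath:rice:2013} holds at the full moment order when the $L^p$ approximability coefficients are summable, since that is how it is formulated. If it is not, the fallback is to replace $q$ by some $q'<p/2$ with $q'>2$ and $\beta<1-1/q'$. This is possible because \eqref{a:p} is a strict inequality. Lemma~\ref{lemma_remainder} only uses the bound through $\beta<1-1/q$ and $1/q<1/2$, so the smaller exponent would suffice there.
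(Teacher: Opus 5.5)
Your proposal is correct and follows the paper's proof essentially step for step. The $Y$-bound is read off from Lemma~\ref{l:main_invariance_lemma}(ii); the $T$-bound uses the same self-adjointness estimate $|T_1-T_1^{(r)}|\le(\|Y_1\|_{\cH}+\|Y_1^{(r)}\|_{\cH})\|Y_1-Y_1^{(r)}\|_{\cH}$ with H\"older and Lemma~\ref{l:dep_transfer}(i); and uniformity in $a$ comes from stationarity. The moment-order issue you flag is genuine, since the paper applies Lemma~\ref{l:main_invariance_lemma}(ii) at an exponent equal to the available moment without comment, and your fallback with $2\vee(1-\beta)^{-1}<q'<p/2$ suffices for the lemma's only use, in Lemma~\ref{lemma_remainder}.
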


\begin{proof}
The second bound follows immediately from Lemma \ref{l:main_invariance_lemma}(ii), applied to
$\{Y_i\}$.
For the first bound,
With $T_i^{(r)}=\|Y_i^{(r)}\|_{\cH}^2 - \E\|Y_{i}\|^2_{\cH},$ using that $J^*=J$, $\|J\|_{\text{op}}=1,$ note
\begin{align*}
|T_1-T_1^{(r)}|&= |\langle Y_1,JY_1\rangle_{\cH}-\langle Y_1^{(r)},JY_1^{(r)}\rangle_{\cH}|\\
&= |\langle Y_1+Y_1^{(r)}, J( Y_1-Y_1^{(r)})\rangle_{\cH}| \\
&\leq (\|Y_1\|_{\cH} + \|Y_1^{(r)}\|_{\cH})\|Y_1-Y_1^{(r)}\|_{\cH}.
\end{align*}
Hence,  %
$$
\vartheta_{q}^T(r):=\left(\E |T_1-T_1^{(r)}|^{q}\right)^{1/{q}} \leq \left(2 \E \|Y_1\|_{\cH}^{2q}\right)^{1/{2q}}\left( \E\\\|Y_1-Y_1^{(r)}\|_{\cH}^{2q}\right)^{1/{2q}}.
$$
Lemma \ref{l:dep_transfer}(i) yields
$$
\sum_{r=1}^\infty\big(\vartheta_q^T(r)\big)^\eta<\infty.
$$
Hence Lemma \ref{l:main_invariance_lemma}(ii), applied to the scalar process $\{T_i\}$, gives
$$
\E \max_{1\leq r\leq N}\Bigg|\sum_{i=1}^rT_i\Bigg|^q \leq C N^{q/2}.
$$
By stationarity, the same bound holds with $\sum_{i=a+1}^{a+r}T_i$ in place of $\sum_{i=1}^rT_i$,
uniformly in $a\geq 0$.
\end{proof}

For the statements ahead, we set
\begin{align}\label{e:U0_1}
\mathcal U_0((a,b];(c,d])&= \tr(A)\left(\frac{1}{b-a} + \frac{1}{d-c}\right) -\sum_{\ell=1}^\infty\lambda_\ell \big(\bar\phi_{\ell,a,b}-\bar\phi_{\ell,c,d}\big)^2,
\end{align}
and
$$
\mcD_{m,0}(k)=m^{-1}\max_{(\ell_1,\ell_2) \in \mathcal S_k }\frac{(k-\ell_2)^2}{g_m(k,\ell_1,\ell_2)}\big|\mcU_0\big((0,m+\ell_1],(m+\ell_2,m+k]\big)\big|.
$$
Lemma \ref{lemma_remainder}, just established, shows the limit behavior of $\cD_m(k)$ is determined by $\cD_{m,0}(k)$.  Now, we set
\begin{equation}\label{e:Dm(t)}
\Gamma_m(t) = \mcD_{m,0}(\lfloor mt \rfloor\vee 2), \qquad t\geq 0.
\end{equation}
We proceed to analyze the behavior of $\Gamma_m$ as $m\to\infty$.

In the sequel, we fix the following notation. Define the linearly interpolated partial-sum process
\begin{equation}\label{e:Zm}
Z_m(t)=\frac{1}{\sqrt m}\sum_{j=1}^{\lfloor mt \rfloor} Y_j+ \frac1{\sqrt m}\left(mt-\lfloor mt\rfloor\right)Y_{\lceil mt\rceil},
\end{equation}
which is clearly an element of $C^\gamma_0([0,T],\cH)$ for any $T>0$ and $0<\gamma \leq 1$.  Let
\begin{equation}\label{e:G}
G=\{G(t),t\geq 0\}
\end{equation}
denote an $\cH$-valued Brownian motion with $\Var G (1) = \Sigma$, where $\Sigma$ is defined in \eqref{e:def_Sigma}.  We remark that, on account of Assumption \ref{a:Lpm}, by Lemma \ref{l:main_invariance_lemma}, after extending the probability space if necesary, there exists a sequence $G_m$ of processes with $G_m\stackrel d = G$, satisfying
\begin{equation}\label{e:G_m}
\|Z_m-G_m\|_{\gamma,T} \to 0,\quad \text{a.s.},
\end{equation}
which we call upon several times below.

For convenience we also write
\begin{align}\label{e:def_Ybar}
\overline Y_{k_1,k_2} = \frac{1}{k_2-k_1} \sum_{j=k_1+1}^{k_2}Y_j ~= \sum_{\ell =1}^\infty\sqrt{|\lambda_\ell|} \bar \phi_{\ell,k_1,k_2}\phi_\ell(\cdot),
\end{align}
so that with $Q_J$ as in \eqref{e:def_QJ}, and $\mathcal U_0$ as in \eqref{e:U0_1}, we have
\begin{align}\label{e:U0_simplification}
\mathcal U_0((a,b];(c,d])%
&= \tr(A)\left(\frac{1}{b-a} + \frac{1}{d-c}\right)  - Q_J(\overline{Y}_{a,b} - \overline{Y}_{c,d}).
\end{align}

The next lemma, below, shows that $\Gamma_m(t)$ can be expressed as a transformation of $Z_m$, up to a discretization and error term.

\begin{lemma}\label{l:Gamma_m_error}

For   $x\in C^\gamma_0([0,1+T],\cH)$, and each  $0\leq t \leq T$, set
\begin{equation}\label{e:def_Psi}
(\Psi x)(t) = \sup_{(u,v)\in \mathcal S(t)} \frac{1}{g(t,1+u,t-v)}\left| \tr (A) \left( \frac{(t-v)^2}{1+u} + (t-v)\right) - Q_J \left( \frac{t-v}{1+u} \widetilde x(u) -  (\widetilde x(t)-\widetilde x(v)\big)\right)\right|.
\end{equation}
where $\widetilde x(t)=x(t+1)$ and $Q_J$ is given in \eqref{e:def_QJ}.
Then,
\begin{align}
\Gamma_m(t) & = (\Psi Z_m)(\lfloor mt \rfloor/m) + \mathcal E_m(t), \label{e:PsiZ_m+E_m}
\end{align}
where $Z_m(t)$ is as in \eqref{e:Zm}, and for every $\beta/2<\gamma<1/2,$
\begin{align}
\sup_{0\leq t \leq T}|\mathcal E_m(t)|& \lesssim m^{\beta-1} |\tr A|+ m^{-1}\| Z_m\|^2_{\infty,1+T}\notag\\
& \qquad + m^{\beta-1}  [Z_m]_{\gamma,T}\| Z_m\|_{\infty,1+T}\notag\\
& \qquad + m^{\beta-2\gamma} [Z_m]_{\gamma,T}^2 \label{e:E_m_bound}
\end{align}

\end{lemma}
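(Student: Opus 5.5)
The plan is to compare, term by term, the discrete quantity $\Gamma_m(t)=\mcD_{m,0}(k)$ with $k=\lfloor mt\rfloor\vee 2$ against $(\Psi Z_m)(k/m)$. The starting point is the representation \eqref{e:U0_simplification}. For $(\ell_1,\ell_2)\in\cS_k$, write $u=\ell_1/m$, $v=\ell_2/m$ and $s=k/m$; by Assumption~\ref{a:admissiblescheme}(i) these are exactly the points $(\lfloor mu'\rfloor/m,\lfloor mv'\rfloor/m)$ with $(u',v')\in\cS(s)$. Because $Z_m$ is linear between grid points, at grid points we have
\[
\overline Y_{0,m+\ell_1}=\frac{\sqrt m}{m+\ell_1}Z_m(1+u),\qquad
\overline Y_{m+\ell_2,m+k}=\frac{\sqrt m}{k-\ell_2}\big(Z_m(1+s)-Z_m(1+v)\big).
\]
Multiplying by the prefactor $m^{-1}(k-\ell_2)^2/g_m$ and using homogeneity of $Q_J$ gives
\[
m^{-1}(k-\ell_2)^2 Q_J\big(\overline Y_{0,m+\ell_1}-\overline Y_{m+\ell_2,m+k}\big)
=Q_J\Big(\tfrac{s-v}{1+u}\widetilde Z_m(u)-(\widetilde Z_m(s)-\widetilde Z_m(v))\Big).
\]
The trace term is handled the same way:
\[
m^{-1}(k-\ell_2)^2\,\tr(A)\Big(\tfrac{1}{m+\ell_1}+\tfrac1{k-\ell_2}\Big)=\tr(A)\Big(\tfrac{(s-v)^2}{1+u}+(s-v)\Big)
\]
exactly. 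The weight $g_m$ equals $g(s,1+u,s-v)$ by definition, so on grid points the summand matches the integrand of $\Psi$ exactly.

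The next step is to account for the difference between taking the supremum over the discrete set $\cS_k$ and over the continuum $\cS(s)$. Note also the truncation by $\cC_k$, which forces $k-\ell_2\ge 2$. For each $(u',v')\in\cS(s)$, let $(u,v)$ be its grid projection, so $|u-u'|,|v-v'|\le 1/m$. The conclusion is that $\mathcal E_m$ is bounded by the supremum, over pairs at distance $O(1/m)$, of the variation of the map
\[
(u,v)\mapsto \frac{1}{g(s,1+u,s-v)}\Big|\tr(A)(\cdots)-Q_J(\cdots)\Big|,
\]
together with the contribution from pairs with $s-v'<2/m$ that are excluded by $\cC_k$. The case $k=2$ forced by $\vee 2$ for small $t$ is absorbed in the same way.

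These variations are then bounded by splitting into the trace part and the $Q_J$ part. For the trace part, perturbations of size $1/m$ in $r=s-v$, divided by $g\gtrsim r^\beta$, are worst at $r\asymp 1/m$. This yields the term $m^{\beta-1}|\tr A|$, using $r^{2-\beta}$ and $r^{1-\beta}$ at scale $1/m$. For the $Q_J$ part, I would use $|Q_J(f)-Q_J(f')|\le \|f-f'\|_\cH(\|f\|_\cH+\|f'\|_\cH)$. Write $f=\frac{r}{1+u}\widetilde Z_m(u)-(\widetilde Z_m(s)-\widetilde Z_m(v))$ and note that $\|\widetilde Z_m(s)-\widetilde Z_m(v)\|\le [Z_m]_\gamma r^\gamma$. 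Then $\|f\|\lesssim r\|Z_m\|_\infty+[Z_m]_\gamma r^\gamma$, while $\|f-f'\|\lesssim m^{-1}\|Z_m\|_\infty+[Z_m]_\gamma m^{-\gamma}$. Dividing by $g\gtrsim r^\beta$ and maximizing over $r\gtrsim 1/m$ produces the four types of products. The condition $\gamma>\beta/2$ makes $r^{2\gamma-\beta}$ largest at small scales, giving $m^{\beta-2\gamma}[Z_m]_\gamma^2$. The cross terms give $m^{\beta-1}[Z_m]_\gamma\|Z_m\|_\infty$ and $m^{-1}\|Z_m\|_\infty^2$; for the latter I would use $r^{1-\beta}\le$ const on bounded ranges together with the $(1+r/L)$ factor in $g$. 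The boundary windows with $r<2/m$ are bounded directly by the same Hölder estimate with $r\asymp 1/m$.

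The main obstacle is this last step. It requires carefully bounding the modulus of $(u,v)\mapsto g^{-1}|\cdots|$ uniformly in $(u,v,s)$ near the singular corner $r\to0$, where $g^{-1}$ blows up like $r^{-\beta}$. It also requires checking that every error term comes out with exactly the stated powers, in particular that the Hölder exponent restriction $\beta/2<\gamma$ suffices to keep the small-$r$ contributions bounded by $m^{\beta-2\gamma}[Z_m]^2_{\gamma,T}$.
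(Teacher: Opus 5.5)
Your plan follows the paper's proof: the exact grid identity from \eqref{e:U0_simplification}, the comparison of the supremum over $\mathcal S_k$ with that over $\mathcal S(t_m)$ plus an excluded-window term (this is \eqref{e:twosupbound}), and the bound \eqref{e:Q_lipschitz} for $Q_J$ combined with H\"older control and $g\gtrsim r^\beta$. The gap is the final estimate, which you identify as the obstacle but do not carry out.

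\textbf{Variation of the weight.} This has to be handled separately from the variation of the numerator, as the paper does by splitting into $\mathcal E_{m,1}+\mathcal E_{m,2}$. Put $r=t_m-v$ and $r_m=t_m-v_m\in[r,r+1/m]$, with $r_m\geq 1/m$. The mean value theorem applied to $r\mapsto r^{-\beta}/(1+r/L)$ gives $|1/g(r_m)-1/g(r)|\lesssim m^{-1}r^{-1-\beta}$ when $r>1/m$; for $r\leq 1/m$ use the crude bound $r^{-\beta}+m^{\beta}$. Multiplying by $|\xi_m|\lesssim |\tr A|\,r+r^2\|Z_m\|_\infty^2+r^{2\gamma}[Z_m]_\gamma^2$ yields $m^{\beta-1}|\tr A|+m^{-1}\|Z_m\|_\infty^2+m^{\beta-2\gamma}[Z_m]_\gamma^2$. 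This uses $2\gamma-1-\beta<0$ and $r^{2\gamma-\beta}\to0$ as $r\downarrow0$. The second fact is the real role of $\gamma>\beta/2$: $r^{2\gamma-\beta}$ is smallest at small scales, not largest as you wrote. The dependence of $g$ on $u$ through $1+r/L$ only adds a relative error $O(r/m)$.

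\textbf{Exponents in the numerator variation.} Your bookkeeping here does not follow from the estimates you wrote down.
\begin{itemize}
\item Take the $m^{-\gamma}[Z_m]_\gamma$ part of $\|f-f'\|_{\cH}$ times the $r\|Z_m\|_\infty$ part of $\|f\|_{\cH}$, and divide by $g\asymp r^\beta$. This gives $m^{-\gamma}r^{1-\beta}[Z_m]_\gamma\|Z_m\|_\infty$. For $r$ of order one it is of order $m^{-\gamma}$, not $m^{\beta-1}$, unless $\gamma\geq 1-\beta$.
\item Likewise, $m^{-\gamma}r^{\gamma-\beta}[Z_m]_\gamma^2$ is maximized at $r\asymp 1/m$, and so produces $m^{\beta-2\gamma}$, only when $\gamma\leq\beta$.
\end{itemize}
The paper's display \eqref{e:E_m1_bound} passes over the same point, so do not try to reproduce it literally.

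The discrepancy is harmless, because Lemma~\ref{l:Dm_to_D} only uses $\sup_t|\mathcal E_m(t)|\to0$. You may simply add $m^{-\gamma}[Z_m]_\gamma\big(\|Z_m\|_\infty+[Z_m]_\gamma\big)$ to the bound. If you want the stated form, restrict to $\beta/2<\gamma\leq\beta$; this range is nonempty and compatible with $\gamma<1/2-1/p$, since $p>2/(1-\beta)$. Then use $Z_m(0)=0$, which gives $\|Z_m\|_{\infty,1+T}\leq(1+T)^\gamma[Z_m]_{\gamma,1+T}$, together with $m^{-\gamma}\leq m^{\beta-2\gamma}$.
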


\begin{proof}
We first set up some notation.  Observe 
 from \eqref{e:U0_simplification},  for any integers $0\leq a<b\leq c<d$ we can reexpress
\begin{align}
& m^{-1}(d-c) ^2\big|\mcU_0\big((a,b],(c,d]\big)\big|\notag\\
&=m^{-1} \left| \tr(A)\left(\frac{(d-c)^2}{b-a} + (d-c) \right)  - (d-c)^2Q_J(\overline{Y}_{0,b} - \overline{Y}_{c,d})\right|\notag\\
  & =\left|\tr(A)\left(\frac{(d-c)^2}{m (b-a)} + \frac{(d-c)}m\right) -Q_J\left( \frac{d-c}{b-a} \frac{1}{\sqrt m}\sum_{j=a+1}^b Y_j -   \frac{1}{\sqrt m}\sum_{j=c+1}^d Y_j \right)\right|\notag
\end{align}
Also, for each $t,u,v\geq 0$, write 
$$
t_m =\lfloor tm\rfloor/m, \quad u_m = \lfloor um\rfloor/m, \qquad v_m = \lfloor vm\rfloor/m.
$$
Hence, when $\ell_1 = \lfloor um\rfloor$ and $\ell_2 = \lfloor v m\rfloor$, 
\begin{align}
&m^{-1} (k-\ell_2)^2\big|\mcU_0\big((0,m+\ell_1],(m+\ell_2,m+k]\big)\big|\notag\\
  & = \left|\tr(A)\left(\frac{(k-\ell_2)^2}{m(m+\ell_1)} + \frac{(k-\ell_2)}m\right) -Q_J\left( \frac{k-\ell_2}{m+\ell_1} \frac{1}{\sqrt m}\sum_{j=1}^{m+\ell_1} Y_j -   \frac{1}{\sqrt m}\sum_{j=m+\ell_2+1}^{m+k} Y_j \right)\right|\notag\\
  & = \left|\tr(A)\left(\frac{(t_m-v_m)^2}{1+u_m} +  (t_m-v_m)\right) -Q_J\left( \frac{t_m-v_m}{1+u_m} \widetilde Z_m(u_m) -   (\widetilde Z_m(t_m)-\widetilde Z_m(v_m)) \right)\right|\notag,
\end{align}
where $\widetilde Z_m(t)=Z_m(1+t)$.  So, for any $u,v,t\geq 0$ set
\begin{equation}\label{e:xi_m}
\xi_m(t,u,v) =\tr (A) \left( \frac{(t-v)^2}{1+u} + (t-v)\right) - Q_J \left( \frac{t-v}{1+u} \widetilde Z_m(u) -  (\widetilde Z_m(t)-\widetilde Z_m(v)\big) \right),
\end{equation} 
and
$$
f_m(t,u,v) = \frac{1}{g(t,1+u,t-v)}\xi_m(t,u,v).
$$
Since
\begin{equation}\label{e:def_set}
\mathcal S_{\lfloor mt \rfloor}= \{(\lfloor mu\rfloor,\lfloor mv\rfloor): (u,v)\in \mathcal S(t_m)\} \cap \mathcal C_{\lfloor mt\rfloor}
\end{equation} 
we have the bound
\begin{align}
|\Gamma_m(t) -(\Psi Z_m)(t_m)|& =  \left|\sup_{(\ell_1,\ell_2)\in \mathcal S_{\lfloor mt\rfloor}} f_m(\lfloor mt \rfloor/m,\ell_1/m,\ell_2/m)- \sup_{(u,v)\in \mathcal S(t_m)}f_m(t_m,u,v)\right|\notag\\
& \leq   \sup_{(u,v)\in \mathcal S(t_m)} \big| f_m(t_m,u_m,v_m)-f_m(t_m,u,v)\big|\notag\\&\qquad\qquad\qquad\qquad  + \sup_{\substack{ (u,v)\in \mathcal S(t_m)\\ (\lfloor mu\rfloor,\lfloor mv\rfloor)\notin \mathcal C_{\lfloor mt\rfloor}}}|f_m(t_m,u,v)| \label{e:twosupbound}\\
& =\mathcal E_m(t).\notag
\end{align}
It remains to bound $\mathcal E_m(t)$.  We have %
\begin{align*}
&\big| f_m(t_m,u_m,v_m)-f_m(t_m,u,v)\big|\\
& \leq  \frac{1}{g(t_m,1+u_mt_m-v_m)}\big|\xi_m(t_m,u_m,v_m)-\xi_m(t_m,u,v)\big|\\
& \qquad + \big|\xi_m(t_m,u,v)\big|\left| \frac{1}{g(t_m,1+u_mt_m-v_m)} - \frac{1}{g(t_m,1+u_m,t_m-v)}\right|\\
& = \mathcal E_{m,1}(t,u,v) + \mathcal E_{m,2}(t,u,v).
\end{align*}
We proceed to bound $ \mathcal E_{m,1}$.  Since $|(t_m-v_m)-(t-v)| \leq m^{-1}$ and $|u-u_m|\leq C m^{-1}$, we readily find
\begin{align*}
\left|\left( \frac{(t_m-v_m)^2}{1+u_m} + (t_m-v_m)\right) -\left(\frac{(t_m-v)^2}{1+u} + (t_m-v)\right) \right| \leq C m^{-1}.
\end{align*}
Similarly, with
$$
q_m(t,u,v)= \frac{t-v}{1+u} \widetilde Z_m(u) -  (\widetilde Z_m(t)-\widetilde Z_m(v)),
$$
since $\|J\|_{\text{op}}=1$, and
\begin{equation}\label{e:Q_lipschitz}
|Q_J(f)-Q_J(g)| =  |\langle f-g, Jf\rangle_{\cH} - \langle g, J(g-f)\rangle_{\cH}| \leq  (\|f\|_{\cH}+\|g\|_{\cH})\|f-g\|_{\cH},
\end{equation}
we find
\begin{align*}
&\left| Q_J\big(q_m(t_m,u_m,v_m)\big)-Q_J\big(q_m(t_m,u,v)\big)\right|\\
 &\qquad \leq\big( \| q_m(t_m,u_m,v_m)\|_{\cH} + \|q_m(t_m,u,v)\|_{\cH}\big)\big\| q_m(t_m,u_m,v_m)-q_m(t_m,u,v)\big\|_{\cH}.
\end{align*}
Now, using the crude bound $\|\widetilde Z_m(u)-\widetilde Z_m(u_m)\| \leq m^{-\gamma}[Z_m]_{\gamma,1+T}$,
\begin{align*}
\left\|q_m(t_m,u_m,v_m)-q_m(t_m,u,v)\right\|_{\cH} &= \left|  \frac{t_m-v_m}{1+u_m} -  \frac{t_m-v}{1+u}\right|\| \widetilde Z_m(u)\|_{\cH}\\
& \quad +  \frac{t_m-v}{1+u}\big\| \widetilde Z_m(u_m)-\widetilde Z_m(u)\big\|_{\cH}\\
& \quad  + \big\|\widetilde Z_m(v_m)-\widetilde Z_m(v)\big\|_{\cH}\\
& \lesssim  \frac{1}{m} \| Z_m\|_{\infty,T}+m^{-\gamma}[Z_m]_{\gamma,1+T}.
\end{align*}
Moreover, since 
$$
\|q_m(t,u,v)\|_{\cH} \leq (t-v) \| \widetilde Z_m\|_{\infty,T}+  (t-v)^\gamma [Z_m]_{\gamma,T},
$$
we obtain
\begin{align*}
&\left| Q_J\big(q_m(t_m,u_m,v_m)\big)-Q_J\big(q_m(t_m,u,v)\big)\right|\\
& \lesssim  m^{-1}\left((t_m-v_m) \| Z_m\|^2_{\infty,1+T}  + (t_m-v_m)^\gamma [Z_m]_{\gamma,T}\| Z_m\|_{\infty,1+T}\right)\\
& \qquad + \frac1{m^\gamma}\left((t_m-v_m) [Z_m]_{\gamma,T}\| Z_m\|_{\infty,1+T}    + (t_m-v_m)^\gamma [Z_m]_{\gamma,T}^2\right).
\end{align*}
Then, since $t_m-v_m\geq m^{-1}$, we have $g(t_m,1+u_m,t_m-v_m)\geq  C (t_m-v_m)^{\beta}\geq m^{-\beta}.$ Also, clearly  $t_m-v_m\leq t_m \leq T$, giving  the bound
\begin{align}
\notag&\sup_{0\leq t \leq T}\sup_{(s,t)\in \mathcal S(t)}\mathcal E_{m,1}(t,u,v)\\
 &\qquad \lesssim  \frac{1}{m^{1-\beta}}+  \frac T m\| Z_m\|^2_{\infty,1+T}  + \frac{1}{   m^{1-\beta}} [Z_m]_{\gamma,T}\| Z_m\|_{\infty,1+T}  + \frac{1}{m^{2\gamma-\beta}} [Z_m]_{\gamma,T}^2.\label{e:E_m1_bound}
\end{align}
Now we proceed to bound $\mathcal E_{m,2}(t,u,v)$ and the remaining term in \eqref{e:twosupbound} with ${(\lfloor mu\rfloor,\lfloor mv\rfloor)\notin \mathcal C_{\lfloor mt\rfloor}}$.  First, since $|Q_J(f)| \leq \| f\|_{\cH}^2$, by arguments similar to those for $\mathcal E_{m,1}(t,u,v)$, we find 
\begin{align*}
  |\xi_m(t_m,u,v)| & \leq C \left| |\tr A| (t_m-v)  + (t_m-v)^2 \| \widetilde Z_m(u)\|^2_{\cH} + (t_m-v)^{2\gamma}[Z_m]_{\gamma,T}^2\right| .
\end{align*}
For $(u,v)\in \mathcal S(t_m)$ with $(\lfloor mu\rfloor,\lfloor mv\rfloor)\notin \mathcal C_{\lfloor mt\rfloor}$, $0\leq t_m-u\lesssim m^{-1}$, and  when $0\leq t_m \leq T$, ${g(t_m,1+u,1+v)}\gtrsim (t_m-v)^\beta \gtrsim m^\beta$, so we obtain
\begin{equation}\label{e:nonC_mt_terms}
\sup_{\substack{ (u,v)\in \mathcal S(t_m)\\ (\lfloor mu\rfloor,\lfloor mv\rfloor)\notin \mathcal C_{\lfloor mt\rfloor}}}|f_m(t_m,u,v)|  \lesssim m^{\beta-1} |\tr A| +m^{\beta-2}\|Z_m\|^2_{\infty,1+T} + +m^{\beta-2\gamma}\|Z_m\|^2_{\infty,1+T}.
\end{equation}
Turning to $\mathcal E_{m,2}(t,u,v)$, since $L_m:=1+u_m\geq1$,
\begin{align*}
\left|
\frac{1}{g(t_m,1+u_m,t_m-v_m)} -
\frac{1}{g(t_m,1+u_m,t_m-v)}
\right|
&\leq
C\left|
\frac{(t_m-v_m)^{-\beta}}{1+(t_m-v_m)/L_m} -\frac{(t_m-v)^{-\beta}}{1+(t_m-v)/L_m}
\right|  \\
&\leq
C\times 
\begin{cases}
(t_m-v)^{-\beta}+m^\beta, & 0<t_m-v\leq m^{-1},\\
m^{-1}(t_m-v)^{-(1+\beta)}, & t_m-v>m^{-1}.
\end{cases}
\end{align*}
(The first case follows from $t_m-v_m\geq m^{-1}$, while the second follows from the mean value theorem, since $
|\frac{d}{dr} \frac{r^{-\beta}}{1+r/L}|
\leq C r^{-(1+\beta)}$).
Hence, when $t_m-v\leq 1/m$, since  for any $s>\beta$, one has $(t_m-v)^s ((t_m-v)^{-\beta} + m^{\beta} )\leq 2m^{-(s-\beta)}$, we find
\begin{align}\label{e:E_m2_bound1}
|\mathcal E_{m,2}(t,u,v)|  \leq C \left| |\tr A| m^{\beta-1}+ m^{\beta-2} \| \widetilde Z_m(u)\|^2_{\cH} + m^{\beta-2\gamma}[Z_m]_{\gamma,T}^2\right|.
\end{align}
Similarly, when $t_m-v>1/m$,  we readily deduce
\begin{align}\label{e:E_m2_bound2}
|\mathcal E_{m,2}(t,u,v)|  \leq C \left| |\tr A| m^{\beta-1}+ m^{-1} \| \widetilde Z_m(u)\|^2_{\cH} + m^{\beta-2\gamma}[Z_m]_{\gamma,T}^2\right| .
\end{align}
Taking suprema in \eqref{e:E_m2_bound1} and \eqref{e:E_m2_bound2} and combining with \eqref{e:E_m1_bound} and \eqref{e:nonC_mt_terms} gives the bound \eqref{e:E_m_bound}.
\end{proof}

The next lemma shows $\Psi$ in \eqref{e:def_Psi} is continuous at each point  $x\in C^\gamma_0([0,1+T],\cH)$, in an appropriate sense.   In view of the weak convergence statement \eqref{e:G_m}, this will in turn imply  weak convergence of $\Gamma_m=\Psi Z_m$ in $C^\gamma_0([0,T],\R)$, for any $T>0$.

\begin{lemma} \label{l:Psi_continuous}  Let $T>0$  and suppose $\gamma>\beta/2$.  Then,
\begin{enumerate}[label=(\roman*)]
\item If $x\in C^\gamma_0([0,1+T],\cH)$, then $\Psi x \in C([0,T],\R)$.
\item For any $x,y\in C^\gamma_0([0,1+T],\cH)$,%
\begin{equation}\label{e:Psi_lipschitz}
\sup_{0\leq t \leq T}\left |(\Psi x)(t) -(\Psi y)(t)\right| \leq C_{T,\gamma}\|x-y\|_{\gamma,1+T} \big(\|x\|_{\gamma,1+T}+\|y\|_{\gamma,1+T}\big).
\end{equation}
\end{enumerate}
\end{lemma}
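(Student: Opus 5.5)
The strategy is to control the integrand of $\Psi$ pointwise by a quantity that is small when $t-v$ is small. The weight $g$ degenerates like $(t-v)^\beta$ near the diagonal $v=t$, and the assumption $\gamma>\beta/2$ is what compensates for this.

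For $x\in C^\gamma_0([0,1+T],\cH)$ and $(u,v)\in\mathcal S(t)$, write $r=t-v$ and
\[
q_x(t,u,v)=\frac{r}{1+u}\widetilde x(u)-\big(\widetilde x(t)-\widetilde x(v)\big),
\]
and let
\[
f_x(t,u,v)=g(t,1+u,r)^{-1}\big[\tr(A)\big(r^2/(1+u)+r\big)-Q_J(q_x(t,u,v))\big].
\]
Since $1+u\geq1$, the Hölder bound gives
\[
\|q_x(t,u,v)\|_{\cH}\leq r\|x\|_{\infty,1+T}+r^\gamma[x]_{\gamma,1+T}\lesssim r^\gamma\|x\|_{\gamma,1+T},
\]
because $r\leq T$. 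By \eqref{a:g}, $g(t,1+u,r)\geq c_T r^\beta$ on $[0,T]$, since $(1+t)^{1-\beta}\geq1$ and the log factor is bounded below.

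Part (ii) follows from these bounds. Apply the Lipschitz estimate \eqref{e:Q_lipschitz} for $Q_J$, with $\|q_x-q_y\|_{\cH}\lesssim r^\gamma\|x-y\|_{\gamma,1+T}$. The trace terms cancel in the difference, which yields
\[
|f_x-f_y|\lesssim r^{2\gamma-\beta}\|x-y\|_{\gamma,1+T}\big(\|x\|_{\gamma,1+T}+\|y\|_{\gamma,1+T}\big).
\]
Since $2\gamma>\beta$ and $r\leq T$, the factor $r^{2\gamma-\beta}$ is bounded by $T^{2\gamma-\beta}$. Finally, $|\sup f_x-\sup f_y|\leq\sup|f_x-f_y|$, and the right-hand side is uniform in $t$, which gives \eqref{e:Psi_lipschitz}.

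For part (i), I first use the same bounds to show that $|f_x(t,u,v)|\lesssim r^{1-\beta}+r^{2\gamma-\beta}\|x\|^2_{\gamma,1+T}$. This extends $f_x$ continuously by $0$ on the set $\{r=0\}$. Away from $r=0$, $f_x$ is jointly continuous in $(t,u,v)$, because $x$, $Q_J$ and $1/g$ are continuous there. Hence $f_x$ is continuous, and therefore uniformly continuous, on the compact set $\{(t,u,v):0\leq v\leq t\leq T,\ 0\leq u\leq v\}$.

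It remains to pass continuity through the supremum over the moving region $\mathcal S(t)$. Here I use the parametrization \eqref{e:cont_param_cond}: $(\Psi x)(t)=\sup_{k\in K}f_x\big(t,h_{\cS}(t,k)\big)$. Condition \eqref{e:cont_param_cond2} says that $h_{\cS}(t,k)$ is continuous in $t$ uniformly in $k\in K$. Combined with the uniform continuity of $f_x$, this shows that $t\mapsto f_x(t,h_{\cS}(t,k))$ is equicontinuous in $k$. The supremum of an equicontinuous family is continuous, so $\Psi x\in C([0,T],\R)$.

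The main obstacle is this boundary region $v\to t$. Both the weight $g$ and the quadratic form degenerate there, so continuity must come from the rate $r^{2\gamma-\beta}\to0$ rather than from direct continuity of the quotient. One also has to make sure that the sign-indefinite $Q_J$ is handled only through the bound $|Q_J(f)|\leq\|f\|_{\cH}^2$ and the Lipschitz estimate \eqref{e:Q_lipschitz}.
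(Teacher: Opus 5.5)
Your proposal is correct and follows the paper's proof essentially step for step. Part (ii) goes through $\|q_x\|_{\cH}\lesssim r^\gamma\|x\|_{\gamma,1+T}$, the Lipschitz bound \eqref{e:Q_lipschitz} and $g\gtrsim r^\beta$. Part (i) goes through the continuous extension by $0$ on $\{v=t\}$ (from $r^{1-\beta}+r^{2\gamma-\beta}\to0$), uniform continuity on the compact region, and the parametrization $h_{\cS}$ to carry continuity through the supremum.

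The only slip is that your $f_x$ drops the absolute value, since in fact $\Psi x=\sup|f_x|$. This is harmless, because $\bigl||a|-|b|\bigr|\le|a-b|$ and $|f_x|$ is continuous whenever $f_x$ is.
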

\begin{proof}
For any $x\in C^\gamma_0([0,1+T],\cH)$, and  $0\leq u \leq v \leq T$ define (where again we write $\widetilde x(u)=x(1+u)$ for convenience)
$$
H_x(t,u,v)
=  \frac{1}{g(t,1+u,t-v)}\bigg|\tr(A)\Big(\frac{(t-v)^2}{1+u}+(t-v)\Big)
- Q_J\Big(\frac{t-v}{1+u}\,\widetilde x(u) - (\widetilde x(t)-\widetilde x(v))\Big)\bigg|,
$$
so that $(\Psi x)(t)=\sup_{(u,v)\in\mathcal S(t)}H_x(t,u,v)$.  We first show (i).   By Assumption~\ref{a:admissiblescheme}(ii), we have 
$$
(\Psi x)(t)=\sup_{r\in K} H_x\big(t, h_{\cS}(t,r)\big),
$$
where $K$ is compact and $h_{\cS}$ is uniformly continuous. Observe that the map $(t,u,v)\mapsto H_x(t,u,v)$ is continuous on the restricted region 
\begin{equation}\label{e:restricted_triangle}
\{(t,u,v): 0\leq u \leq v < t \leq T\}.
\end{equation}
Indeed, since $x$ is continuous and $Q_J$ is continuous,  the numerator of $\Psi$ is clearly continuous, and when $0<t-v<t$, then $g(t,1+u,t-v)>0$, giving continuity on the region \eqref{e:restricted_triangle}.  We claim continuity of $H$ extends to the closed region
\begin{equation}\label{e:full_triangle}
\{(t,u,v): 0\leq u \leq v \leq t \leq T\}.
\end{equation}
Indeed, similar to the proof of Lemma \ref{l:Gamma_m_error}, observe
\begin{equation}\label{e:bound_on_argument_of_Q}
\Bigg\|\frac{t-v}{1+u}\,\widetilde x(u) - (\widetilde x(t)-\widetilde x(v))\Bigg\|_{\cH} \leq (t-v)\|x\|_{\infty,1+T} + (t-v)^\gamma [x]_{\gamma,1+T},
\end{equation}
which, since $|Q_J(f)|=|\langle f,Jf\rangle_{\cH}|\leq\|f\|_{\cH}^2$,  gives%
$$
\Bigg|Q_J\Big(\frac{t-v}{1+u}\,\widetilde x(u) - (\widetilde x(t)-\widetilde x(v))\Big)\Bigg| \lesssim (t-v)^2\|x\|^2_{\infty,1+T} + (t-v)^{2\gamma} [x]^2_{\gamma,1+T}.
$$
Clearly, on the region \eqref{e:full_triangle},
$$
\Bigg|\frac{(t-v)^2}{1+u}+(t-v)\Bigg| \lesssim t-v.
$$
Hence, using that $g(t,1+u,t-v)\gtrsim (t-v)^\beta$, we readily find
$$
H_x(t,u,v) \lesssim  (t-v)^{1-\beta} +(t-v)^{2-\beta}\|x\|^{2}_{\infty,1+T}  +(t-v)^{2\gamma-\beta} [x]^2_{\gamma,1+T},
$$
i.e. $H_x(t,u,v)\to 0$ as $t-v\to 0$ uniformly on \eqref{e:restricted_triangle},  and hence $H$ is continuous on the compact region \eqref{e:full_triangle}.
Set
$$
R_x(t,r):=H_x\big(t,h_{\cS}(t,r)\big), \qquad (t,r)\in[0,T]\times K.
$$
Since $h_{\cS}$ is itself continuous and $[0,T]\times K$ is compact,  $R_x(\cdot,\cdot)$ is uniformly continuous on $[0,T]\times K$.
Therefore, if $t_n\to t$,
$$
|(\Psi x)(t_n)-(\Psi x)(t)| \leq \sup_{r\in K}|R_x(t_n,r)-R_x(t,r)|\to 0,
$$
giving (i).
We now show (ii). Let $x,y\in C^\gamma_0([0,1+T],\cH)$.  For any $(u,v)\in\mathcal S(t)$, set
$$
\Delta(t,u,v)
= \frac{t-v}{1+u}\,\widetilde y(u) - (\widetilde y(t)-\widetilde y(v))
- \bigg(\frac{t-v}{1+u}\,\widetilde x(u) - (\widetilde x(t)-\widetilde x(v))\bigg),
$$
where $\widetilde y(t)=y(1+t)$.  
Note
$$
\|\Delta(t,u,v)\| \leq (t-v)\|x-y\|_{\infty,1+T} + (t-v)^\gamma[x-y]_{\gamma,1+T}.
$$
From the bound \eqref{e:Q_lipschitz}, 
\begin{align*}
&\bigg|Q_J\Big(\frac{t-v}{1+u}\,\widetilde x(u) - (\widetilde x(t)-\widetilde x(v))\Big) - Q_J\Big(\frac{t-v}{1+u}\,\widetilde y(u) - (\widetilde y(t)-\widetilde y(v))\Big)\bigg|\\
&\qquad \leq \| \Delta(t,u,v)\|_{\cH} \left(\bigg\| \frac{t-v}{1+u}\,\widetilde y(u) - (\widetilde y(t)-\widetilde y(v)) \bigg\| + \bigg\| \frac{t-v}{1+u}\,\widetilde x(u) - (\widetilde x(t)-\widetilde x(v))\bigg\| \right)\\
& \qquad \leq \Big((t-v)\|x-y\|_{\infty,1+T} + (t-v)^\gamma[x-y]_{\gamma,1+T}\Big)\Big(\big((t-v)+(t-v)^\gamma\big) \big(\|x\|_{\gamma,1+T} +\|y\|_{\gamma,1+T}\big)\Big),
\end{align*}
where we used \eqref{e:bound_on_argument_of_Q} on the last line.  Hence, using that  $t-v \lesssim (t-v)^\gamma$ on the region \eqref{e:full_triangle}, we have
\begin{align*}
&|H_x(t,u,v) - H_y(t,u,v)|\\
 & \quad \leq \frac{1}{g(t,1+u,t-v)}\bigg|Q_J\Big(\frac{t-v}{1+u}\,\widetilde x(u) - (\widetilde x(t)-\widetilde x(v))\Big) - Q_J\Big(\frac{t-v}{1+u}\,\widetilde y(u) - (\widetilde y(t)-\widetilde y(v))\Big)\bigg|\\
 & \quad \lesssim \frac{(t-v)^{2\gamma}}{g(t,1+u,t-v)} \left(\|x-y\|_{\infty,1+T} +[x-y]_{\gamma,1+T}\right)\big(\|x\|_{\gamma,1+T} +\|y\|_{\gamma,1+T}\big)\\
  & \quad \lesssim  (t-v)^{2\gamma-\beta} \| x-y\|_{\gamma,1+T}\big(\|x\|_{\gamma,1+T} +\|y\|_{\gamma,1+T}\big),
\end{align*}
from which we conclude
\begin{align*}
\left |(\Psi x)(t) -(\Psi y)(t)\right|  \leq  \sup_{(u,v)\in \mathcal S(t)} |H_x(t,u,v) - H_y(t,u,v)|\lesssim\| x-y\|_{\gamma,1+T}\big(\|x\|_{\gamma,1+T} +\|y\|_{\gamma,1+T}\big).
\end{align*}
Taking the supremum over $0\leq t \leq T$, (ii) readily follows. 
\end{proof}
The next lemma establishes $\Gamma_{m}(t)$ converges weakly in $C([0,T],\cH)$.
\begin{lemma}\label{l:Dm_to_D} Let $\Gamma_m$ be as in \eqref{e:Dm(t)}, and let $G_m$ be the sequence \eqref{e:G_m}.  Then, for any $T>0$, as $m\to\infty$,
$$
\|\Gamma_{m} - \Psi { G}_{m}\|_{\infty,T}=o_\P(1).
$$
\end{lemma}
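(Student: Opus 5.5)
We split $\Gamma_m-\Psi G_m$ via Lemma~\ref{l:Gamma_m_error}:
$$
\Gamma_m(t)-(\Psi G_m)(t)=\big[(\Psi Z_m)(t_m)-(\Psi G_m)(t_m)\big]+\big[(\Psi G_m)(t_m)-(\Psi G_m)(t)\big]+\mathcal E_m(t),
$$
with $t_m=\lfloor mt\rfloor/m$. Fix $\gamma$ with $\beta/2<\gamma<\tfrac12-\tfrac1p$. This range is nonempty because Assumption~\ref{a:Y_moment} gives $p>2/(1-\beta)$, hence $\tfrac12-\tfrac1p>\beta/2$. With this $\gamma$ both Lemma~\ref{l:main_invariance_lemma}(iii) and Lemmas~\ref{l:Gamma_m_error} and \ref{l:Psi_continuous} apply. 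Lemma~\ref{l:main_invariance_lemma}(iii) is used with $V_j=Y_j$, which is legitimate by Lemma~\ref{l:dep_transfer}(i) and Assumption~\ref{a:Lpm}. All norms are taken on $[0,1+T]$, so we invoke \eqref{e:G_m} with horizon $1+T$.

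\textbf{First term.} By Lemma~\ref{l:Psi_continuous}(ii),
$$
\sup_{t\le T}|(\Psi Z_m)(t_m)-(\Psi G_m)(t_m)|\le C\|Z_m-G_m\|_{\gamma,1+T}\big(\|Z_m\|_{\gamma,1+T}+\|G_m\|_{\gamma,1+T}\big).
$$
The first factor tends to $0$ a.s. by \eqref{e:G_m}. Since $G_m\stackrel d=G$ and Brownian motion has $\gamma$-Hölder paths for $\gamma<1/2$, $\|G_m\|_{\gamma,1+T}=O_\P(1)$. Then $\|Z_m\|_{\gamma,1+T}\le \|G_m\|_{\gamma,1+T}+o(1)=O_\P(1)$, so the first term is $o_\P(1)$.

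\textbf{Third term.} On the right of \eqref{e:E_m_bound}, $\|Z_m\|_{\infty,1+T}$ and $[Z_m]_{\gamma,T}$ are $O_\P(1)$ by the previous step. The bound therefore becomes
$$
O_\P\big(m^{\beta-1}+m^{-1}+m^{\beta-1}+m^{\beta-2\gamma}\big)=o_\P(1),
$$
using $\beta<1$ and $2\gamma>\beta$.

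\textbf{Second term (main obstacle).} We need the time-discretization error $\sup_{t\le T}|(\Psi G_m)(t_m)-(\Psi G_m)(t)|$ to vanish. By Lemma~\ref{l:Psi_continuous}(i), $\Psi x$ is continuous, hence uniformly continuous, on $[0,T]$ for each fixed $x$. But $G_m$ changes with $m$, so this does not immediately give a uniform statement. Because $G_m\stackrel d=G$, the law of $\sup_{|s-t|\le 1/m}|(\Psi G_m)(s)-(\Psi G_m)(t)|$ equals that of $\omega_{\Psi G}(1/m)$, the modulus of continuity of $\Psi G$ at $1/m$. (This needs only measurability of $\Psi$, which follows from its continuity in Lemma~\ref{l:Psi_continuous}(ii).) Since $G$ is a.s. in $C_0^\gamma([0,1+T],\cH)$, $\Psi G$ is a.s. continuous on $[0,T]$, so $\omega_{\Psi G}(1/m)\to0$ a.s. Convergence in distribution to the constant $0$ then gives the $o_\P(1)$ bound.

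The delicate point is the continuity of $H_x$ near $t=v$ in the proof of Lemma~\ref{l:Psi_continuous}(i), where $g$ vanishes. That proof handles it through the bound $H_x\lesssim (t-v)^{1-\beta}+(t-v)^{2\gamma-\beta}(\cdots)$, which is exactly where $\gamma>\beta/2$ is needed. Combining the three terms proves the lemma.
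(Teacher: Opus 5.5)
Your proof is correct and follows essentially the same argument as the paper. Both split the difference into the discretization error $\mathcal E_m$ from Lemma~\ref{l:Gamma_m_error}, the Lipschitz-type bound of Lemma~\ref{l:Psi_continuous}(ii) for $\Psi Z_m-\Psi G_m$, and the modulus of continuity of $\Psi G_m\stackrel d=\Psi G$. Your differences are minor: you compare at $t_m$ directly instead of routing the time-discretization term through $\Psi Z_m$, you make the admissible range $\beta/2<\gamma<\tfrac12-\tfrac1p$ explicit, and you correctly state the bounds in probability (via $\|G_m\|_{\gamma,1+T}=O_\P(1)$) rather than almost surely.
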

\begin{proof}
Write
\begin{align*}
&\|\Gamma_{m} - \Psi { G}_{m}\|_{\infty,T}\\
& \leq   \sup_{0\leq t \leq T}| \Gamma_{m}(t) - \Psi { Z}_{m}(\lfloor mt\rfloor/m)| +\sup_{0\leq t \leq T}|  \Psi { Z}_{m}(t) - \Psi { Z}_{m}(\lfloor mt\rfloor/m)|+\| \Psi Z_{m} -\Psi { G}_{m}\|_{\infty,T}.
\end{align*}
From Lemma \ref{l:Gamma_m_error} we have
$$
\sup_{0\leq t \leq T}| \Gamma_{m}(t) - \Psi { Z}_{m}(\lfloor mt\rfloor/m)| = \sup_{0\leq t \leq T}|\mathcal E_m(t)| \to 0 \quad \text{a.s.}
$$ Now, from Lemma \ref{l:Psi_continuous}, for an a.s. finite random variable $ C_0$,
$$
\| \Psi Z_{m} -\Psi { G}_{m}\|_{\infty,T}\lesssim\|Z_m-G_m\|_{\gamma,1+T} \big(\|Z_m\|_{\gamma,1+T}+\|G_m\|_{\gamma,1+T}\big)\to 0,\quad \textnormal{a.s.}
$$
Furthermore, %
\begin{align*}
&|(\Psi Z_{m})(\lfloor m t \rfloor/m) -(\Psi Z_{m})(t)|\\
&\leq \sup\{|\Psi Z_{m}(s)-\Psi Z_{m}(t)|: |s-t| \leq 1/m, s,t\in [0,T]\}\\
&\leq 2\| \Psi Z_{m} -\Psi { G}_{m}\|_{\infty,T} + \sup\{ |\Psi { G}_{m}(s)-\Psi { G}_m(t)|: |s-t| \leq 1/m, s,t\in [0,T]\}\\
& \stackrel\P \to 0,\
\end{align*}
where we used that %
\begin{align*}
&\P\left(\sup\{ |\Psi { G}_{m}(s)-\Psi { G}_m(t)|: |s-t| \leq 1/m, s,t\in [0,T]\}>\delta\right)\\
&\quad = \P\left(\sup\{ |\Psi { G}(s)-\Psi { G}(t)|: |s-t| \leq 1/m, s,t\in [0,T]\}>\delta\right)  \to 0,\quad m\to\infty,
\end{align*}
which holds because $ { G}$ is a.s.~continuous and Lemma \ref{l:Psi_continuous} gives that $ \Psi { G}(t)$ is a.s.~continuous.

\end{proof}
Next, we set
\begin{equation}\label{e:gammadef} \Gamma(t)= \Psi G(t),\quad t>0,
\end{equation}
where $G$ is as in \eqref{e:G}, and we set $\Gamma(0)=0$. 

\begin{lemma}\label{l:Gamma_tail}
For every $\delta>0$,
\begin{equation}\label{l:Gamma_tail_1}
\lim_{T\to\infty}\limsup_{m\to\infty}
\P\left\{\sup_{t\ge T}\Gamma_m(t)>\delta\right\}=0,
\end{equation}
and
\begin{equation}\label{l:Gamma_tail_2}
\lim_{T\to\infty}
\P\left\{\sup_{t\ge T}\Gamma(t)>\delta\right\}=0.
\end{equation}
\end{lemma}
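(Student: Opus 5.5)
We prove both statements by the same dyadic blocking argument already used in the proof of Lemma~\ref{lemma_remainder}. The starting point is that the growth factor $(1+t)^{1-\beta}\log^{2-\beta}(e+t)$ in $g$ beats the natural fluctuation size of the numerator.

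\textbf{Reduction of the detector.} Write $R=t-v$ and $L=1+u$. By \eqref{e:U0_simplification} and the computation opening the proof of Lemma~\ref{l:Gamma_m_error}, for $k=\lfloor mt\rfloor$ the detector $\Gamma_m(t)$ is bounded by the supremum over $(u,v)$ of
\[
\frac{1}{g(t,L,R)}\Big(|\tr A|\big(R^2/L+R\big)+\big\|\tfrac{R}{L}\widetilde Z_m(u)\big\|_{\cH}^2+\big\|\widetilde Z_m(t)-\widetilde Z_m(v)\big\|_{\cH}^2\Big),
\]
using $|Q_J(f)|\le\|f\|_{\cH}^2$ and $(a+b)^2\le 2a^2+2b^2$. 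The three terms are handled separately.

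\textbf{Deterministic term.} By \eqref{e:g_basic_bound}, $(R+R^2/L)/g\le R^{1-\beta}(1+t)^{\beta-1}\log^{\beta-2}(e+t)\le \log^{\beta-2}(e+t)$. This tends to $0$ uniformly as $t\to\infty$.

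\textbf{Left-window term.} Since $R\le t$ and $R/L\le R$,
\[
\frac{(R/L)^2\|\widetilde Z_m(u)\|^2}{g}\le \frac{R}{1+R/L}\cdot\frac{\|\widetilde Z_m(u)\|^2/L^2}{(1+t)^{1-\beta}R^{\beta-1}\log^{2-\beta}(e+t)}\le \frac{R^{1-\beta}}{(1+t)^{1-\beta}}\cdot\frac{L\,\|(mL)^{-1/2}S_{mL}\|^2}{\log^{2-\beta}(e+t)}.
\]
Here $S_{b}=\sum_{j\le b}Y_j$, and we used $\widetilde Z_m(u)=m^{-1/2}S_{m+\ell_1}$. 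The factor $L\le 1+t$ is too large, so we instead use $R/(1+R/L)\le L$ in a different split. We need $\sup_{b\ge mT}\|S_b\|^2/(b\log^{2-\beta}(e+b/m))\to0$, uniformly in $m$. This follows from Lemma~\ref{l:increment_maximal}: $\E\max_{b\le 2^{v+1}}\|S_b\|^{2q}\le C2^{vq}$, so $\P(\max_{2^v\le b<2^{v+1}}\|S_b\|^2>\delta 2^v \log^{2-\beta}(e+2^v/m))\le C\delta^{-q}\log^{-q(2-\beta)}(e+2^v/m)$. With $q(2-\beta)>1$, summing over $v\ge\log_2(mT)$ gives $\sum_{j\ge \log_2 T}C j^{-q(2-\beta)}\to0$ as $T\to\infty$, uniformly in $m$.

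\textbf{Right-window increment term (main obstacle).} We must show
\[
\sup_{k\ge mT}\max_{1\le r\le k}\frac{\|S_{m+k}-S_{m+k-r}\|^2}{m\,(k/m)^{1-\beta}(r/m)^{\beta}(1+r/(m+\ell_1))\log^{2-\beta}(e+k/m)}
\]
is small. Drop the factor $(1+r/L)\ge1$ and block both $k\in[2^v,2^{v+1})$ and $r\in[2^j,2^{j+1})$ exactly as for $B_{2,v}$. The blockwise quantity is $2^{-v(1-\beta)}m^{1-\beta}\,2^{-j\beta}m^{\beta}\,m^{-1}b_2(v,j)\log^{\beta-2}(e+2^v/m)$. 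Lemma~\ref{l:increment_maximal} gives $\|b_2(v,j)\|_{q}\le C2^{v/q}2^{j}$, using a union over the $2^v$ starting points. This yields a contribution
\[
\lesssim (2^v/m)^{-(1-\beta-1/q)}\, m^{-(1-\beta-1/q)}\cdots
\]
The crude union bound over starting points loses $2^{v/q}$. A cleaner route is to bound the increment by $2\max_{b\le m+2^{v+1}}\|S_b\|$ when $r\ge 2^{v}/\log$, and to use the per-start bound otherwise. I expect the delicate point to be exactly this: showing that summation over $v\ge \log_2(mT)$ converges to a tail vanishing as $T\to\infty$, uniformly in $m$. The condition $p>2/(1-\beta)$, i.e.\ $1-\beta>2/p=1/q$, is precisely what makes the geometric factor $2^{-v(1-\beta-1/q)}$, rescaled by $m$, summable with a tail in $T$ alone.

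\textbf{Limit process.} For \eqref{l:Gamma_tail_2}, we apply the same bounds with $Z_m$ replaced by the Brownian motion $G$. The needed inputs are Gaussian moment bounds $\E\sup_{s\le t}\|G(s)\|^{2q}\le Ct^q$ and the Brownian increment analog of Lemma~\ref{l:increment_maximal}, via scaling. Alternatively, we obtain it from \eqref{l:Gamma_tail_1}, Lemma~\ref{l:Dm_to_D} on each $[T,T']$, and Fatou/portmanteau, letting $T'\to\infty$.
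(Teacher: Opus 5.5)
\textbf{Gap: the right-window increment term.} Your reduction and the trace term are fine. The proof of \eqref{l:Gamma_tail_1} is nevertheless incomplete, because the right-window increment term, which you flag as the main obstacle, is never controlled.

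Your union bound over all $2^v$ endpoints $k$ gives a block contribution of order $2^{v/q}\log^{\beta-2}(e+2^v/m)$. This grows in $v$: the sum $\sum_{j\le v}2^{j(1-\beta)}\asymp 2^{v(1-\beta)}$ cancels the factor $2^{-v(1-\beta)}$, so no geometric decay is left.

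Your two-regime fix does not repair this. Bounding the increment by $2\max_b\|S_b\|$ costs a factor $(2^v/r)^{\beta}$, which is affordable only when $2^v/r$ is at most logarithmic. The per-endpoint union bound only survives for $r$ polynomially smaller than $2^v$, roughly $r\le 2^{v(1-1/(q(1-\beta)))}$. The intermediate scales are covered by neither bound.

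The missing idea is a scale-adapted blocking of the endpoints:
\begin{itemize}
\item For $r\in[2^j,2^{j+1})$, split $[2^v,2^{v+1})$ into $2^{v-j}$ blocks of length $2^j$.
\item For $k$ in a block $[a,a+2^j)$ and $r<2^{j+1}$, both $m+k-r$ and $m+k$ lie in a common window $[c,c+3\cdot 2^j]$.
\item Hence $\|S_{m+k}-S_{m+k-r}\|\le 2\max_{s\le 3\cdot 2^j}\|S_{c+s}-S_c\|$, so Lemma~\ref{l:increment_maximal} plus a union over only $2^{v-j}$ blocks gives
\[
\E\max_{2^v\le k<2^{v+1}}\max_{2^j\le r<2^{j+1}}\|S_{m+k}-S_{m+k-r}\|_{\cH}^2\lesssim 2^{(v-j)/q}\,2^{j}.
\]
\item Then $\sum_{j\le v}2^{-j\beta}2^{(v-j)/q}2^j=2^{v/q}\sum_{j\le v}2^{j(1-\beta-1/q)}\lesssim 2^{v(1-\beta)}$.
\end{itemize}
The last step is where $p>2/(1-\beta)$ enters, through the $j$-sum rather than as a leftover geometric factor in $v$. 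Each dyadic block is then $O(\log^{\beta-2}(e+2^v/m))$ in expectation. Summing over $v\ge\log_2(mT)$ gives a tail $\lesssim\sum_{i\ge\log_2 T}i^{\beta-2}\to0$, uniformly in $m$, since $2-\beta>1$. The paper condenses this step into the per-block bound $\E M_\beta(n)\lesssim n^{1-\beta}$; the blocking above is what makes that bound uniform over the endpoints $k$.

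\textbf{Smaller hole: the left-window term.} After reducing to $\|S_b\|^2/(b\log^{2-\beta}(e+t))$ with $b=m+\ell_1$, you only treat $b\ge mT$. But $\ell_1$ can be $0$ while $t\ge T$ (classical and Page schemes). For $b\in[m,mT)$ the quantity $\|S_b\|^2/(b\log^{2-\beta}(e+b/m))$ is $O_\P(1)$, not small. There you must keep $\log(e+t)\ge\log(e+T)$ in the denominator and use the dyadic bound $\E\max_{m\le b\le mT}\|S_b\|^2/b\lesssim\log(1+T)$, as the paper does. The extra factor $L$ in your displayed bound is an algebra slip: $\|\widetilde Z_m(u)\|^2/L=\|S_{mL}\|^2/(mL)$ exactly.

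\textbf{What works and how it differs from the paper.} Discarding the cross term via $\|a-b\|^2\le 2\|a\|^2+2\|b\|^2$ is valid, and simpler than the paper's split into $a_m(\ell_1)$, $R_{1,m}$, $R_{2,m}$ with the weighted AM--GM bound. Your second route to \eqref{l:Gamma_tail_2} is also valid once \eqref{l:Gamma_tail_1} is established, and it is genuinely different from the paper's argument. It combines \eqref{l:Gamma_tail_1} with Lemma~\ref{l:Dm_to_D} on $[T,T']$ and $\Psi G_m\stackrel d=\Gamma$, then lets $T'\to\infty$ by monotonicity. This avoids the separate H\"older-seminorm and self-similarity estimate for $G$.
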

\begin{proof}
First note that for any $0\leq a<b\leq c<d$, 
\begin{align}
\mcU_0((a,b];(c,d])
&= \frac{\tr(A)}{b-a}+\frac{\tr(A)}{d-c}
   - Q_J\!\big(\overline Y_{a,b}-\overline Y_{c,d}\big)\notag\\
&= \frac{\tr(A)}{b-a}+\frac{\tr(A)}{d-c}
   - Q_J(\overline Y_{a,b}) - Q_J(\overline Y_{c,d})
   +2\big\langle \overline Y_{a,b},J\overline Y_{c,d}\big\rangle_{\cH}.
\label{e:quad_expand}
\end{align}
Hence, with 
$$
a_m(\ell) =\frac{\tr(A)}{m+\ell}-Q_J\big(\overline Y_{0,m+\ell}\big),
$$
we have
\begin{align}
&\big|\mcU_0((0,m+\ell_1];(m+\ell_2,m+k])-a_m(\ell_1)\big|\notag\\
&\qquad \leq  
   \Big|\frac{\tr(A)}{k-\ell_2}-\|\overline Y_{m+\ell_2,m+k}\|_{\cH}^2\Big|
   +2\|\overline Y_{0,m+\ell_1}\|_{\cH}\,\|\overline Y_{m+\ell_2,m+k}\|_{\cH} \notag\\
&\qquad=: R_{1,m} (\ell_2,k)+2R_{2,m}(\ell_1,\ell_2,k).\label{e:R1R2R3}
\end{align}
In Lemma \ref{l:Rm1,Rm2,neglig}, we show
\begin{equation}\label{e:tailterm_remianders}
\limsup_{T\to\infty}\lim_{m\to\infty}\P\left( \sup_{k\geq mT }m^{-1}\max_{(\ell_1,\ell_2) \in \mathcal S_k }\frac{(k-\ell_2)^2}{g_m(k,\ell_1,\ell_2)} \left(R_{1,m} (\ell_2,k)+2R_{2,m}(\ell_1,\ell_2,k)\right) >\delta \right) =0.
\end{equation}
So we now study $a_m(\ell)$.  Let
$$
    t_m=\frac{k}{m},\qquad u_m=\frac{\ell_1}{m},\qquad v_m=\frac{\ell_2}{m}.
$$
Then
\begin{align}\label{a:lm_rescaled}
m^{-1}\frac{(k-\ell_2)^2}{g_m(k,\ell_1,\ell_2)}|a_m(\ell_1)|  =
\frac{(t_m-v_m)^2}{g(t_m,1+u_m,t_m-v_m)}
\left|
\frac{\tr(A)}{1+u_m}
-
Q_J\left(
\frac{\widetilde Z_m(u_m)}{1+u_m}
\right)
\right|.
\end{align}
Write $r_m=t_m-v_m$ and $L_m=1+u_m$. We have
\begin{align}\label{e:r2_over_g_tail}
\frac{r_m^2}{g(t_m,L,r)}
&=
\frac{r^{2-\beta}_m}
{(1+t_m)^{1-\beta}(1+r_m/L_m)\log^{2-\beta}(e+t_m)}
\notag\\
&\leq
\frac{L r^{1-\beta}}
{(1+t_m)^{1-\beta}\log^{2-\beta}(e+t_m)}
\leq
\frac{L}{\log^{2-\beta}(e+t_m)} ,
\end{align}
since $r\leq 1+t_m$.  Hence
\begin{align}\label{e:r2_over_Lg_tail}
\frac1L \frac{r^2}{g(t_m,L,r)}
\leq
\frac{1}{\log^{2-\beta}(e+t_m)} .
\end{align}
Thus, using $|Q_J(f)|\leq C\|f\|_{\cH}^2$, \eqref{a:lm_rescaled} gives
\begin{align}\label{e:am_tail_basic}
&m^{-1}\max_{(\ell_1,\ell_2)\in\mathcal S_k}
\frac{(k-\ell_2)^2}{g_m(k,\ell_1,\ell_2)}
|a_m(\ell_1)| \notag\\
&\qquad\leq
\frac{C|\tr(A)|}{\log^{2-\beta}(e+k/m)}
+
C\max_{0\leq \ell\leq k}
\frac{1}{\log^{2-\beta}(e+k/m)}
\frac{\|\widetilde Z_m(\ell/m)\|_{\cH}^2}{1+\ell/m}.
\end{align}
It remains to show that the second term is negligible uniformly for $k\geq mT$ as $T\to\infty$.
Recall that $\widetilde Z_m(u)=Z_m(1+u)$. Hence, with $t_m=k/m$,
\begin{align*}
\E \max_{0\leq \ell\leq k}
\frac{\|\widetilde Z_m(\ell/m)\|_{\cH}^2}{1+\ell/m}
& \leq \E  \max_{0 \leq s \leq t_m}
\frac{\| Z_m(1+s)\|_{\cH}^2}{1+s}\\
& \leq \sum_{j=0}^{\lceil \log_2 (1+t_m)\rceil }\E \max_{2^j\leq w \leq 2^{j+1}}\frac{\| Z_m(w)\|_{\cH}^2}{2^j}\\
& \leq C\log_2(1+t_m).
\end{align*}
Hence, we find
$$
 \frac{1}{\log^{2-\beta}(e+k/m)} \E \max_{1\leq \ell \leq k}
\frac{\|\widetilde Z_m(\ell/m)\|_{\cH}^2}{1+\ell/m} \leq \log^{\beta-1}(e+k/m).
$$
Hence, from Markov's inequality
\begin{equation}\label{e:Z_m_logtail_bound}
\lim_{T\to\infty}\limsup_{m\to\infty}\P\left\{ \sup_{k\geq mT} \max_{0\leq \ell\leq k}
\frac{1}{\log^{2-\beta}(e+k/m)}
\frac{\|\widetilde Z_m(\ell/m)\|_{\cH}^2}{1+\ell/m}>\delta \right\} = 0,
\end{equation}
giving
\begin{align}\label{e:am_tail_negligible}
\lim_{T\to\infty}\limsup_{m\to\infty}
\P\left\{
\sup_{k\geq mT}
m^{-1}\max_{(\ell_1,\ell_2)\in\mathcal S_k}
\frac{(k-\ell_2)^2}{g_m(k,\ell_1,\ell_2)}
|a_m(\ell_1)|
>\delta
\right\}=0 .
\end{align}
Combining \eqref{e:R1R2R3}, \eqref{e:tailterm_remianders}, and
\eqref{e:am_tail_negligible}, we obtain
$$
\lim_{T\to\infty}\limsup_{m\to\infty}
\P\left\{\sup_{t\ge T}\Gamma_m(t)>\delta\right\}=\lim_{T\to\infty}\limsup_{m\to\infty}
\P\left\{
\sup_{k\geq mT}\Gamma_m(k/m)>\delta
\right\}=0 .
$$
It remains to show \eqref{l:Gamma_tail_2}. With $\widetilde G(t)=G(1+t)$, first note
\begin{align*}
&\frac{1}{g(t,1+u,t-v)}\bigg|Q_J\Big(\frac{t-v}{1+u}\,\widetilde G(u) - (\widetilde G(t)-\widetilde G(v))\Big)\bigg|\\
&\quad=\frac{1}{g(t,L,r)}\bigg|Q_J\Big(\frac{r}{L}\,\widetilde G(u) - (\widetilde G(t)-\widetilde G(v))\Big)\bigg|\\
& \quad\lesssim\frac{r^2}{L^2g(t,L,r)} \|  G(L)\|_{\cH}^2  + \frac{1}{g(t,L,r)}\left\|\widetilde G(t)-\widetilde G(v)\right\|_{\cH}^2 \\
&  \quad \leq \frac{1}{L\log^{2-\beta}(e+t)} \| G(L)\|_{\cH}^2 + \frac{1}{(1+t)^{1-\beta}\log^{2-\beta}(e+t) }[G]^2_{\beta/2,1+T}.
\end{align*}
Arguing as in \eqref{e:Z_m_logtail_bound}, we find
$$
\lim_{T\to \infty}\P\left\{\sup_{t\geq T}\sup_{1\leq L \leq 1+t}\frac{1}{L\log^{2-\beta}(e+t)} \| G(L)\|_{\cH}^2>\delta\right\}  = 0.
$$
Now, by self-similarity of $G$,
$$
 [G]_{\beta/2,1+T}  \stackrel{d}= (1+T)^{(1-\beta)/2}[G]_{\beta/2,1}.
$$
Hence, by a dyadic argument,
\begin{align*}
&\P\left\{
\sup_{t\geq T}
\frac{[G]_{\beta/2,1+t}^2} {(1+t)^{1-\beta}\log^{2-\beta}(e+t)}>\delta \right\} \\
&\qquad\leq
\sum_{v=\lfloor\log_2 T\rfloor}^{\infty}
\P\left\{
\frac{[G]_{\beta/2,1+2^{v+1}}^2}
{2^{v(1-\beta)}(1+v)^{2-\beta}}
>C\delta
\right\}\\
&\qquad\leq
\frac{C}{\delta}
\sum_{v=\lfloor\log_2 T\rfloor}^{\infty}
\frac{1}{(1+v)^{2-\beta}}
\to0.\end{align*}
Thus, we conclude \eqref{l:Gamma_tail_2} holds.
\end{proof}

\begin{lemma}\label{l:Rm1,Rm2,neglig}
Expression \eqref{e:tailterm_remianders} holds.
\end{lemma}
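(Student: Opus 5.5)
We need to show that, for every $\delta>0$,
$$
\lim_{T\to\infty}\limsup_{m\to\infty}\P\Big\{\sup_{k\ge mT}m^{-1}\max_{(\ell_1,\ell_2)\in\mathcal S_k}\frac{(k-\ell_2)^2}{g_m(k,\ell_1,\ell_2)}\big(R_{1,m}(\ell_2,k)+2R_{2,m}(\ell_1,\ell_2,k)\big)>\delta\Big\}=0 .
$$
Throughout write $t=k/m$, $L=1+\ell_1/m$ and $r=(k-\ell_2)/m$. The two remainder terms are treated separately, enlarging $\mathcal S_k$ to $\mathcal C_k$ each time. The right-window quantity $R_{1,m}$ is handled by the dyadic blocking of Lemma~\ref{lemma_remainder}. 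The cross term $R_{2,m}$ is split with the weight $g$ via Cauchy--Schwarz, so that each factor reduces to something already controlled.

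\textbf{Term $R_{1,m}$.} Since $g(t,L,r)\ge (1+t)^{1-\beta}r^\beta\log^{2-\beta}(e+t)$, we have, with $n=k-\ell_2$,
$$
m^{-1}\frac{(k-\ell_2)^2}{g_m}R_{1,m}\le C\,\frac{n^{2-\beta}}{k^{1-\beta}\log^{2-\beta}(e+t)}\Big(\frac{|\tr A|}{n}+\frac{1}{n^2}\Big\|\sum_{j=m+\ell_2+1}^{m+k}Y_j\Big\|_{\cH}^2\Big).
$$
The $\tr A$ part is at most $C|\tr A|\,n^{1-\beta}/k^{1-\beta}\cdot\log^{-(2-\beta)}(e+t)$. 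Because $n\le k$, this is bounded by $C\log^{-(2-\beta)}(e+T)$, which tends to $0$ as $T\to\infty$. The remaining part is
$$
\frac{1}{k^{1-\beta}\log^{2-\beta}(e+t)}\max_{1\le n\le k} n^{-\beta}\|S_n\|^2_{\cH},
$$
where $S_n$ is the sum of the last $n$ of the $Y_j$. I would block $k\in[2^v,2^{v+1})$ and $n\in[2^j,2^{j+1})$ exactly as in the treatment of $M_{2,m}$. Using Lemma~\ref{l:increment_maximal} with the $\ell^q$-max trick gives the moment bound $2^{v/q}2^{j}$. The contribution is therefore at most
$$
C\sum_{v\ge \log_2(mT)} \frac{2^{-v(1-\beta-1/q)}}{\log^{2-\beta}(e+2^v/m)}\sum_{j\le v}2^{j(1-\beta)},
$$
which is of order $\sum_v 2^{v/q}\,v^{-(2-\beta)}$ and diverges. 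This crude bound fails, so the key step is to avoid the $2^{v/q}$ union loss. My first attempt would instead use the invariance principle. Working in the rescaled variables, this term equals $\log^{-(2-\beta)}(e+t)\,(1+t)^{\beta-1}r^{-\beta}\|Z_m(1+t)-Z_m(1+t-r)\|^2$, up to constants. This is dominated by
$$
\frac{[Z_m]_{\beta/2,1+t}^2}{(1+t)^{1-\beta}\log^{2-\beta}(e+t)} .
$$
I would then control $[Z_m]_{\beta/2,1+2^{v+1}}$ dyadically in $v$ with $2^v\ge T$, in $t$ rather than in $k$. This mirrors the proof of \eqref{l:Gamma_tail_2}. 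I need $\E[Z_m]^2_{\beta/2,U}\le C U^{1-\beta}$ uniformly in $m$ and $U$. This should follow from Lemma~\ref{l:main_invariance_lemma}(ii) through a Garsia--Rodemich--Rumsey or dyadic-chaining H\"older bound, since $\beta/2<1/2-1/p$ holds by \eqref{a:p}. For increments shorter than $1/m$ one uses linear interpolation together with the factor $m^{\beta-1}$. Summing $\sum_{v\ge\log_2T}C v^{-(2-\beta)}/\delta$ then gives a quantity tending to $0$.

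\textbf{Term $R_{2,m}$.} Write $\|\overline Y_{0,m+\ell_1}\|=\|Z_m(L)\|/(\sqrt m\,L)$ and $\|\overline Y_{m+\ell_2,m+k}\|=\|Z_m(1+t)-Z_m(1+t-r)\|/(\sqrt m\, r)$. The weighted term becomes
$$
\frac{r}{g}\,\frac{\|Z_m(L)\|}{L}\,\|\Delta Z_m\|.
$$
By Cauchy--Schwarz, $ab\le a^2+b^2$ with $a^2=\frac{r^2}{Lg}\frac{\|Z_m(L)\|^2}{L}$ and $b^2=\frac{L}{g}\frac{\|\Delta Z_m\|^2}{L}=\frac{\|\Delta Z_m\|^2}{g}$. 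The first piece is exactly the one bounded in \eqref{e:r2_over_Lg_tail} and \eqref{e:Z_m_logtail_bound}. The second piece is at most $r^{-\beta}\|\Delta Z_m\|^2/((1+t)^{1-\beta}\log^{2-\beta})$, which is the $R_{1,m}$ quantity already handled. Both are therefore negligible in the double limit.

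The main obstacle is the uniform-in-$m$ H\"older moment bound for $Z_m$ on growing intervals, which replaces the crude union bound. If the chaining argument proves delicate, I would fall back on the coupling \eqref{e:G_m} on $[0,1+T']$, combined with the self-similarity argument for $G$, though that only covers bounded horizons.
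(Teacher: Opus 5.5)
Your proof is correct, and it departs from the paper's exactly at the step you flagged as the obstacle.

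\textbf{The paper's route.} For the sum part of $R_{1,m}$, the paper sets $M_\beta(k)=\max_{1\le r\le k}r^{-\beta}\|\sum_{j=1}^{r}Y_j\|_{\cH}^2$. It proves $\E M_\beta(k)\le Ck^{1-\beta}$ from second-moment maximal bounds only. After blocking $k\in[2^v,2^{v+1})$, it bounds the expected block maximum by $\E M_\beta(2^{v+1})$. It treats $R_{2,m}$ the same way, via the weighted AM--GM bound $L^{\beta/2}r^{1-\beta/2}\le L+r$ and Cauchy--Schwarz on a product of two such maxima.

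\textbf{Why that step needs your argument.} The right endpoint $m+k$ moves inside the block, so the block maximum is really a two-sided quantity $\max_{a<b}(b-a)^{-\beta}\|S_b-S_a\|_{\cH}^2$. Its $r=1$ terms alone already give $\max_k\|Y_{m+k}\|_{\cH}^2$ over the block. The one-sided $M_\beta$ does not control this. Its expectation is still $O(2^{v(1-\beta)})$, but only because $p>2/(1-\beta)$, and the paper's argument passes over this point.

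\textbf{What your route buys.} You dominate by $[Z_m]^2_{\beta/2,1+t}/\{(1+t)^{1-\beta}\log^{2-\beta}(e+t)\}$, block dyadically in $t$, and use $\E[Z_m]^2_{\beta/2,U}\le CU^{1-\beta}$. This supplies the missing step explicitly and shows where Assumption~\ref{a:Y_moment} genuinely enters, at the cost of a H\"older moment bound. Your split $ab\le a^2+b^2$ for $R_{2,m}$ reduces it to \eqref{e:Z_m_logtail_bound} plus the $R_{1,m}$ quantity. That is simpler than the paper's product bound and avoids a second two-sided maximum.

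\textbf{Tightening the key bound.} Read \eqref{e:Zm} with real $n=mU$. Scaling gives $[Z_m]_{\gamma,U}=U^{1/2-\gamma}[Z_{mU}]_{\gamma,1}$, so it suffices to show $\sup_n\E[Z_n]^s_{\beta/2,1}<\infty$. This follows from Kolmogorov--Chentsov or dyadic chaining together with the increment bound
\[
\E\|Z_n(t)-Z_n(u)\|_{\cH}^s\le C|t-u|^{s/2}.
\]
That increment bound holds at all scales, including within a single grid cell, where the increment is $n^{1/2}|t-u|\,\|Y_j\|_{\cH}\le |t-u|^{1/2}\|Y_j\|_{\cH}$. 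You must take $s\in(2/(1-\beta),p)$, because Lemma~\ref{l:main_invariance_lemma}(ii) only gives moments strictly below $p$. No extra $m^{\beta-1}$ factor is needed for short increments, and the fallback to \eqref{e:G_m} is unnecessary.
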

\begin{proof} We treat $R_{1,m}(\ell_2,k)$ and $R_{2,m}(\ell_1,\ell_2,k)$ separately.  Write
$$
    r=k-\ell_2,\qquad L=m+\ell_1 .
$$
Then
$$
    m\,g_m(k,\ell_1,\ell_2)
    =
    (m+k)^{1-\beta}r^\beta
    \left(1+\frac{r}{L}\right)
    \log^{2-\beta}(e+k/m).
$$
First note
\begin{align}
m^{-1}\frac{r^2}{g_m(k,\ell_1,\ell_2)}
|R_{1,m}(\ell_2,k)|
&\leq
\frac{1}{m g_m(k,\ell_1,\ell_2)}
\left(
r|\tr A|
+
\Bigg\|\sum_{j=m+k-r+1}^{m+k}Y_j\Bigg\|_{\cH}^2
\right).
\label{e:R1_decomp}
\end{align}
Moreover,
$$
\frac{r}{m g_m(k,\ell_1,\ell_2)}
\leq
\frac{r^{1-\beta}}
{(m+k)^{1-\beta}\log^{2-\beta}(e+k/m)}
\leq
\log^{-(2-\beta)}(e+k/m),
$$
and hence
$$
\sup_{k\geq mT}\sup_{0\leq r\leq k}
\frac{r}{m g_m(k,\ell_1,\ell_2)}
\leq
\log^{-(2-\beta)}(e+T)\to0.
$$
For $k\geq 1$, write
\begin{equation}\label{e:Mk_beta}
\max_{1\leq r\leq k}
\frac{1}{r^\beta}
\Bigg\|\sum_{j=m+k-r+1}^{m+k}Y_j\Bigg\|_{\cH}^2
\stackrel d =
\max_{1\leq r\leq k}
\frac{1}{r^\beta}
\Bigg\|\sum_{j=1}^{r}Y_j\Bigg\|_{\cH}^2=: M_\beta(k).
\end{equation}
We have
\begin{align*}
\E M_\beta (k) &\leq \sum_{v=0}^{\lceil \log_2 k\rceil} \frac{1}{2^{v\beta}}\E \max_{2^v\leq r < 2^{v+1}} \Bigg\|\sum_{j=1}^{r}Y_j\Bigg\|_{\cH}^2\\
& \leq C \sum_{v=0}^{\lceil \log_2 k\rceil} 2^{v(1-\beta)} \leq C k^{1-\beta}.
\end{align*}
Therefore
\begin{align*}
&\E\sup_{k\geq mT}
\max_{1\leq r\leq k}
\frac{1}{m g_m(k,\ell_1,\ell_2)}
\Bigg\|\sum_{j=m+k-r+1}^{m+k}Y_j\Bigg\|_{\cH}^2  \\
&\qquad\leq
C\sum_{v=\lfloor\log_2(mT)\rfloor}^{\infty}
\frac{1}{2^{v(1-\beta)}\log^{2-\beta}(e+2^v/m)}
\E M_\beta(2^{v+1})  \\
&\qquad\leq
C\sum_{v=\lfloor\log_2(mT)\rfloor}^{\infty}
\frac{1}{\log^{2-\beta}(e+2^v/m)}
\lesssim
\sum_{j=\lfloor\log_2 T\rfloor}^{\infty}\frac{1}{(1+j)^{2-\beta}}
\to0,
\end{align*}
as $T\to\infty$, since $2-\beta>1$.

Putting this together with \eqref{e:R1_decomp}, we have
$$
\limsup_{T\to\infty}\lim_{m\to\infty}\P\left( \sup_{k\geq mT }m^{-1}\max_{(\ell_1,\ell_2) \in \mathcal S_k }\frac{(k-\ell_2)^2}{g_m(k,\ell_1,\ell_2)} R_{1,m} (\ell_2,k) >\delta \right) =0.
$$

For $R_{2,m}(\ell_1,\ell_2,k)$, we have
\begin{align*}
&m^{-1}\frac{r^2}{g_m(k,\ell_1,\ell_2)}
R_{2,m}(\ell_1,\ell_2,k)  \\
&\qquad\leq
\frac{r}{m g_m(k,\ell_1,\ell_2)}
\frac{1}{L}\bigg\|\sum_{j=1}^{L}Y_j\bigg\|_{\cH}
\bigg\|\sum_{j=m+k-r+1}^{m+k}Y_j\bigg\|_{\cH}.
\end{align*}
Using
$$
\frac{r}{m g_m(k,\ell_1,\ell_2)}
=
\frac{r^{1-\beta}}
{(m+k)^{1-\beta}(1+r/L)\log^{2-\beta}(e+k/m)},
$$
and recalling $x^a y^{1-a}\leq a x + (1-a)y$ for any $x,y\geq 0$ and $0\leq a \leq 1$ (e.g., \citealp[p.17]{hardy1952}), we have the bound
$$
\frac{L^{\beta/2}r^{1-\beta/2}}{L+r}\leq \frac{1}{L+r}\left(\frac{\beta}{2}L + \left(1-\frac\beta2\right)r\right) \leq  1.
$$
Hence, we obtain
\begin{align}
&m^{-1}\frac{r^2}{g_m(k,\ell_1,\ell_2)}
R_{2,m}(\ell_1,\ell_2,k)
\notag\\
&\qquad\leq
\frac{1}{(m+k)^{1-\beta}\log^{2-\beta}(e+k/m)}
\left(
\frac{1}{L^\beta}\Bigg\|\sum_{j=1}^{L}Y_j\Bigg\|_{\cH}^2
\right)^{1/2}
\left(
\frac{1}{r^\beta}\Bigg\|\sum_{j=m+k-r+1}^{m+k}Y_j\Bigg\|_{\cH}^2
\right)^{1/2}.
\label{e:Rm2_bound_new}
\end{align}
With $M_\beta(k)$ as in \eqref{e:Mk_beta}, by stationarity, 
$$
    \max_{1\leq r\leq k}
    \frac{1}{r^\beta}
    \Bigg\|\sum_{j=m+k-r+1}^{m+k}Y_j\Bigg\|_{\cH}^2
    \stackrel d= M_\beta(k),
$$
and
$$
    \max_{m\leq L\leq Z+k}
    \frac{1}{L^\beta}
    \Bigg\|\sum_{j=1}^{L}Y_j\Bigg\|_{\cH}^2
    \leq M_\beta(m+k).
$$
Hence, taking the maximum over \((\ell_1,\ell_2)\in\mathcal S_k\) in
\eqref{e:Rm2_bound_new}, we obtain
\begin{align*}
& m^{-1}
\max_{(\ell_1,\ell_2)\in\mathcal S_k}
\frac{(k-\ell_2)^2}{g_m(k,\ell_1,\ell_2)}
R_{2,m}(\ell_1,\ell_2,k)\\
&\qquad\leq
\frac{1}{(m+k)^{1-\beta}\log^{2-\beta}(e+k/m)}
\bigl(M_\beta(m+k)\bigr)^{1/2}
\bigl(M_{\beta,m}^{+}(k)\bigr)^{1/2},
\end{align*}
where
$$
    M_{\beta,m}^{+}(k)
    :=
    \max_{1\leq r\leq k}
    \frac{1}{r^\beta}
    \Bigg\|\sum_{j=m+k-r+1}^{m+k}Y_j\Bigg\|_{\cH}^2
    \stackrel d= M_\beta(k).
$$
Now let \(2^v\leq k<2^{v+1}\), with \(v\geq \lfloor\log_2(mT)\rfloor\). Since
\(m+k\leq m+2^{v+1}\leq 2^{v+2}\) for such \(v\), and since
\(\log(e+k/m)\gtrsim \log(e+2^v/m)\), we have
\begin{align*}
&\E\max_{2^v\leq k<2^{v+1}}
m^{-1}
\max_{(\ell_1,\ell_2)\in\mathcal S_k}
\frac{(k-\ell_2)^2}{g_m(k,\ell_1,\ell_2)}
R_{2,m}(\ell_1,\ell_2,k)\\
&\qquad\leq
\frac{C}{2^{v(1-\beta)}\log^{2-\beta}(e+2^v/m)}
\E\left[
\bigl(M_\beta(2^{v+2})\bigr)^{1/2}
\bigl(M_{\beta,m}^{+}(2^{v+1})\bigr)^{1/2}
\right]\\
&\qquad\leq
\frac{C}{2^{v(1-\beta)}\log^{2-\beta}(e+2^v/m)}
\left(\E M_\beta(2^{v+2})\right)^{1/2}
\left(\E M_\beta(2^{v+1})\right)^{1/2}.
\end{align*}
Using \(\E M_\beta(n)\leq Cn^{1-\beta}\), this gives
$$
\E\max_{2^v\leq k<2^{v+1}}
m^{-1}
\max_{(\ell_1,\ell_2)\in\mathcal S_k}
\frac{(k-\ell_2)^2}{g_m(k,\ell_1,\ell_2)}
R_{2,m}(\ell_1,\ell_2,k)
\leq
\frac{C}{\log^{2-\beta}(e+2^v/m)}.
$$
Therefore,
\begin{align*}
&\E\sup_{k\geq mT}
m^{-1}
\max_{(\ell_1,\ell_2)\in\mathcal S_k}
\frac{(k-\ell_2)^2}{g_m(k,\ell_1,\ell_2)}
R_{2,m}(\ell_1,\ell_2,k)\\
&\qquad\leq
C\sum_{v=\lfloor\log_2(mT)\rfloor}^{\infty}
\frac{1}{\log^{2-\beta}(e+2^v/m)}.
\end{align*}
Writing \(v=\lfloor\log_2 m\rfloor+j\), the last display is bounded by
$$
    C\sum_{j= \lfloor\log_2 T\rfloor}^\infty
    \frac{1}{(1+j)^{2-\beta}},
$$
which tends to zero as \(T\to\infty\), since \(2-\beta>1\). Thus,
$$
\limsup_{T\to\infty}\lim_{m\to\infty}\P\left( \sup_{k\geq mT }m^{-1}\max_{(\ell_1,\ell_2) \in \mathcal S_k }\frac{(k-\ell_2)^2}{g_m(k,\ell_1,\ell_2)} R_{2,m}(\ell_1,\ell_2,k)>\delta \right) =0.
$$
 This proves \eqref{e:tailterm_remianders}. %
\end{proof}

\begin{proof}[Proof of Theorem~\ref{thone}]

Let $G_m$ be as in \eqref{e:G_m}.  Fix $T>0$.  Then
\begin{align*}
\left|
\sup_{t\geq0}\Gamma_m(t)
-
\sup_{t\geq0}(\Psi G_m)(t)
\right|
&\leq
\sup_{0\leq t\leq T}
\left|\Gamma_m(t)-(\Psi G_m)(t)\right| \\
&\quad+
\sup_{t\geq T}\Gamma_m(t)
+
\sup_{t\geq T}(\Psi G_m)(t).
\end{align*}
By Lemma~\ref{l:Dm_to_D}, for each fixed $T<\infty$,
$$
\sup_{0\leq t\leq T}
\left|\Gamma_m(t)-(\Psi G_m)(t)\right|
=o_\P(1).
$$
Moreover, Lemma~\ref{l:Gamma_tail} gives
$$
\lim_{T\to\infty}\limsup_{m\to\infty}
\P\left\{\sup_{t\geq T}\Gamma_m(t)>\delta\right\}=0
$$
and, since \(G_m\stackrel d=G\),
$$
\lim_{T\to\infty}\limsup_{m\to\infty}
\P\left\{\sup_{t\geq T}(\Psi G_m)(t)>\delta\right\}
=
\lim_{T\to\infty}
\P\left\{\sup_{t\geq T}(\Psi G)(t)>\delta\right\}
=0.
$$
Consequently,
$$
\sup_{t\geq0}\Gamma_m(t)
-
\sup_{t\geq0}(\Psi G_m)(t)
=o_\P(1).
$$
Since \(G_m\stackrel d=G\), it follows that
$$
\sup_{t\geq0}\Gamma_m(t)
\stackrel d\to
\sup_{t\geq0}(\Psi G)(t)
=
\sup_{t\geq0}\Gamma(t).$$
This proves the claim when $M_m=\infty$.  If $M_m/m\to a\in(0,\infty)$, clearly
\begin{align*}
\left|\sup_{0<t<M_m/m}\Gamma_m(t) -\sup_{0<t<M_m/m}(\Psi G_m)(t)\right| &\leq \sup_{0<t<a+1}\left|\Gamma_m(t) -(\Psi G_m)(t)\right| = o_\P(1),
\end{align*}
which by continuity of $\Gamma(t)=\Psi G(t)$ implies $\sup_{0<t<M_m/m}\Gamma_m(t) \stackrel d \to \sup_{0<t<a}\Gamma(t)$. 

Now, since the $\dcal H$-valued Brownian motion $G$ has $\Var G(1)=\Sigma$ as in \eqref{e:def_Sigma},  by expanding $G$ in the basis $\{\phi_\ell\}$, and using the identity \eqref{e:Sigma_rep}, we have
$$
G(t) = \sum_{\ell=1}^\infty \langle G(t),\phi_\ell\rangle_{\cH}\phi_\ell\stackrel d = \sum_{\ell=1}^\infty \sqrt{|\lambda_\ell|}W_\ell(t)\phi_\ell,
$$
where $\{W_\ell(t),t\geq 0,\ell\geq 1\}$ is Gaussian with 
$$\Cov(W_\ell(t),W_{\ell'}(s))=(t\wedge s) \sum_{r\in \mathbb Z}\Cov(\phi_\ell(\bX_0),\phi_{\ell'}(\bX_r)).$$
In particular, note
$$
Q_J(G(t)) \stackrel d = \sum_{\ell=1}^\infty\lambda_\ell W_\ell(t)^2,\quad 
$$
and for $0\leq u\leq v\leq t$,  with $\widetilde G(t)=G(1+t),$
$$
 Q_J \left( \frac{t-v}{1+u} \widetilde G(u) -  (\widetilde G(t)-\widetilde G(v)\big)\right) \stackrel d= \sum_{\ell=1}^\infty \lambda_\ell \left( \frac{t-v}{1+u} \widetilde W_\ell(u) -  (\widetilde W_\ell(t)-\widetilde W_\ell(v)\big)\right)^2.
$$
Since %
\begin{align*}
(\Psi G)(t) &= \sup_{(u,v)\in \mathcal S(t)} \left| \tr (A) \left( \frac{(t-v)^2}{1+u} + (t-v)\right) - Q_J \left( \frac{t-v}{1+u} \widetilde G(u) -  (\widetilde G(t)-\widetilde G(v)\big)\right)\right|\\
&\stackrel {\mathcal L}= \sup_{(u,v)\in \mathcal S(t)} \left|  \sum_{\ell=1}^\infty \lambda_\ell \left( \frac{(t-v)^2}{1+u} + (t-v)  -    \left( \frac{t-v}{1+u} \widetilde W_\ell(u) -  (\widetilde W_\ell(t)-\widetilde W_\ell(v)\big)\right)^2 \right)\right|,
\end{align*}
 the conclusion follows.
\end{proof}

\begin{lemma}\label{l:transformation}
For $\theta_0\in\R^\nu$, suppose
$\widehat\theta_m=\widehat\theta_m(\bX_1,\ldots,\bX_m)
\stackrel{\P}\to\theta_0$. Let
$(\theta,\bx)\mapsto t_\theta(\bx)$ be a measurable map taking values in $(\mathcal Y,d)$; and set
$h_\theta(\bx,\by)=k(t_\theta(\bx),t_\theta(\by))$, where $k(\cdot,\cdot)$ is bounded
and satisfies the conditions of either Example~\ref{ex:cnd_kernels} or
Example~\ref{ex:psd_kernels}. Suppose that, for some neighborhood $N$ of
$\theta_0$, and for some $\bx_0\in \mathcal X,$
$d(t_\theta(\bx),t_\theta(\by))\leq C\rho(\bx,\by)$ and
$d(t_\theta(\bx),t_\tau(\bx))
\leq C\{1+\rho(\bx,\bx_0)\}\|\theta-\tau\|$
for all $\theta,\tau\in N$ and $\bx,\by\in\mathcal X$. If the quantities
$p,\alpha,\eta$ in Assumption~\ref{a:Lpm} satisfy
$p\alpha(1-\eta)>\nu$, then Theorem~\ref{thone} remains valid for the
detector based on $h_{\widehat\theta_m}$, with $\Gamma$ defined using
$h_{\theta_0}$.
\end{lemma}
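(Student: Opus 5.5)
The strategy is to treat the detector as a random field indexed by $\theta$ in a small ball $B_\delta=\{\theta:\|\theta-\theta_0\|\le\delta\}\subseteq N$. I will show that, uniformly in $m$, the whole monitoring functional is stochastically equicontinuous in $\theta$ at $\theta_0$. Since $\widehat\theta_m\stackrel{\P}\to\theta_0$, we have $\widehat\theta_m\in B_\delta$ with probability tending to one. It would then follow that $\sup_{t}\Gamma_m^{\widehat\theta_m}(t)-\sup_t\Gamma_m^{\theta_0}(t)=o_\P(1)$, and Theorem~\ref{thone} applied at $\theta_0$ finishes the proof. Note that only consistency of $\widehat\theta_m$ is used. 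The price is the uniform-in-$\theta$ control, and this is where the condition $p\alpha(1-\eta)>\nu$ enters.

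\textbf{A feature space that does not depend on $\theta$.} The spaces $\cH=\mathcal L^2(F_\theta)$ change with $\theta$, so I will not compare them. Instead I work in the fixed Hilbert space $\dcal K$ attached to $k$: the RKHS in the setting of Example~\ref{ex:psd_kernels}, or the Schoenberg embedding from the proof of Lemma~\ref{l:kernel_examples} in the setting of Example~\ref{ex:cnd_kernels}. Let $\varphi$ be the feature map and define
$$V_j^\theta=\varphi(t_\theta(\bX_j))-\E\varphi(t_\theta(\bX_1)).$$
Additive terms $f_0(\bx)+f_0(\by)$ do not change $\mcU$, so $\mcU$ computed with $h_\theta$ equals $\pm2$ (or $\mp$) times a $\dcal K$-norm of differences of window means of $V^\theta$, plus diagonal terms. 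These are the analogues of \eqref{e:Udecomp} and \eqref{e:U0_simplification}, with $Q_J$ replaced by $\pm\|\cdot\|_{\dcal K}^2$ and $\tr(A_\theta)=\pm\E\|V_1^\theta\|^2_{\dcal K}$. All proofs in Section~\ref{s:H0_proofs} (Lemmas~\ref{lemma_remainder}, \ref{l:Gamma_m_error}, \ref{l:Psi_continuous}, \ref{l:Gamma_tail}) go through verbatim with $Z_m^\theta(t)=m^{-1/2}\sum_{j\le mt}V^\theta_j$. Since $k$ is bounded, every moment constant there is uniform in $\theta\in N$.

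\textbf{Key increment bound.} The H\"older condition on $k$ together with the two Lipschitz assumptions on $t_\theta$ gives two bounds:
$$\|V_1^\theta-V_1^\tau\|_{\dcal K}\lesssim\{1+\rho(\bX_1,\bx_0)\}^\alpha\|\theta-\tau\|^\alpha+\E(\cdot),\qquad
\|V^\theta_1-V^{\theta,(r)}_1\|_{\dcal K}\lesssim\rho(\bX_1,\bX_1^{(r)})^\alpha .$$
For $D_j=V_j^\theta-V_j^\tau$, bound the coupling coefficient by the minimum of these two quantities and interpolate, $\min(a,b)\le a^{\eta}b^{1-\eta}$. With H\"older's inequality and $\E\rho(\bx_0,\bX_1)^{\alpha p}<\infty$, this gives
$$\Big(\E\|D_1-D_1^{(r)}\|^{p}_{\dcal K}\Big)^{1/p}\lesssim\vartheta_{p\alpha}(r)^{\alpha\eta}\,\|\theta-\tau\|^{\alpha(1-\eta)}.$$
These coefficients are summable by Assumption~\ref{a:Lpm}, possibly after a harmless adjustment of $\eta$. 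Lemma~\ref{l:main_invariance_lemma}(ii) then yields
$$\E\max_{k\le N}\Big\|\sum_{j=a+1}^{a+k}D_j\Big\|^{p}_{\dcal K}\le C N^{p/2}\|\theta-\tau\|^{p\alpha(1-\eta)}.$$
Combined with the corresponding $N^{p/2}$ bound in time, this is a two-parameter Kolmogorov/Garsia--Rodemich--Russo increment condition for the field $(\theta,s)\mapsto Z^\theta_m(s)$. The exponent in $\theta$ exceeds $\nu$, and the exponent in time exceeds $1$ for H\"older order $\gamma<1/2-1/p$. So chaining gives
$$\lim_{\delta\downarrow0}\limsup_m\P\Big\{\sup_{\theta\in B_\delta}\|Z_m^\theta-Z_m^{\theta_0}\|_{\gamma,1+T}>\epsilon\Big\}=0.$$

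\textbf{Assembling the pieces.} By Lemma~\ref{l:Psi_continuous}(ii), $\Psi$ is locally Lipschitz, and $\theta\mapsto\tr(A_\theta)$ is continuous by bounded convergence. Hence on $[0,T]$ the functional $\Psi Z_m^{\widehat\theta_m}$ is close to $\Psi Z_m^{\theta_0}$. The remainder bound (Lemma~\ref{lemma_remainder}) and the tail bound (Lemma~\ref{l:Gamma_tail}) must hold with $\sup_{\theta\in B_\delta}$ inside the probability. I will obtain this by running their dyadic moment arguments on the field: bound $\sup_\theta$ by the value at $\theta_0$ plus a chaining increment, using the moment bound above with the same dyadic blocks.

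I expect this uniform tail step, for $t\ge T$ with $T\to\infty$ in the open-ended case, to be the main obstacle. The chaining constant must not grow with the block length faster than the weight $\log^{2-\beta}(e+t)$ absorbs. My first attempt will be to apply the increment bound blockwise on $[2^v,2^{v+1})$ and sum the resulting series over $v$.
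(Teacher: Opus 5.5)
Your proposal follows essentially the same route as the paper. Both arguments use a $\theta$-free feature space in which $\overline h_\theta=\mp\langle V^\theta,V^\theta\rangle$, interpolate between the coupling bound and the parameter-Lipschitz bound, chain dyadically first over $\theta$ and then in time to get $\E\sup_{\theta}\|Z^\theta_{m}-Z^{\theta_0}_{m}\|^s_{\gamma,1+T}\lesssim\delta^{\alpha(1-\zeta)s}$, and prove uniform-in-$\theta$ versions of the remainder and tail lemmas. Your ``harmless adjustment'' is exactly the paper's choice of $\zeta\in(\eta,1)$ and $s\in(2/(1-\beta)\vee 4,p)$ with $s\alpha(1-\zeta)>\nu$ and $\beta/2<\gamma<1/2-1/s$; this is needed because Lemma~\ref{l:main_invariance_lemma} requires a power $\eta/\zeta<1$ and a moment order strictly below $p$. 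The tail step you flag causes no difficulty, since the $\theta$-chained bound $\E\sup_\theta\max_{k\le N}\|\sum_{j=a+1}^{a+k}V^\theta_j\|^s\lesssim N^{s/2}$ holds uniformly in $a$ and $N$, so the dyadic arguments of Lemmas~\ref{lemma_remainder} and \ref{l:Gamma_tail} carry over with $\sup_\theta$ inside; in Lemma~\ref{lemma_remainder} the ergodic-theorem step is replaced by this same moment bound.
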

\begin{proof}
We first consider the case when $k(\cdot,\cdot)$ is a semimetric of negative type as in Example \ref{ex:cnd_kernels}. For $\by,\by'\in \mathcal Y$,
$$k_0(\by,\by')=-[k(\by,\by')-k(\by,0)-k(0,\by')]$$ is positive semidefinite (see, e.g.,\cite{sejdinovic:etal:2013}). Let $\mathcal K_0$ denote the RKHS associated with $k_0$, and $\varphi_0(\by)=k_0(\cdot,\by)\in \mathcal K_0$, so that $k_0(\by,\by')=\langle \varphi_0(\by),\varphi_0(\by')\rangle_{\cK_0}$. Since $k(\by,\by')\lesssim d(\by,\by')^{2\alpha}$, we have $k(\by,\by)=0$  and hence
\begin{align}\label{e:phi0-distance}
 \|\varphi_0(\by)-\varphi_0(\by')\|_{\mathcal K_0}^2 &=k_0(\by,\by)+k_0(\by',\by')-2k_0(\by,\by')
 =2k(\by,\by').
\end{align}
Since $k$ is bounded, $\mu_\theta=\E\varphi_0(t_\theta(\bX))\in \mathcal K_0$ is  well-defined. Let 
$$Y_\theta(\bx) = \varphi_0(t_\theta(\bx))-\mu_\theta,$$
 so that $\E Y_\theta(\bX_1)=\mathbf 0$.  Then, with $\overline h_\theta(\bx,\bx')=h_\theta(\bx,\bx')-\E h_\theta(\bx,\bX')-\E h_\theta(\bX,\bx)+\E h_\theta(\bX,\bX')$ for $\bX,\bX'\stackrel{iid}\sim F$, the proof of Lemma \ref{l:kernel_examples} shows that $-\overline h_\theta$ is  positive semidefinite.  We claim:
\begin{equation}\label{e:Y_sigma_commonK0}
\overline h_\theta(\bx,\bx') = -\langle Y_\theta(\bx),Y_\theta(\bx')\rangle_{\cK_0}.
\end{equation}
Indeed, writing $a_\theta(\bx)=k(t_\theta(\bx),\boldsymbol 0)$,
$$
\langle\varphi_0(t_\theta(\bx)),\varphi_0(t_\theta(\bx'))\rangle_{\mathcal K_0}= k_0(t_\theta(\bx),t_\theta(\bx'))=a_\theta(\bx)+a_\theta(\bx')-h_\theta(\bx,\bx').
$$
Since $\langle\varphi_0(t_\theta(\bx)),\mu_\theta \rangle_{\mathcal K_0} = \E k_0(t_\theta(\bx),t_\theta(\bX))$ and $\langle \mu_\theta,\mu_\theta \rangle_{\mathcal K_0}=\E k_0(t_\theta(\bX),t_\theta(\bX'))$, we find
\begin{align*}
&\langle\varphi_0(t_\theta(\bx))-\mu_\theta,\varphi_0(t_\theta(\bx'))-\mu_\theta \rangle_{\mathcal K_0}\\
&= k_0(t_\theta(\bx),t_\theta(\bx'))  - \E k_0(t_\theta(\bx),t_\theta(\bX))- \E k_0(t_\theta(\bX),t_\theta(\bx')) + \E k_0(t_\theta(\bX),t_\theta(\bX'))\\
&=- \left[k(t_\theta(\bx),t_\theta(\bx'))  - \E k(t_\theta(\bx),t_\theta(\bX))- \E k(t_\theta(\bX),t_\theta(\bx')) + \E k(t_\theta(\bX),t_\theta(\bX'))\right]\\
&= -\overline h_\theta(\bx,\bx'),
\end{align*}
giving \eqref{e:Y_sigma_commonK0}.   Moreover, boundedness of $k$ gives boundedness of $\sup_\theta |h_\theta|$ and hence of  $\sup_\theta |\overline h_\theta|$, implying
\begin{equation}\label{e:bdd_of_Y}
\sup_\bx\sup_\theta\|Y_\theta(\bx)\|^2_{\cK_0}=\sup_\bx\sup_\theta|\overline h_\theta(\bx,\bx)|<\infty.
\end{equation}
Moreover, from \eqref{e:phi0-distance},$ \|\varphi_0(\by)-\varphi_0(\by')\|_{\mathcal K_0}\lesssim d(\by,\by')^{\alpha}$.   Now, since $\widehat \theta_m \stackrel \P \to \theta_0$, with $N_\delta=\{\theta:\|\theta-\theta_0\|\leq \delta\}$, clearly $\P(\widehat \theta_m\in N_\delta )\to 1$.  Then, for any  $\theta,{\theta'}\in N_\delta$, using \eqref{e:phi0-distance}
\begin{align}\notag
\|Y_\theta(\bx)-Y_{\theta'}(\bx')\|_{\cK_0}& \leq \|\varphi_0(t_\theta(\bx))-\varphi_0(t_{\theta'}(\bx'))\|_{\cK_0}+ \|\mu_\theta-\mu_{\theta'}\|_{\cK_0}\\
\notag& \lesssim k(t_\theta(\bx),t_\theta(\bx'))^{1/2} + \big(\E \| \varphi_0(t_\theta(\bX))-\varphi_0(t_{\theta'}(\bX))\|_{\cK_0}\big)\\
\notag&\lesssim d(t_\theta(\bx),t_{\theta'}(\bx'))^{\alpha} +  \E d( t_\theta(\bX), t_{\theta'}(\bX))^\alpha \\
&\lesssim \rho(\bx,\bx')^{\alpha} + \|\theta-{\theta'}\|^{\alpha}\left( 1 + \rho(\bx_0,\bx)^\alpha +\rho(\bx_0,\bx')^\alpha +\rho(\bx_0,\bX)^\alpha\right).\label{e:holder_of_Y}
\end{align}
Let $A_\theta$ denote the integral operator based on $\overline h_\theta$; since $-\overline h$ is positive semidefinte,%
$$
\tr A_\theta = \E \overline h_\theta(\bX,\bX)=-\E\|Y_\theta(\bX)\|^2_{\cK_0}.
$$
Hence, from \eqref{e:bdd_of_Y} and \eqref{e:holder_of_Y},
\begin{align*}
|\tr A_{\theta} - \tr A_{{\theta'}}| &= |\E \|Y_\theta(\bX_1)\|_{\cK_0}^2-\E \|Y_{\theta'}(\bX_1)\|_{\cK_0}^2|\\
& \leq \big|\E\big[(\|Y_\theta(\bX_1)\|_{\cK_0}-\|Y_{\theta'}(\bX_1)\|_{\cK_0})(\|Y_\theta(\bX_1)\|_{\cK_0}+\|Y_{\theta'}(\bX_1)\|_{\cK_0})\big]\big|\\
&\lesssim  \E\big|\|Y_\theta(\bX_1)\|_{\cK_0}-\|Y_{\theta'}(\bX_1)\|_{\cK_0}\big|\\
& \lesssim \|\theta-{\theta'}\|^\alpha.
\end{align*}
Hence, we have the uniform bounds
\begin{equation}\label{e:uniform_tracebounds}
\sup_{\theta,{\theta'} \in N_\delta}|\tr A_{\theta} - \tr A_{{\theta'}}|\lesssim \delta^\alpha,\qquad \sup_{\theta \in N_\delta}|\tr A_{\theta}|<\infty.
\end{equation}

Write $R_\theta$ for the remainder term in \eqref{e:Udecomp} based on $h_\theta$.  In Lemma \ref{l:uniform_theta_lemma}, we show the following extension of Lemma \ref{lemma_remainder} holds:
\begin{equation}\label{e:remainder_thetahat_bound}
m^{-1}\max_{k\geq2}
\max_{(\ell_1,\ell_2)\in\cS_k} \frac{(k-\ell_2)^2}{g_m(k,\ell_1,\ell_2)} \left|\cR_{\widehat \theta_m}((0,m+\ell_1])+\cR_{\widehat \theta_m}((m+\ell_2,m+k]) \right| =o_\P(1).
\end{equation}
 By hypothesis, with $\eta$ as in Assumption \ref{a:Lpm}, we may find a $\zeta\in(\eta,1)$, and an $s\in \big(2/(1-\beta)\vee 4, p\big)$ 
  such that
\begin{equation}\label{e:zeta,s,bounds}
 s\alpha(1-\zeta)>\nu, \qquad \frac\beta2<\gamma<\frac12-\frac1s.
\end{equation}
 Now, write $Y_{j,\theta}=Y_\theta(\bX_j),$ and let $Z_{m,\theta}$ be the partial sums process \eqref{e:Zm} with $Y_{j,\theta}$ in place of $Y_j$. In Lemma \ref{l:uniform_theta_lemma}, we have
\begin{align}\label{e:Z_m,sigma,holder}
\sup_m\E \sup_{\theta\in N_\delta}\| Z_{m,\theta}-Z_{m,\theta_0}\|^s_{\gamma,1+T} \leq C_T \delta^{\alpha(1-\zeta)s},\qquad 
\sup_m\E \sup_{\theta\in N_\delta}\| Z_{m,\theta}\|^s_{\gamma,1+T}<\infty.
\end{align}
Hence,
\begin{align*}
\P\left(\big\| Z_{m,\widehat \theta_m}-Z_{m,\theta_0}\big\|_{\gamma,1+T}> \varepsilon \right) &\leq \P(\| \widehat \theta_m-\theta_0\|>\delta) + \P\Big(\sup_{\theta \in N_\delta}\| Z_{m,\theta}-Z_{m,\theta_0}\|_{\gamma,1+T}>\varepsilon\Big)\\
&= o(1) + C_T\varepsilon^{-s}\delta^{\alpha(1-\zeta)s}.
\end{align*}
Taking $m\to\infty$ and then $\delta\to 0$, we find
\begin{equation}\label{e:Zmsigmahat_to_Zmsigma}
\big\| Z_{m,\widehat \theta_m}-Z_{m,\theta_0}\big\|_{\gamma,1+T}=o_\P(1).
\end{equation}
  Now, let $\Psi_\theta$ denote the transformation \eqref{e:def_Psi} with $\tr(A_\theta)$ in place of $\tr(A)$.  
From Lemma \ref{l:Psi_continuous}, we find, for any $x,y\in  C^\gamma_0([0,1+T],\cK_0)$,
$$
\|\Psi_\theta x-\Psi_{\theta'} y\|_{\infty,T} \lesssim  |\tr A_{\theta} - \tr A_{{\theta'}}| +  
|\|x-y\|_{\gamma,1+T} \big(\|x\|_{\gamma,1+T}+\|y\|_{\gamma,1+T}\big).
$$
where we used that
$$
\sup_{0\leq u\leq v < t \leq T}\frac{ 1}{g(t,1+u,t-v)}\left( \frac{(t-v)^2}{1+u} + (t-v)\right) \leq 1.
$$
The bounds \eqref{e:Zmsigmahat_to_Zmsigma} and \eqref{e:uniform_tracebounds}  yield
$$
\big\| \Psi_{\widehat\theta_m}Z_{m,\widehat\theta_m}- \Psi_{\theta_0}Z_{m,\theta_0}\big\|_{\infty,T}=o_\P(1).
$$
Now, let $\Gamma_{\theta,m}$ denote the process \eqref{e:Dm(t)} based on $ h_\theta$. Using the bound in the previous display, the bound \eqref{e:Z_m,sigma,holder},  together with Lemma \ref{l:Gamma_m_error}, we find
\begin{equation}\label{e:Gamma_theta_hat_T}
\| \Gamma_{\widehat \theta_m,m}-\Gamma_{\theta_0,m}\|_{\infty,T}=o_\P(1).
\end{equation}
In Lemma \ref{l:uniform_theta_lemma} we also show
\begin{equation}\label{e:Gamma_theta_hat_unif_infty}
\lim_{T\to\infty}\limsup_{m\to\infty}
\P\left\{\sup_{\theta\in N_\delta}\sup_{t\geq T}\Gamma_{m,\theta}(t)> \varepsilon \right\}=0.
\end{equation}
Since $\P(\widehat\theta_m\in N_\delta)\to1$, it follows that
\begin{equation}\label{e:Gamma_theta_hat_infty}
\lim_{T\to\infty}\limsup_{m\to\infty}
\P\left\{\sup_{t\geq T}\Gamma_{m,\widehat\theta_m}(t)>\varepsilon \right\}=0.
\end{equation}
Hence, by combining the bounds \eqref{e:remainder_thetahat_bound}, \eqref{e:Gamma_theta_hat_T}, \eqref{e:Gamma_theta_hat_infty}, the statement follows by repeating the proof of Theorem \ref{thone}, with $\Gamma_{m,\widehat\theta_m}$ in place of $\Gamma_m$.  
The case for $k(\cdot,\cdot)$ positive definite is nearly the same, taking instead $k_0=k$, and using that, in this case, $\overline h_\theta(\bx,\bx') = \langle Y_\theta(\bx),Y_\theta(\bx')\rangle_{\cK_0}$, and hence $J=I$. We omit the details.
\end{proof}

\begin{lemma}\label{l:uniform_theta_lemma}
Expressions \eqref{e:remainder_thetahat_bound}, \eqref{e:Z_m,sigma,holder}, and \eqref{e:Gamma_theta_hat_unif_infty} hold. 
\end{lemma}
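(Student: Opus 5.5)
The plan is to reduce all three displays to a single maximal inequality that holds \emph{uniformly in} $\theta\in N_\delta$. The earlier arguments (Lemmas~\ref{lemma_remainder}, \ref{l:Gamma_tail}, \ref{l:Rm1,Rm2,neglig} and the invariance principle) use the data only through moment bounds of the form $\E\max_{r\le N}\|\sum_{i=a+1}^{a+r}Y_i\|^{2q}\le CN^q$, and the analogous bound for $T_i$. So it suffices to prove versions of these bounds with $\sup_{\theta\in N_\delta}$ inside the expectation, and then rerun the same dyadic arguments.

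\textbf{Step 1 (increment bounds in $\theta$).} Set $V_j=Y_{j,\theta}-Y_{j,\theta'}$.
\begin{itemize}
\item By \eqref{e:bdd_of_Y} and \eqref{e:holder_of_Y}, together with $\E\rho(\bx_0,\bX_1)^{\alpha p}<\infty$, we get $(\E\|V_1\|^s)^{1/s}\lesssim\|\theta-\theta'\|^\alpha$.
\item For the coupling coefficients, \eqref{e:holder_of_Y} gives two bounds. The first is $\|V_1-V_1^{(r)}\|\lesssim\rho(\bX_1,\bX_1^{(r)})^\alpha$. The second is a bound of order $\|\theta-\theta'\|^\alpha(1+\rho(\bx_0,\bX_1)^\alpha+\rho(\bx_0,\bX_1^{(r)})^\alpha)$.
\item Interpolating via $\min(a,b)\le a^{\zeta}b^{1-\zeta}$ and applying H\"older yields $(\E\|V_1-V_1^{(r)}\|^s)^{1/s}\lesssim \|\theta-\theta'\|^{\alpha(1-\zeta)}\vartheta_{p\alpha}(r)^{\alpha\zeta'}$ for some $\zeta'$ close to $\zeta$ (using $s<p$). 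Since $\zeta>\eta$, the product with the exponent $\eta$ remains summable in $r$.
\item Rescaling $V_j$ by $\|\theta-\theta'\|^{-\alpha(1-\zeta)}$, Lemma~\ref{l:main_invariance_lemma}(ii) gives $\E\max_{r\le N}\|\sum_{i=a+1}^{a+r}V_i\|^s\le C N^{s/2}\|\theta-\theta'\|^{\alpha(1-\zeta)s}$, uniformly in $a$.
\end{itemize}

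\textbf{Step 2 (H\"older norm and chaining in $\theta$).} The increment bound from Step 1 controls H\"older moments of the polygonal process.
\begin{itemize}
\item Combining it with the standard dyadic characterization of the $\gamma$-H\"older norm of partial-sum polygons (valid because $\gamma<1/2-1/s$, as in Ra\v{c}kauskas--Suquet and the proof of Lemma~\ref{l:main_invariance_lemma}(iii)) gives $\E\|Z_{m,\theta}-Z_{m,\theta'}\|_{\gamma,1+T}^s\le C_T\|\theta-\theta'\|^{\alpha(1-\zeta)s}$, uniformly in $m$.
\item The map $\theta\mapsto Z_{m,\theta}$ takes values in the separable Banach space $C^\gamma_0([0,1+T],\cK_0)$, and $\theta$ ranges over a $\nu$-dimensional ball. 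Since $s\alpha(1-\zeta)>\nu$ by \eqref{e:zeta,s,bounds}, the Kolmogorov--Garsia--Rodemich--Rumsey chaining lemma applies.
\item Chaining yields $\E\sup_{\theta\in N_\delta}\|Z_{m,\theta}-Z_{m,\theta_0}\|^s_{\gamma,1+T}\le C_T\delta^{\alpha(1-\zeta)s}$, which is the first half of \eqref{e:Z_m,sigma,holder}. The second half then follows by the triangle inequality, together with the bound for $Z_{m,\theta_0}$ from Lemma~\ref{l:main_invariance_lemma}.
\item The same chaining argument, applied to raw partial sums instead of the Hölder norm, gives $\E\sup_{\theta\in N_\delta}\max_{r\le N}\|\sum_{i=a+1}^{a+r}Y_{i,\theta}\|^{2q}\le CN^q$ uniformly in $a$.
\item The analogous statement holds for $T_{i,\theta}=Q_J(Y_{i,\theta})-\E Q_J(Y_{i,\theta})$, via the Lipschitz bound \eqref{e:Q_lipschitz} and boundedness of $Y_\theta$. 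Note that $J=-I$ or $I$ here, independent of $\theta$.
\end{itemize}

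\textbf{Step 3 (transfer to the remaining claims).} For \eqref{e:remainder_thetahat_bound}, repeat the proof of Lemma~\ref{lemma_remainder} with $\sup_{\theta\in N_\delta}$ inserted throughout, and restrict to the event $\{\widehat\theta_m\in N_\delta\}$, whose probability tends to one.
\begin{itemize}
\item The range $k\ge m$ and the range $k\le m$ go through verbatim, using the uniform maximal bounds of Step 2 in place of Lemma~\ref{l:increment_maximal}.
\item The only non-moment ingredient is the pair of ergodic-theorem steps \eqref{e:Ti_ergodic} and \eqref{e:Yi_ergodic}. I would replace them by a dyadic-block Markov argument: $\E\sup_\theta\max_{2^v\le b<2^{v+1}}|b^{-1}\sum_{i\le b}T_{i,\theta}|^q\lesssim 2^{-vq/2}$, which is summable over $v\ge\log_2 m$. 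The same applies to $\|b^{-1}\sum Y_{i,\theta}\|^2$. The term $\tr(A_\theta)/m$ is handled by \eqref{e:uniform_tracebounds}.
\end{itemize}
Finally, \eqref{e:Gamma_theta_hat_unif_infty} follows by rerunning Lemmas~\ref{l:Gamma_tail} and \ref{l:Rm1,Rm2,neglig} with $\sup_\theta$ inside every expectation. Those proofs use only $\sup_\theta|\tr A_\theta|<\infty$, the bound $\E\sup_\theta M_\beta(n)\le Cn^{1-\beta}$ (which follows from the uniform bound of Step 2 with the same dyadic sum), and the uniform bound $\E\sup_\theta\max_{s\le t}\|Z_{m,\theta}(1+s)\|^2/(1+s)\lesssim\log(1+t)$.

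\textbf{Main obstacle.} I expect the hard step to be Step 1 combined with the chaining in Step 2. The dependence coefficients of the difference process must carry a positive power of $\|\theta-\theta'\|$ while staying summable, and that power must beat the parameter dimension $\nu$. The interpolation $\min(a,b)\le a^\zeta b^{1-\zeta}$ is exactly where the hypothesis $p\alpha(1-\eta)>\nu$ enters. If the mixed moment in the interpolation is awkward, I would first try Cauchy--Schwarz with exponents tuned so that only $\vartheta_{p\alpha}(r)$ and $\E\rho(\bx_0,\bX_1)^{\alpha p}$ appear.
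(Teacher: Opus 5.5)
Your proposal is correct and follows essentially the paper's proof. The shared ingredients are: rescaled differences $\|\theta-\theta'\|^{-\alpha(1-\zeta)}(Y_{j,\theta}-Y_{j,\theta'})$ controlled through $a\wedge b\le a^{\zeta}b^{1-\zeta}$ and Lemma~\ref{l:main_invariance_lemma}(ii); dyadic chaining over a $\nu$-dimensional grid using $s\alpha(1-\zeta)>\nu$; dyadic moment bounds replacing the ergodic theorem; and a rerun of Lemmas~\ref{lemma_remainder}, \ref{l:Gamma_tail} and \ref{l:Rm1,Rm2,neglig} with $\sup_\theta$ inside the expectations. The one structural difference is in \eqref{e:Z_m,sigma,holder}: you chain in $t$ first and then in $\theta$ inside $C^\gamma_0([0,1+T],\mathcal K_0)$, whereas the paper chains in $\theta$ first (obtaining a sup-over-$\theta$ maximal inequality) and then in $t$; this makes no difference.

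One bookkeeping fix concerns the coupling coefficients of the rescaled differences. Measure them in $L^p$ (as the paper does), not in $L^s$. Your pointwise interpolation plus H\"older delivers the $L^p$ bound equally well since $s<p$. This matters because Lemma~\ref{l:main_invariance_lemma}(ii) only yields moments of order strictly below the integrability index, so working in $L^s$ would not give the $s$-th moment you need.
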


\begin{proof} Write 
$$Y_{j,\theta}=Y_\theta(\bX_j), \qquad Y_{j,\theta}^{(r)}=Y_\theta(\bX^{(r)}_j),$$
$$
T_{j,\theta}= Q_J(Y_{j,\theta})-\E Q_J(Y_{j,\theta}),\qquad T^{(r)}_{j,\theta}=Q_J(Y_{j,\theta}^{(r)})-\E Q_J(Y_{j,\theta}).
$$
 For $\theta,{\theta'} \in N_\delta$, let
$$
D_{j,\theta,{\theta'}}=Y_{j,\theta}-Y_{j,{\theta'}}, \qquad D_{j,\theta,{\theta'}}^{(r)} =Y_{j,\theta}^{(r)}-Y_{j,{\theta'}}^{(r)}.
$$
Since $\E Y_{j,\theta}=0$, we have $\E D_{j,{\theta},\theta'}=0$. Using \eqref{e:phi0-distance}, we have the two bounds:
\begin{align}
\big\|D_{j,\theta,{\theta'}}
 -D_{j,\theta,{\theta'}}^{(r)}\big\|_{\mathcal K_0}
&\leq \| Y_{j,\theta}-Y_{j,{\theta'}}\|_{\cK_0} +  \| Y^{(r)}_{j,\theta}-Y^{(r)}_{j,{\theta'}}\|_{\cK_0} \notag\\
&\leq 
C_I\|\theta-{\theta'}\|^\alpha
\big(1+ \rho(\bx_0,\bx)^\alpha +\rho(\bx_0,\bx')^\alpha +\E \rho(\bx_0,\bX_1)^\alpha\big),
\label{e:bandwidth-coupling-scale}\\
\big\|D_{j,{\theta'},\tau}
 -D_{j,{\theta'},\tau}^{(r)}\big\|_{\mathcal K_0}
&\leq \| Y_{j,\theta}-Y_{j,\theta}^{(r)}\|_{\cK_0} +  \| Y_{j,{\theta'}}-Y^{(r)}_{j,{\theta'}}\|_{\cK_0}\notag\\ 
&\leq C_I\rho(\bX_j,\bX_j^{(r)})^\alpha \label{e:bandwidth-coupling-data}.
\end{align}
Hence, since $\E\|\bX_1\|^{p\alpha}<\infty$,
\begin{equation*}
\big(\E \big\|D_{j,\theta,{\theta'}}
 -D_{j,\theta,{\theta'}}^{(r)}\big\|_{\mathcal K_0}^p\big)^{1/p}  \lesssim \vartheta_{p\alpha}(r)^\alpha \wedge \|\theta-{\theta'}\|^\alpha
\end{equation*}
Now, with $\zeta,s$ as in \eqref{e:zeta,s,bounds}, set
$$
\widetilde D_{j,\theta,{\theta'}} = \|\theta-{\theta'}\|^{-\alpha(1-\zeta)}D_{j,\theta,{\theta'}}.
$$
Then,  using that $a\wedge b \leq a^\zeta b^{1-\zeta}$ for $a,b>0$, with $\eta'=\eta/\zeta\in(0,1)$, 
$$
\sum_{r=1}^\infty\big(\E \big\|\widetilde D_{j,\theta,{\theta'}}
 -\widetilde D_{j,\theta,\theta'}^{(r)}\big\|_{\mathcal K_0}^p\big)^{\eta'/p} \lesssim  \sum_{r=1}^\infty\left(\vartheta_{p\alpha}(r)^{\theta \alpha}\right)^{\eta'} = \sum_{r=1}^\infty\vartheta_{p\alpha}(r)^{\eta \alpha} <\infty.
$$
Hence, Lemma \ref{l:main_invariance_lemma}(ii) applied to $\widetilde D_{j,\theta,{\theta'}}$ gives
\begin{equation}\label{D,j,theta,sigma_bound}
\E\max_{1\leq k \leq N} \Bigg\|\sum_{j=1}^k D_{j,\theta,{\theta'}}\Bigg\|_{\mathcal K_0}^{s}\leq  C N^{s/2}\|\theta-{\theta'}\|^{\alpha(1-\zeta)s},\quad \theta,{\theta'} \in N_\delta.
\end{equation}
Now, observe
$
N_\delta \subseteq \theta_0 + [-\delta,\delta]^\nu$; by shrinking $\delta$ if needed we may assume all bounds previously etablished over $\theta \in N_\delta$ also hold over $\theta_0+[-\delta,\delta]^\nu$. So, consider the finite cubic sets
$$
 \mathcal Q_r =\theta_0+\{-\delta,-\delta(1-2^{-r}),\ldots,  \delta(1-2^{-r}),\delta\}^\nu ,\qquad  \mathcal Q_0 =\{\theta_0\}.
$$
Then $\mathcal Q_{r-1}\subseteq \mathcal Q_r$, and $|\mathcal Q_r|\leq (1+ 2^{r+1})^\nu\lesssim 2^{r\nu}.$ For each $\theta \in N_\delta$, let $\pi_r(\theta)\in \mathcal Q_r$ be closest point to $\theta$ (chosen arbitrarily in the case of ties); note that
$$
\|\theta-\pi_r(\theta)\| \leq  \sqrt{(\delta 2^{- r})^2 + \ldots+(\delta 2^{- r})^2} = \sqrt{\nu}\delta 2^{- r} 
$$
Since $\pi_r(\theta)\to \theta$ and $Y_{j,\theta}$ is $\alpha$-H\"older continuous in $\theta$, by \eqref{e:holder_of_Y}, $Y_{j,\pi_R(\theta)}\to Y_{j,\theta}$ as $R\to\infty$, giving 
$$
 D_{j,\theta,\theta_0}=\lim_{R\to\infty}Y_{j,\pi_R(\theta)}-Y_{j,{\theta_0}} = \lim_{R\to\infty}\sum_{r=1}^R\big( Y_{j,\pi_r(\theta)}-Y_{j,\pi_{r-1}(\theta)}\big)=\sum_{r=1}^\infty D_{j,\pi_r(\theta),\pi_{r-1}(\theta)}
$$
We have
\begin{align*}
\bigg(\E\max_{1\leq k \leq N} \sup_{\theta \in N_\delta}\Bigg\|\sum_{j=1}^k D_{j,\theta,{\theta_0}}\Bigg\|_{\mathcal K_0}^{s} \bigg)^{1/s}&\leq \sum_{r=1}^\infty \bigg(\E\max_{1\leq k \leq N} \sup_{\theta \in N_\delta}\Bigg\|\sum_{j=1}^k D_{j,\pi_r(\theta),\pi_{r-1}(\theta)}\Bigg\|_{\mathcal K_0}^{s}\bigg)^{1/s} \\
&\leq \sum_{r=1}^\infty \bigg(\E\max_{1\leq k \leq N} \sup_{(q,q')\in \dcal N_\delta(r)}\Bigg\|\sum_{j=1}^k D_{j,q,q'}\Bigg\|_{\mathcal K_0}^{s}\bigg)^{1/s},
\end{align*}
where
$$
\dcal N_\delta(r)=\{(\pi_r(\theta),\pi_{r-1}(\theta)):\theta \in N_\delta\}.
$$
Note there are $|\mathcal Q_{r-1}|\lesssim 2^{r\nu}$ possible values of $\pi_{r-1}(\theta)$, and  upon fixing $q=\pi_{r-1}(\theta)$,  there are $3^\nu$ possible values of $\pi_r(\theta)$ (obtained by adding either $- \delta 2^{-r},0, \delta 2^{-r}$ to each coordinate of $q$).  Hence,
$$
|\dcal N_\delta(r)| \leq 3^\nu|\mathcal Q_{r-1}|  \lesssim 2^{\nu r}.
$$
Since $\|q-q'\|\leq \sqrt{\delta^2 2^{-2r}+\ldots+\delta^2 2^{-2r}}=\sqrt \nu \delta 2^{-r}$, for each $(q,q')\in \dcal N_\delta(r)$, we have,  via \eqref{D,j,theta,sigma_bound}:
\begin{align*}
\bigg(\E\max_{1\leq k \leq N} \sup_{(q,q')\in \dcal N_\delta(r)}\Bigg\|\sum_{j=1}^k D_{j,q,q'}\Bigg\|_{\mathcal K_0}^{s}\bigg)^{1/s}&\leq \bigg(\sum_{(q,q') \in \dcal N_\delta(r)}\E\max_{1\leq k \leq N}\Bigg\|\sum_{j=1}^k D_{j,q,q'}\Bigg\|_{\mathcal K_0}^{s}\bigg)^{1/s}\\
& \lesssim 2^{(\nu /s)r-r\alpha(1-\zeta)}\delta^{\alpha(1-\zeta)}N^{1/2}.
\end{align*}
Since $\nu/s-\alpha(1-\zeta)<0$ by \eqref{e:zeta,s,bounds}, we obtain, 
$$
\E\max_{1\leq k \leq N} \sup_{\theta \in N_\delta}\Big\|\sum_{j=1}^k D_{j,\theta,{\theta_0}}\Big\|_{\mathcal K_0}^{s} \lesssim \delta^{\alpha(1-\zeta)s}N^{s/2}.
$$
Using $Y_{j,\theta}= Y_{j,\theta_0} + D_{j,\theta,\theta_0}$,  and arguing similarly for $T_{j,\theta}$ (using that  $|Q_J(Y(\bx))-Q_j(Y(\bx'))|\leq2 \|Y(\bx)-Y(\bx')\|_{\cK_0}\sup_{\theta,\bx}\|Y_\theta(\bx)\|\lesssim \rho(\bx,\bx')^{\alpha})$ we obtain, for any integer $a$ (c.f. Lemma \ref{l:increment_maximal}):
\begin{equation}\label{e:global-sigma-maximal}
\E\sup_{\theta \in N_\delta}
\max_{1\leq k\leq N}
\left\|\sum_{j=a+1}^{a+k}Y_{j,\theta}\right\|^{s}
\lesssim N^{s/2},\qquad \E\sup_{\theta\in N_\delta}
\max_{1\leq k\leq N}
\left\|\sum_{j=a+1}^{a+k}T_{j,\theta}\right\|^{s}
\lesssim N^{s/2}.
\end{equation}
We claim this implies
\begin{equation}\label{e:remainder_theta_unif_bound}
m^{-1}\max_{k\geq2}
\max_{(\ell_1,\ell_2)\in\cS_k} \frac{(k-\ell_2)^2}{g_m(k,\ell_1,\ell_2)} \sup_{\theta \in N_\delta}\left|\cR_{\theta}((0,m+\ell_1])+\cR_{\theta}((m+\ell_2,m+k]) \right| =o_\P(1).
\end{equation}
 Indeed, with $q=s/2$ and
\begin{align*}
\bigg(\E  \bigg|\sup_{\theta\in N_\delta}\sup_{b\geq m}
\frac1b\Big|\sum_{j=1}^bT_{j,\theta}\Big|\bigg|^q
\bigg)^{1/q}&\leq \sum_{z=\lfloor\log_2m\rfloor}^\infty 2^{-z}
\bigg(\E \bigg|
\sup_{\theta\in N_\delta}
\max_{1\leq b\leq2^{z+1}}
\Big|\sum_{j=1}^bT_{j,\theta}\Big| \bigg|^q\bigg)^{1/q}\\
&\lesssim
\sum_{z=\lfloor\log_2m\rfloor}^\infty2^{-z/2}
\lesssim m^{-1/2},
\end{align*}
i.e. the bound \eqref{e:Ti_ergodic} in the proof of Lemma \ref{lemma_remainder} holds for $T_{j,\theta}$ uniformly in $\theta \in N_\delta$.  
The same argument shows the bound \eqref{e:Yi_ergodic} with $Y_{j,\theta}$ in place of $Y_j$ holds uniformly in $\theta\in N_\delta$; all remaining bounds in  the proof of Lemma \ref{lemma_remainder} rely only on maximal inequalities of Lemma \ref{l:increment_maximal}. Thus, by repeating the proof of Lemma \ref{lemma_remainder}, using  the uniform maximal inequalities \eqref{e:global-sigma-maximal} in place of the bounds in Lemma \ref{lemma_remainder}, we conclude \eqref{e:remainder_theta_unif_bound}. Since $\P(\widehat \theta_m \in N_\delta)\to 1$, we  obtain the first statement \eqref{e:remainder_thetahat_bound}.

We now turn to \eqref{e:Z_m,sigma,holder}. Recall $Z_{m,\theta}$ is the partial sums process \eqref{e:Zm} with $Y_{j,\theta}$ in place of $Y_j$. Let 
$$
D^{Z}_{m,\theta,\theta'}(t) = Z_{m,\theta}(t)-Z_{m,\theta'}(t).
$$
Note, for $0<u<t$
$$Z_{m,\theta}(t)-Z_{m,\theta}(u)=\frac{1}{\sqrt m}\sum_{j=\lceil mu \rceil }^{\lfloor mt \rfloor} Y_{j,\theta}+ \frac1{\sqrt m}\left(mt-\lfloor mt\rfloor\right)Y_{\lceil mt\rceil,\theta}-\frac1{\sqrt m}\left(mu-\lfloor mu\rfloor\right)Y_{\lceil mu\rceil,\theta}.$$
From \eqref{D,j,theta,sigma_bound}, we have, for $u<t$, when $\lfloor mu\rfloor <\lfloor mt\rfloor$,
\begin{align}
&\bigg(\E\sup_{\theta\in N_\delta} \|D^{Z}_{m,\theta,\theta_0}(t)-D^{Z}_{m,\theta,\theta_0}(u)\|_{\cK_0}^s\bigg)^{1/s}\notag\\
&  \lesssim\bigg( \E\sup_{\theta\in N_\delta} \bigg\|\frac{1}{\sqrt m}\bigg[(\lfloor mu\rfloor +1 -mu) D_{\lceil mu \rceil,\theta,\theta_0}+ \sum_{j=\lceil mu \rceil+1}^{\lfloor mt \rfloor} D_{j,\theta,\theta_0} + (mt-\lfloor mt\rfloor) D_{\lceil mt\rceil,\theta,\theta_0}\bigg] \bigg\|_{\cK_0}^s\bigg)^{1/s}\notag\\
&  \lesssim\bigg( \E\sup_{\theta\in N_\delta} \bigg\|\frac{1}{\sqrt m}\sum_{j=\lceil mu \rceil+1}^{\lfloor mt \rfloor} D_{j,\theta,\theta_0} \bigg\|_{\cK_0}^s\bigg)^{1/s} + \frac{(mt-\lfloor mt\rfloor + \lfloor mu\rfloor +1 -mu)}{\sqrt m} \Big(\E\sup_{\theta\in N_\delta} \| D_{1,\theta,\theta_0}\|^s_{\cK_0}\Big)^{1/s}\notag\\
& \lesssim \delta^{\alpha(1-\zeta)} m^{-1/2}(m^{1/2}|t-u|^{1/2} + (m|t-u|\wedge 2) )\notag\\
& \lesssim \delta^{\alpha(1-\zeta)} |t-u|^{1/2}.
\label{e:D^Z_increment_bound}
\end{align}
When $u<t$ but $\lfloor mu\rfloor =\lfloor mt\rfloor$, the first sum vanishes and it it follows that  \eqref{e:D^Z_increment_bound} still holds. Now we establish a H\"older bound for $t\mapsto D_{m,\theta,\theta_0}^Z(t)$.  So, let
$$
I_r =\ \{0, 2^{-r},\ldots, 1-2^{-r},1\}=:\{t_{0,r},\ldots,t_{2^r,r}\}.
$$
Write
$$
M_{m,r}=\max_{1\leq j \leq 2^r}\sup_{\theta \in N_\delta}\Big\|D^{Z}_{m,\theta,\theta_0}(t_{j,r})-D^{Z}_{m,\theta,\theta_0}(t_{j-1,r})\Big\|.
$$
The bound \eqref{e:D^Z_increment_bound} gives
$$
\big(\E M_{m,r}^s\big)^{1/s} \lesssim \Big(\sum_{j=1}^{2^r}  \delta^{\alpha(1-\zeta)s}|t_{j,r}-t_{j-1,r}|^{s/2}\Big)^{1/s} =  \delta^{\alpha(1-\zeta)} 2^{-r(1/2-1/s)}.
$$
Now, re-using notation, now write $\pi_r(t)$ for the nearest point in $I_r$ to $t$.Then, $\pi_R(t)\to t$ as $R\to\infty$, and by contruction $t\mapsto D_{m,\theta,\theta_0}(t)$ is continuous, so for any integer $r_0$,
\begin{align*}\| D^{Z}_{m,\theta,\theta_0}(t) -D^{Z}_{m,\theta,\theta_0}(\pi_{r_0}(t))\|_{\cK_0}  &=  \Big\|\lim_{R\to\infty}\sum_{r=r_0+1}^R\big(D^{Z}_{m,\theta,\theta_0}(\pi_r(t))-D^{Z}_{m,\theta,\theta_0}(\pi_{r-1}(t))\big)\Big\|_{\cK_0}\\
&  \leq  \sum_{r=r_0+1}^\infty\Big\|\big(D^{Z}_{m,\theta,\theta_0}(\pi_r(t))-D^{Z}_{m,\theta,\theta_0}(\pi_{r-1}(t))\big)\Big\|_{\cK_0}\\
& \leq \sum_{r=r_0+1}^\infty M_{m,r}.
\end{align*}
So, pick $r_0=r_0(t,u)$  such that
$$
2^{-(r_0+1)}<|t-u|< 2^{-r_0}.
$$
Then,
\begin{align*}
\frac{\|D^{Z}_{m,\theta,\theta_0}(t)-D^{Z}_{m,\theta,\theta_0}(u)\|_{\cK_0}}{|t-u|^\gamma} &\leq\frac{ \|D^{Z}_{m,\theta,\theta_0}(\pi_{r_0}(t))-D^{Z}_{m,\theta,\theta_0}(\pi_{r_0}(u))\|_{\cK_0} + 2\sum_{r=r_0+1}^\infty M_{m,r}}{|t-u|^\gamma} \\
& \leq  2^{r_0\gamma}\sum_{r=r_0}^\infty M_{m,r}\\
& \leq  \sum_{r=0}^\infty 2^{r\gamma}M_{m,r}.
\end{align*}
Hence, for the $C_0^\gamma([0,1],\cK_0)$ seminorm  \eqref{e:holdersemi} of $D^Z_{m,\theta,\theta_0}$, $\sup_{\theta \in N_\delta}\big[D^{Z}_{m,\theta,\theta_0}(u)\big]_{\gamma,1}\leq C  \sum_{r=0}^\infty 2^{r\gamma}M_{m,r} $. Hence,%
\begin{align*}
\Big(\E\sup_{\theta \in N_\delta}\left[D^{Z}_{m,\theta,\theta_0}(u)\right]_{\gamma,1}^{s}\Big)^{1/s} &\lesssim \sum_{r=0}^\infty 2^{r\gamma }\big(\E M_{m,r}^s\big)^{1/s}\\
 &\leq \delta^{\alpha(1-\zeta)}\sum_{r=0}^\infty2^{r\gamma -r(1/2-1/s)} \lesssim \delta^{\alpha(1-\zeta)},
 \end{align*}
where the last inequality is from \eqref{e:zeta,s,bounds}. The same argument shows, for any fixed $T$,
$$
\bigg(\E\sup_{\theta \in N_\delta}\left[D^{Z}_{m,\theta,\theta_0}(u)\right]_{\gamma,T}^{s}\bigg)^{1/s} \lesssim \delta^{\alpha(1-\zeta)}.
$$
Since $\|x\|_{\infty,T}\leq [x]_{\gamma,T}$, the same bound holds for $\|D^{Z}_{m,\theta,\theta_0}\|_{\gamma,T}$, giving the first bound in \eqref{e:Z_m,sigma,holder}.  The second bound follows from $Z_{m,\theta} = D^{Z}_{m,\theta,\theta_0} + Z_{m,\theta_0}$ and the triangle inequality.

For the last bound \eqref{e:Gamma_theta_hat_unif_infty},   we repeat the proof of \eqref{l:Gamma_tail_1}, taking 
$\sup_{\theta\in N_\delta}$ throughout, using the uniform-in-$\theta$ maximal inequalities for $Z_{m,\theta}$ and for $\sum_{j=1}^k Y_{j,\theta}$ above. 
Consequently, the arguments establishing
\eqref{e:Z_m_logtail_bound}, \eqref{e:tailterm_remianders}, and
\eqref{e:am_tail_negligible} hold uniformly over
$\theta\in N_\delta$, and combining them proves \eqref{e:Gamma_theta_hat_unif_infty}.
\end{proof}

\subsection{Proofs under the alternative}\begin{proof}[Proof of Theorem~\ref{t:delay_full}]
Let $k_m>k_*$ and set
$$
    d_m=k_m-k_*, \quad b_m=m+k_*,  \qquad r_m=d_m.
$$
For each $\ell\geq1$, set
$$
    \xi_{\ell,m}^{(L)}  =  \overline\phi_{\ell,0,b_m},
    \qquad
    \xi_{\ell,m}^{(R)} =  \overline\phi_{\ell,b_m,m+k_m}-\Delta_\ell .
$$
Applying \eqref{e:U-product-means} with
$f_\ell(\bx,\by)=\phi_\ell(\bx)\phi_\ell(\by)$ gives
\begin{align}
\mathcal U(f_\ell;(0,b_m];(b_m,m+k_m])
&=
-\big(\xi_{\ell,m}^{(L)}-\xi_{\ell,m}^{(R)}+\Delta_\ell\big)^2
     + \frac{Q_{\ell,0,b_m}}{b_m(b_m-1)}
     - \frac{\big(\xi_{\ell,m}^{(L)}\big)^2}{b_m-1}
        \notag\\
&\qquad
     + \frac{Q_{\ell,b_m,m+k_m}}{r_m(r_m-1)}
     - \frac{\big(\xi_{\ell,m}^{(R)}+\Delta_\ell\big)^2}{r_m-1}
        \notag\\
&=
-\Delta_\ell^2
-2\Delta_\ell\big(\xi_{\ell,m}^{(L)}-\xi_{\ell,m}^{(R)}\big)
-\big(\xi_{\ell,m}^{(L)}-\xi_{\ell,m}^{(R)}\big)^2
+R_{m,\ell}.
\label{e:full_U_LR_decomp}
\end{align}
Multiplying by $\lambda_\ell$ and summing over $\ell$ yields
\begin{align}
&\mathcal U\big(h;(0,b_m];(b_m,m+k_m]\big)
        \notag\\
&\qquad
=
\mathfrak D_h(F,F_*)
-2\la \Delta,J(\widetilde Y_m^{(L)}-\widetilde Y_m^{(R)})\ra_{\cH}
-Q_J(\widetilde Y_m^{(L)}-\widetilde Y_m^{(R)})
+\mathcal R_m,
\label{e:full_main_expansion}
\end{align}
where
$$
\widetilde Y_m^{(L)}
=
\sum_{\ell\geq1}\sqrt{|\lambda_\ell|}\,
\xi_{\ell,m}^{(L)}\phi_\ell,
\qquad
\widetilde Y_m^{(R)}
=
\sum_{\ell\geq1}\sqrt{|\lambda_\ell|}\,
\xi_{\ell,m}^{(R)}\phi_\ell,
$$
and
$$
    \mathcal R_m=\sum_{\ell\geq1}\lambda_\ell R_{m,\ell}.
$$
By the same bounds used under $H_0$,
$$
    \|\widetilde Y_m^{(L)}\|_{\cH}=O_\P(b_m^{-1/2}),
    \qquad
    \|\widetilde Y_m^{(R)}\|_{\cH}=O_\P(d_m^{-1/2}),
$$
and hence
$$
    \|\widetilde Y_m^{(L)}-\widetilde Y_m^{(R)}\|_{\cH}
    =
    O_\P(b_m^{-1/2}+d_m^{-1/2}).
$$
Thus
$$
\big|
\la \Delta,J(\widetilde Y_m^{(L)}-\widetilde Y_m^{(R)})\ra_{\cH}
\big|
\leq
\|\Delta\|_{\cH}
O_\P(b_m^{-1/2}+d_m^{-1/2}),
$$
$$
    Q_J(\widetilde Y_m^{(L)}-\widetilde Y_m^{(R)})
    =
    O_\P(b_m^{-1}+d_m^{-1}),
$$
and
\begin{align*}
 \bigg| \sum_{\ell=1}^\infty \lambda_\ell  R_{m,\ell} \bigg|&\leq\frac{1}{b_m(b_m-1)}\sum_{j=1}^{b_m} \|Y_j\|_{\cH}^2+\frac{\| \overline Y_{0,b_m}\|_{\cH}^2}{b_m-1}\\
 &\qquad \qquad + \frac{1}{d_m(d_m-1)}\sum_{j=b_m+1}^{m+k_m} \|Y_j\|_{\cH}^2   +\frac{\| \overline Y_{b_m,m+k_m}\|_{\cH}^2}{d_m-1}\\
 & = O_\P(b_m^{-1}) + O_\P(b_m^{-3/2}) + O_\P(d_m^{-1}) + O_\P(r^{-1}_m)\\
  & = O_\P(b_m^{-1}) + O_\P(d_m^{-1}),
\end{align*}
I.e., $|\mathcal R_m|= O_\P(b_m^{-1})+O_\P(d_m^{-1})$.
Consequently,
\begin{align}
\mathcal U\big(h;(0,b_m];(b_m,m+k_m]\big)
&=
\mathfrak D_h(F,F_*)
+
O_\P\Big(
\|\Delta\|_{\cH}(b_m^{-1/2}+d_m^{-1/2})
+b_m^{-1}+d_m^{-1}
\Big).
\label{e:full_U_asympt}
\end{align}

Write
$$
    t_m=\frac{k_m}{m},
    \qquad
    L_m=\frac{b_m}{m}=1+\frac{k_*}{m},
    \qquad
    \rho_m=\frac{d_m}{m},
    \qquad
    q_m=\frac{d_m}{m+k_*}=\frac{\rho_m}{L_m}.
$$
Then
\begin{align}
\mathcal D_m(k_m)
&\geq
\frac{d_m^2}{m\,g(t_m,L_m,\rho_m)}
\left|
\mathcal U\big(h;(0,b_m];(b_m,m+k_m]\big)
\right|.
\label{e:full_detector_lower_start}
\end{align}
Since
$$
    1+t_m=L_m(1+q_m),
    \qquad
    1+\frac{\rho_m}{L_m}=1+q_m,
$$
we have
\begin{align*}
g(t_m,L_m,\rho_m)
&=
\left(L_m(1+q_m)\right)^{1-\beta}
(L_mq_m)^\beta
(1+q_m)
\log^{2-\beta}(e+t_m)
        \notag\\
&=
L_m q_m^\beta(1+q_m)^{2-\beta}
\log^{2-\beta}(e+t_m).
\end{align*}
Hence
\begin{align}
\frac{d_m^2}{m\,g(t_m,L_m,\rho_m)}
&=
(m+k_*)
\left(\frac{q_m}{1+q_m}\right)^{2-\beta}
\log^{-(2-\beta)}(e+t_m).
\label{e:full_main_scaling}
\end{align}

Set
$$
    \zeta_m
    =
    (m+k_*)|\mathfrak D_h(F,F_*)|
    \left(\frac{q_m}{1+q_m}\right)^{2-\beta}
    \log^{-(2-\beta)}(e+t_m).
$$
Then $\zeta_m\to\infty$ by assumption. Moreover,
$$
    \zeta_m\leq (m+k_*)|\mathfrak D_h(F,F_*)|,
    \qquad
    \zeta_m\leq d_m|\mathfrak D_h(F,F_*)|.
$$
Thus
$$
    (m+k_*)|\mathfrak D_h(F,F_*)|\to\infty,
    \qquad
    d_m|\mathfrak D_h(F,F_*)|\to\infty.
$$
By Assumption~\ref{a:Ha_dependence,moments},
$$
    \|\Delta\|_{\cH}^2
    \leq
    \kappa_0|\mathfrak D_h(F,F_*)|,
$$
so \eqref{e:full_U_asympt} gives
$$
\mathcal U\big(h;(0,b_m];(b_m,m+k_m]\big)
=
\mathfrak D_h(F,F_*)\left(1+o_\P(1)\right).
$$
Combining this with
\eqref{e:full_detector_lower_start}--\eqref{e:full_main_scaling},
$$
\mathcal D_m(k_m)
\geq
\zeta_m\{1+o_\P(1)\}.
$$
Hence $\mathcal D_m(k_m)\stackrel{\P}{\longrightarrow}\infty$, and therefore
$
    \P(\tau_m^{\mathrm{full}}\leq k_m)\to1.
$
\end{proof}

\section{Proofs for Section \ref{s:implementation}}\label{s:implem_supp}

We begin with a lemma that establishes that the Gram matrix built from empirically centered $\overline h_m(\bx,\by)$ using the data from the the historical baseline period serves as a suitable approximation of the analogous matrix built from the true degenerate kernel $\overline h(\bx,\by)$.

\begin{lemma} \label{l:H_empirical_vs_H}Suppose Assumptions~\ref{a:h}, \ref{a:Y_moment}, \ref{a:Lpm}, and \ref{a:ac} hold. Let
\begin{equation}\label{e:def_Hm}
    \overline H_m=m^{-1}(\overline h_m(\bX_i,\bX_j))_{1\leq i,j\leq m},\qquad\widetilde H_m=m^{-1}(\overline h(\bX_i,\bX_j))_{1\leq i,j\leq m}.
\end{equation} Then,
\begin{equation}
\label{e:A_to_H}
\left\|\overline H_m-\widetilde H_m\right\|_{\rm F} = O_\P(m^{-1/2}).
 \end{equation}
\end{lemma}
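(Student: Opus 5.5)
The plan is a direct computation: write the entries of $\overline H_m-\widetilde H_m$ explicitly, reduce them to inner products in $\cH$ of $Y(\bX_i)$ with the sample mean $\overline Y_{0,m}$, and then apply the moment bounds of Lemma~\ref{l:main_invariance_lemma}.

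\emph{Reduction to a degenerate kernel.} Replacing $h$ by $h+f_0(\bx)+f_0(\by)+c$ leaves both the empirical double centering $\overline h_m$ and the population centering $\overline h$ unchanged. As in the remark opening Section~\ref{secone}, we may therefore assume $h=\overline h$, so that $\E h(\bx,\bX)=0$. Put
\[
a_m(\bx)=m^{-1}\sum_{j=1}^m h(\bx,\bX_j),\qquad c_m=m^{-2}\sum_{i,j=1}^m h(\bX_i,\bX_j).
\]
Then
\[
(\overline H_m-\widetilde H_m)_{ij}=m^{-1}\big(-a_m(\bX_i)-a_m(\bX_j)+c_m\big),
\]
and the elementary inequality $(x+y+z)^2\le 3(x^2+y^2+z^2)$ gives
\[
\|\overline H_m-\widetilde H_m\|_{\rm F}^2\lesssim m^{-1}\sum_{i=1}^m a_m(\bX_i)^2+c_m^2 .
\]
It therefore suffices to show that both terms on the right are $O_\P(m^{-1})$.

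\emph{Off-diagonal and diagonal pieces.} Split $a_m(\bX_i)=a_m^{\circ}(\bX_i)+m^{-1}h(\bX_i,\bX_i)$, where $a_m^{\circ}$ omits the term $j=i$. For $j\ne i$, Assumption~\ref{a:ac} ensures that the expansion \eqref{e:h_expansion} holds almost surely at $(\bX_i,\bX_j)$. Hence
\[
a_m^{\circ}(\bX_i)=\la Y_i,J\overline Y_{0,m}\ra_{\cH}-m^{-1}Q_J(Y_i),
\]
with $J$ as in \eqref{e:def_J}. Since $\|J\|_{\rm op}\le1$ and $|Q_J(f)|\le\|f\|_{\cH}^2$, this yields
\[
m^{-1}\sum_{i=1}^m a_m(\bX_i)^2\lesssim \|\overline Y_{0,m}\|_{\cH}^2\,\frac1m\sum_{i=1}^m\|Y_i\|_{\cH}^2+\frac{1}{m^3}\sum_{i=1}^m\big(\|Y_i\|_{\cH}^4+h(\bX_i,\bX_i)^2\big).
\]
The bounds on each factor are as follows.
\begin{itemize}
\item By Lemma~\ref{l:main_invariance_lemma}(ii), applied through Lemma~\ref{l:dep_transfer}, $\|\sum_{i=1}^m Y_i\|_{\cH}=O_\P(m^{1/2})$, so $\|\overline Y_{0,m}\|_{\cH}^2=O_\P(m^{-1})$.
\item By Markov's inequality, $m^{-1}\sum_{i=1}^m\|Y_i\|^2_{\cH}=O_\P(1)$.
\item Assumption~\ref{a:Y_moment} with $p>4$ gives $\E\|Y_1\|_{\cH}^4<\infty$, and Assumption~\ref{a:h}(i) gives $\E h(\bX_1,\bX_1)^2<\infty$. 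Hence the last sum, including the factor $m^{-3}$, is $O_\P(m^{-2})$.
\end{itemize}
The same decomposition applies to $c_m$:
\[
c_m=Q_J(\overline Y_{0,m})-m^{-2}\sum_{i=1}^m Q_J(Y_i)+m^{-2}\sum_{i=1}^m h(\bX_i,\bX_i)=O_\P(m^{-1}).
\]
Combining these bounds gives \eqref{e:A_to_H}.

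\emph{Main obstacle.} I expect the delicate point to be the bookkeeping of the diagonal terms $h(\bX_i,\bX_i)$. The spectral expansion of $h$ need not hold on the diagonal. This is exactly why I separate the $j=i$ contributions and control them only through the diagonal integrability condition in Assumption~\ref{a:h}(i), while Assumption~\ref{a:ac} justifies the expansion everywhere off the diagonal.
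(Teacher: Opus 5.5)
Your proof is correct and follows essentially the paper's route: you write the entries of the difference as $-a_i-a_j+c$, rewrite the row means as $\langle Y_i,J\overline Y_{0,m}\rangle_{\cH}$ off the diagonal via Assumption~\ref{a:ac}, treat the diagonal terms separately, and get $\|\overline Y_{0,m}\|_{\cH}=O_\P(m^{-1/2})$ from Lemma~\ref{l:main_invariance_lemma}. The one genuine difference is your reduction to $h=\overline h$. It makes the common terms $m^{-1}\sum_s(h_1(\bX_s)-\mu)$, which appear in both the paper's $a_{i,m}$ and $b_m$, cancel exactly, so you avoid the paper's extra step of proving $h_1$ is $\alpha$-H\"older and applying the invariance lemma to it, and your constant term is even $O_\P(m^{-1})$.
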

\begin{proof}
For $\bX,\bY\stackrel{iid} \sim F$, let $h_1(\bx)=\E h(\bx,\bY)$ and $\mu = \E h(\bX,\bY)$. Note
\begin{align*}
&\overline h_m(\bX_i,\bX_j)-\overline h(\bX_i,\bX_j)\\
\quad&= - \left(\frac{1}{m}\sum_{s=1}^m h(\bX_i,\bX_s) -h_1(\bX_i)\right) 
- \left(\frac{1}{m}\sum_{r=1}^m h(\bX_r,\bX_j)- h_1(\bX_j)\right)\\
& \qquad\qquad 
+ \frac{1}{m^2}\sum_{r=1}^m\sum_{s=1}^m (h(\bX_r,\bX_s) - \mu)\\
& = -a_{i,m} -a_{j,m} + b_m,
\end{align*}
which, writing $\mathbf a_m=(a_{1,m},\ldots a_{m,m})^\top$, gives
\begin{align*}
\|\overline H_m-\widetilde H_m\|_{\rm F} &= m^{-1}\|-\mathbf a _m\mathbf 1^\top  -\mathbf 1 \mathbf a_m^\top + \mathbf 1 \mathbf 1^\top b_m\|_{\rm F}\\
&\leq 2m^{-1}\| \mathbf a_m \mathbf 1^\top\|_{\rm F} + m^{-1}|b_m|\| \mathbf 1 \mathbf 1^\top\|_{\rm F}\\
& \leq 2 m^{-1/2} \|\mathbf a_m\| + |b_m|.
\end{align*}
We now proceed to show
\begin{equation}\label{e:am,bm_to0}
    \|\mathbf a_m\|=O_\P(1), \quad |b_m|=O_\P(m^{-1/2}).
\end{equation}
Observe, with $Y_j$ as in \eqref{e:def_Yj}, that for $i\neq j$,
\begin{equation}\label{e:gram_identity}
\langle Y_i,JY_j\rangle_{\cH} = \sum_{\ell =1}^\infty\lambda_\ell \phi(\bX_i)\phi_\ell(\bX_j) = \overline h(\bX_i,\bX_j),
\end{equation}
where the last equality follows from Assumption~\ref{a:ac}.
Now, using $\overline h(\bx,\by)=h(\bx,\by)-h_1(\bx)-h_1(\by)+\mu$, we have
\begin{align*}
a_{i,m}&= \left(\frac1m \sum_{s=1}^m\langle Y_i,J Y_{s}\rangle_{\cH}  + \frac1m\sum_{s=1}^m(h_1(\bX_s) -\mu)\right)  + \frac1m \left( \overline h(\bX_i,\bX_i) -\langle Y_i,J Y_{i}\rangle_{\cH } \right).\\
&= \langle Y_i,J\overline Y_{1,m}\rangle_{\cH} + \frac1m\sum_{s=1}^m(h_1(\bX_s) -\mu)+ \frac1m \left( \overline h(\bX_i,\bX_i) -\langle Y_i,J Y_{i}\rangle_{\cH } \right).
\end{align*}
Hence,
\begin{align}
\notag\E a_{i,m}^2 &\lesssim  \E \|Y_i\|_{\cH}^2 \|\overline Y_{1,m}\|_{\dcal H}^2 + \E \bigg( \frac1m\sum_{s=1}^m(h_1(\bX_s) -\mu)\bigg)^2 + \frac1{m^2} \left(\E \overline h^2(\bX_i,\bX_i)   +\E\|Y_1\|_{\cH}^4\right)\\
&\lesssim \left(\E \|Y_i\|_{\cH}^4\right)^{1/2}\left( \E \|\overline Y_{1,m}\|_{\dcal H}^4\right)^{1/2} +\E\bigg( \frac1m\sum_{s=1}^m(h_1(\bX_s) -\mu)\bigg)^2  +O(m^{-2}).
\label{e:ai_bound}
\end{align}
We proceed to analyze each term in \eqref{e:ai_bound}.  First note Lemma \ref{l:main_invariance_lemma} gives $\E\| \overline Y_{1,m}\|_{\cH}^4 \leq Cm^{-2}$.  Next, using Assumption \ref{a:h}, observe
\begin{align*}
|h_1(\bx)-h_1(\bx')| & \leq \E | h(\bx,\bY)-h(\bx',\bY)|\\
& \leq \left(\E | h(\bx,\bY)-h(\bx',\bY)|^2\right)^{1/2}\\
&= \left(\sum_{\ell=1}^\infty|\lambda_\ell|^2(\phi_\ell(\bx)-\phi_\ell(\bx'))^2 \right)^{1/2}\\
&\leq|\lambda_1| \left(\sum_{\ell=1}^\infty|\lambda_\ell|(\phi_\ell(\bx)-\phi_\ell(\bx'))^2 \right)^{1/2} \\
&\leq C\rho(\bx,\bx')^\alpha.
\end{align*}
With $V_j=h_1(\bX_j)-\mu$, $V_j^{(r)}=h_1(\bX_j^{(r)})-\mu$, and with $p,\eta$ as in Assumption \ref{a:Lpm} we have
$$
\sum_{r=1}^\infty \left(\E | V_1-V_1^{(r)}|^p\right)^{\eta/p} \leq C\sum_{r=1}^\infty \vartheta_{p\alpha}(r)^{\alpha\eta}<\infty,
$$
hence we may apply Lemma \ref{l:main_invariance_lemma} to conclude
$$
\E\Big( \frac1m\sum_{s=1}^m(h_1(\bX_s) -\mu)\Big)^2 \leq C m^{-1}.
$$
Thus, from \eqref{e:ai_bound}, we have
$$
 \E \|\mathbf a_m\|^2 \leq C \left(\frac 1m\sum_{\ell=1}^m(\E\|Y_i\|_{\cH}^4)^{1/2} + 1\right) \leq C,
$$
which gives the first statement in \eqref{e:am,bm_to0}.  For $b_m$, write
\begin{align*}
&\frac1{m^2}\sum_{r=1}^m\sum_{s=1}^m (h(\bX_r,\bX_s) - \mu)\\
&\qquad =  \frac1{m^2}\sum_{r=1}^m \sum_{s=1}^m\left(\overline h(\bX_r,\bX_s) + (h_1(\bX_s)-\mu)+ (h_1(\bX_r) -\mu)\right)\\
& \qquad=\frac1{m^2}\sum_{r=1}^m \sum_{s=1}^m \overline h(\bX_r,\bX_s) +\frac1{m^2}\sum_{i=1}^m
\left( \overline h(\bX_i,\bX_i)-\langle Y_i,JY_i\rangle_{\cH} \right)
+\frac2m\sum_{s=1}^m (h_1(\bX_s)-\mu)\\
&\qquad= \langle \overline Y_{1,m},J\overline Y_{1,m}\rangle_{\cH} + O_\P(m^{-1}) + O_\P(m^{-1/2}).
\end{align*}
Since $\E |\langle \overline Y_{1,m},J\overline Y_{1,m}\rangle_{\cH}|\leq \E\| \overline Y_{1,m}\|_{\cH}^2 \leq Cm^{-1}$, we obtain $b_m=O_\P(m^{-{1/2}})$, which yields the second statement in \eqref{e:am,bm_to0}.  Hence, we have  \eqref{e:A_to_H}.

\end{proof}

\begin{proof}[Proof of Theorem \ref{t:eigen_consistency_theo}]
First, set
$$
H_m^{\mathrm{sp}} = \frac1m \left( \sum_{\ell=1}^\infty\lambda_\ell \phi_\ell(\bX_i)\phi_\ell(\bX_j)\right)_{1\leq i,j\leq m}.
$$
By Assumption~\ref{a:ac}, $H_m^{\mathrm{sp}}$ agrees with $\widetilde H_m$ off the diagonal.  Hence
$$
\| H_m^{\mathrm{sp}}-\widetilde H_m\|_{\rm F}^2 =\frac1{m^2} \sum_{i=1}^m\left(\sum_{\ell=1}^\infty\lambda_\ell\phi_\ell(\bX_i)^2 -\overline h(\bX_i,\bX_i)
\right)^2=O_\P(m^{-1}),
$$
where the last term follows by Assumption~\ref{a:h}(i) and the bound 
$\sum_{\ell\geq1}|\lambda_\ell|\phi_\ell(\bx)^2\leq \|Y(\bx)\|_{\cH}^2$ together with Assumption~\ref{a:Y_moment}.  In view of Lemma~\ref{l:H_empirical_vs_H}, the preceding display shows that
$\overline H_m$, $\widetilde H_m$, and $H_m^{\mathrm{sp}}$ differ only by
$O_\P(m^{-1/2})$  in Frobenius norm. By the
Hoffman--Wielandt inequality  \citep{hoffman:wielandt:1953},  it suffices to study $H_m^{\textrm{sp}}$. We begin along the lines of the proof of \cite[Theorem 3.1]{koltchinskii:gine:2000}. Let
\begin{align}
{\boldsymbol \Phi}_{\ell,m}=m^{-1/2}\left(\phi_\ell(\bX_1),\ldots,\phi_\ell(\bX_m)\right)^\top,\ \ell=1,\ldots, m
\end{align}
$$
\Phi_m = (\boldsymbol \Phi_{1,m},\ldots \mathbf \Phi_{m,m}) \in \R^{m\times m}.
$$
We have:
$$
 H^{\textrm{sp}}_m = \breve H_m + R_m, 
$$
where 
\begin{align*}
\breve H_m &= \sum_{\ell=1}^m \lambda_\ell \mathbf \Phi_{\ell,m}\mathbf \Phi_{\ell,m}^\top =   \Phi_m\Lambda_m\Phi_m^\top,\quad \Lambda_m=\text{diag}\,(\lambda_1,\ldots,\lambda_m).\\
\quad R_m  &= \sum_{\ell=m+1}^\infty \lambda_\ell \mathbf \Phi_{\ell,m}\mathbf \Phi_{\ell,m}^\top.
\end{align*}
Note $\E \phi_\ell(\bX_1)\phi_{\ell'}(\bX_1)=\delta_{\ell,\ell'}$.
Let 
$$
 E_m = \Phi_m^\top\Phi_m - I = \frac1m\sum_{j=1}^m \boldsymbol \xi_{m,j} \boldsymbol \xi_{m,j}^\top - I, \qquad \boldsymbol \xi_j = (\phi_1(\bX_j),\ldots,\phi_m(\bX_j))^\top.
$$
Now, set
$$
[J]_m = \text{diag}\,(\text{sgn}(\lambda_1),\ldots,\text{sgn}(\lambda_m)),
$$
and consider
$$
M_m =|\Lambda_m|^{1/2}[J]_m \Phi_m^\top\Phi_m |\Lambda_m|^{1/2},\qquad %
$$
Note $M_m$ and $\breve  H_m \in \R^{m\times m}$  have the same spectrum, since, e.g., with $Q_1=\Phi_m|\Lambda_m|^{1/2}$, $Q_2=|\Lambda_m|^{1/2}[J]_m \Phi_m^\top,$ clearly $\breve H_m = Q_1Q_2$  and $M_m=Q_2Q_1$, so, e.g.,
$$
\lambda \mathbf v = \breve H_m \mathbf v = Q_1Q_2\mathbf v \implies \lambda Q_2 \mathbf v = Q_2Q_1Q_2\mathbf v=M_mQ_2 \mathbf v,
$$
and similarly for the opposite direction.  Now,
\begin{align}
\notag\|M_m-\Lambda_m\|^2_F &=\|\Lambda_m|^{1/2}[J]_m (\Phi_m^\top\Phi_m-I) |\Lambda_m|^{1/2}\|_{\rm F}^2 \\
\notag& =\|\Lambda_m|^{1/2}[J]_m E_m|\Lambda_m|^{1/2}\|_{\rm F}^2\\
& =  \frac{1}{m^2}\sum_{\ell,\ell'=1}^m|\lambda_\ell\lambda_{\ell'}|  \left(\sum_{j=1}^m (\phi_\ell(\bX_j)\phi_{\ell'}(\bX_j)-\delta_{\ell,\ell'}\right)^2.\label{e:Mm-L_frob}
\end{align}
Recall that, for any $L\geq1,$ $P_L$ denotes the projection in $\cH$ onto $\text{span} \{\phi_1,\ldots,\phi_L\}.$   With 
\begin{equation}B_i = Y_i\otimes Y_i - \E (Y_i\otimes Y_i ),\label{e:def_Bi_operator}
\end{equation}
note
\begin{align*}
\langle  \phi_\ell,B_j\phi_{\ell'}\rangle_{\cH} &= \langle \phi_\ell,Y_j\rangle_{\cH} \langle \phi_{\ell'},Y_j\rangle_{\cH}- \E \big(\langle \phi_\ell,Y_j\rangle_{\cH} \langle \phi_{\ell'},Y_j\rangle_{\cH}\big)\\
&=  \sqrt{|\lambda_\ell\lambda_{\ell'}|}\big(\phi_\ell(\bX_j)\phi_{\ell'}(\bX_j) - \delta_{\ell,\ell'}\big).
\end{align*}
Hence,
\begin{align}
\notag\sum_{\ell,\ell'=1}^m|\lambda_\ell\lambda_{\ell'}|  \left(\sum_{j=1}^m (\phi_\ell(\bX_j)\phi_{\ell'}(\bX_j)-\delta_{\ell,\ell'}\right)^2%
 \notag & =\sum_{\ell,\ell'=1}^m \bigg( \sum_{j=1}^m\langle \phi_{\ell} , B_j\phi_{\ell'}\rangle_{\cH}\bigg)^2\\
 \notag & =\sum_{\ell,\ell'=1}^\infty \bigg( \sum_{j=1}^m\langle P_m\phi_{\ell} , B_jP_m\phi_{\ell'}\rangle_{\cH}\bigg)^2\\
  \notag& = \Big\| P_m\sum_{j=1}^m B_j P_m\Big\|_{\rm HS}^2\\
  & \leq \Big\| \sum_{j=1}^m B_j \Big\|_{\rm HS}^2\label{e:Bj_HS_norm}
\end{align}
However, note that with $B_i^{(r)}=Y_i^{(r)}\otimes  Y^{(r)}_i-\E (Y_i\otimes Y_i ) ,$ we have the bound
\begin{align*}
\|B_1^{(r)}-B_1\|_{\text{HS}} \leq \| Y_1^{(r)}-Y_1\|_{\cH}\|Y_1^{(r)}\|_{\cH} + \|Y_1\|_{\cH}\| Y_1-Y_1^{(r)}\|_{\cH} .
\end{align*}
Hence, with 
$q=p/2>2$, and $\eta$ as in Assumption \eqref{a:Lpm}, using Cauchy-Schwarz,
\begin{align*}
\sum_{r=1}^\infty \left(\E \|B_1-B_1^{(r)}\|^q_{\text{HS}}\right)^{\eta /q}&  \lesssim (\E\|Y_1\|^{p}_{\cH})^{\eta/p}\sum_{r=1}^\infty  \left(\E\|Y_1-Y_1^{(r)}\|_{\cH}^{p}\right)^{\eta/p}<\infty.
\end{align*}
Hence, viewed as elements of the Hilbert space $\dcal S_2(\cH) =\{T:\cH\to \cH: \|T\|_{\text{HS}}<\infty\}$ with inner product $\langle T_1,T_2\rangle_{\dcal S_2(\cH)} =\tr(T_1^*T_2)$, (recall $T^*$ denotes adjoint of an operator $T$)  the sequence $\{B_j\}$ satisfies the hypotheses of Lemma \ref{l:main_invariance_lemma}.   We readily conclude
$$
\Bigg\|\sum_{i=1}^m B_i\Bigg\|_{\text{HS}}   = O_\P(m^{1/2}).
$$
Hence,  from \eqref{e:Bj_HS_norm} and \eqref{e:Mm-L_frob}
$$
\|M_m-\Lambda_m\| _F=O_\P(1/\sqrt{m}).
$$
On the other hand, with $Y_{L,j}$ as defined in \eqref{e:def_YjL}, taking $L=m$, note
\begin{align*}
    \|R_m\|_{\rm F}^2 & = \frac1{m^2}\sum_{j,j'=1}^m \bigg(\sum_{\ell=m+1}^\infty \lambda_\ell 
     \phi_\ell(\bX_j)\phi_\ell(\bX_{j'})\bigg)^2\\
     & = \frac1{m^2}\sum_{j,j'=1}^m |\langle Y_j-Y_{m,j},J(Y_{j'}-Y_{m,j'})\rangle_{\cH}|^2\\
     & \leq \bigg(\frac1{m}\sum_{j=1}^m\|Y_j-Y_{m,j}\|_{\cH}^2\bigg)^2.
\end{align*}
Now, let $\breve \lambda_{1,m},\ldots,\breve \lambda_{m,m}$ be the eigenvalues of $\breve H_m$ (equivalently, of $M_m$). 
 We next apply the  perturbation inequality for ``nearly Hermitian matrices" of Kahan
\citep{kahan:1975}.  Set
$$
D_m=M_m-\Lambda_m,\qquad
S_m=\frac{D_m+D_m^\top}{2},\qquad
K_m=\frac{D_m-D_m^\top}{2}.
$$
Since $\Lambda_m$ is symmetric, Kahan's inequality \citep[p.12, item (ii), with $X=S_m$ and $Y=K_m$]{kahan:1975} applied to the perturbation $\Lambda_m + D_m$ yields
$$
\left(\min_{\pi\in\Pi_m}
\sum_{\ell=1}^m
\bigl(\breve\lambda_{\pi(\ell),m}-\lambda_\ell\bigr)^2\right)^{1/2}\leq \left(
\sum_{\ell=1}^m
\bigl(\breve\lambda_{\ell,m}-\lambda_\ell\bigr)^2\right)^{1/2}
\leq\|S_m\|_{\rm F}+\|K_m\|_{\rm F}
\leq
2\|D_m\|_{\rm F}.
$$
Consequently, using the Hoffman--Wielandt inequality \citep{hoffman:wielandt:1953} together with the triangle inequality for the rearrangement
distance (see., e.g. \cite{koltchinskii:gine:2000}), we obtain
\begin{align}
&\min_{\pi \in \Pi_m}
\left(
\sum_{\ell=1}^m
(\widehat \lambda_{\pi(\ell),m}-\lambda_{\ell})^2
\right)^{1/2}
\notag\\
&\quad \leq
\|\overline H_m-\widetilde H_m\|_{\rm F}
+\|\widetilde H_m-H_m^{\mathrm{sp}}\|_{\rm F}
+\|H_m^{\mathrm{sp}}-\breve H_m\|_{\rm F}
+\sqrt{2}\|M_m-\Lambda_m\|_{\rm F}
\notag\\
&\quad =
O_\P(m^{-1/2})+O_\P(m^{-1/2})
+\|R_m\|_{\rm F}+O_\P(m^{-1/2})
\notag\\
&\quad =
O_\P\left(\sum_{\ell=m+1}^\infty|\lambda_\ell|\right)
+O_\P(m^{-1/2}),
\label{e:widetildeH_mbound}
\end{align}
where we used that
$
\E\|R_m\|_{\rm F} \leq \sum_{\ell=m+1}^\infty |\lambda_\ell|.$
This completes the proof.
\end{proof}

The proof of Theorem \ref{t:spectral_consistency} requires the following lemma regarding alignment of orthogonal matrices.  The result is standard; for completeness we provide a proof. %
\begin{lemma}\label{l:basic_rotation_lemma}
Suppose $V,U\in \R^{m\times L}$ have orthonormal columns, $P=VV^\top$, $Q=UU^\top$.  Then
\begin{equation}\label{e:frob_id}
\| P-Q\|_{\rm F}=\sqrt{2}\|P(I-Q)\|_\mathrm{F}.
\end{equation}
Moreover, there is an orthogonal matrix $O\in \R^{L\times L}$ such that
\begin{equation}\label{e:rot}
\| V-UO\|_{\rm F}\leq \|P-Q\|_{\rm F}.
\end{equation}
\end{lemma}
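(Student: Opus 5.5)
For \eqref{e:frob_id}, the plan is to expand both sides using only that $P$ and $Q$ are orthogonal projections of rank $L$. Since $P^2=P=P^\top$ and likewise for $Q$, we have $\|P\|_{\rm F}^2=\tr(P)=L=\|Q\|_{\rm F}^2$. Hence
\begin{equation*}
\|P-Q\|_{\rm F}^2=\tr(P)+\tr(Q)-2\tr(PQ)=2L-2\tr(PQ).
\end{equation*}
On the other hand,
\begin{equation*}
\|P(I-Q)\|_{\rm F}^2=\tr\bigl((I-Q)P(I-Q)\bigr)=\tr(P)-\tr(PQ)=L-\tr(PQ).
\end{equation*}
This uses cyclicity of the trace together with $(I-Q)^2=I-Q$. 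Comparing the two displays gives the identity directly.

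For \eqref{e:rot}, I would use the singular value decomposition of the $L\times L$ matrix $U^\top V=W_1\,\mathrm{diag}(\cos\theta_1,\ldots,\cos\theta_L)\,W_2^\top$. Here $\sigma_i=\cos\theta_i\in[0,1]$, because $\|U^\top V\|_{\rm op}\le 1$. Take $O=W_1W_2^\top$, which is the orthogonal Procrustes solution. Then
\begin{equation*}
\|V-UO\|_{\rm F}^2=2L-2\tr(O^\top U^\top V)=2L-2\sum_i\sigma_i .
\end{equation*}
For the projections,
\begin{equation*}
\tr(PQ)=\|U^\top V\|_{\rm F}^2=\sum_i\sigma_i^2,
\end{equation*}
so that $\|P-Q\|_{\rm F}^2=2\sum_i(1-\sigma_i^2)$.

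The inequality then reduces to the scalar bound $1-\sigma\le 1-\sigma^2=(1-\sigma)(1+\sigma)$, valid for $\sigma\in[0,1]$. Summing over $i$ yields $\|V-UO\|_{\rm F}^2\le\|P-Q\|_{\rm F}^2$.

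The only point needing care, and the main (mild) obstacle, is justifying $0\le\sigma_i\le1$. Nonnegativity holds because singular values are nonnegative. The upper bound $\sigma_i\le1$ follows from $\|U^\top V\|_{\rm op}\le\|U\|_{\rm op}\|V\|_{\rm op}=1$, since both $U$ and $V$ have orthonormal columns. No earlier results from the paper are needed; the lemma is purely linear-algebraic.
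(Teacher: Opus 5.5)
Your proposal is correct and follows essentially the same route as the paper. The identity comes from the trace expansions $2L-2\tr(PQ)$ and $L-\tr(PQ)$, and the rotation bound from the Procrustes choice $O=W_1W_2^\top$ (from the SVD of $U^\top V$) together with $\sigma_i^2\le\sigma_i$ on $[0,1]$. Your bookkeeping is also cleaner: you keep $\|P-Q\|_{\rm F}^2=2L-2\sum_i\sigma_i^2$ with both the square and the factor $2$, which the paper's final display drops.
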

\begin{proof}
The identity \eqref{e:frob_id} follows immediately since $\| P-Q\|_{\rm F}^2=\text{tr}((P-Q)^\top(P-Q))=2L -2\text{tr}(PQ)$ and $\|P(I-Q)\|^2_\mathrm{F}=\text{tr}\big( ((I-Q)P)^\top (I-Q)P\big)=\text{tr}\big( P(I-Q)P\big)=L-\mathrm{tr}(PQ).$  For \eqref{e:rot}, write $M=U^\top V\in \R^{L\times L}$ and consider its singular value decompositon $M=W \Sigma_M Z^\top$.  Note $\|\Sigma_M\|_{\mathrm{op}}=\|U^\top V\|_{\mathrm op}\leq \|U\|_{\mathrm{op}}\|V\|_{\mathrm{op}}= 1$.  With $O=WZ^\top\in \R^{L\times L},$
$$
\| V-U O\|_{\rm F}^2  = \mathrm {tr}(V^\top V) +  \mathrm {tr}((UO)^\top UO) - 2 \mathrm {tr}(O^\top U^\top V) = 2L -2\tr (\Sigma_M).
$$
The conclusion follows, since $\|P-Q\|_{\rm F} =2L -2\text{tr}(PQ)=2L - \mathrm{tr}(M^\top M)=2L-\text{tr}(\Sigma_M^2)\geq 2L -2\tr (\Sigma_M)$, since $\|\Sigma_M\|_{\mathrm{op}}\leq 1$ gives $\text{tr}(\Sigma_M^2)\leq \text{tr}(\Sigma_M)$.
\end{proof}
The following Davis–Kahan-type lemma  is used in the proof of Theorem \ref{t:spectral_consistency}.
\begin{lemma}\label{l:DK_lemma}
Let $M_1,M_2\in \R^{m\times m}$ be symmetric, and lets $\sigma(M_i)\subseteq \R$ denote their respective sets of eigenvalues.  Let $\mathcal I_k\subseteq \R$, $k=1,\ldots,b$, be disjoint sets satisfying
$$
\sigma(M_1)\cup \sigma(M_2)\subseteq \bigcup_{k=1}^b \mathcal I_k,\qquad \min_{1\leq k\neq k'\leq b}\mathrm{dist}(\mathcal I_k,\mathcal I_{k'}) >\delta>0.
$$
Let $\cP_k(M_i)$ denote the orthogonal projection onto the subspace spanned by the eigenvectors
of $M_i$ corresponding to those of its eigenvalues that lie in $\mathcal I_k$.  Then, if 
$$
\mathrm{rank}(\cP_k(M_1))=\mathrm{rank}(\cP_k(M_2)),\quad k=1,\ldots,b,
$$
we have
$$
\left(\sum_{k=1}^b \| \cP_k(M_1)-\cP_k(M_2)\|^2_F\right)^{1/2} \leq C\frac{\|M_1-M_2\|_{\rm F}}{\delta }
$$
\end{lemma}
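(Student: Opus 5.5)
The plan is to reduce the multi-cluster statement to the classical Davis--Kahan $\sin\Theta$ argument applied one cluster at a time, and then sum the squared Frobenius bounds. Write $E=M_2-M_1$, $P_k=\cP_k(M_1)$, $Q_k=\cP_k(M_2)$. Because the projections $\{P_k\}$, and likewise $\{Q_k\}$, are mutually orthogonal and sum to the identity (the spectra lie in $\bigcup_k\mathcal I_k$), we may use the resolution $I=\sum_{k'}P_{k'}$. We write
$$
Q_k-P_k \;=\; (I-P_k)Q_k - P_k(I-Q_k).
$$
The rank hypothesis, together with Lemma~\ref{l:basic_rotation_lemma} (specifically \eqref{e:frob_id}), gives
$$
\|P_k-Q_k\|_{\rm F}^2 = 2\|(I-P_k)Q_k\|_{\rm F}^2 .
$$
Hence it suffices to bound $\sum_k\|(I-P_k)Q_k\|_{\rm F}^2=\sum_k\sum_{k'\neq k}\|P_{k'}Q_k\|_{\rm F}^2$.

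The key step is a Sylvester-equation estimate for each off-diagonal block $X_{k'k}:=P_{k'}Q_k$ with $k'\neq k$. Using $M_1P_{k'}=P_{k'}M_1$ and $M_2Q_k=Q_kM_2$, we get
$$
P_{k'}M_1\,X_{k'k}-X_{k'k}\,Q_kM_2 \;=\; -P_{k'}EQ_k .
$$
The restriction of $M_1$ to $\mathrm{ran}\,P_{k'}$ has spectrum in $\mathcal I_{k'}$, and the restriction of $M_2$ to $\mathrm{ran}\,Q_k$ has spectrum in $\mathcal I_k$. Both operators are symmetric, and their spectra are separated by more than $\delta$. The standard bound for the Sylvester equation $AX-XB=Y$ with Hermitian $A,B$ and $\mathrm{dist}(\sigma(A),\sigma(B))\ge\delta$ then applies. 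To check it, expand in eigenbases of $A$ and $B$: the entries satisfy $(a_i-b_j)x_{ij}=y_{ij}$, which gives $\|X\|_{\rm F}\le\|Y\|_{\rm F}/\delta$. Applied here, this yields
$$
\|P_{k'}Q_k\|_{\rm F}\le \delta^{-1}\|P_{k'}EQ_k\|_{\rm F}.
$$

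Summing over all ordered pairs $k'\neq k$, we obtain
$$
\sum_k\sum_{k'\ne k}\|P_{k'}Q_k\|_{\rm F}^2\le\delta^{-2}\sum_{k,k'}\|P_{k'}EQ_k\|_{\rm F}^2=\delta^{-2}\|E\|_{\rm F}^2 .
$$
The last equality holds because $\{P_{k'}\}$ and $\{Q_k\}$ are complete orthogonal resolutions of the identity, so the blocks $P_{k'}EQ_k$ are Frobenius-orthogonal and their squared norms sum to $\|E\|_{\rm F}^2$. Combining this with the identity above gives the claim with $C=\sqrt2$.

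The main point needing care, rather than a genuine obstacle, is the entrywise Sylvester bound. It relies on the fact that eigenvalues in different clusters are separated by at least $\delta$, which is exactly the $\mathrm{dist}(\mathcal I_k,\mathcal I_{k'})$ hypothesis. It also requires that each eigenvalue of $M_i$ be assigned to exactly one cluster, which is guaranteed by the disjointness of the $\mathcal I_k$. The equal-rank assumption enters only through \eqref{e:frob_id}, which symmetrizes the two one-sided terms.
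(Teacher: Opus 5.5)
Your proposal is correct and follows essentially the same route as the paper: you use the equal-rank identity \eqref{e:frob_id} to reduce to a one-sided term, bound that term by a Davis--Kahan $\sin\Theta$ estimate, and sum using the orthogonality of the spectral projections. The paper cites Bhatia's $\sin\Theta$ theorem for $\cP_k(M_1)$ against the whole complement $I-\cP_k(M_2)$ and then drops the right factor via $\|\cP_k(M_1)(M_1-M_2)(I-\cP_k(M_2))\|_{\rm F}\le\|\cP_k(M_1)(M_1-M_2)\|_{\rm F}$; you instead prove the Frobenius Sylvester bound entrywise on each pairwise block, which makes the argument self-contained and gives the explicit constant $C=\sqrt2$.
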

\begin{proof} Write $Q_k(M_i)=I-\cP_k(M_i)=\sum_{k'\neq k}\cP_k(M_i)$. Since the ranks are equal, Lemma \ref{l:basic_rotation_lemma} gives $\|\cP_k(M_1)-\cP_k(M_2)\|_{\rm F}^2=2\|\cP_k(M_1)Q_k(M_2)\|_{\rm F}^2.$ Moreover, applying \cite[p.212]{bhatia:1997},
\begin{align*}
\|\cP_k(M_1)Q_k(M_2)\|_{\rm F}^2 &\leq C \frac{\|\cP_k(M_1)(M_1-M_2)Q_k(M_2)\|_{\rm F}^2}{\delta ^2} \leq C\frac{\|\cP_k(M_1)(M_1-M_2)\|_{\rm F}^2}{\delta^2},
\end{align*}
since $\|Q_k(M_2)\|_{\mathrm{op}}=1$.  Since the projections $\cP_k(M_1)$ are orthogonal, we have $\cP_k(M_1) \cP_{k'}(M_1)\equiv0$ whenever $k\neq k'$, which implies  $\sum_{k}\|\cP_k(M_1)(M_1-M_2)\|_{\rm F}^2=\| \sum_k \cP_k(M_1)(M_1-M_2)\|_{\rm F}^2$.  Hence
$$
\sum_{k=1}^b \|\cP_k(M_1)-\cP_k(M_2)\|_{\rm F}^2 \leq C \frac{1}{\delta^2}\Bigg\|\sum_{k=1}^b \cP_k (M_1)(M_1-M_2)\Bigg\|_{\rm F}^2\leq C\frac{\|M_1-M_2\|_{\rm F}^2}{\delta^2},
$$
which after taking square roots gives the statement.
\end{proof}

The following Lemma is used in the proof of Theorem \ref{t:spectral_consistency}. It demonstrates that, under the eigenvalue gap condition of Assumption \ref{a:eigsep},  eigenvalue consistency for $\widehat \lambda_{\ell,m}$ and $\breve \lambda_{\ell,m}$ holds for the top $L$ empirical eigenvalues.  We briefly recall the notation
$$
\varepsilon_m = m^{-1/2} + \sum_{\ell=m+1}^\infty |\lambda_\ell|,
$$
and we recall that the eigenvalue gap $\delta_m$ under Assumption \ref{a:eigsep} satisfies $\delta_m/\varepsilon_m\to\infty.$
\begin{lemma}\label{l:sample_eig_concentration}   Let $L=L(m)$ be chosen such that Assumption \ref{a:eigsep} is satisfied. With 
\begin{equation}\label{def:muj}
\mu_{1},\ldots, \mu_b,\qquad |\mu_1|>\ldots>|\mu_b|
\end{equation}
denoting the distinct values among $\lambda_1,\ldots,\lambda_L$, write $r_0=0$, and for $k=1,\ldots,b$, set
\begin{gather}
d_k =\#\{1\leq\ell\leq L: \lambda_\ell=\mu_k\},\quad \quad r_k = \sum_{j=1}^k d_j, \label{e:def_rk_index}\\
\mathcal B_k=\{r_{k-1}+1,\ldots, r_{k}\},\label{e:def_Bk_block}
\end{gather}
and
\begin{equation}\label{e:Ik_bands}
\mathcal I_{k,m}=\Big(\mu_k-\frac{\delta_m}4,
\,\mu_k+\frac{\delta_m}4\Big),\qquad k=1,\ldots, b,
\end{equation}
where $\delta_m$ is as in Assumption \ref{a:eigsep}.
Then,  %
$$
\P\left( \big\{\widehat \lambda_{\ell,m} ,~\ell \in \mathcal B_k\big\} \subseteq  \mathcal I_{k,m}, ~~ \big\{\breve \lambda_{\ell,m} ,~\ell \in \mathcal B_k\big\} \subseteq \mathcal I_{k,m},\quad k=1,\ldots, b\right) \to 1.
$$
\end{lemma}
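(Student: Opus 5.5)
The plan is to combine the rearrangement bound of Theorem~\ref{t:eigen_consistency_theo} with the separation in Assumption~\ref{a:eigsep}, via a counting argument. From the proof of Theorem~\ref{t:eigen_consistency_theo} we already have, for both $\widehat\lambda_{\cdot,m}$ and $\breve\lambda_{\cdot,m}$, a permutation $\pi$ with
\[
\Big(\sum_{\ell=1}^m(\lambda^{\rm emp}_{\pi(\ell),m}-\lambda_\ell)^2\Big)^{1/2}=O_\P(\varepsilon_m).
\]
In the $\breve\lambda$ case the bound is $2\|M_m-\Lambda_m\|_{\rm F}+\|R_m\|_{\rm F}$, which is still $O_\P(\varepsilon_m)$. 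Since $\delta_m/\varepsilon_m\to\infty$, on an event of probability tending to one this distance is below $\delta_m/8$. On that event every $\ell\le m$ has $|\lambda^{\rm emp}_{\pi(\ell),m}-\lambda_\ell|<\delta_m/8$, and I will work there deterministically.

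\textbf{Step 1.} Fix $k$. The $d_k$ indices $\ell\in\mathcal B_k$ have $\lambda_\ell=\mu_k$. Hence the $d_k$ empirical eigenvalues $\lambda^{\rm emp}_{\pi(\ell),m}$, $\ell\in\mathcal B_k$, all lie in $(\mu_k-\delta_m/8,\mu_k+\delta_m/8)\subseteq\mathcal I_{k,m}$. Every other empirical eigenvalue is matched to some $\lambda_{\ell'}$ with $\ell'\notin\bigcup_k\mathcal B_k$, i.e.\ $\ell'>L$. For such $\ell'$, Assumption~\ref{a:eigsep}(i) gives $|\lambda_{\ell'}|\le|\lambda_L|-\delta_m=|\mu_b|-\delta_m$. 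It follows that
\[
|\lambda^{\rm emp}_{\pi(\ell'),m}|<|\mu_b|-\tfrac78\delta_m .
\]

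\textbf{Step 2.} Translate this into the ordered labels. The empirical eigenvalues are ordered by absolute value. By Assumption~\ref{a:eigsep}(ii) the values $\mu_k$ satisfy $|\mu_k|-|\mu_{k+1}|\ge\delta_m$; this holds because $||\mu_k|-|\mu_{k'}||$ is either $|\mu_k-\mu_{k'}|$ or, in the case $\mu_{k'}=-\mu_k$, the two values have distinct absolute values by the strict ordering. I will check the sign case carefully; the distinct-absolute-value ordering in \eqref{def:muj} settles it. The matched values from Step~1 therefore occupy disjoint absolute-value bands $(|\mu_k|-\delta_m/8,|\mu_k|+\delta_m/8)$, which are ordered in $k$ and all lie above $|\mu_b|-\delta_m/8$. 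All unmatched values lie strictly below $|\mu_b|-\tfrac78\delta_m$. Consequently the $d_1$ largest-in-modulus empirical eigenvalues are exactly those matched to $\mathcal B_1$, the next $d_2$ are those matched to $\mathcal B_2$, and so on. So the empirical eigenvalues with ordered labels in $\mathcal B_k$ are precisely $\{\lambda^{\rm emp}_{\pi(\ell),m}:\ell\in\mathcal B_k\}\subseteq\mathcal I_{k,m}$.

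Doing this for both families and intersecting the two high-probability events gives the claim.

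The main obstacle is Step~2: matching the ordering by $|\cdot|$ with the intervals $\mathcal I_{k,m}$ centred at signed $\mu_k$. Ties among empirical moduli within a band are harmless, since any labelling within a band stays in the band. A second point to verify is that the $O_\P(\varepsilon_m)$ bound holds for $\breve\lambda$ as well as $\widehat\lambda$; this is already visible in \eqref{e:widetildeH_mbound}.
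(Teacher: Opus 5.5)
Your route is the paper's. You take the optimal matching from Theorem~\ref{t:eigen_consistency_theo}, use Assumption~\ref{a:eigsep}(i) to separate the matched top-$L$ values from the rest, and then peel off $\mathcal B_1,\mathcal B_2,\ldots$ in order of modulus. Step~1 and the cut at $L$ are fine.

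\textbf{The gap is in Step~2.} There you derive $|\mu_k|-|\mu_{k+1}|\ge\delta_m$ from Assumption~\ref{a:eigsep}(ii), and your dichotomy omits the case that matters. If $\mu_k$ and $\mu_{k+1}$ have opposite signs (with $\mu_{k+1}\neq-\mu_k$), then $|\mu_k-\mu_{k+1}|=|\mu_k|+|\mu_{k+1}|$. So (ii) holds trivially and says nothing about $|\mu_k|-|\mu_{k+1}|$. The strict ordering in \eqref{def:muj} makes this difference positive, but not of order $\delta_m$, and since $L_m\to\infty$, new such pairs keep entering. Without a modulus gap that is large compared with $\varepsilon_m$, the bands $(|\mu_k|\pm\delta_m/8)$ can overlap, and the $O_\P(\varepsilon_m)$ matching no longer pins down the ordered labels.

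\textbf{This is a real obstruction, not a missing line.} Take a spectrum containing pairs $j^{-\theta}$ and $-(j^{-\theta}-\eta_j)$, with $\eta_j$ decaying super-exponentially, and let $L_m$ be even and slowly growing. Then (i) and (ii) hold with $\delta_m\asymp L_m^{-\theta-1}$. But once $\eta_j$ is far below the $m^{-1/2}$ fluctuation scale for some $2j\le L_m$, sampling noise can give label $2j-1$ to the negative eigenvalue. That eigenvalue sits at distance about $2j^{-\theta}\gg\delta_m/4$ from $\mu_{2j-1}$, so the conclusion fails for that block.

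The paper's own proof makes the same move without justification: it asserts $(|\mu_k|-\varepsilon_m\zeta_m^{1/2})-(|\mu_{k+1}|+\varepsilon_m\zeta_m^{1/2})\ge\varepsilon_m\zeta_m(1-o(1))$. So the repair belongs in the hypotheses, in one of two ways:
\begin{itemize}
\item add the modulus separation $\min_{k<b}\bigl(|\mu_k|-|\mu_{k+1}|\bigr)\ge\delta_m$ to Assumption~\ref{a:eigsep}; or
\item restrict to semidefinite $\overline h$, as for the kernels of Examples~\ref{ex:cnd_kernels} and~\ref{ex:psd_kernels}. There all nonzero $\lambda_\ell$ share a sign, so signed and modulus gaps coincide.
\end{itemize}
With either in place, your Steps~1--2 go through as written and coincide with the paper's argument.

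A minor point: for $\breve\lambda$ the Kahan bound $2\|M_m-\Lambda_m\|_{\rm F}$ already suffices, since $\breve H_m$ does not contain $R_m$. The extra $\|R_m\|_{\rm F}$ term is harmless.
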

\begin{proof}
Write $\zeta_m=\delta_m/\varepsilon_m$ and note $\zeta_m\to\infty$.  In particular, the event
$$
\Omega_m=\min_{\pi \in \Pi_m} \left(\sum_{\ell=1}^m (\widehat \lambda_{\ell,m}-\lambda_{\pi(\ell),m})^2\right)^{1/2} \leq \varepsilon_m (\zeta_m)^{1/2}
$$
tends to 1 in probability due to Theorem \ref{t:eigen_consistency_theo}. 
Now, let
$$
\pi_m \in \argmin_{\pi \in \Pi_m} \left(\sum_{\ell=1}^m (\widehat \lambda_{\pi(\ell),m}-\lambda_{\ell,m})^2\right)^{1/2}
$$
On the event $\Omega_m$,  clearly 
$||\widehat\lambda_{\pi_m(\ell)}|-|\lambda_\ell||\leq |\widehat\lambda_{\pi_m(\ell)}-\lambda_\ell|\leq \varepsilon_m(\zeta_m)^{1/2}$, so that
$$
\begin{aligned}
|\widehat\lambda_{\pi_m(\ell)}|& \geq |\lambda_\ell|-\varepsilon_m(\zeta_m)^{1/2} \geq |\lambda_L| - \varepsilon_m(\zeta_m)^{1/2}, && \text{if}\quad 1\leq \ell \leq L,\\
|\widehat\lambda_{\pi_m(\ell)}| &\leq |\lambda_{\ell}|+\varepsilon_m(\zeta_m)^{1/2} \leq |\lambda_{L+1}| + \varepsilon_m(\zeta_m)^{1/2}, && \text{if}\quad  L< \ell \leq m.
\end{aligned}$$
Since $(|\lambda_L| - \varepsilon_m(\zeta_m)^{1/2})-( |\lambda_{L+1}|+\varepsilon_m(\zeta_m)^{1/2}) \geq \varepsilon_m \zeta_m(1-o(1))=\delta_m(1-o(1))>0$, the largest $L$ eigenvalues $\{\widehat\lambda_{1,m},\ldots,\widehat{\lambda}_{L,m}\}$ coincide with the set $\{\widehat\lambda_{\pi_m(1),m},\ldots,\widehat{\lambda}_{\pi_m(L),m}\}$, i.e,.
\begin{equation}\label{e:pi_m_identity_upto_L}
\pi_m(\{1,\ldots,L\})=\{1,\ldots,L\}.
\end{equation}
Analogously, for any $k=1,\ldots,b-1$
$$
\begin{aligned}
|\widehat\lambda_{\pi_m(\ell)}| &\geq |\lambda_\ell|-\varepsilon_m(\zeta_m)^{1/2} \geq |\mu_k| - \varepsilon_m(\zeta_m)^{1/2}, && \text{if}\quad 1\leq \ell \leq r_k,\\
|\widehat\lambda_{\pi_m(\ell)}| &\leq |\lambda_{\ell}|+\varepsilon_m(\zeta_m)^{1/2} \leq |\mu_{k+1}| + \varepsilon_m(\zeta_m)^{1/2}, && \text{if}\quad r_k< \ell \leq m.
\end{aligned}
$$
Since $(|\mu_k| - \varepsilon_m(\zeta_m)^{1/2})-( |\mu_{k+1}|+\varepsilon_m(\zeta_m)^{1/2}) \geq \varepsilon_m \zeta_m(1-o(1))>0$, we readily conclude  the largest $r_k$ eigenvalues $\{\widehat\lambda_{1,m},\ldots,\widehat{\lambda}_{r_k,m}\}$ coincide with the set $\{\widehat\lambda_{\pi(1),m},\ldots,\widehat{\lambda}_{\pi(r_k),m}\}$, i.e.,
$$
\pi_m(\mathcal B_1 \cup \ldots \cup\mathcal B_k)=\mathcal B_1 \cup \ldots \cup\mathcal B_k, \qquad k=1,\ldots, b.
$$
Hence, $\pi_m(\mathcal B_1)=\mathcal B_1$, $\pi_m(\mathcal B_2)=\pi_m(\mathcal B_1\cup \mathcal B_2) \setminus \pi_m(\mathcal B_1)=\mathcal B_2$, and 
proceeding inductively (noting $\mathcal B_1\cup\ldots\cup\mathcal B_b=\{1,\ldots,L\}$), we obtain $\pi_m(\mathcal B_k)=\mathcal B_k$ for all $k=1,\ldots, b.$ Thus, on the event $\Omega_m,$ it holds that
$$
\{\widehat\lambda_{\ell,m},~\ell \in \mathcal B_k\}=\{\widehat\lambda_{\pi_m(\ell),m},~\ell  \in \mathcal B_k\}.
$$
Hence, we obtain
$$
\zeta_m^{1/2}\varepsilon_m\geq \left(\sum_{\ell=1}^L (\widehat \lambda_{\pi_m(\ell),m}-\lambda_\ell)^2\right)^{1/2} =  \left(\sum_{k=1}^b\sum_{\ell\in \mathcal B_k}(\widehat \lambda_{\ell,m}-\mu_k)^2\right)^{1/2},
$$
which implies $\P\left( \big\{\widehat \lambda_{\ell,m} ,~\ell \in \mathcal B_k\big\} \subseteq  \mathcal I_{k,m}, ~~k=1,\ldots, b\right) \to 1.$
Finally, note the proof of Theorem \ref{t:eigen_consistency_theo} shows the event
$$
\Omega_m'=\min_{\pi \in \Pi_m} \left(\sum_{\ell=1}^m (\breve \lambda_{\ell,m}-\lambda_{\pi(\ell),m})^2\right)^{1/2} \leq \varepsilon_m (\zeta_m)^{1/2}
$$
tends to 1 in probability 
(see \eqref{e:widetildeH_mbound}).  Hence, by arguing analagously to the case for the eigenvalues $\widehat \lambda_{\ell,m}$ we obtain  $\P\left( \big\{\breve \lambda_{\ell,m} ,~\ell \in \mathcal B_k\big\} \subseteq  \mathcal I_{k,m}, ~~k=1,\ldots, b\right) \to 1$.
\end{proof}
 We remind the reader that $\mathbf  v_{\ell,m}=(v_{\ell,m}(1),\ldots,v_{\ell,m}(m))^\top$, $\ell=1,\ldots,m$ are orthonormal eigenvectors of $\overline H_m$, ordered according to the decreasing magnitude of their corresponding eigenvalues. %

\begin{proof}[Proof of Theorem \ref{t:spectral_consistency}]  
First, note%
\begin{align*}
    & \| |\Lambda_L|^{1/2} (\Phi_m^{L})^\top \Phi_m^L|\Lambda_L|^{1/2} - |\Lambda_L|\|_{\rm F}^2\\
    &=\frac{1}{m^2}\sum_{\ell,\ell'=1}^m|\lambda_\ell\lambda_{\ell'}|  \left(\sum_{j=1}^m (\phi_\ell(\bX_j)\phi_{\ell'}(\bX_j)-\delta_{\ell\ell'}\right)^2\\
    & = \bigg\| \frac1m \sum_{j=1}^m B_j\bigg\|_{\rm HS}^2\\
    & =O_\P(m^{-1}),
\end{align*}
with $B_j$  as in \eqref{e:def_Bi_operator}. 
Hence, if we define 
$$
G_L = (\Phi_m^{L})^\top \Phi_m^{L} \in \R^{L\times L},
$$
we have
$$
G_L = I_L + \dcal E_L,\qquad \||\Lambda_L|^{1/2}\dcal E_L|\Lambda_L|^{1/2}\|_{\rm F}=O_\P(m^{-1/2}).
$$
Moreover, with $\zeta_m=\delta_m/\varepsilon_m$, since $|\lambda_L|^{-1}\leq \delta_m^{-1} \leq m^{1/2}\zeta^{-1}_m$,
\begin{align*}
\|\dcal E_L \|_{\rm F} \leq \|\Lambda_L^{-1}\|_{\mathrm{op}} \||\Lambda_L|
^{1/2}\dcal E_L|\Lambda_L|^{1/2}\|_{\rm F}\leq (1/(\delta_m))O_\P(m^{-1/2})=O_\P(1/\zeta_m).
\end{align*}
impliying $G_L$ is invertible with probability tending to 1. Hence, up to an asymptotically negligible event we may define
$$
\widetilde \Phi_m^{L}=\Phi^L_mG_L^{-1/2}.
$$
It is easily seen the columns of $\widetilde \Phi_m^{L} $  form orthonormal basis for the column spans of $\Phi_m^{L}$.  Also, since $||x|^{-1/2}-1|\leq c|x-1|$for $|x-1|\leq 1/2$, we have
$$
\big\| G_L^{-1/2}-I_{L}\big\|_{\rm F}= \big\| \big(I+\dcal E_L)^{-1/2}-I_{L}\big\|_{\rm F} \leq C \| \dcal E_L\|_{\rm F} =O_\P(1/\zeta_m).
$$
Now, with $\breve H_m$ as in \eqref{e:def_Hm},
\begin{align}
 \|  \breve H_m - \widetilde \Phi_m^{L}  \Lambda_L(\widetilde \Phi_m^{L})^\top \|_{\rm F}
&=\|  \Phi_m^{L}  \Lambda _L(\Phi_m^{L})^\top - \widetilde \Phi_m^{L}  \Lambda_L(\widetilde \Phi_m^{L})^\top \|_{\rm F} \notag\\
& =\| \Phi_m^{L} (\Lambda_L - G_L^{-1/2}\Lambda_LG_L^{-1/2})( \Phi_m^{L})^\top\|_{\rm F}\notag\\
&\leq \|\Phi_m^{L}\|_{\mathrm{op}}^2 \| (\Lambda_L - G_L^{-1/2}\Lambda_LG_L^{-1/2})\|_{\rm F} \label{e:breveHplus_to_ortho}
\end{align}

Since $\|\Phi_m^{L}\|^2_{\mathrm{op}}=\| (\Phi_m^{L})^\top \Phi_m^{L}\|_{\text{op}}= \|G_L\|_{\mathrm{op}} = O_\P(1)$, we proceed to bound the second factor in \eqref{e:breveHplus_to_ortho}. 
We have
\begin{align*}
\| \Lambda_L - G_L^{-1/2}\Lambda_L G_L^{-1/2}\|_{\rm F} &=\|  \Lambda_L\big(I-G_L^{-1/2}\big)  + \big(I-G_L^{-1/2}\big) \Lambda_L G_L^{-1/2}\|_{\rm F}\\
&\leq \|  \Lambda_L\big(I-G_L^{-1/2}\big)\|_{\rm F}+ \|\big(I-G_L^{-1/2}\big) \Lambda_L\|_{\rm F} \|G_L^{-1/2}\|_\mathrm{op}\\
& \leq (1+O_\P(1)) \|\Lambda_L^{1/2}\|_\mathrm{op}\|\big\|  \Lambda_L^{1/2}\big(I-G_L^{-1/2}\big)\Lambda_L^{1/2}\big\|_{\rm F}\|\Lambda_L^{-1/2}\|_\mathrm{op}\\
& =O_P(m^{-1/2} \delta_m^{-1/2}).
\end{align*}
Hence, we deduce 
\begin{equation}\label{e:breve_ortho_compare}
\|  \breve H_m - \widetilde \Phi_m^{L}  \Lambda_L(\widetilde \Phi_m^{L})^\top \|_{\rm F} =O_P(m^{-1/2} \delta_m^{-1/2}).
\end{equation}
Next, for any symmetric $M\in \R^{m\times m}$, let
$$
\mathcal P_{k}(M)\in \R^{m\times m}
$$
denote
the orthogonal projection in $\R^m$ onto the subspace spanned by the eigenvectors of $M$ whose eigenvalues in the region  $\mathcal I_{k,m}$ as defined in \eqref{e:Ik_bands}.   By Lemma \ref{l:sample_eig_concentration}, with probability tending to 1, these are precisely the  eigenvalues $\big\{\widehat \lambda_{\ell,m},\ell \in \mathcal B_k\}$ of $\overline H_m$ and $\big\{\widehat \lambda_{\ell,m},\ell \in \mathcal B_k\}$ of $\breve H_m$. Moreover, with probability tending to 1, with $d_k$ as in \eqref{e:def_rk_index},
$$
\mathrm{rank}(\cP_k(\overline H_m))=\mathrm{rank}(\cP_k(\breve H_m)) = d_k = \mathrm{rank}\big( \cP_k\big(\widetilde \Phi_m^{L}  \Lambda_L(\widetilde \Phi_m^{L})^\top\big)\big).
$$
 Note that the bands $\mathcal  I_{k',m}$  and $\mathcal  I_{k',m}$ with $k\neq k'$ are separated by a distance of at least $\delta_m/2$.  Hence, using Lemma \ref{l:DK_lemma}, we  obtain
\begin{align}
&\sum_{k=1}^b\|\cP_{k}(\overline  H_m\big) - \cP_{k}\big(\widetilde \Phi_m^{L}  \Lambda_L(\widetilde \Phi_m^{L})^\top\big)\|_{\rm F}^2 \notag\\
&\quad \leq 4\sum_{k=1}^b \Big(\|\cP_k(\overline  H_m\big) - \cP_k(\breve  H_m\big) \|_{\rm F}^2+ \|\cP_k(\breve H_m\big)-\cP_k\big(\widetilde \Phi_m^{L}  \Lambda_L(\widetilde \Phi_m^{L})^\top\big)\|_{\rm F}^2\Big)\notag\\
 &\quad \leq C\frac{\| \overline H_m - \breve H_m\|_{\rm F}^2+ \|\breve H_m-\widetilde \Phi_m^{L}  \Lambda_L(\widetilde \Phi_m^{L})^\top\|_{\rm F}^2}{\delta_m^2}\notag\\
 &\quad  \leq \frac{O_\P(m^{-1}) + m^{-1} \delta_m^{-1}}{\delta_m^2}\notag\\
 &\quad  = O_\P\left(\frac{1}{m\delta_m^{3}}\right).\label{e:H_m_close_to_ortho}%
\end{align}
Now, let $V_{m}^{(\mathcal I_{k,m})}$ denote the columns of $V_m^L$ corresponding to eigenvalues of $\overline H_m$ that lie in the band $\mathcal I_{k,m}$, and let  $\widetilde \Phi_{m}^{(\mu_j)}$ denote the columns of $\widetilde \Phi_{m}^L$ corresponding to the eigenvalue $\mu_j$ of $\widetilde \Phi_{m}^L  \Lambda_L (\widetilde\Phi_{m}^L)^\top$, i.e.,
$$
\widetilde \Phi_{m}^{(\mu_k)}=(\widetilde \Phi_{r_{k-1}+1,m}:\,\cdots\,:\widetilde \Phi_{r_k,m})\in \R^{m\times d_k}.
$$
From \eqref{e:H_m_close_to_ortho}, using Lemma \ref{l:basic_rotation_lemma}, for each $\mu_k \in \{\mu_1,\ldots,\mu_b\}$ we may find an orthogonal  $O^{(\mu_k)}_{m} \in \R^{d_k\times d_k}$ such that
\begin{equation}\label{e:unif_roation_bound}
\sum_{k=1}^b\|V_m^{(\mathcal I_{k,m})}-\widetilde \Phi_m^{(\mu_k)}O_{m}^{(\mu_k)} \|_{\rm F}^2 \leq \sum_{k=1}^b\|\cP_{k}(\overline  H_m\big) - \cP_{k}\big(\widetilde \Phi_m^{L}  \Lambda_L(\widetilde \Phi_m^{L})^\top\big)\|_{\rm F}^2 =o_\P(1). 
\end{equation}
On the other hand, if we set
$$
\Phi_{m}^{(\mu_k)}=(\Phi_{r_{k-1}+1,m}:\cdots :\Phi_{r_k,m}) ,%
$$
we obtain
\begin{align}
\sum_{k=1}^b\|\Phi_{m}^{(\mu_k)}-\widetilde \Phi_{m}^{(\mu_k)}\|^2_{\rm F} &= \|\Phi_{m}-\widetilde 
\Phi_{m}^{L}\|_{\rm F}^2 \notag\\
&=  \| \Phi_m^{L}-\Phi_m^LG_L^{-1/2}\|^2_{\rm F}\notag\\
&\leq   \| \Phi_m^{L}\|_{\mathrm{op}}^2\|I-G_L^{-1/2}\|_{\rm F}\notag \\
& \leq O_\P(1)\|\dcal E_L\|^2_{\rm F} = O_\P(1/\zeta_m).\label{e:unif_ortho_approx}
\end{align}
Combining  \eqref{e:unif_roation_bound} and \eqref{e:unif_ortho_approx}, we obtain
\begin{align*}
&\left(\sum_{k=1}^b\|V_m^{(\mathcal I_{k,m})}- \Phi_m^{(\mu_k)}O_{m}^{(\mu_k)} \|_{\rm F}^2\right)^{1/2}\\
& \leq \left(\sum_{k=1}^b\|V_m^{(\mathcal I_{k,m})}- \widetilde \Phi_m^{(\mu_k)}O_{m}^{(\mu_k)} \|_{\rm F}^2 \right)^{1/2} + \max_{1\leq k \leq b
}\|O_m^{(\mu_k)}\|_{\mathrm{op}}\left(\sum_{k=1}^b\|\Phi_{m}^{(\mu_k)}-\widetilde \Phi_{m}^{(\mu_k)}\|_{\rm F}^2\right)^{1/2}\\
&=O_P\left(\frac1{m\delta_m^3}\right)^{1/2}  + O_\P\left(1/\sqrt{\zeta_m}\right)\\
&=O_P\left(\frac1{\sqrt{m}\delta_m^{3/2}}\right),
\end{align*}
where on the last line we used that $\sqrt m\delta_m^{3/2}=(\sqrt m \varepsilon_m )(\varepsilon_m \zeta_m)^{1/2} \zeta_m \geq (\varepsilon_m \zeta_m)^{1/2} \zeta_m \geq \zeta_m \gg \zeta_m^{1/2}.$
Finally, note by Lemma \ref{l:sample_eig_concentration}, with probability tending to 1, the columns of $V_{m}^{(\mathcal I_{k,m})}$ belong to the collection $\{\mathbf v_{\ell,m}~\ell\in\mathcal B_k\}=\{\mathbf v_{r_{k-1}+1,m},\ldots ,\mathbf v_{r_k,m}\}$.   Hence for each $k$ we may find a permutation matrix $Q_{m}^{(\mu_k)} \in \R^{d_k\times d_k}$ with
$$
(\mathbf v_{r_{k-1}+1,m}:\cdots : \mathbf v_{r_k,m})=V_m^{(\mathcal I_{k,m })}Q_{m}^{(\mu_k)},
$$
with probability tending to 1. 
Thus,  if we
define the block-orthogonal matrix
$$
O_{L,m}= \text{diag}\Big(O_m^{(\mu_1)}Q_{m}^{(\mu_1)},~ \ldots,~O_m^{(\mu_b)}Q_{m}^{(\mu_b)}  \Big)\in \R^{L\times L},
$$
 in view of \eqref{e:unif_roation_bound}, we have
\begin{align*}
\| V_m^L - \Phi_m^LO_{L,m}\|_{\rm F}^2 &=\sum_{k=1}^b\|V_m^{(\mathcal I_{k,m})}Q_{m}^{(\mu_1)} - \Phi_m^{(\mu_k)}O_{m}^{(\mu_k)}Q_{m}^{(\mu_1)}  \|_{\rm F}^2\\
& =\sum_{k=1}^b\|V_m^{(\mathcal I_{k,m})}- \Phi_m^{(\mu_k)}O_{m}^{(\mu_k)} \|_{\rm F}^2 =O_\P\left(\frac1{\sqrt m\delta_m^{3/2}}\right).
\end{align*}

\end{proof}

The next lemma establishes an appropriate intermediate consistency result for estimation of the matrix analog of $\Sigma_L$ (as defined in \eqref{e:def_Sigma}) given by
\begin{equation}
[\Sigma]_L = \{\langle \phi_\ell, \Sigma \phi_{\ell'}\rangle_{\cH}:~ 1\leq \ell,\ell'\leq L\}.\notag
\end{equation}
We briefly recall
\begin{align}\label{e:def_LRC_estimator_supp}
\widehat{\cG}_{L,m}=\sum_{r=-(m-1)}^{m-1}K \left(\frac{r}{\mathfrak b_m}\right)\widehat{\bga}_{m}(r),
\end{align}
where
$$
\widehat{\bga}_{m}(r)=\frac1m\sum_{i=1}^{m-r}   \widehat{\bz }_{L,i} \big( \widehat{\bz }_{L,i+r})^\top=\sum_{i=1}^{m-r}   \widehat{\bxi }_{L,i} \big( \widehat{\bxi }_{L,i+r})^\top,
$$
with
$$
\widehat{\bxi }_{L,i}=\sqrt m (v_{1,m}(j),\ldots,v_{1,L}(j))^\top.
$$

\begin{lemma}\label{l:LRC_consistency} Suppose Assumptions \ref{a:eigsep}--\ref{ker2} are in place.
Let $\widehat {\cG}_{L,m}$ be as in \eqref{e:def_LRC_estimator_supp}, and write
\begin{align*}
\widehat \Lambda_L=\mathrm{diag}\,(\widehat \lambda_{1,m},\ldots,\widehat \lambda_{L,m}).
\end{align*}
Set
\begin{align}\label{e:LRC_estimator}
[\widehat \Sigma]_{L,m}=  |\widehat \Lambda_L|^{1/2}  \widehat {\cG}_{L,m}|\widehat \Lambda_L|^{1/2}.
\end{align}
Then, with $ O_{L,m}$ as in Theorem \ref{t:spectral_consistency}, 
$$
\| [\widehat \Sigma]_{L,m}- O_{L,m}^\top[\Sigma]_LO_{L,m}\|_{\rm F} =o_\P(1).
$$
\end{lemma}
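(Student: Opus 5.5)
Write $\widehat\bxi_{L,j}=\sqrt m\,V_m^L(j,\cdot)^\top$ and $\bxi^{\circ}_{L,j}=O_{L,m}^\top(\phi_1(\bX_j),\ldots,\phi_L(\bX_j))^\top$, i.e.\ $\sqrt m$ times the $j$-th row of $\Phi_m^LO_{L,m}$. The plan is to compare $[\widehat\Sigma]_{L,m}$ with $O_{L,m}^\top[\Sigma]_LO_{L,m}$ through two intermediate matrices,
$$
\widetilde\Sigma^{\circ}_{L,m}=|\Lambda_L|^{1/2}O_{L,m}\,\widehat\cG^{\circ}_{L,m}\,O_{L,m}^\top|\Lambda_L|^{1/2},\qquad
\widehat\cG^{\circ}_{L,m}=\sum_{|r|<m}K\Big(\frac{r}{\mathfrak b_m}\Big)\widehat\bga^{\circ}_m(r),
$$
where $\widehat\bga^\circ_m(r)$ is the lag-$r$ sample autocovariance built from $\bxi^\circ_{L,j}$. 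The block-orthogonal matrix $O_{L,m}$ acts within blocks of equal $\lambda_\ell$, so it commutes with $|\Lambda_L|^{1/2}$. This lets us rewrite $O_{L,m}^\top|\Lambda_L|^{1/2}(\ldots)|\Lambda_L|^{1/2}O_{L,m}$ freely. The proof then splits into three steps.

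\emph{Step 1 (oracle lag-window consistency).} Show that $\||\Lambda_L|^{1/2}\widehat\cG^{\rm or}_{L,m}|\Lambda_L|^{1/2}-[\Sigma]_L\|_{\rm F}=o_\P(1)$, where $\widehat\cG^{\rm or}$ uses the true $\phi_\ell(\bX_j)$. The weighted matrix is the $L\times L$ compression $P_L(\cdot)P_L$ of the operator-valued lag-window estimator $\sum_r K(r/\mathfrak b_m)m^{-1}\sum_i Y_i\otimes Y_{i+r}$ in $\dcal S_2(\cH)$. Centering corrections are negligible by \eqref{e:meanzero}. Standard Bartlett-type arguments for $L^p$-$m$-approximable Hilbert-space sequences (Lemma~\ref{l:dep_transfer}, $\mathfrak b_m\to\infty$, $\mathfrak b_m=o(\sqrt m)$ from Assumption~\ref{ker2}) give trace/HS consistency for $\Sigma$. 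Compression by $P_L$ is a contraction, which gives the $L$-uniform bound. Conjugating by $O_{L,m}$ preserves Frobenius norms.

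\emph{Step 2 (replacing eigenfunctions by eigenvectors).} Bound $\||\Lambda_L|^{1/2}(\widehat\cG_{L,m}-O_{L,m}^\top\widehat\cG^{\rm or}O_{L,m})|\Lambda_L|^{1/2}\|_{\rm F}$. Write $\widehat\bxi_{L,i}=\bxi^\circ_{L,i}+\mathbf e_i$, where $\sum_i\|\mathbf e_i\|^2=m\|V_m^L-\Phi_m^LO_{L,m}\|_{\rm F}^2$. Each $\widehat\bga_m(r)$ differs from its oracle version by cross terms bounded by Cauchy--Schwarz:
$$
m^{-1}\Big(\sum_i\|\mathbf e_i\|^2\Big)^{1/2}\Big(\sum_i\|\bxi^\circ_{L,i}\|^2\Big)^{1/2}
$$
plus a quadratic term. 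After weighting by $|\Lambda_L|^{1/2}\le|\lambda_1|^{1/2}$, we use $\sum_i\||\Lambda_L|^{1/2}\bxi^\circ_{L,i}\|^2\le\sum_i\|Y_i\|^2_\cH=O_\P(m)$ on one factor. On the error factor we only get $O_\P(\|V_m^L-\Phi_m^LO_{L,m}\|_{\rm F})=O_\P(1/(\sqrt m\delta_m^{3/2}))$ by Theorem~\ref{t:spectral_consistency}; for the unweighted side we pay at most $\sqrt L$ through $\|\Phi_m^L\|_{\rm F}^2\approx L$. Summing over $|r|\le c\,\mathfrak b_m$ (kernel support, Assumption~\ref{ker1}) gives a bound of order
$$
\mathfrak b_m\Big(\frac{1}{\sqrt m\delta_m^{3/2}}+\frac{\sqrt L}{\sqrt m}\Big),
$$
which is $o_\P(1)$ by Assumption~\ref{ker2}.

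\emph{Step 3 (replacing $\Lambda_L$ by $\widehat\Lambda_L$).} Use $\||\widehat\Lambda_L|^{1/2}-|\Lambda_L|^{1/2}\|$. By Lemma~\ref{l:sample_eig_concentration}, $\widehat\lambda_{\ell,m}$ lies in the band of $\lambda_\ell$ with probability tending to one, and by Theorem~\ref{t:eigen_consistency_theo} the eigenvalues are close in $\ell^2$ at rate $\varepsilon_m$. Since $|\lambda_\ell|\ge\delta_m\gg\varepsilon_m$, we get $\big||\widehat\lambda_\ell|^{1/2}-|\lambda_\ell|^{1/2}\big|\le|\widehat\lambda_\ell-\lambda_\ell|/|\lambda_\ell|^{1/2}$. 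This yields a relative error $\max_\ell|\widehat\lambda_\ell/\lambda_\ell-1|^{1/2}\lesssim(\varepsilon_m/\delta_m)^{1/2}\to0$ in the diagonal rescaling $D=|\widehat\Lambda_L|^{1/2}|\Lambda_L|^{-1/2}$. Then $[\widehat\Sigma]_{L,m}=D\,(|\Lambda_L|^{1/2}\widehat\cG|\Lambda_L|^{1/2})D$ differs from the middle factor by $o_\P(1)$ times its Frobenius norm, which is $O_\P(1)$ by Steps 1--2 and $\|\Sigma\|_{\rm tr}<\infty$.

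The main obstacle is Step 2. The unweighted $\widehat\bga_m(r)$ entries need not be uniformly controlled as $L\to\infty$, and the rate in Theorem~\ref{t:spectral_consistency} is unweighted. I would first try bounding the cross term with one factor weighted, $\||\Lambda_L|^{1/2}\bxi^\circ\|$, and the other using $\|\Lambda_L\|_{\rm op}\le|\lambda_1|$. If $\sqrt L$ factors appear elsewhere, Assumption~\ref{ker2} is designed to absorb them through the $\sqrt L\varepsilon_m$ term.
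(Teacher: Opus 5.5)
Your proposal is correct and follows essentially the paper's proof. It has the same three ingredients: oracle lag-window consistency through the compressed operator autocovariances $P_LB_{i,r}P_L$; replacement of $\Phi_m^LO_{L,m}$ by $V_m^L$ lag by lag, with a factor $\mathfrak b_m$ from the kernel support; and the final rescaling by $|\widehat\Lambda_L|^{1/2}|\Lambda_L|^{-1/2}$.

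The only difference is in Step 2. The paper bounds $\|\widehat\bga_m(r)-O^\top\widetilde\bga_m(r)O\|_{\rm F}\le\|V_m^L-\Phi_m^LO_{L,m}\|_{\rm F}(\|V_m^L\|_{\rm op}+\|\Phi_m^L\|_{\rm op})$ without weights, using $\|\Phi_m^L\|_{\rm op}=O_\P(1)$, so no $\sqrt L$ ever appears. Your weighted Cauchy--Schwarz achieves the same thing. However, drop the aside about paying $\sqrt L$ through $\|\Phi_m^L\|_{\rm F}$: that route would give $\mathfrak b_m\sqrt L/(\sqrt m\,\delta_m^{3/2})$, which Assumption~\ref{ker2} does not control in general.
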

\begin{proof} 
Let%
$$
\widetilde {\bga }_{m}(r) =\frac1{m}\sum_{i=1}^{m-r}   {\bxi }_{L,i} {\bxi }_{L,i+r}^\top,\qquad \boldsymbol \xi_{L,j}=(\phi_1(\bX_j),\ldots,\phi_L(\bX_j))^\top,
$$
and
$$
{\cG}_{L,m}=\sum_{r=-(m-1)}^{m-1}K \left(\frac{r}{\mathfrak b_m}\right){\bga}_{m}(r).
$$
For simplicity write $O= O_{L,m}$.  We first show
\begin{equation}\label{e:LRC_estim_Frob_bound}
\| |\Lambda_L|^{1/2}  \widehat {\cG}_{L,m}|\Lambda_L|^{1/2} -O^\top [\Sigma]_L O\|_{\rm F} = o_\P(1).
\end{equation}
Note
\begin{align*}
\widehat{\bga}_m(r)-O^\top\widetilde\bga_{m}(r)O &=\frac1m \sum_{i=1}^{m-r}   (\widehat{\bxi }_{L,i} -O^\top\bxi_{L,i}) \big( \widehat{\bxi }_{L,i+r})^\top -\frac1m \sum_{i=1}^{m-r} (O^\top\bxi_{L,i})(\widehat \bxi_{i+r,m} -O^\top\bxi_{L,i+r})^\top.
\end{align*}
Also note that
$$
(V_m^L - \Phi_m^L O)^\top = m^{-1/2}(\widehat \bxi_{L,1}-O^\top \bxi_{L,1}:\cdots  :\widehat \bxi_{L,m}-O^\top \bxi_{L,m}) \in \R^{L\times m}.
$$
Hence, with 
$$
M_1=m^{-1/2}(\widehat \bxi_{1,m}-O^\top \bxi_{1,m}:\cdots  :\widehat \bxi_{m-r,m}-O^\top \bxi_{m-r,m}),\qquad M_2 = m^{-1/2}(\widehat \bxi_{L,1+r},\ldots,\widehat \bxi_{L,m})^\top
$$
we clearly have $\|M_1\|_{\rm F}\leq \|V_m^L-\Phi_m^LO\|_{\rm F}$ and $\| M_2\|_{\mathrm{op}}\leq\|V_m^L\|_{\mathrm{op}}$, giving %
$$
\Bigg\|\frac1m \sum_{i=1}^{m-r}   (\widehat{\bxi }_{L,i} -O^\top\bxi_{L,i}) \big( \widehat{\bxi }_{L,i+r})^\top\Bigg\|_{\rm F} = \| M_1 M_2^\top\|_{\rm F} \leq \|M_1\|_{\rm F}\|M_2\|_{\mathrm{op}} \leq\|V_m-\Phi_m^LO\|_{\rm F}\|V_m^L\|_{\mathrm{op}}.
$$
Analogously,  
$$
\Bigg\|\frac1m\sum_{i=1}^{m-r} (O^\top\bxi_{L,i})(\widehat \bxi_{i+r,m} -O^\top\bxi_{L,i+r})^\top\Bigg\|_{\rm F} \leq\|V_m-\Phi_m^LO\|_{\rm F}\|\Phi_m^L\|_{\mathrm{op}}.%
$$
Hence, by Theorem \ref{t:spectral_consistency}, 
\begin{align*}
\max_{|r|\leq m-1}\|\widehat{\bga}_m(r)-O^\top\bga_{m}(r)O \|_{\rm F} &\leq\|V_m-\Phi_m^LO\|_{\rm F}(\|V_m^L\|_\mathrm{op}+\|\Phi_m^L\|_{\mathrm{op}})\\
& =O_\P\left(\frac1{\sqrt{m}\,\delta_m^{3/2}}\right).
\end{align*}
Next, with
\begin{align}\label{e:def_widtildeG}
 \widetilde{\mathcal G}_{L,m}=\sum_{r=-(m-1)}^{m-1}K \left(\frac{r}{\mathfrak b_m}\right)\widetilde{\bga}_{m}(r),
\end{align}
 we have
\begin{align}
\notag\|\widehat{\mathcal G}_{L,m}- O^\top \widetilde{\mathcal  G}_{L,m}O \|_{\rm F}& = \left\|\sum_{r=-(m-1)}^{m-1}K \left(\frac{r}{\mathfrak b_m}\right)\left(\widehat{\bga}_m(r)-O^\top\widetilde\bga_{m}(r)O \right)\right\|_{\rm F}\\
\notag& \leq 2\mathfrak b_m \sup_u|K(u)| \max_{|r|\leq m-1}\|\widehat{\bga}_m(r)-O^\top\widetilde\bga_{m}(r)O \|_{\rm F} \\
& =O_\P\left(\frac{\mathfrak b_m}{\sqrt m\delta_m^{3/2}}\right) =o_\P(1).\label{e:G_hat_rotation_G}
\end{align}
Now, with $\gamma_L(r)=\E\bxi_{L,0}\bxi_{L,r}^\top$, we have $\mathcal G_L = \sum_{r\in \mathbb Z} \gamma_L(r)$, so that

\begin{align}
\widetilde{\mathcal G}_{L,m}- \mathcal G_L &= \sum_{r=-(m-1)}^{m-1}K \left(\frac{r}{\mathfrak b_m}\right)(\widetilde \gamma_m(r) -\gamma_{L}(r)) + \sum_{r=-(m-1)}^{m-1}\left(K \left(\frac{r}{\mathfrak b_m}\right)-1\right) \gamma_{L}(r) + \sum_{|r|\geq m}\gamma_L(r)\notag\\
& = T_1 + T_2 + T_3.\notag
\end{align}
We now show
\begin{align}\label{e:Ti_123}
\| |\Lambda_L|^{1/2}T_i|\Lambda_L|^{1/2}\|_{\rm F} = o_\P(1),\qquad i=1,2,3.
\end{align}
First, note with 
$$
B_{i,r}=Y_i\otimes Y_{i+r}- \E (Y_i\otimes Y_{i+r}),\
$$
we have
\begin{align}
\||\Lambda_L|^{1/2}(\widetilde \gamma_m(r) -\gamma_L(r))|\Lambda_L|^{1/2}\|_{\rm F} &=\Bigg\| \frac1{m}\sum_{i=1}^{m-r}  |\Lambda_L|^{1/2}\big( {\bxi }_{L,i} {\bxi }_{L,i+r}^\top-\E\bxi_{L,0}\bxi_{L,r}^\top\big)|\Lambda_L|^{1/2}\Bigg\|_{\rm F}\notag\\
&=\Bigg\|\frac1m\sum_{i=1}^{m-r}P_L B_{i,r}P_L\Bigg\|_{\mathrm{HS}},\label{e:acf_difference_bound}
\end{align}
where $P_L$ denotes the projection in $\cH$ onto $\text{span} \{\phi_1,\ldots,\phi_L\}.$
With $B_{i,r}^{(n)}=Y_i^{(n)}\otimes  Y^{(n)}_{i+r}-\E (Y_i\otimes Y_{i+r} ) ,$ we have the bound
\begin{align*}
\|B_{1,r}-B_{1,r}^{(n)}\|_{\text{HS}} \leq \| Y_1-Y_1^{(n)}\|_{\cH}\|Y_{1+r}\|_{
\cH
} + \|Y_1^{(r)}\|_{\cH}\| Y_{1+r}-Y_{1+r}^{(n)}\|_{\cH} \end{align*}
Hence, with 
$q=p/2>2$, and $\eta$ as in Assumption \eqref{a:Lpm}, using Cauchy-Schwarz and stationarity of $\{Y_k-Y_k^{(n)},k\in \mathbb Z\}$, we have
$$
\sum_{n=1}^\infty\E\big(\|B_{1,r}-B_{1,r}^{(n)}\|_{\text{HS}} ^q\big)^{\eta/q} \leq 2( \E\|Y_1\|^{p})^{\eta/p}\sum_{n=1}^\infty\big( \E  \| Y_1-Y_1^{(n)}\|_{\cH}^{p}\big)^{\eta/p} \leq C \sum_{n=1}^\infty\vartheta_{p\alpha}(n)^{\eta\alpha}<\infty.
$$
Hence, by \cite[Theorem.~3.3]{berkes:horvath:rice:2013}, we have, for any $0<\delta<1$ with $2+\delta<q$
$$
\sup_{m\geq r+1}\left(\E\left\|\frac1{\sqrt m}\sum_{i=1}^{m-r}B_{i,r}\right\|_{\mathrm{HS}}^{2+\delta} \right)^{1/(2+\delta)}\leq  C_r^*,
$$
where $C_r^*$ is bounded by a polynomial in the quantities $\E\|B_{0,r}\|^{2+\delta}_{\mathrm{HS}}$ ,$\E\|B_{0,r}\|_{\rm HS}^2,$ $I_r(2)$ and $I_r(2+\delta)$, where 
$$
I_r(s)=\sum_{n=1}^\infty\E\big(\|B_{1,r}-B_{1,r}^{(n)}\|_{\text{HS}} ^s\big)^{1/s}.
$$
It is easily seen that $I_r(2),I_r(q)$, $\E\|B_{0,r}\|^{2+\delta}_{\mathrm{HS}}$ ,$\E\|B_{0,r}\|_{\rm HS}^2,$ are uniformly bounded in $r$.  Hence, we obtain
$$
\sup_{r} \E \left\|\frac1m\sum_{i=1}^{m-r}B_{i,r}\right\|_{\mathrm{HS}}^2
=O(1/m).$$
From \eqref{e:acf_difference_bound} we conclude
\begin{align*}
\E\||\Lambda_L|^{1/2}T_1|\Lambda_L|^{1/2} \|_{\rm F} &\leq \sup_u|K(u)| \sum_{|r|\leq C \mathfrak b_m}\E\||\Lambda_L|^{1/2}(\widetilde \gamma_m(r) -\gamma_L(r))|\Lambda_L|^{1/2}\|_{\rm F} \\
& \leq C \mathfrak b_m \sup_r \E \Bigg\|\frac1m\sum_{i=1}^{m-r}P_L B_{i,r}P_L\Bigg\|_{\mathrm{HS}} \\
& \leq C \mathfrak b_m \sup_r \left(\E \Bigg\|\frac1m\sum_{i=1}^{m-r} B_{i,r}\Bigg\|_{\mathrm{HS}}^2\right)^{1/2}= O(\mathfrak b_m/m^{1/2}) =o(1).
\end{align*}
Next, for $T_2$, since $\lim_{u\to 0} K(u)= 1$ and %
$$
\sum_{r\in \mathbb Z} \Big\|\Lambda_L|^{1/2}\gamma_L(r)|\Lambda_L|^{1/2}\Big\|_{\rm F} = \sum_{r\in \mathbb Z} \Big\|P_L\E (Y_0 \otimes Y_{r})P_L\Big\|_\mathrm{HS} \leq \sum_{r\in \mathbb Z} \Big\|\Cov(Y_0 , Y_{r})\Big\|_\mathrm{HS} <\infty,
$$
dominated convergence gives
$$
\E\||\Lambda_L|^{1/2}T_2|\Lambda_L|^{1/2} \|_{\rm F} \leq \sum_{r\in\mathbb Z} |K(r/\mathfrak b_m)-1|\|  |\Lambda_L|^{1/2}\gamma_L(r)|\Lambda_L|^{1/2} \to 0.
$$
Similarly, we have
$$
\E\||\Lambda_L|^{1/2}T_3|\Lambda_L|^{1/2} \|_{\rm F} \leq \sum_{|r|\geq m} \|\Cov(Y_0,Y_r)\|_{\mathrm{HS}}\to 0.
$$
Hence, \eqref{e:Ti_123} is established, from which we deduce
\begin{equation}\label{e:Gtilde_conv}
\||\Lambda_L|^{1/2}\widetilde{\mathcal G}_{L,m}|\Lambda_L|^{1/2}- {\Sigma}_L\|_{\rm F} = o_\P(1).
\end{equation}
Since $\Lambda_LO=O\Lambda_L$, putting \eqref{e:G_hat_rotation_G} and \eqref{e:Gtilde_conv} together, we obtain \eqref{e:LRC_estim_Frob_bound}.  
Finally, Lemma~\ref{l:sample_eig_concentration} and
\eqref{e:pi_m_identity_upto_L} give
$$
\|\widehat\Lambda_L-\Lambda_L\|_{\mathrm{op}}
=O_\P(\varepsilon_m).
$$
Since $|\lambda_L|\geq\delta_m$,
$$
\begin{aligned}
\big\|
|\widehat\Lambda_L|^{1/2}|\Lambda_L|^{-1/2}
-I_L
\big\|_{\mathrm{op}}
&=
\max_{1\leq\ell\leq L}
\frac{
\big|{|\widehat\lambda_{\ell,m}|^{1/2}}
- |\lambda_\ell|^{1/2}\big|}
{ |\lambda_\ell|^{1/2} }\\
&\leq
\max_{1\leq\ell\leq L}
\frac{|\widehat\lambda_{\ell,m}-\lambda_\ell|}
{|\lambda_\ell|}\\
&\leq
\frac{\|\widehat\Lambda_L-\Lambda_L\|_{\mathrm{op}}}
{|\lambda_L|}
=
O_\P\left(\frac{\varepsilon_m}{\delta_m}\right)
=o_\P(1).
\end{aligned}
$$
Moreover, \eqref{e:LRC_estim_Frob_bound} implies
$$
\big\|
|\Lambda_L|^{1/2}\widehat{\cG}_{L,m}
|\Lambda_L|^{1/2}
\big\|_{\rm F}
=O_\P(1),
$$
since
$
\|O^\top[\Sigma]_LO\|_{\rm F} =\|[\Sigma]_L\|_{\rm F} \leq\|\Sigma\|_{\mathrm{HS}}.$
Therefore, using the identity
\begin{align*}
[\widehat\Sigma]_{L,m}
&=|\widehat\Lambda_L|^{1/2}\widehat{\cG}_{L,m}
|\widehat\Lambda_L|^{1/2}\\
&=
\big(|\widehat\Lambda_L|^{1/2}|\Lambda_L|^{-1/2}\big)
\big(|\Lambda_L|^{1/2}\widehat{\cG}_{L,m}|\Lambda_L|^{1/2}\big)
\big(|\widehat\Lambda_L|^{1/2}|\Lambda_L|^{-1/2}\big),
\end{align*}
we get
\begin{align*}
&\big\|
[\widehat\Sigma]_{L,m}-|\Lambda_L|^{1/2}\widehat{\cG}_{L,m} |\Lambda_L|^{1/2}
\big\|_{\rm F}\\
&\quad\leq
\big\|
|\widehat\Lambda_L|^{1/2}|\Lambda_L|^{-1/2}
-I_L \big\|_{\mathrm{op}} \big\|
|\Lambda_L|^{1/2}\widehat{\cG}_{L,m} |\Lambda_L|^{1/2}
\big\|_{\rm F} \left( \big\| |\widehat\Lambda_L|^{1/2}|\Lambda_L|^{-1/2} \big\|_{\mathrm{op}}+1\right)
=o_\P(1).
\end{align*}
Combining this with \eqref{e:LRC_estim_Frob_bound} proves the result.

\end{proof}

The next lemma establishes convergence of $\tr \widehat \Sigma_{L,m}$, which is needed in the proof of Theorem \ref{t:LRC_sim_method}.
\begin{lemma}\label{l:trace_hatSigma}
Suppose the hypotheses of Lemma \ref{l:LRC_consistency} are in force. In addition, assume
\begin{equation}\label{e:extra_trace_condition}
\mathfrak b_m\sqrt{L}\,\varepsilon_m \to 0,
\end{equation}
where $\varepsilon_m$ is given by \eqref{e:def_vareps}. Then
\begin{equation}\label{e:trace_hatSigma_to_traceSigma}
\tr([\widehat \Sigma]_{L,m}) \stackrel\P \to \tr(\Sigma).
\end{equation}
\end{lemma}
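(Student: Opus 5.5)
Since the trace is invariant under orthogonal conjugation, $\tr(O_{L,m}^\top[\Sigma]_LO_{L,m})=\tr([\Sigma]_L)=\tr(\Sigma_L)$, where $\Sigma_L$ is as in \eqref{e:def_Sigma}. By Lemma~\ref{l:dep_transfer}(ii), $\tr(\Sigma_L)\to\tr(\Sigma)$ as $L\to\infty$. When $\ell_{\max}<\infty$ the two traces eventually coincide. It therefore suffices to show
$$
\tr([\widehat\Sigma]_{L,m})-\tr(O_{L,m}^\top[\Sigma]_LO_{L,m})=o_\P(1).
$$
Lemma~\ref{l:LRC_consistency} gives only Frobenius control. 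Converting it to trace control costs a factor $\sqrt L$, since $|\tr B|\le\sqrt L\|B\|_{\rm F}$ for $L\times L$ matrices, and $L\to\infty$ in general. So I would not use Lemma~\ref{l:LRC_consistency} directly. Instead I would redo its decomposition and track each piece in trace or nuclear norm, invoking the extra condition \eqref{e:extra_trace_condition} exactly where the $\sqrt L$ appears.

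I would split along the chain used in the proof of Lemma~\ref{l:LRC_consistency}:
\begin{align*}
&|\widehat\Lambda_L|^{1/2}\widehat\cG_{L,m}|\widehat\Lambda_L|^{1/2}
\;\to\;|\Lambda_L|^{1/2}\widehat\cG_{L,m}|\Lambda_L|^{1/2}
\;\to\;|\Lambda_L|^{1/2}O^\top\widetilde\cG_{L,m}O|\Lambda_L|^{1/2}\\
&\qquad\;\to\;O^\top|\Lambda_L|^{1/2}\widetilde\cG_{L,m}|\Lambda_L|^{1/2}O
\;\to\;O^\top[\Sigma]_LO .
\end{align*}
The third step is exact because $O$ commutes with $\Lambda_L$.

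For the last step, the trace of $|\Lambda_L|^{1/2}(T_1+T_2+T_3)|\Lambda_L|^{1/2}$ equals $\sum_r K(r/\mathfrak b_m)\tr(P_L(\widetilde{\bga}\text{-type sums})P_L)$. This is a trace of projected operators, bounded by nuclear norms of the infinite-dimensional objects rather than by $\sqrt L$ times Frobenius norms.
\begin{itemize}
\item For $T_2$ and $T_3$, use $\sum_r\|\Cov(Y_0,Y_r)\|_{\rm tr}<\infty$ from Lemma~\ref{l:main_invariance_lemma}(i) together with dominated convergence.
\item For $T_1$, write $\tr(P_LB_{i,r}P_L)=\langle P_LY_i,P_LY_{i+r}\rangle-\E(\cdot)$. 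This is a scalar Bernoulli shift whose coupling coefficients are controlled by Lemma~\ref{l:dep_transfer}(i) uniformly in $L$ and $r$. Lemma~\ref{l:main_invariance_lemma}(ii) then gives an $O(m^{-1/2})$ bound uniformly in $r$ and $L$, so the $T_1$ contribution is $O(\mathfrak b_m m^{-1/2})=o(1)$.
\end{itemize}

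For the rotation step, the difference of traces is $\sum_r K(r/\mathfrak b_m)\,\tr\big(|\Lambda_L|^{1/2}(\widehat{\bga}_m(r)-O^\top\widetilde{\bga}_m(r)O)|\Lambda_L|^{1/2}\big)$. Each summand is a trace of $M_1M_2^\top$-type products as in that proof. I would bound it by $\big\||\Lambda_L|^{1/2}(V_m^L-\Phi_m^LO)^\top\big\|_{\rm F}$ times $\||\Lambda_L|^{1/2}\|_{\rm op}\max(\|V_m^L\|_{\rm F},\|\Phi_m^L\|_{\rm F})$, where the Frobenius norm of $V^L_m$ is $\sqrt L$. Here is the main obstacle. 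A crude bound gives $\mathfrak b_m\sqrt L/(\sqrt m\delta_m^{3/2})$, which is not covered by Assumption~\ref{ker2}. I would therefore try a sharper argument. Write $\widehat{\bxi}_{L,i}$ as $O^\top\bxi_{L,i}$ plus error and expand bilinearly, placing the $|\Lambda_L|^{1/2}$ weights on the error factors. The error factors should be controlled by the eigenvalue perturbation rather than by eigenvector alignment: $\sum_\ell\widehat\lambda_\ell\|\mathbf v_\ell\|^2$ versus $\sum_\ell\lambda_\ell\|\boldsymbol\Phi_\ell\|^2$. This yields a rate of order $\sqrt L\,\varepsilon_m$ per lag, and summing over lags gives $\mathfrak b_m\sqrt L\varepsilon_m\to0$ by \eqref{e:extra_trace_condition}.

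For the first step, the eigenvalue replacement, I would use $\|\widehat\Lambda_L-\Lambda_L\|_{\rm F}=O_\P(\varepsilon_m)$, which follows from Theorem~\ref{t:eigen_consistency_theo} and Lemma~\ref{l:sample_eig_concentration}. The diagonal entries of $\widehat\cG_{L,m}$ are $O_\P(\mathfrak b_m)$, so the trace difference is bounded by $\sum_\ell\big||\widehat\lambda_\ell|-|\lambda_\ell|\big|\,|(\widehat\cG_{L,m})_{\ell\ell}|\le\sqrt L\,\varepsilon_m\,O_\P(\mathfrak b_m)$, which is again $o_\P(1)$ by \eqref{e:extra_trace_condition}.
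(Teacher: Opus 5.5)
Three of your four links match the paper's proof. The eigenvalue-replacement bound $\sum_{\ell\le L}\bigl||\widehat\lambda_{\ell,m}|-|\lambda_\ell|\bigr|\,|(\widehat\cG_{L,m})_{\ell\ell}|=O_\P(\mathfrak b_m\sqrt L\,\varepsilon_m)$ is exactly where the paper uses \eqref{e:extra_trace_condition}. The commutation step is exact. Your trace-level treatment of $T_1,T_2,T_3$ is the paper's $S_{1,m},S_{2,m},S_{3,m}$ argument.

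The gap is the rotation step, which you flag as the main obstacle. The argument you sketch for it does not work. The quantities $\sum_\ell\widehat\lambda_{\ell,m}\|\mathbf v_{\ell,m}\|^2$ and $\sum_\ell\lambda_\ell\|\boldsymbol\Phi_{\ell,m}\|^2$ only involve the lag-zero diagonal, and $\|\mathbf v_{\ell,m}\|=1$ trivially. What must be controlled at lag $r$ are cross terms such as $m^{-1}\sum_i(\widehat\bxi_{L,i}-O^\top\bxi_{L,i})^\top|\Lambda_L|\,\widehat\bxi_{L,i+r}$. These are eigenvector-alignment errors, not functionals of the spectrum. So nothing in your sketch yields a rate $\sqrt L\,\varepsilon_m$ per lag.

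The obstacle itself is an artifact of where you put the weights. The factor $\sqrt L$ enters only because you bound the unweighted factor by $\|V_m^L\|_{\rm F}=\sqrt L$. Instead, keep the whole weight $|\Lambda_L|$ together. With $E_m=\widehat\cG_{L,m}-O^\top\widetilde\cG_{L,m}O$, the rotation-step trace difference is $\tr(|\Lambda_L|E_m)=\sum_{\ell\le L}|\lambda_\ell|(E_m)_{\ell\ell}$, so
\[
|\tr(|\Lambda_L|E_m)|\le\||\Lambda_L|\|_{\rm F}\,\|E_m\|_{\rm F}\le\Bigl(\sum_{\ell\ge1}\lambda_\ell^2\Bigr)^{1/2}\|E_m\|_{\rm F}.
\]
The first factor is bounded uniformly in $L$ because the eigenvalues are square-summable. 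The second, $\|E_m\|_{\rm F}=O_\P\bigl(\mathfrak b_m/(\sqrt m\,\delta_m^{3/2})\bigr)=o_\P(1)$, is precisely \eqref{e:G_hat_rotation_G} in the proof of Lemma~\ref{l:LRC_consistency}, under Assumption~\ref{ker2}.

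Lag by lag, write the error as $M_1M_2^\top$ with $M_1,M_2\in\mathbb R^{L\times(m-r)}$ and $\|M_1\|_{\rm F}\le\|V_m^L-\Phi_m^LO\|_{\rm F}$. Then
\[
|\tr(|\Lambda_L|M_1M_2^\top)|=|\langle M_1,|\Lambda_L|M_2\rangle_{\rm F}|\le\|M_1\|_{\rm F}\,\||\Lambda_L|\|_{\rm F}\,\|M_2\|_{\rm op}.
\]
This uses the operator norms $\|V_m^L\|_{\rm op}=1$ and $\|\Phi_m^L\|_{\rm op}=O_\P(1)$ rather than Frobenius norms. This is what the paper does. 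With this replacement your proposal coincides with the paper's proof, and \eqref{e:extra_trace_condition} is needed only for the eigenvalue replacement.
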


\begin{proof}
Set
$$
[\widetilde \Sigma]_{L,m}:=|\Lambda_L|^{1/2}\widetilde{\mathcal G}_{L,m}|\Lambda_L|^{1/2},
$$
where $\widetilde G_{L,m}$ is given in \eqref{e:def_widtildeG}.
We first show
\begin{equation}\label{e:hatSigma_to_tildeSigma_trace}
\tr([\widehat \Sigma]_{L,m})-\tr([\widetilde \Sigma]_{L,m})=o_\P(1).
\end{equation}
Write
$$
E_m:=\widehat{\mathcal G}_{L,m}-O_{L,m}^\top \widetilde{\mathcal G}_{L,m}O_{L,m}.
$$
Then
\begin{align*}
\tr([\widehat \Sigma]_{L,m})-\tr(\widetilde \Sigma_{L,m})
&=
\tr\big((|\widehat\Lambda_L|-|\Lambda_L|)\widehat{\mathcal G}_{L,m}\big)
+
\tr\big(|\Lambda_L|E_m\big),
\end{align*}
since $O_{L,m}|\Lambda_L|O_{L,m}^\top=|\Lambda_L|$. For the second term,
\begin{equation}\label{e:weighted_trace_term}
|\tr(|\Lambda_L|E_m)|
\leq
\||\Lambda_L|\|_{\rm F}\,\|E_m\|_{\rm F} =o_\P(1),
\end{equation}
since $\sum_{\ell\geq1}\lambda_\ell^2<\infty$ and Lemma \ref{l:LRC_consistency} yields
$
\|E_m\|_{\rm F}=o_\P(1).$
For the first term, we have
$$
\big|\tr\big((|\widehat\Lambda_L|-|\Lambda_L|)\widehat{\mathcal G}_{L,m}\big)\big|
\leq
\big\||\widehat\Lambda_L|-|\Lambda_L|\big\|_{\mathrm{tr}}
\|\widehat{\mathcal G}_{L,m}\|_{\mathrm{op}}.
$$
Further, by the proof of Lemma \ref{l:sample_eig_concentration} (see \eqref{e:pi_m_identity_upto_L}) and  Theorem \ref{t:eigen_consistency_theo}, we have
\begin{equation}\label{e:l2_topL_bound}
\left(\sum_{\ell=1}^L(\widehat \lambda_{\ell,m}-\lambda_\ell)^2\right)^{1/2}
=
O_\P(\varepsilon_m).
\end{equation}
Moreover, by compact support and boundedness of $K$,
$$
\|\widehat{\mathcal G}_{L,m}\|_{\mathrm{op}}=O_\P(\mathfrak b_m).
$$
Therefore
$$
\tr\big((|\widehat\Lambda_L|-|\Lambda_L|)\widehat{\mathcal G}_{L,m}\big)
=
O_\P(\mathfrak b_m\sqrt{L}\,\varepsilon_m)
=
o_\P(1)
$$
by \eqref{e:extra_trace_condition}. Together with \eqref{e:weighted_trace_term}, this proves \eqref{e:hatSigma_to_tildeSigma_trace}.
It remains to show
\begin{equation}\label{e:tildeSigma_trace_to_traceSigma}
\tr(\widetilde \Sigma_{L,m})\stackrel \P \to \tr(\Sigma).
\end{equation}
Let
$
Y_{L,j}$ be as in \eqref{e:def_YjL}.  Then
$$
\tr\!\left(|\Lambda_L|^{1/2}\widetilde\gamma_m(r)|\Lambda_L|^{1/2}\right)
=
\frac1m\sum_{i=1}^{m-r}\langle Y_{L,i},Y_{L,i+r}\rangle_{\cH},
$$
and
$$
\tr\!\left(|\Lambda_L|^{1/2}\gamma_L(r)|\Lambda_L|^{1/2}\right)
=
\E\langle Y_{L,0},Y_{L,r}\rangle_{\cH}.
$$
Hence, writing
$$
\widehat a_{L,m}(r)
:=
\frac1m\sum_{i=1}^{m-r}\langle Y_{L,i},Y_{L,i+r}\rangle_{\cH},
\qquad
a_L(r):=\E\langle Y_{L,0},Y_{L,r}\rangle_{\cH},
$$
we have
$$
\tr(\widetilde \Sigma_{L,m})
=
\sum_{r=-(m-1)}^{m-1}
K\!\left(\frac{r}{\mathfrak b_m}\right)\widehat a_{L,m}(r),
\qquad
\tr(\Sigma_L)=\sum_{r\in\mathbb Z} a_L(r).
$$
Thus
$$
\tr(\widetilde \Sigma_{L,m})-\tr(\Sigma_L)=S_{1,m}+S_{2,m}+S_{3,m},
$$
where
$$
S_{1,m}
=
\sum_{r=-(m-1)}^{m-1}
K\!\left(\frac{r}{\mathfrak b_m}\right)(\widehat a_{L,m}(r)-a_L(r)),
$$
$$
S_{2,m}
=
\sum_{r=-(m-1)}^{m-1}
\left(K\!\left(\frac{r}{\mathfrak b_m}\right)-1\right)a_L(r),
\qquad
S_{3,m}
=
-\sum_{|r|\ge m} a_L(r).
$$

Since
$$
a_L(r)=\tr\big(P_L\Cov(Y_0,Y_r)P_L\big),
$$
we have
$$
|a_L(r)|
\leq
\|\Cov(Y_0,Y_r)\|_{\tr},
$$
and by the standing assumptions
$$
\sum_{r\in\mathbb Z}\|\Cov(Y_0,Y_r)\|_{\tr}<\infty.
$$
Hence $\sum_{r\in\mathbb Z}|a_L(r)|<\infty$ uniformly in $L$, so dominated convergence yields
$$
S_{2,m}\to 0,
\qquad
S_{3,m}\to 0.
$$

For $S_{1,m}$, write
$$
\widehat a_{L,m}(r)-a_L(r)
=
\frac1m\sum_{i=1}^{m-r} Z_i^{(L,r)},
\qquad
Z_i^{(L,r)}
:=
\langle Y_{L,i},Y_{L,i+r}\rangle_{\cH}
-
\E\langle Y_{L,0},Y_{L,r}\rangle_{\cH}.
$$
Since $Y_{L,i}$ is the truncated counterpart of $Y_i$, we may use the same $r$-dependent coupling argument used in the proof of Lemma \ref{l:LRC_consistency} to obtain%
$$
\sup_{L,r}
\E\left|
\frac1m\sum_{i=1}^{m-r} Z_i^{(L,r)}
\right|
\lesssim
m^{-1/2}.
$$
Using the compact support of $K$, we obtain
$$
\E|S_{1,m}|
\lesssim
\frac{\mathfrak b_m}{\sqrt m}\to 0.
$$
Therefore $S_{1,m}\to 0$ in probability, and hence
$$
\tr(\widetilde \Sigma_{L,m})-\tr(\Sigma_L)\stackrel\P\to 0.
$$
Finally, since $\Sigma$ is trace class and $\Sigma_L=P_L\Sigma P_L$,
$$
\tr(\Sigma_L)\to \tr(\Sigma).
$$
Combining this with \eqref{e:hatSigma_to_tildeSigma_trace} proves \eqref{e:trace_hatSigma_to_traceSigma}.
\end{proof}

\begin{proof}[Proof of Theorem \ref{t:LRC_sim_method}] 

Let $c>0$ be fixed, and let $\mathbb N'\subseteq \mathbb N$ be arbitrary. We will show that there is a further subsequence $\mathbb N''\subseteq \mathbb N'$ along which
$$
\P_\infty\left\{ \sup_{0\leq t <\infty}\widehat \Gamma_{m''}(t)\geq c \right\}
\to
\P\left\{ \sup_{0\leq t <\infty}\Gamma(t)\geq c \right\},
\qquad \text{a.s.}
$$
Since $\mathbb N'$ is arbitrary, convergence in probability will then follow by the subsequences principle.

Let $ \mathcal T_L:\R^L\to \cH$
be the coordinate map defined by
$$
\mathcal T_L \mathbf e_\ell = \phi_\ell,\qquad \ell=1,\ldots,L,
$$
so that $\mathcal T_L\mathcal T_L^*=P_L$ and $\mathcal T_L^*\mathcal T_L=I_L$.  Recalling the process $\widehat \bZ_{L,m}(t)$, define
$$
\widehat G_{L,m}(t)
=
\sum_{\ell=1}^L \widehat Z_{\ell,L,m}(t)\phi_\ell
=
\mathcal T_L\widehat{\mathbf Z}_{L,m}(t),
\qquad t\geq 0.
$$
Under $\P_\infty$, $\widehat G_{L,m}$ is centered Gaussian with covariance
$$
\E_\infty\big[\widehat G_{L,m}(s)\otimes \widehat G_{L,m}(t)\big]
=
(s\wedge t)\mathcal T_L\widehat\Omega_{L,m}\mathcal T_L^* .
$$
Now let
$$
U_{L,m}
=
\mathcal T_L O_{L,m}\mathcal T_L^*+(I-P_L),
$$
where $O_{L,m}$ is as in Theorem \ref{t:spectral_consistency}. Since $O_{L,m}$ is orthogonal, $U_{L,m}$ is unitary on $\cH$. By Lemma~\ref{l:LRC_consistency},
$$
\left\|
|\widehat\Lambda_{L,m}|^{1/2}
\widehat{\mathcal G}_{L,m}
|\widehat\Lambda_{L,m}|^{1/2}
-
O_{L,m}^{\top}[\Sigma]_L O_{L,m}
\right\|_{\rm F}
=o_\P(1).
$$
We claim:
\begin{equation}\label{e:Omega_hat_frob_consistency}
\left\|
\widehat\Omega_{L,m}
-
O_{L,m}^{\top}[\Sigma]_L O_{L,m}
\right\|_{\rm F}
=o_\P(1).
\end{equation}

By Lemma~\ref{l:LRC_consistency},
$$
\left\|
[\widehat\Sigma]_{L,m}
-
O_{L,m}^{\top}[\Sigma]_L O_{L,m}
\right\|_{\rm F}
=o_\P(1).
$$
Moreover, Lemma~\ref{l:trace_hatSigma} gives
$$
\max\{\tr([\widehat\Sigma]_{L,m}),0\}
-\tr([\Sigma]_L)
=o_\P(1).
$$
Suppose first that $\tr(\Sigma)>0$. Then $\tr([\Sigma]_L)>0$ for all
sufficiently large $m$, and
$$
\frac{\max\{\tr([\widehat\Sigma]_{L,m}),0\}}
{\tr([\Sigma]_L)}
O_{L,m}^{\top}[\Sigma]_L O_{L,m}
$$
is feasible in the definition of $\widehat\Omega_{L,m}$. Hence, 
\begin{align*}
&
\left\|
\widehat\Omega_{L,m}
-
O_{L,m}^{\top}[\Sigma]_L O_{L,m}
\right\|_{\rm F}
\\
&\quad\leq
\frac{
\left|
\max\{\tr([\widehat\Sigma]_{L,m}),0\}
-\tr([\Sigma]_L)
\right|
}
{\tr([\Sigma]_L)}
\|[\Sigma]_L\|_{\rm F}
+
2\left\|
[\widehat\Sigma]_{L,m}
-
O_{L,m}^{\top}[\Sigma]_L O_{L,m}
\right\|_{\rm F}
=o_\P(1).
\end{align*}
If $\tr(\Sigma)=0$, positive semidefiniteness of $\Sigma$ implies
$\Sigma=0$, and therefore \eqref{e:Omega_hat_frob_consistency} still holds.
Also note
\begin{align*}
&\left\|
U_{L,m}\mathcal T_L\widehat\Omega_{L,m}\mathcal T_L^*U_{L,m}^*
-\Sigma
\right\|_{\rm HS}
\\
&\quad\leq
\left\|
U_{L,m}\mathcal T_L\widehat\Omega_{L,m}\mathcal T_L^*U_{L,m}^*
-\Sigma_L
\right\|_{\rm HS}
+
\|\Sigma_L-\Sigma\|_{\rm HS}
\\
&\quad=
\left\|
\widehat\Omega_{L,m}
-
O_{L,m}^{\top}[\Sigma]_L O_{L,m}
\right\|_{\rm F}
+o(1)
=o_\P(1).
\end{align*}
We may therefore extract subsequence $\mathbb N''\subseteq \mathbb N'$, along which
\begin{equation}\label{e:cov_subseq_as}
\left\|
U_{L,m''}\mathcal T_L\widehat\Omega_{L,m''}\mathcal T_L^*U_{L,m''}^*
-\Sigma
\right\|_{\rm HS}\to 0.
\end{equation}
and, in view of Lemma \ref{l:trace_hatSigma}, refining the subsequence if necessary, we may also assume
\begin{equation}\label{e:trace_subseq_as}
\tr(\widehat\Omega_{L,m''})
=
\tr(\mathcal T_L\widehat\Omega_{L,m''}\mathcal T_L^*)
=
\max\{\tr([\widehat\Sigma]_{L,m''}),0\}\to \tr(\Sigma)
\qquad \text{a.s.}
\end{equation}
Since $U_{L,m}\widehat G_{L,m}$ is a (finite-rank) $\cH$-valued Brownian motion under $\P_\infty$, it has paths in $C^\gamma_0([0,1+T];\cH)$ almost surely for every $\gamma<1/2$.  For each $m$, let $\mu_m$ denote the $\mathcal F_\infty$-conditional law of $U_{L,m}\widehat G_{L,m}$ on $C^\gamma_0([0,1+T];\cH)$, and let $\mu^G$ denote the law of $G$ on $C^\gamma_0([0,1+T];\cH)$. 
Along the sequence $m''$ the process
$$
X_{m''}:=U_{L,m''} \widehat G_{L,m''}
$$
is centered Gaussian with covariance
$$
\E_\infty\big[X_{m''}(s)\otimes X_{m''}(t)\big]
=
(s\wedge t)U_{L,m''}\mathcal T_L
\widehat\Omega_{L,m''}
\mathcal T_L^*U_{L,m''}^*.%
$$
By \eqref{e:trace_subseq_as},\
$$
\kappa=\sup_{m''\in \mathbb N''} \tr(\widehat\Omega_{L,m''})<\infty.
$$
Fix any $\omega$ on the probability 1 event on which \eqref{e:trace_subseq_as} and \eqref{e:cov_subseq_as} hold. We claim %
\begin{equation}\label{e:ULmG_to_G}
\mu_{m''}(\omega)\Rightarrow \mu^G .
\end{equation}
From \eqref{e:cov_subseq_as}, it follows that under $\P_\infty(\omega)$, $\big(X_{m''}(t_1),\ldots,X_{m''}(t_r)\big)\Rightarrow \big(G(t_1),\ldots,G(t_r)\big)$ for any $0\leq t_1<\ldots<t_r\leq T$.  It remains to show tightness in $C^\gamma_0([0,1+T],\cH)$ under $\P_\infty(\omega)$, for which we use the criterion in Corollary 1 of \cite{rackauskas:suquet:2005}.  Write
$$
\eta_h(t)=X_{m''}(t+h)-2X_{m''}(t)+X_{m''}(t-h),
\qquad h\leq t\leq 1+T-h.
$$
i.e. $\eta_h$ is the second difference of $X_{m''}(t)$ with spacing $h>0$. Note since $U_{L,m''}$ is unitary, under $\P_\infty$ we have
$$
\|\eta_h(t)\|_{\cH}
\stackrel d =
\sqrt{2h}\,
\|U_{L,m''}\widehat G_{L,m''}(1)\|_{\cH}
=
\sqrt{2h}\,
\|\widehat G_{L,m''}(1)\|_{\cH}.
$$
for any fixed $t>0$.  Hence,  using Gaussanity of $ \widehat G_{L,m}$, we deduce
\begin{align*}
\P_\infty(\|\eta_h(t)\|_{\cH}>r h^{1/2}) &=\P_\infty\big(\|  \widehat G_{L,m''}(1)\|>r/\sqrt 2\big)\\
& \lesssim r^{-2q}\E_\infty\|  \widehat G_{L,m''}(1)\|^{2q}\\
&  \lesssim r^{-2q}\big(\tr (\mathcal T_L\widehat\Omega_{L,m''}\mathcal T_L^*)\big)^q\\
&  =r^{-2q}\big(\tr (\widehat\Omega_{L,m''})\big)^q\\
&  \leq r^{-2q}\kappa^q
\end{align*}
In particular, applying \cite[Corollary 1]{rackauskas:suquet:2005} with $\sigma(h)=h^{1/2}$, $\rho(h)=h^\gamma$, $\psi(r)=r^{-2q}|\kappa|^q$, by picking a $q$ large enough such that $2q(\gamma-1/2)<-1$, we find
$$
\sum_{j=1}^\infty 2^j \psi\left(u \rho(2^{-j})/\sigma(2^{-j}) \right) =  u^{-2q} |\kappa(\omega)|^q\sum_{j=1}^\infty 2^{j(1+2q(\gamma-1/2))} <\infty,
$$
yielding tightness of $\mu_{m''}(\omega)$.  Hence, \eqref{e:ULmG_to_G} holds by Prohorov's theorem (e.g., \citealp{bil}). Applying the Skorokhod-Dudley-Wichura theorem \citep[p.48]{shorack:wellner:1986}, after extending the probability space if necessary we may find for each $m''\in \mathbb N''$ a $G_{m''}\stackrel d =G$ with 
$$
\P_\infty\left(\|X_{m''}- G_{m''}\|_{\gamma,1+T}>\delta\right) \to 0 \qquad \text{a.s.}
$$
Also,  since $U_{L,m''}$ is unitary,
$$
\|\widehat G_{L,m''}-U_{L,m''}^*G_{m''}\|_{\gamma,1+T}
=
\|U_{L,m''}\widehat G_{L,m''}-G_{m''}\|_{\gamma,1+T}.
$$
For $x\in C^\gamma_0([0,1+T];\cH)$, write
$$
D_{t,u,v}x=
\frac{t-v}{1+u}x(1+u)-\{x(1+t)-x(1+v)\}.
$$
By construction, $U_{L,m}$ commutes with $J$. Hence
$D_{t,u,v}(U_{L,m}^*x)=U_{L,m}^*D_{t,u,v}x$ and
$$
Q_J\left(D_{t,u,v}(U_{L,m}^*x)\right)
=
Q_J\big(D_{t,u,v}x\big).
$$
Thus $\Psi(U_{L,m}^*x)=\Psi x$.
Using Lemma~\ref{l:Psi_continuous} and the identity
$\Psi G_{m''}\equiv\Psi(U_{L,m''}^*G_{m''})$, we obtain
$$
\P_\infty\left(
\|\Psi\widehat G_{L,m''}-\Psi G_{m''}\|_{\infty,T}>\delta
\right)\to0
\qquad \text{a.s.}
$$
Write
$$
    \widehat{\tr}_{L,m}(A)
    :=
    \sum_{\ell=1}^{L}\widehat\lambda_{\ell,m},
$$
and for $x\in C^\gamma_0([0,1+T],\cH)$, set
\begin{align*}
\Psi_{m,L} x(t) =&\sup_{(u,v)\in\mathcal S(t)}\frac{1}{g(t,1+u,t-v)}\\
& \qquad \qquad  \left|  \widehat{\tr}_{L,m}(A) \left( \frac{(t-v)^2}{1+u} + (t-v)\right) - Q_{{\widehat J_L}} \left( \frac{t-v}{1+u} \widetilde x(u) -  (\widetilde x(t)-\widetilde x(v)\big)\right)\right|,
\end{align*}
where $Q_{\widehat J_L}(f)= \langle f, \widehat J_Lf\rangle_{\cH}$, with (c.f. \eqref{e:def_J})
$$
\widehat J_L f = \sum_{\ell=1}^{L(m)} \text{sgn}(\widehat \lambda_{\ell,m}) \langle f,\phi_\ell\rangle_{\cH}\phi_\ell,\qquad f \in \cH.
$$
By construction, 
$$
\widehat \Gamma_m(t) = \Psi_{m,L} \widehat G_{L,m}(t).
$$
Hence,
\begin{align}
\|\widehat \Gamma_m-\Psi G_m\|_{\infty,T} &=\|\Psi_{m,L} \widehat G_{L,m}-\Psi G_m\|_{\infty,T}\notag\\
&\leq \|\Psi_{m,L} \widehat G_{L,m}-\Psi  \widehat G_{L,m}\|_{\infty,T}  +\| \Psi  \widehat G_{L,m}-\Psi G_{m}\|_{\infty,T}.\label{e:hatgamma_bound1}
\end{align}
We now show
\begin{equation}\label{e:PsiGL_convergence_pre}
\P_\infty\left(\|\Psi_{m,L} \widehat G_{L,m}-\Psi  \widehat G_{L,m}\|_{\infty,T}>\delta\right) \to 0\quad \text{a.s.}.
\end{equation}
Indeed, writing $J_L=P_LJ$, and $\eta_{L,m}(t,u,v)=\frac{t-v}{1+u}\widehat G_{L,m}(1+u) - \big(\widehat G_{L,m}(1+t)- \widehat G_{L,m}(1+v)\big)$, note that $$Q_J(f)-Q_{\widehat J_L}(f)=\langle f,(J-\widehat J_L)f\rangle_{\cH}=\langle f,(J_L-\widehat J_L)f\rangle_{\cH},\quad \text{for any} \quad f \in \text{span}\{\phi_1,\ldots,\phi_L\}.$$
Hence, for any $m\geq1$, using \(g(t,1+u,t-v)\gtrsim (t-v)^\beta\) uniformly on \(0\leq t\leq T\), we have
\begin{align*}
&\left| \Psi_{m,L}G_{L,m}(t)-\Psi G_{L,m}(t)\right|\\
&\quad\leq
C\sup_{(u,v)\in\mathcal S(t)}
(t-v)^{-\beta}
\left(\frac{(t-v)^2}{1+u}+t-v\right)
\big|\widehat{\tr}_{L,m}(A)-\tr A\big|\\
&\qquad+
\sup_{(u,v)\in\mathcal S(t)} \frac1{g(t,1+u,t-v)} \left| \left\langle
\eta_{m,L}(t,u,v),
(\widehat J_L-J_L)\eta_{m,L}(t,u,v)
\right\rangle_{\cH} \right|\\
&\quad\leq
C_T\big|\widehat{\tr}_{L,m}(A)-\tr A\big| +\|\widehat J_L-J_L\|_{\mathrm{op}}
\sup_{(u,v)\in\mathcal S(t)}
\frac{\|\eta_{m,L}(t,u,v)\|_{\cH}^2}
{g(t,1+u,t-v)} .
\end{align*}
Now, let $\pi_m\in\Pi_m$ be a permutation attaining the minimum in
Theorem~\ref{t:eigen_consistency_theo}. On the event
\begin{equation}\label{e:onthisevent}
    \left(\sum_{\ell=1}^{m}
    (\widehat\lambda_{\pi_m(\ell),m}-\lambda_{\ell,m})^2
    \right)^{1/2}
    \leq \delta_m/3,
\end{equation}
the separation condition $ |\lambda_L|-|\lambda_{L+1}|\geq \delta_m$
implies that the eigenvalues $\widehat\lambda_{\pi_m(1),m},\ldots,\widehat\lambda_{\pi_m(L),m}$
are precisely the $L$ empirical eigenvalues with largest absolute values. On the event \eqref{e:onthisevent}, 
$\widehat J_L=J_L$ on $\operatorname{span}\{\phi_1,\ldots,\phi_L\}$; hence we need only to control the trace term. However, on \eqref{e:onthisevent},
$$
    \widehat{\tr}_{L,m}(A)
    =
    \sum_{\ell=1}^{L}\widehat\lambda_{\pi_m(\ell),m}.
$$
Since \(\varepsilon_m=o(\delta_m)\), Theorem~\ref{t:eigen_consistency_theo}
implies that this event has probability tending to one. Hence
\begin{align*}
\left|\widehat{\tr}_{L,m}(A)-\tr(A)\right|
&\leq
\left|\sum_{\ell=1}^{L}\left(\widehat\lambda_{\pi_m(\ell),m}-\lambda_{\ell,m}\right)\right|
+\left|\sum_{\ell=L+1}^\infty\lambda_\ell\right|+o_\P(1)\\
&\leq\sqrt L \left( \sum_{\ell=1}^{m}
(\widehat\lambda_{\pi_m(\ell),m}-\lambda_{\ell,m})^2
\right)^{1/2}
+
\sum_{\ell=L+1}^\infty|\lambda_\ell|+ o_\P(1)\\
&= O_\P(\sqrt L\,\varepsilon_m)
+\sum_{\ell=L+1}^\infty|\lambda_\ell| + o_\P(1)
= o_\P(1),
\end{align*}
and we deduce
\eqref{e:PsiGL_convergence_pre}.
Next we bound $\| \Psi  \widehat G_{L,m}-\Psi G_{m}\|_{\infty,T}$.  Since $O_{L,m}$ is block orthogonal, and by construction commutes with $\Lambda_L=\text{diag}(\lambda_1,\ldots,\lambda_L),$ writing $$[J]_L=\mathcal T_L^*J\mathcal T_L=\mathrm{diag}(\text{sgn}(\lambda_1),\ldots,\text{sgn}(\lambda_L)),$$
we have $[J]_LO_{L,m}=O_{L,m}[J]_L$. Hence,
\begin{align*}
U_{m,L}J&= \mathcal T_LO_{L,m}\mathcal T_L^*J + (I-P_L)J\\ &=  \mathcal T_LO_{L,m}[J]_L\mathcal T_L^* + (J-J_L)\\
&=  \mathcal T_L[J]_LO_{L,m}\mathcal T_L^* + J(I-P_L)\\
& = J U_{m,L}.
\end{align*}
Therefore, for any $f \in \cH$, $Q_J(U_{m,L}^* f)=\langle U_{m,L}^*f, J U_{m,L}^*f\rangle_{\cH}= \langle f, U_{m,L} U_{m,L}^*Jf\rangle_{\cH}= \langle f,Jf\rangle_{\cH}=Q_J(f),$ implying
$$
\Psi G_m \equiv  \Psi (U^*_{m,L} G_m).$$
Using Lemma \ref{l:Psi_continuous}, we obtain
\begin{align*}
\|\Psi \widehat G_{L,m}-\Psi (G_m)\|_{\infty,T} & =\|\Psi \widehat G_{L,m}-\Psi (U_{m,L}^*G_m)\|_{\infty,T} \\
&\lesssim  \big(\||\widehat G_{L,m}\|_{\gamma,1+T}+\|G_m\|_{\gamma,1+T}\big)\|\widehat G_{L,m}-U_{m,L}^*G_m\|_{\gamma,1+T}\\
&=o_{\P_\infty}(1).
\end{align*}
 It follows from \eqref{e:hatgamma_bound1} that
\begin{equation}\label{e:finiteint_converg_gammahat}
\P_\infty\left(\|\widehat \Gamma_m -\Psi G_{m}\|_{\infty,T}>\delta\right) \to 0\quad \text{a.s.}
\end{equation}
By repeating arguments analogous to those in Lemma \ref{l:Gamma_tail}, we find
\begin{align}\label{e:Gammahat_tail1}
\lim_{T\to\infty}\limsup_{m\to\infty}\P_\infty\left\{ \sup_{t\geq T} \left|\widehat \Gamma_m(t) \right| >\delta \right\} =0,\quad \text{a.s.}
\end{align}
  Putting together \eqref{e:finiteint_converg_gammahat}, \eqref{e:Gammahat_tail1},  and using Lemma \eqref{l:Gamma_tail}, we obtain
\begin{equation*}
\P_\infty\left(\sup_{t\geq 0}\Big|\widehat \Gamma_m(t)-\Psi G_{m}(t)\Big|>\delta\right)\to 0 \quad \text{a.s.}
\end{equation*}
Since under $\P_\infty$, $\{\Gamma(t),t\geq 0\}\stackrel d =\{\Psi G_{m}(t),t\geq 0\}$ the proof is complete.
\end{proof}

\begin{lemma}\label{l:transformation_gram}
Under the conditions of Lemma~\ref{l:transformation}, with $C_m= I_m - \frac{1}{m}\mathbf 1 \mathbf 1^\top$ let
$$
H_m(\theta)
=
\frac1m
\bigl(h_\theta(\bX_i,\bX_j)\bigr)_{1\leq i,j\leq m},
\qquad
\overline H_m(\theta)=C_mH_m(\theta)C_m.
$$
Then
$$
\big\|
\overline H_m(\widehat\theta_m)
-
\overline H_m(\theta_0)
\big\|_{\rm F}
=
O_\P(m^{-1/2}).
$$
\end{lemma}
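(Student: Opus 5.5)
The plan is to reduce the centred Gram matrices to a kernel-embedding representation and then use a first-order (Lipschitz-in-$\theta$) bound together with the fact that centring kills additive terms.

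First, we may restrict to the event $\{\widehat\theta_m\in N\}$, which has probability tending to one. On this event the hypothesis of Corollary~\ref{c:transformation} gives the pointwise bound
\[
|h_\theta(\bx,\by)-h_\tau(\bx,\by)|\le C\|\theta-\tau\|\{q(\bx)+q(\by)\},
\]
with $\E q(\bX_1)^2<\infty$. The additive structure $q(\bx)+q(\by)$ of this bound is exactly what the double centring $C_m(\cdot)C_m$ is designed to remove. Rather than bounding $\|H_m(\widehat\theta_m)-H_m(\theta_0)\|_{\rm F}$ crudely, however, we exploit the product structure.

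Write $\Delta_m=H_m(\widehat\theta_m)-H_m(\theta_0)$. Since $\|C_m\|_{\mathrm{op}}=1$, we have
\[
\|\overline H_m(\widehat\theta_m)-\overline H_m(\theta_0)\|_{\rm F}\le\|\Delta_m\|_{\rm F}.
\]
The entries satisfy $|(\Delta_m)_{ij}|\le Cm^{-1}\|\widehat\theta_m-\theta_0\|\{q(\bX_i)+q(\bX_j)\}$. Hence
\[
\|\Delta_m\|_{\rm F}^2\le Cm^{-2}\|\widehat\theta_m-\theta_0\|^2\sum_{i,j}\{q(\bX_i)+q(\bX_j)\}^2\le C\|\widehat\theta_m-\theta_0\|^2\,\frac1m\sum_{i=1}^m q(\bX_i)^2 .
\]
By stationarity and $\E q(\bX_1)^2<\infty$, Markov's inequality gives $\frac1m\sum_i q(\bX_i)^2=O_\P(1)$. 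Combined with $\sqrt m\|\widehat\theta_m-\theta_0\|=O_\P(1)$, this yields $\|\Delta_m\|_{\rm F}=O_\P(m^{-1/2})$, which is the claim. The argument needs neither the Hölder structure nor the dependence conditions, only first moments of $q^2$ and the root-$m$ rate.

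The step I expect to be the main obstacle is only a bookkeeping issue: the bound is stated for $\theta,\tau\in N$, so I will handle the complement event $\{\widehat\theta_m\notin N\}$ explicitly. Since $\P(\widehat\theta_m\notin N)\to0$, a statement of the form $O_\P(m^{-1/2})$ is unaffected by it.

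If the Lipschitz-in-$\theta$ bound on $h_\theta$ were not available (for example, in Lemma~\ref{l:transformation} itself, only the map-level bounds are assumed), I would instead use the embedding from the proof of Lemma~\ref{l:transformation}. There, $\overline H_m(\theta)$ equals $-C_m\big(\langle\varphi_0(t_\theta(\bX_i)),\varphi_0(t_\theta(\bX_j))\rangle\big)C_m/m$ up to additive row and column terms, which centring annihilates. I would then bound the difference by $\frac1m\sum_i\|\varphi_0(t_{\widehat\theta}(\bX_i))-\varphi_0(t_{\theta_0}(\bX_i))\|$ times bounded factors.
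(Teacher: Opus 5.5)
Your proof is correct and is essentially the paper's argument. On $\{\widehat\theta_m\in N\}$ you drop the centring via $\|C_m\|_{\mathrm{op}}\le 1$ and bound the entries of $H_m(\widehat\theta_m)-H_m(\theta_0)$ using the $q$-Lipschitz condition, then conclude from $\frac1m\sum_i q(\bX_i)^2=O_\P(1)$ and $\sqrt m\|\widehat\theta_m-\theta_0\|=O_\P(1)$. As you noticed, this step (like the paper's proof) relies on the hypotheses of Corollary~\ref{c:transformation} rather than those of Lemma~\ref{l:transformation} alone. Your fallback sketch would not recover the $m^{-1/2}$ rate in that weaker setting, because mere consistency of $\widehat\theta_m$ together with the H\"older-type map bounds yields at best a bound of order $\|\widehat\theta_m-\theta_0\|^{\alpha}$, with no rate attached.
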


\begin{proof}
Since $\|C_m\|_{\mathrm{op}}\leq1$, on the event
$\{\widehat\theta_m\in N\}$,
\begin{align*}
&
\big\|
\overline H_m(\widehat\theta_m)
-
\overline H_m(\theta_0)
\big\|_{\rm F}^2\\
&\quad\leq
\big\|
H_m(\widehat\theta_m)-H_m(\theta_0)
\big\|_{\rm F}^2\\
&\quad=
\frac1{m^2}
\sum_{i,j=1}^m
\big|
h_{\widehat\theta_m}(\bX_i,\bX_j)
-
h_{\theta_0}(\bX_i,\bX_j)
\big|^2\\
&\quad\leq
\frac{C\|\widehat\theta_m-\theta_0\|^2}{m^2}
\sum_{i,j=1}^m
\{q(\bX_i)+q(\bX_j)\}^2\\
&\quad\leq
4C\|\widehat\theta_m-\theta_0\|^2
\frac1m\sum_{i=1}^m q(\bX_i)^2
=
O_\P(m^{-1}).
\end{align*}
Since $\P(\widehat\theta_m\in N)\to1$, the result follows.
\end{proof}

\begin{proof}[Proof of Corollary~\ref{c:transformation}]
Let $\widehat\lambda_{\ell,m}(\theta)$ denote the empirical eigenvalues
of $\overline H_m(\theta)$. By Lemma~\ref{l:transformation_gram}, the
Hoffman--Wielandt inequality, and
Theorem~\ref{t:eigen_consistency_theo},
$$
\min_{\pi\in\Pi_m}
\left(\sum_{\ell=1}^m(\widehat\lambda_{\pi(\ell),m}(\widehat\theta_m)-\lambda_\ell)^2
\right)^{1/2}=O_\P(\varepsilon_m),
$$
where $\lambda_\ell$ and $\varepsilon_m$ are defined using
$h_{\theta_0}$.

The preceding eigenvalue bound and
$\varepsilon_m=o(\delta_m)$ imply, by the argument of
Lemma~\ref{l:sample_eig_concentration}, that
\begin{equation}\label{e:transformation_eig_concentration}
\P\left\{
\big\{\widehat\lambda_{\ell,m}(\widehat\theta_m):
\ell\in\mathcal B_k\big\}
\subseteq\mathcal I_{k,m},
\quad k=1,\ldots,b
\right\}\to1.
\end{equation}
Let $V_m^L(\theta)$ denote the matrix of the first $L$ empirical
eigenvectors of $\overline H_m(\theta)$. The preceding eigenvalue bound
and $\varepsilon_m=o(\delta_m)$ imply, by the same argument as in
Lemma~\ref{l:sample_eig_concentration}, that, with probability tending
to one, $\overline H_m(\widehat\theta_m)$ and
$\overline H_m(\theta_0)$ have exactly $d_k$ eigenvalues in
$\mathcal I_{k,m}$, for each $k=1,\ldots,b$.

Consequently, Lemma~\ref{l:DK_lemma} gives
\begin{align*}
&\left\{
\sum_{k=1}^b
\left\|
\mathcal P_k\big(\overline H_m(\widehat\theta_m)\big)
-
\mathcal P_k\big(\overline H_m(\theta_0)\big)
\right\|_{\rm F}^2
\right\}^{1/2}\\
&\qquad\leq
\frac{C}{\delta_m}
\left\|
\overline H_m(\widehat\theta_m)
-
\overline H_m(\theta_0)
\right\|_{\rm F}
=
O_\P\left(\frac1{\sqrt m\,\delta_m}\right).
\end{align*}

By Lemma~\ref{l:basic_rotation_lemma}, there therefore exists a
block-orthogonal matrix $R_{L,m}$ such that
$$
\left\|
V_m^L(\widehat\theta_m)
-
V_m^L(\theta_0)R_{L,m}
\right\|_{\rm F}
=
O_\P\left(\frac1{\sqrt m\,\delta_m}\right).
$$
On the other hand, Theorem~\ref{t:spectral_consistency}, applied to
$h_{\theta_0}$, gives a block-orthogonal matrix $O_{L,m}^{(0)}$ such
that
$$
\left\|
V_m^L(\theta_0)-\Phi_m^LO_{L,m}^{(0)}
\right\|_{\rm F}
=
O_\P\left(\frac1{\sqrt m\,\delta_m^{3/2}}\right).
$$
Hence, with $O_{L,m}=O_{L,m}^{(0)}R_{L,m}$,
$$
\left\|
V_m^L(\widehat\theta_m)-\Phi_m^LO_{L,m}
\right\|_{\rm F}
=
O_\P\left(\frac1{\sqrt m\,\delta_m^{3/2}}\right).
$$
Thus the eigenvalue and eigenvector bounds used in
Lemmas~\ref{l:LRC_consistency} and~\ref{l:trace_hatSigma} remain
unchanged. Their proofs, and hence the proof of
Theorem~\ref{t:LRC_sim_method}, therefore apply to the estimators
constructed using $h_{\widehat\theta_m}$.
\end{proof}

\paragraph{Supporting calculations for Example \ref{ex:lambda}}\label{app:lambda}
Since $
\lambda_\ell = \ell^{-\theta}$
for some $\theta>3/2$, it holds that
$
\lambda_L-\lambda_{L+1}\geq (\theta/2)L^{-(\theta+1)}$
for all large $L$. Moreover,
$
    \varepsilon_m =m^{-1/2}+\sum_{\ell\geq m}\lambda_\ell  \asymp m^{-1/2}.
$
Hence Assumption~\ref{a:eigsep} is satisfied whenever
$ L_m\to\infty$ and $
    L_m^{-(\theta+1)}
    \gg m^{-1/3}.$
Since $L_m^{\theta+1}\ll m^{1/3-\eta},$
then we may take $\delta_m=m^{-1/3+\eta}$. Indeed,
$
    \delta_m\gg \varepsilon_m,$ and  $
    \sqrt m\,\delta_m^{3/2}=m^{3\eta/2}\to\infty,$
and also
$\delta_m\leq \lambda_{L_m}-\lambda_{L_m+1}$
for all sufficiently large $m$. Plugging these into Assumption~\ref{ker2} yields the requirement $\mathfrak b_m\left(m^{-3\eta/2}+\sqrt{L_m/m^{}}\right)\to0.$

\section{Supplement to Section \ref{s:simulations}\label{supp:sim}}

\subsection{Details for null simulations in higher dimensions}

In this section, we give the full model specification of the 4 high-dimensional models in the calibration study  considered in Section \ref{s:simulations:null}.

The $d=50$ VAR(1) is identical to the model used in Table \ref{tab:spectral-rank-explainer-ar-var}.  The nonlinear VAR(1) is a threshold-type VAR, given by $
\bX_t=\kappa_t Q\bX_{t-1}+(1-\kappa_t^2)^{1/2}\beps_t$, where
$\kappa_t=0.20\mathbf 1_{\{X_{t-1,1}\leq 0\}}+0.60\mathbf 1_{\{X_{t-1,1}>0\}}$, with $X_{t-1,1}$ being the first coordinate of $\bX_{t-1}$.

The remaining two high-dimensional null models are discretized functional time series constructed from  functional principal components (FPC) models.  For $d=500$, each
curve $\{\bX_t(u), u \in[0,1]\}$  is observed on the grid $u_j=(j-1/2)/d$, $j=1,\ldots,d$, and generated as
$$
    \bX_t(u_j)=\frac{1}{\sqrt d}\sum_{\ell=1}^{10}\zeta_{t,\ell}\varphi_\ell(u_j)+0.25 e_{t,j},
    \qquad j=1,\ldots,d.
$$
The functions $\varphi_1,\ldots,\varphi_{10}$ are the first ten elements of the Fourier basis on
$[0,1]$, starting with the constant function and then alternating sine and cosine terms.  The 
$e_{t,j}$ are i.i.d. standard normal.  The vectors 
$\boldsymbol \zeta_t=(\zeta_{t,1},\ldots,\zeta_{t,10})^\top$ in the linear FPC model are generated directly from a $d=10$ version of the VAR(1) described above. In the nonlinear FPC model, the
scores follow
$\boldsymbol \zeta_t=0.60Q^\top f(Q\boldsymbol \zeta_{t-1})+ \widetilde {\boldsymbol e}_t$, with $ \widetilde {\boldsymbol e}_t\stackrel{\mathrm{iid}}{\sim}N(0,I_{10})$, where $Q$ is a fixed orthogonal matrix and $f(\cdot)=\tanh(\cdot)$ applied
componentwise.

\subsection{Details for alternative simulations at a fixed break location}
In this section, we give details on the full specification of the alternative simulations.

\paragraph{Details for the isolated structural breaks scenario.} Here we provide the details for the first family of alternatives where the marginal mean is unchanged and the break occurs through serial dependence or volatility structure. In the scalar AR(1) experiment, the autoregressive parameter changes from $\phi_0=0.30$ to $\phi_1=0.85$, with Laplace innovations and the stationary variance normalized to one before and after the break. In the VAR(1) experiment, the dimension is $d=10$; before the break the transition matrix is $A=0.20I_d$, and after it is $A=0.20I_d+0.70P$, where $P=UU^\top$ where $U\in \R^{d\times 3}$ are 3 randomly chosen orthogonal vectors. Innovations are chosen so that the stationary covariance is approximately the identity both before and after the break. In the GARCH(1,1) experiment, the mean and unconditional variance are unchanged, while the volatility dynamics change from $(\omega,a,b)=(0.45,0.05,0.50)$ to $(\omega,a,b)=(0.03,0.27,0.70)$. The cumulative probability curves are displayed below.

\begin{figure}[h!]
\begin{center}
\includegraphics[width=0.72\linewidth]{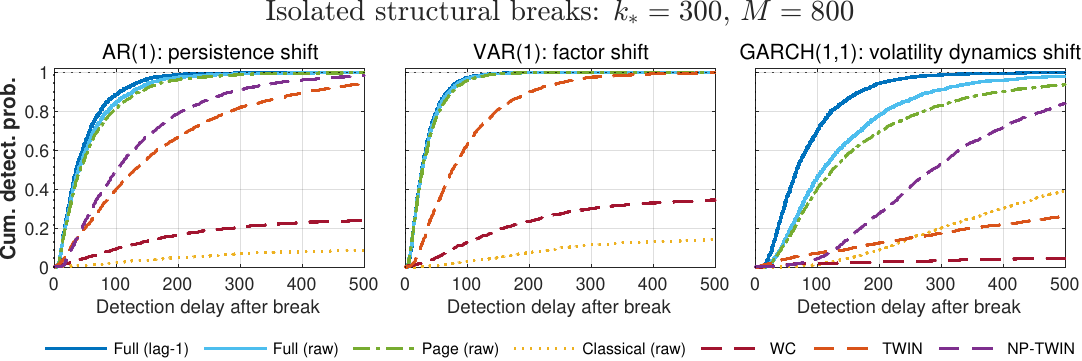}
\end{center}
\caption{
Cumulative detection probabilities in the isolated structural break scenario.%
\label{f:alt_structbreak_cdfs}}
\end{figure}

\paragraph{Details for the pure mean breaks scenario.}
The pure-mean-break scenario considers a scalar AR(1) model with $\phi=0.50$ before and after the break and a change only in the mean. We use Laplace innovations and normalize the stationary variance to one. The mean changes from $\mu=0$ to $\mu\in\{0.5,1\}$.  The cumulative probability curves are displayed below.

\begin{figure}[h!]
\begin{center}
\includegraphics[width=0.72\linewidth]{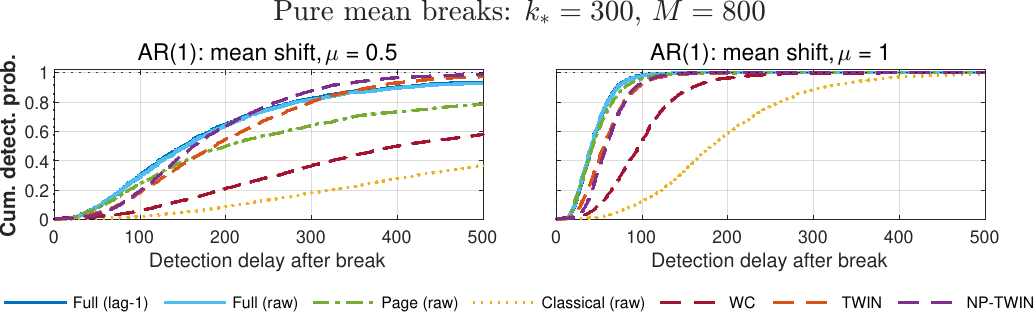}
\end{center}
\caption{
Cumulative detection probabilities in the pure mean break scenario.%
\label{f:alt_fixedbreak_mean_cdfs}}
\end{figure}

\paragraph{Details for the mixed  breaks scenario.}  In the scalar AR(1) experiment, the autoregressive parameter changes from $\phi=0.20$ to $\phi=0.80$, the mean changes from $\mu=0$ to $\mu=0.30$, and the stationary variance increases from 1 to $1.50$. In the VAR(1) experiment, the same low-rank factor dependence shift as in the isolated structural case is combined with a mean shift of Euclidean norm $0.50$ along the first column of $U$, and the stationary covariance changes from $I_d$ to $I_d+0.50P$. In the GARCH(1,1) experiment, the mean changes from $\mu=0$ to $\mu=0.35$,  while the volatility dynamics change from $(\omega,a,b)=(0.45,0.05,0.50)$ to $(\omega,a,b)=(0.0375,0.27,0.70)$.  The innovations are standard Laplace throughout.
\begin{figure}[h!]
\begin{center}
\includegraphics[width=0.72\linewidth]{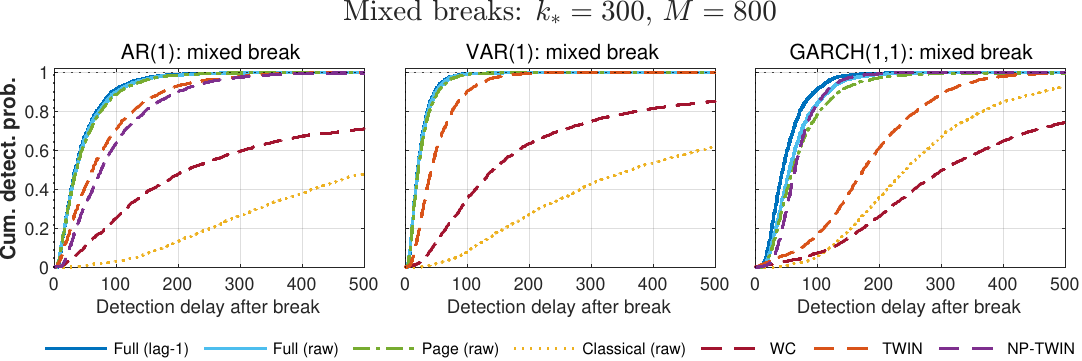}
\end{center}
\caption{
Cumulative detection probabilities in the mixed break scenario.%
\label{f:alt_mixed_cdfs}}
\end{figure}

\subsection{Details for the break location experiments} We consider two alternatives:  a strong, pure mean break, with $\phi=0.5$ fixed with a mean change from $\mu=0$ to $\mu=2$, and a mixed break, with a mean change of $\mu=0$ to $\mu=0.5$, a coefficient change of $\phi:0.2$ to $\phi=0.9$, and with a change in the stationary variance from $1$ to $1.5$.  In all cases, the innovations are rescaled after the changepoint to attain the prescribed stationary variance.  Critical values for all delay experiments are obtained by simulating 500 paths from the corresponding null model; the reported delays distributions are based on 500 simulations for each alternative.

Since the ECDF-based method NP-TWIN is computationally burdensome at long horizons, For computational tractability, we implement NP-TWIN using a geometric approximation to its window scan.  Specifically, we use the geometric approximation

$$
\Gamma^{\mathrm{NP},\rm geo}_m(k)
=
\max_{\ell\in\mathcal L_{m,k}^{(\rho)}}
w_{\beta}(\ell,k;m)
\left\|
\min\left(1,\frac{\ell}{m}\right)\widehat F_{\max(\ell,m)}
-
\left(\widehat F_{m+k}-\widehat F_{m+k-\ell}\right)
\right\|_\infty,
$$

where

$$
\mathcal L_{m,k}^{(\rho)}
=
\left\{
\lfloor \rho^j k\rfloor\vee 2:
0\leq j\leq
\left\lceil \log_\rho(2/k)\right\rceil
\right\}
\cap
\left\{
2,\ldots,
\min\left(k,\left\lfloor\frac{m+k}{2}\right\rfloor\right)
\right\},
$$
with duplicate window lengths removed, $\rho=0.9$, and their specified weight
$
w_{\beta}(\ell,k;m)$
with the recommended choices $\beta=0.6$ and $C_0=20$.

\section{Supplement to Section \ref{s:data}}
\label{s:data_supplement}

\subsection{Details for ERCOT illustration}
We first preprocess the bivariate curves $\bX_t=(\bE_t,\bS_t)^\top$ using the historical sample by computing the first two principal component scores of each series $\bE_t$ and $\bS_t$; to account for different measurement scales of $\bE_t$ and $\bS_t$, these scores are then standardized using the historical sample; the resultant transformation is held fixed during monitoring.  We now describe this procedure in detail and illustrate how this connects to Corollary \ref{c:transformation}.

Write a generic daily observation  as
$\bx=(\be,\bs)\in\R^{24}\times\R^{24}$, where $\be$ and $\bs$
are the load forecast-error and price-spread curves, respectively. For $j\in\{E,S\}$, let $V_j\in\R^{24\times2}$ contain the first two
population principal-component directions of $\bX_{1,j}$, and define the
corresponding score vector
$$
\boldsymbol\xi_j(\bx)=V_j^\top\bx\in\R^2.
$$
Let
$$
\boldsymbol\mu_j
=
\E\{\boldsymbol\xi_j(\bX_{1,j})\},
\qquad
D_j
=
\text{diag}\left\{
\sqrt{\Var(\xi_{j,1})},\sqrt{\Var(\xi_{j,2})}
\right\},
$$
and define the standardized PCA scores
$$
P_j\bx
=
D_j^{-1}
\{\boldsymbol\xi_j(\bx)-\boldsymbol\mu_j\},
\qquad j\in\{E,S\}.
$$
For
$
\theta=(P_E,P_S),
$
define
$$
t_\theta(\bx)
=
\operatorname{vec}\big\{
(P_E\be)(P_S\bs)^\top
\big\}
\in\R^4.
$$
For $\by,\by'\in\R^4$, we use $h_\theta(\bx,\bx')=k\big(t_\theta(\bx),t_\theta(\bx')\big).$, where 
$$
k(\by,\by')
=
\tanh(\by/5)^\top\tanh(\by'/5)
=
\sum_{r=1}^4
\tanh(y_r/5)\tanh(y_r'/5),
$$
with $\tanh$ is applied componentwise. For implementation, $\theta$ is replaced by its historical-sample
estimate $\widehat\theta_m$, obtained from the sample principal-component
directions and the corresponding score means and standard deviations.
To connect this construction to Corollary~\ref{c:transformation}, take
$\mathcal Y=\R^4$ with its Euclidean metric. The kernel $k$ is bounded and 
and satisfies $
k(\bu,\bu)+k(\bv,\bv)-2k(\bu,\bv)
=
\|\tanh(\bu/5)-\tanh(\bv/5)\|^2
\leq
\frac{1}{25}\|\bu-\bv\|^2,
$
so Example~\ref{ex:psd_kernels} applies with $\alpha=1$. Since $t_\theta$ also satisfies the local Lipschitz conditions in both its observation and parameter arguments
required by Corollary~\ref{c:transformation}.
Here $\nu$ is the dimension of the parameters entering the two estimated PCA score maps. Each $24\times2$ loading matrix has $45$ free parameters,
after accounting for the orthonormality constraints, and each map also
contains two score means and two score standard deviations. Thus, one may
take
$
\nu=2(45+2+2)=98.$
Under separation of the first two population eigenvalues and positive
score variances,
$
\sqrt m\|\widehat\theta_m-\theta_0\|=O_\P(1),
$
so Corollary~\ref{c:transformation} applies provided
$
p(1-\eta)>98.$

\begin{table}[h!]
\centering
\caption{Monitoring results for selected ERCOT weather events.}
\label{tab:ercot-weather-events}
\setlength{\tabcolsep}{5pt}
\renewcommand{\arraystretch}{1.12}

\begin{tabularx}{\textwidth}{
    @{}
    >{\raggedright\arraybackslash}X
    >{\centering\arraybackslash}p{2.7cm}
    >{\centering\arraybackslash}p{2.7cm}
    >{\centering\arraybackslash}p{2.7cm}
    @{}
}
\toprule
Event info
& Reference date
& Kernel alarm
& WC alarm \\
\midrule

\textbf{Hurricane Nicholas} (Coast zone series)\newline
{\footnotesize
Training: Mar.\ 1--Aug.\ 31, 2021 ($m=183$);\newline
Monitoring: Sep.\ 1--Oct.\ 15, 2021 ($M=45$)}
&
Sep.\ 14, 2021
&
Sep.\ 13, 2021 
&
No alarm
\\

\addlinespace[3pt]

\textbf{Winter Storm Uri} (system-wide series)\newline
{\footnotesize
Training: Aug.\ 1, 2020--Jan.\ 31, 2021 ($m=183$);\newline
Monitoring: Feb.\ 1--Mar.\ 18, 2021 ($M=45$)}
&
Feb.\ 13, 2021
&
Feb.\ 10, 2021 
&
Feb.\ 9, 2021 
\\

\bottomrule
\end{tabularx}

\begin{minipage}{0.96\textwidth}
\footnotesize
The reference dates are the onset 
Winter Storm Uri and the date of landfall for Hurricane Nicholas. The series type (coast zone, or system wide) indicates which series was taken from ERCOT. In the Uri example, 45 days are retained after omitting the 23-hour DST day, March 14.
\end{minipage}
\end{table}

For comparison, we apply WC to the full bivariate curve after standardizing each hourly coordinate by its historical mean and standard deviation, with these quantities held fixed during monitoring. We then set $\bZ_t=24^{-1/2}(\widetilde\bE_t^\top,\widetilde\bS_t^\top)^\top$, so that the Euclidean norm of $\bZ_t$ is the discrete $L^2$ norm of the standardized bivariate curve.  At monitoring time $k$, WC uses
$$
\widehat\Gamma_m^{\beta_{\rm WC}}(k)
=
\max_{0\leq\ell<k}
w_{\beta_{\rm WC}}(\ell,k,m)
\left\|
\frac{k-\ell}{m+\ell}
\sum_{t=1}^{m+\ell}\bZ_t
-
\sum_{t=m+\ell+1}^{m+k}\bZ_t
\right\|_{2},
$$
where
$
w_{\beta_{\rm WC}}(\ell,n,m)
=
\sqrt m\big(
(m+k)^{1-\beta_{\rm WC}}
 (k-\ell)^{\beta_{\rm WC}}
 \log(2+k/m)\big)^{-1}. $
Following the guidance in \cite{kutta:dornemann:2025}, we take $\beta_{\rm WC}=0.30$; critical values are obtained by estimating the long-run covariance matrix of $\bZ_t$ from the historical sample
using a Bartlett kernel with bandwidth 3 and simulating the
corresponding Gaussian limit.

\subsection{Details for the daily networks example}\label{app:airline}

We consider three daily representations of the airline network, corresponding to the
route-share network, the leading modes of the normalized graph Laplacian, and
carrier-specific cancellation rates. Let $\bA_t$ denote the symmetric weighted adjacency
matrix on day $t$, whose $(i,j)$ entry is the number of operated flights between airports
$i$ and $j$ in either direction. Writing
$c_t=\sum_{i<j}A_{ij,t}$ for the total number of operated flights across all airport pairs
on day $t$, we define the route-share network by $\bP_t=c_t^{-1}\bA_t$.
For the Laplacian representation, let
$\mathbf L_t=\bI-\bD_t^{-1/2}\bA_t\bD_t^{-1/2}$, where
$\bD_t=\text{diag}(\bA_t\boldsymbol 1)$, and write
$0<\ell_{1,t}\leq\cdots\leq\ell_{r_t,t}$ for its positive eigenvalues; we use the first five
modes $(\ell_{1,t},\ldots,\ell_{5,t})^\top$. Finally, letting
$\bC_t=(C_t^A,C_t^D,C_t^U,C_t^S)^\top$, the cancellation-rate representation records
the daily cancellation rates for American, Delta, United, and Southwest, respectively.
We give the full construction, preprocessing steps, kernel choices, and tuning parameters
for these three representations below.

Write $a\in\{P,\mathcal L,C\}$ for the route-share, Laplacian, and
cancellation-rate representations, respectively, and set
$$
\bY_t^{(P)}
=
(P_{ij,t}:1\leq i<j\leq 50)^\top\in\R^{1225},
\qquad
\bY_t^{(\mathcal L)}
=
(\ell_{1,t},\ldots,\ell_{5,t})^\top\in\R^5,
\qquad
\bY_t^{(C)}=\bC_t\in\R^4.
$$
Each representation is
adjusted and monitored separately.

Let $t_0$ denote the first day of the corresponding historical sample.
We use the deterministic calendar vector
$$
\bg_t
=
\left(
1,\,
\cos\left\{\frac{2\pi(t-t_0)}{365}\right\},\,
\sin\left\{\frac{2\pi(t-t_0)}{365}\right\},\,
{\bf 1}_{\mathrm{TG}}(t),\,
{\bf 1}_{\mathrm{YE}}(t))
\right)^\top .
$$
Here, ${\bf 1}_{\mathrm{TG}}(t)$ equals 1 on the seven-day period from the Monday
preceding Thanksgiving through the following Sunday, and zero otherwise; 
${\bf 1}_{\mathrm{YE}}(t)$ equals 1 on the year-end period from December 18 through January 7, and is zero otherwise. 

For representation $a$, the mean is
$
\boldsymbol\mu_a(t)=\bB_a^\top\bg_t$, where $
\bB_a\in\R^{5\times d_a}.
$
We estimate $\bB_a$ by
ordinary least squares,
$$
\widehat{\bB}_{a,m}
=
\argmin_{\bB\in\R^{5\times d_a}}
\sum_{t=1}^m
\left\|
\bY_t^{(a)}-\bB^\top\bg_t
\right\|_2^2,
$$
and define the mean-adjusted series
$$
\bZ_t^{(a)}
=
\bY_t^{(a)}
-
\widehat{\bB}_{a,m}^{\top}\bg_t.
$$

Accounting for daily and weekly differences, we next fit the autoregression 
$$
(1-\phi_a\mathsf B)(1-\psi_a\mathsf B^7)\bZ_t^{(a)}
=
\boldsymbol\varepsilon_t^{(a)},
$$
where $\mathsf B$ is the backshift operator. Equivalently, for candidate
values $(\phi,\psi)$, let
$$
\boldsymbol\varepsilon_t^{(a)}(\phi,\psi)
=
\bZ_t^{(a)}
-\phi\bZ_{t-1}^{(a)}
-\psi\bZ_{t-7}^{(a)}
+\phi\psi\bZ_{t-8}^{(a)}.
$$
A single pair $(\phi_a,\psi_a)$ is fitted across all coordinates of each
representation. 
The fitted coefficients are
$$
(\widehat\phi_{a,m},\widehat\psi_{a,m})
=
\argmin_{(\phi,\psi)\in\R^2}
\sum_{t\in\mathcal T_m}
\left\|
\boldsymbol\varepsilon_t^{(a)}(\phi,\psi)
\right\|_a^2,
$$
where $\| \cdot\|_a$ denotes the 2-norm for the Laplacian and cancellation-rate cases and denotes the 1-norm in the route-share case, so the route-share fit is matched to the total variation geometry. We then monitor the sequence 
$
\widehat{\boldsymbol\varepsilon}_t^{(a)}
=
\boldsymbol\varepsilon_t^{(a)}
(\widehat\phi_{a,m},\widehat\psi_{a,m}).
$
All calendar and lag parameters are estimated from the historical sample
and held fixed thereafter. 

For the COVID-19 study, the historical fitting period is January 1 through
December 31, 2019, so that $m=365$ and $t_0$ is January 1, 2019.
Observations from December 24--31, 2018 are used only to supply the lagged
values required at the beginning of the historical sample. For the 2025
study, the fitting period is March 1, 2024 through February 28, 2025,
again giving $m=365$, with $t_0$ equal to March 1, 2024. Observations from
February 22--29, 2024 are used only as lagged values.

The kernels used below have the form
$h(\bx,\by)=k(t_\theta(\bx),t_\theta(\by))$
as in Corollary~\ref{c:transformation}, which we now explain in detail. Let
$$
\bX_t^{(a)}
=
\left(
\bY_t^{(a)},\bY_{t-1}^{(a)},\bY_{t-7}^{(a)},\bY_{t-8}^{(a)},
\bg_t,\bg_{t-1},\bg_{t-7},\bg_{t-8}
\right).
$$
Let $s_a>0$ and for $\theta_a=(\bB_a,\phi_a,\psi_a,s_a)$, define
$$
t_{\theta_a}(\bX_t^{(a)})
=
\frac{1}{s_a}
\left[
\bZ_t^{(a)}
-\phi_a\bZ_{t-1}^{(a)}
-\psi_a\bZ_{t-7}^{(a)}
+\phi_a\psi_a\bZ_{t-8}^{(a)}
\right],
\qquad
\bZ_t^{(a)}=\bY_t^{(a)}-\bB_a^\top\bg_t.
$$
Thus,
$
t_{\widehat\theta_{a,m}}(\bX_t^{(a)})
=
\widehat{\boldsymbol\varepsilon}_t^{(a)}/\widehat s_{a,m},
$
where all components of $\widehat\theta_{a,m}$ are estimated from the
historical sample and held fixed during monitoring.  For the route-share representation, we take
$$
\widehat s_{P,m}^2
=
\frac1m
\sum_{t=1}^m
\left(
\frac12
\left\|
\widehat{\boldsymbol\varepsilon}_t^{(P)}
\right\|_1
\right)^2
$$
and use the kernel 
$
k_P(\bu,\bv)
=
\frac12\|\bu-\bv\|_1.$
Hence,
$$
h_P(\bX_t^{(P)},\bX_s^{(P)})
=
\frac{1}{2\widehat s_{P,m}}
\left\|
\widehat{\boldsymbol\varepsilon}_t^{(P)}
-
\widehat{\boldsymbol\varepsilon}_s^{(P)}
\right\|_1.
$$
For $a\in\{\mathcal L,C\}$, we instead set
$$
\widehat s_{a,m}^2
=
\frac{1}{m d_a}
\sum_{t=1}^m
\left\|
\widehat{\boldsymbol\varepsilon}_t^{(a)}
\right\|_2^2
$$
and use the bounded finite-rank kernel 
$$
k_a(\bu,\bv)
=
\frac1{d_a}
\sum_{j=1}^{d_a}
\tanh\left(\frac{u_j}{5}\right)
\tanh\left(\frac{v_j}{5}\right).
$$
The resulting Laplacian and cancellation-rate kernels have rank at most
five and four, respectively. 

For the Spring/Summer 2025 study, we use the full-geo scheme with $\rho=0.9$.   For the COVID-19 study, to disregard weather-related disruptions in January, scanning begins on March 1, 2020; in the notation of
Assumption~\ref{a:admissiblescheme}, we therefore use the shifted full-geometric scheme with 
$
\mathcal S_{\rho,\tau_0}^{\mathrm{full}}(t)
=
\left\{
\bigl(\tau_0+(1-\rho^j)(t-\tau_0),
      \tau_0+(1-\rho^j)(t-\tau_0)\bigr):
j=0,1,\ldots
\right\}
\cup\{(t,t)\}
$
for $ t\geq\tau_0$ and $\mathcal S_{\rho,\tau_0}^{\mathrm{full}}(t)=\varnothing$ otherwise, 
with $\tau_0=60/365$ and $\rho=0.9$.

\begin{table}[h!]
\centering
\caption{Monitoring results for the U.S. air transportation network examples.}
\label{tab:airline-monitoring-results}
\setlength{\tabcolsep}{4.5pt}
\renewcommand{\arraystretch}{1.12}

\begin{tabularx}{\textwidth}{
    @{}
    >{\raggedright\arraybackslash}X
    >{\centering\arraybackslash}p{2.55cm}
    >{\centering\arraybackslash}p{2.55cm}
    >{\centering\arraybackslash}p{2.55cm}
    >{\centering\arraybackslash}p{2.55cm}
    @{}
}
\toprule
Event info
& Reference date
& Cancellations
& Laplacian modes
& Route share \\
\midrule

\textbf{COVID-19 disruption}\newline
{\footnotesize
Training: Jan.\ 1--Dec.\ 31, 2019;\newline
Monitoring: Mar.\ 1--Jun.\ 30, 2020}
&
Mar.\ 11, 2020
&
Mar.\ 18, 2020
&
Mar.\ 26, 2020
&
Mar.\ 24, 2020
\\

\addlinespace[3pt]

\textbf{Spring/Summer 2025}\newline
{\footnotesize
Training: Mar.\ 1, 2024--Feb.\ 28, 2025;\newline
Monitoring: Mar.\ 1--Aug.\ 31, 2025 }
&
--
&
Jul.\ 2, 2025
&
Jun.\ 7, 2025
&
Jun.\ 18, 2025
\\

\bottomrule
\end{tabularx}

\begin{minipage}{0.96\textwidth}
\footnotesize
The COVID-19 reference date is the World Health Organization's March 11, 2020 pandemic declaration.
\end{minipage}
\end{table}
\bibliographystyle{abbrvnat}
\bibliography{biblio}

\end{document}